\documentclass[11pt,a4paper]{article}
\usepackage[utf8]{inputenc}
\usepackage{amsmath,amssymb,amsthm}
\usepackage{enumitem}
\usepackage{geometry}
\usepackage{booktabs}
\usepackage{array}
\usepackage{hyperref}
\theoremstyle{plain}
\newtheorem{theorem}{Theorem}[section]
\newtheorem{lemma}[theorem]{Lemma}
\newtheorem{proposition}[theorem]{Proposition}
\newtheorem{corollary}[theorem]{Corollary}
\theoremstyle{definition}
\newtheorem{definition}[theorem]{Definition}

\newtheorem*{remarknn}{Remark}

\newtheorem*{remarkc}{Remark}
\newcommand{\Obot}{O_{\bot}}
\newcommand{\Oprof}{O_{\mathrm{prof}}}
\newcommand{\Enc}{\mathrm{Enc}}
\newcommand{\Dec}{\mathrm{Dec}}

\newcommand{\dec}{\mathsf{dec}}

\usepackage[english]{babel}
\newcommand{\now}{\today}

\title{What Syntax Cannot See:\\
The Dynamic Syntactic Invariance Principle and Several Instances of the
Same Hidden Assumption,\\ and a Contradiction
}
\author{Fabio Francesco Gabriele Buono\\
\small ORCID 0009-0004-9199-2793\\
\small Preprint}
\date{\now}

\begin{document}
\maketitle

\begin{abstract}
  This paper develops a single method, find what an accepted result silently 
  assumed, make it a variable, and prove what follows once it is dropped, 
  and shows it keeps working across domains with nothing in common. Its formal 
  core is the Dynamic Syntactic Invariance Principle: a known static inaccessibility 
  result survives when a system's rules evolve in time, under one sharp necessary 
  condition. A direct cryptographic payoff follows: the secrecy of a rolling-key 
  scheme persists across sessions under a structural update, strengthening a 
  static guarantee from earlier work. The same move is then carried into 
  settings far from each other, special relativity, the reach of a formal 
  theory of physical law, and computable output that is fully meaningful yet 
  indistinguishable from noise to every observer, each established on its own terms, 
  so the recurring structure beneath them is found, not imposed: a distinction real 
  at a full level of description can be invisible at a restricted one. 
  Applied to \textsc{SAT}, this yields a contradiction from which 
  coherence forces $\mathsf{P}\neq\mathsf{NP}$, derived from what 
  standard theory already admits rather than assumed, 
  offered with one reservation, about the proof, not the answer.
\end{abstract}

\begin{remarknn}[Notation: what is imported, and what is new]
\label{rem:notation-remark}
$a,b,R,P$ are imported unchanged from~\cite{Buono2026sip}
(Section~\ref{sec:static}). $O,\preceq,\Obot,\Oprof,O_\top,\mathrm{prof}(x)$
are imported unchanged from~\cite{Buono2026oh}
(Sections~\ref{sec:total-opacity}, \ref{sec:order-blind}); $O_\top$ is
the complete observer at the top of the richness order, the one reader
that receives the whole of its input. $B_t,C_t,K_t,M_t,f$ are imported
unchanged from~\cite{Buono2026sep} (Section~\ref{sec:cell-d-dynamic} and
throughout). The update function $\upsilon$, the history $H_n$, and the
terms ``opacity-preserving'', ``swap-blind'', and ``permutation-blind''
are new terminology introduced in this paper and have no prior meaning
in the cited sources.

A few symbols carry more than one meaning across the paper, always in
disjoint sections, and are flagged here so the table below can be read
without ambiguity. The constants $a,b$ are, throughout, the two frozen
Skolem constants of~\cite{Buono2026sip}, and denote nothing else in this
paper. The letter $L$ denotes the space of candidate physical laws in the
theory-of-everything material (Definition~\ref{def:law-space}) and,
separately, the number of confusion layers in the blind-cascade cipher
(Section~\ref{sec:blind-cascade}); the two occur in unrelated sections
and the intended reading is always the one local to the surrounding
discussion. The letter $L$ carries two further, entirely standard
readings, each unmistakable in context: in the observer-hierarchy
results it is a formal language, always written $L\notin\{\emptyset,\Sigma^*\}$
or, as the object language of the Tarski material, $L_0$
(Proposition~\ref{prop:obot-trivial} onward, Section~\ref{sec:toe-tarski});
and in the Cell~(d) instance it is momentarily the length of a
mixed-radix ciphertext, in $C=(c_1,\dots,c_L)$
(Proposition~\ref{prop:cell-d}). Likewise $\Phi$ denotes a candidate physical law in the
theory-of-everything material (with $\Psi$ the law produced by
diagonalization), while $\Phi_s$, always written with the seed subscript
$s$, denotes the full encryption map of the blind-cascade cipher
(Section~\ref{sec:blind-cascade}); the subscript keeps the two apart.
Finally $b$ additionally serves as a generic byte string in the MS-DOS
worked example (Proposition~\ref{prop:msdos-precise}), where no Skolem
constant is in play.
\end{remarknn}

\begin{center}
\small
\begin{tabular}{p{1.8cm}p{10.2cm}}
\toprule
\textbf{Symbol} & \textbf{Meaning, and where it is introduced} \\
\midrule
\multicolumn{2}{l}{\textit{Sections~\ref{sec:static}--\ref{sec:order-blind}: static and dynamic SIP}} \\
$R_n$, $\upsilon$, $H_n$ & The $n$-th rewriting system in a dynamic
  system, its update rule, and the history of derivations so far
  (Definition~\ref{def:dynamic-system}). \\
$O$, $\preceq$, $\Obot$, $\Oprof$, $O_\top$ & A structural observer, the
  richness order on observers, the trivial observer, the profile
  observer, and the complete observer at the top of the order
  (Definitions~\ref{def:structural-observer},
  \ref{def:perm-blind}; $O_\top$ imported from~\cite{Buono2026oh}). \\
\midrule
\multicolumn{2}{l}{\textit{Section~\ref{sec:semantic-frame}: semantic frames}} \\
$F$, $F_1,F_2$ & A semantic frame; two frames disagreeing on the same
  fact (Definition~\ref{def:semantic-frame},
  Theorem~\ref{thm:semantic-underdetermination}). \\
$E=(R_0,I_{R_0})$ & An explicit encoding: a rewriting system paired
  with the set of equalities it derives
  (Definition~\ref{def:explicit-encoding}). Not to be confused with
  $X$ below, an unrelated later use. \\
$\mathcal I=(T,E,F)$ & An interpreted machine: a Turing machine with a
  fixed encoding and frame (Definition~\ref{def:interpreted-machine}). \\
\midrule
\multicolumn{2}{l}{\textit{Section~\ref{sec:orbital}: the orbital machine}} \\
$\mathcal E_{\mathrm{sp}}$, $\mathcal F_{\mathrm{sp}}$ & The (countable)
  space of all possible encodings, and of all possible computable
  frames (Definitions~\ref{def:encoding-space},
  \ref{def:computable-frame}). \\
$\sigma_E,\sigma_F$, $\sigma$ & The encoding and frame selectors; a
  general selector function from history to a choice of both
  (Definitions~\ref{def:two-selectors}, \ref{def:selector-function}). \\
$O_F$, $T^{O_F,E}$ & The semantic oracle for frame $F$; the oracle
  orbital machine that may consult it mid-computation
  (Definitions~\ref{def:semantic-oracle}, \ref{def:oracle-orbital}). \\
\midrule
\multicolumn{2}{l}{\textit{Andromeda and the theory of everything}} \\
$\mathcal V$, $\varphi$ & The space of physical states (velocities);
  a constrained frame selector over it
  (Definition~\ref{def:constrained-selector}). \\
$L$, $\mathfrak c$ & The space of all candidate physical laws; the
  cardinality of the continuum (Definition~\ref{def:law-space}). \\
$\Phi,\Psi$ & A candidate physical law; the law diagonalization
  produces outside a given countable list
  (Proposition~\ref{prop:toe-diagonal}). \\
\midrule
\multicolumn{2}{l}{\textit{Chaitin's $\Omega$ and usable algorithms}} \\
$\Omega$, $\Omega_U$ & Chaitin's halting probability, and the
  halting probability relative to a universal machine $U$
  (Proposition~\ref{prop:omega}, Proposition~\ref{prop:omega-family}). \\
$D$, $X$ & The answer sequence of an arbitrary decision problem, and
  its XOR-pairing with $\Omega$
  (Theorem~\ref{thm:every-decision-problem}); $X$ is unrelated to the
  encoding $E$ above and is written differently on purpose to avoid
  confusion between them. \\
$\mathcal C$, $\mathcal C_{\mathrm{comp}}$, $\mathcal C_{\mathrm{poly}}$
  & A class of observers (tests); the class of all computable ones;
  the class of polynomial-time ones
  (Definition~\ref{def:usable-algorithm}). \\
\bottomrule
\end{tabular}
\end{center}

\section{Introduction}
\label{sec:intro}

A system of rewriting rules with addition built in cannot prove that
$a+b=b+a$ when $a$ and $b$ are two fixed, distinct constants: the terms
are frozen, never touched by any rule, and a system that never touches
them can never compare them~\cite{Buono2026sip}. The fact is true, and
permanently invisible to the syntax that would have to prove it. This is
the static Syntactic Invariance Principle: fix the rules once, and
whatever they cannot reach stays out of reach forever, however long they
run.

Real systems, though, do not keep their rules fixed. A cipher takes a
new key each session; an adaptive filter rewrites itself as it sees more
data. So the question this paper begins from is not whether a fact can
be hidden once, but whether it stays hidden when the rules that hide it
are rebuilt again and again, each time as a function of everything that
has happened so far. We prove that it does --- under one sharp and
necessary condition: the process doing the rebuilding must itself be
unable to tell the two possible facts apart. This is the Dynamic
Syntactic Invariance Principle (Theorem~\ref{thm:dynamic-sip}), and its
condition has two halves, the second of which --- a constraint on the
update mechanism with no analogue in the static case --- is exactly what
the extra freedom costs (Proposition~\ref{prop:i-not-enough},
Corollary~\ref{cor:total-opacity}). Its first application is
cryptographic: it shares the shape of an open problem left in earlier
work~\cite{Buono2026sep}, establishing the secrecy target that problem
concerns in a restricted regime while leaving its distinctive question
open, and it recovers and extends several further results from
independent sources, each checked directly rather than by analogy.

From there the paper does one thing repeatedly: it takes a result that
looks settled, locates the assumption it silently relied on, and asks
what happens once that assumption is allowed to vary. The assumptions
are not chosen to be exotic; they are the ordinary, load-bearing ones a
working result leans on without stating. One sits close to home
--- that a derivation is read against one fixed intended model
(Theorem~\ref{thm:semantic-underdetermination}) --- and the same move
then carries into four instances that arise independently, each
discovered rather than built to fit: an abstract machine built from two
independent selectors (Section~\ref{sec:orbital}), special relativity's
Andromeda paradox (Section~\ref{sec:andromeda}), the question of
whether a complete theory of everything could exist
(Section~\ref{sec:toe}), and the fact that a computable algorithm's
output can be indistinguishable from noise to every observer, of which
Chaitin's $\Omega$ is the sharpest witness (Section~\ref{sec:omega}).

These are not brought together for effect, and they are not forced into
a frame that does not hold them --- a risk any reader is right to guard
against. The results are cryptographic at their core, and that practical
connection runs through the whole paper even where it is least obvious;
the excursions into physics and algorithmic randomness are where the
same shape turns up most unexpectedly. Each section is established
entirely on its own terms before it is connected to any other, one axis
at a time, so that whatever recurrence emerges is found, not imposed.
From all of them together, one apparent contradiction emerges, whose
outcome, if correct, is the true core of this work.

\section{Background: the static principle and its known instances}
\label{sec:static}

\subsection{The syntactic instance}

We begin with one concrete fact about rewriting rules, of the kind used
throughout automated theorem proving: two terms built from different
outermost symbols can never be forced equal, whatever is substituted
into them. This is the same intuition behind why $f(x)$ and $g(y)$ do
not unify when $f$ and $g$ are distinct function symbols.

\begin{definition}[First-symbol clash {\cite[\S3]{Buono2026sip}}]
\label{def:clash}
Two terms $s,t$ have a \emph{first-symbol clash} if they have distinct
outermost function symbols and neither is a variable; no substitution
unifies them.
\end{definition}

This single fact pins two constants in place forever. Read $a$ and $b$
below as two fixed tokens dropped into the system at the start: the
rules can move a great deal around them, but they can never touch $a$
or $b$ directly, and so can never compare them. They are called
\emph{Skolem} constants only in the sense that matters here: they are
fresh --- brand-new symbols that appear in no rule of the system --- so
no rule mentions them, and a rule that never mentions a symbol can
never fire on it. Freshness is the whole reason they stay frozen; the
name is just the standard label for it.

\begin{lemma}[Frozen subterm lemma {\cite[\S3]{Buono2026sip}}]
\label{lem:frozen}
Let $a,b$ be Skolem constants, fresh and distinct from $0$ and from
every term $s(t)$, in the language $\mathcal L=\{0,s,+\}$ with rules
$0+x=x$ (A1), $s(x)+y=s(x+y)$ (A2). Then no rule in $\{A1,A2\}$ fires at
the root of $a+b$ or $b+a$; these subterms are frozen under every
derivation of any length; and the relative order of $a,b$ inside any
compound term is a global invariant of the entire derivation.
\end{lemma}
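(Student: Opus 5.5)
The plan is to prove the three claims in order, each resting on Definition~\ref{def:clash} and an induction on derivation length. Throughout, rules are oriented left to right as rewrite rules $0+x\to x$ (A1) and $s(x)+y\to s(x+y)$ (A2), and ``derivation'' means a sequence of single rewrite steps at arbitrary positions.

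\textbf{Root step.} A rule fires at the root of a term $u$ only if $u$ is an instance of its left-hand side. For A1 the left side is $0+x$, so $a+b$ would need $a=0$; but $a$ and $0$ are distinct constants, a first-symbol clash. For A2 the left side is $s(x)+y$, so $a=s(x\sigma)$ for some substitution $\sigma$; again a clash, since $a$ is a constant distinct from $s$ and not a variable. The case $b+a$ is symmetric.

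\textbf{Frozenness.} A rule could still act inside $a+b$, at the positions of $a$ or $b$. These are constants, and no left-hand side is a variable or the constant $a$ or $b$, so nothing matches there. Hence $a+b$ is a normal form, and $a+b$ and $b+a$ have no redexes at any position.

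Extending this to ``every derivation of any length'' is the main obstacle. An occurrence of $a+b$ inside a larger term $C[a+b]$ can be copied, moved, or placed into a new context by rewriting at ancestor positions, for example $s(t)+(a+b)\to s(t+(a+b))$, or A2 applied with $x:=a+b$. I would make this precise with descendants (residuals). When a step rewrites at a position strictly above an occurrence, that occurrence sits inside the image of a variable of the rule. Both rules are linear and non-erasing on the variables that carry the occurrence: A1 keeps $x$, and A2 keeps $x$ and $y$ once each. So each such occurrence has exactly one descendant, syntactically identical to it.

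The invariant, proved by induction on derivation length, is that every descendant of an original occurrence of $a+b$ (or $b+a$) is again literally $a+b$ (or $b+a$). Steps at disjoint positions leave it untouched. Steps inside it are impossible by the normal-form argument. Steps above it copy it unchanged. The one case to check carefully is an A1 or A2 step whose redex root is the occurrence itself; the root-step argument excludes this.

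\textbf{Order invariant.} I would take ``relative order'' to mean the left-to-right order of the leaf occurrences of $a$ and $b$ in a term. I would show that a single step preserves it. A step replaces $l\sigma$ by $r\sigma$, and in both rules the variables occur in the same left-to-right order in $l$ and $r$: $x$ before $y$ in A2, and the single variable $x$ in A1. Neither rule introduces $a$ or $b$ (freshness), and neither duplicates or erases a variable. So the sequence of $a,b$-leaves in $r\sigma$ equals that in $l\sigma$, and the surrounding context is unchanged. Induction over the derivation then gives a global invariant.

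The subtle point is that the claim is stated for \emph{equations} ($0+x=x$). If derivations were allowed to use the equations in both directions, the linearity and order-preservation argument would still go through, because both orientations are linear, variable-order-preserving and symbol-fresh. I would remark on this so the lemma covers equational derivations too.
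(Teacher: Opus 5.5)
Your argument for oriented rewriting is correct. The paper imports the lemma without proof, but your route is the one its surrounding text sketches: the first-symbol clash of Definition~\ref{def:clash} at the root, freshness at $a$ and $b$, and an induction on derivation length as in Lemma~\ref{lem:sip-static}. Your residual bookkeeping makes explicit what happens when a compound is carried along by steps above it. One step there depends on these particular rules and deserves its line of justification. An occurrence of $a+b$ strictly below a redex lies inside a variable's image because the only proper non-variable subterms of the left-hand sides are $0$ and $s(x)$, and neither can be instantiated to a $+$-rooted term. A left-hand side such as $(x+y)+z$ would overlap the compound and break this.

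The closing remark is wrong and should be narrowed. Read right to left, A1 becomes $x\to 0+x$, whose left-hand side is a bare variable. It therefore fires at every position: $a+b\to 0+(a+b)$ at the root of the compound, and $a+b\to(0+a)+b$ at $a$. Your frozenness step relied on ``no left-hand side is a variable''. So for equational (bidirectional) derivations the first two claims of the lemma fail outright: a rule does fire at the root, and the compounds are not frozen. Only the order invariant survives. Both orientations of A1 and A2 are linear, non-erasing and variable-order-preserving, contain no $a$ or $b$, and have the same variables on both sides. Hence the left-to-right sequence of $a,b$-leaves is invariant under equational steps too. That alone shows $a+b=b+a$ is not equationally derivable, since $a+b$ and $b+a$ carry opposite leaf orders. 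State the extension for the order invariant only.
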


The next definition names a pattern familiar to anyone who has verified
a loop invariant: a property true at the start and preserved by every
step. Here the process is a rewriting system rather than a loop, but
the argument has exactly the same shape.

\begin{definition}[Syntactic invariant {\cite[\S4]{Buono2026sip}}]
\label{def:syninv}
A property $P$ of terms is a \emph{syntactic invariant} for a rewriting
system $R$ if every term in the initial clause set satisfies $P$, and
applying any rule of $R$ to a term satisfying $P$ yields a term
satisfying $P$. \cite[Def.~4]{Buono2026sip} states this for $R$ a
superposition calculus; we use the broader term ``rewriting system''
throughout, since superposition is one instance of it and nothing below
relies on anything beyond what Lemma~\ref{lem:frozen} provides.
\end{definition}

\begin{lemma}[Syntactic Invariance Principle, static form {\cite[Lemma~5]{Buono2026sip}}]
\label{lem:sip-static}
If $P$ is a syntactic invariant for $R$, every term in every clause
derivable by $R$ satisfies $P$.
\end{lemma}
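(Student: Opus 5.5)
The argument is a straightforward induction on the length of derivations. I will make precise what ``derivable by $R$'' means: a clause is derivable if it lies in the initial clause set, or if it results from applying a rule of $R$ to clauses that are themselves derivable. I will also take ``applying a rule to a term satisfying $P$ yields a term satisfying $P$'' in Definition~\ref{def:syninv} in the natural sense for multi-premise rules: if all terms in the premises satisfy $P$, then all terms in the conclusion satisfy $P$. Every derivable clause then has a finite derivation $C_0,C_1,\dots,C_n$ in which each $C_i$ is either initial or obtained by one rule application from earlier $C_j$. I will induct on $n$, or equivalently on the position $i$ in a fixed derivation.

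The base case is immediate from the first clause of Definition~\ref{def:syninv}: every term of every initial clause satisfies $P$. For the inductive step, suppose every term of every $C_j$ with $j<i$ satisfies $P$, and let $C_i$ arise by applying a rule $\rho\in R$ to premises among the $C_j$. By the inductive hypothesis all terms in the premises satisfy $P$. The second clause of Definition~\ref{def:syninv} then gives that every term of $C_i$ satisfies $P$. Since every derivable clause occurs at some finite position of some derivation, the conclusion follows for all of them.

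The only point needing care, which I expect to be the main obstacle, is the mismatch between the term-level phrasing of the invariant and the clause-level rules of a superposition-style calculus. A rule such as superposition builds a new clause from subterms of two premises and a substitution. So I must confirm that ``every term of the conclusion'' is covered by preservation, including terms that are instances of premise terms under the unifier. I will handle this by reading Definition~\ref{def:syninv} as quantifying over all terms produced by a rule application, substituted instances included, which is exactly how \cite[Def.~4]{Buono2026sip} intends it. In the concrete use with Lemma~\ref{lem:frozen}, the property ``$a$ precedes $b$'' is stable under substitution because $a,b$ are fresh and no variable is ever bound to a term that reorders them. With that reading fixed, no further difficulty arises, and the proof is the loop-invariant argument described just before the definition.
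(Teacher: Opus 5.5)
Your proposal is correct and is the same argument as the paper's: the paper proves the lemma in one line by induction on derivation length, with the base case given by the first clause of Definition~\ref{def:syninv} and each step by the second. Your handling of multi-premise rules and substituted instances spells out the reading of ``preserved by every rule'' that the paper uses implicitly, and your closing aside about the concrete $a,b$ invariant is not needed for this abstract statement.
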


The proof is induction on derivation length: the base case is the first
condition of Definition~\ref{def:syninv}, and each step preserves $P$ by
the second.

To apply this to the Skolem pair, take $P(t)$ to be ``every occurrence
of $a$ sits inside a subterm $a+u$ with $b$ occurring in $u$, and
symmetrically for $b$.'' This holds of the initial clause and is
preserved by $\{A1,A2\}$ (Lemma~\ref{lem:frozen}), so no derivable
clause ever exposes the relative order of $a,b$. The empty clause would
require unifying $a$ with $b$, and is therefore never
derivable~\cite[Lemma~9]{Buono2026sip}.

\begin{remarknn}[The shape of a lower bound {\cite[Remark~7]{Buono2026sip}}]
\label{rem:lower-bound-shape}
\cite[Remark~7]{Buono2026sip} names the general pattern
Lemma~\ref{lem:sip-static} instantiates: to rule out a fact $\varphi$,
find a property $P$ that holds at the start, that every rule preserves,
and that leaves no room for $\varphi$. The target, in that source's own
phrase, is ``unreachable rather than merely unreached.''
Theorem~\ref{thm:dynamic-sip} below is built in exactly this shape,
with one addition: the sequence $(R_n)$ itself must not, in its own
construction, undo the preservation. This is why
Definition~\ref{def:opacity-preserving} carries two conditions where the
static Definition~\ref{def:syninv} needed one: the second condition is
the price of letting $R$ move.
\end{remarknn}

\begin{remarknn}[Remark 10 of~\cite{Buono2026sip}: a dependency the
static paper already flagged]
\label{rem:remark10}
\cite[Remark~10]{Buono2026sip} observes that the argument holds
unchanged when the rule set is enlarged (factoring, splitting), provided
no new rule touches the frozen Skolem constants. The invariance thus
depends on which rules are available --- a dependency that is a side
condition while the rule set is fixed, and becomes load-bearing once the
rule set may change over time. Section~\ref{sec:dynamic} formalizes it.
\end{remarknn}

\subsection{The distributional instance}
\label{sec:distributional-instance}

The same argument has a life outside syntax. Its skeleton --- a true
property, holding at the start and preserved by every available
operation --- transfers unchanged to a probabilistic setting, where it
becomes the MR-OTP's own invariance theorem. What was a fact about
which clauses a calculus can derive becomes a fact about what a
distribution can reveal.

\begin{remarknn}[Theorem 8.15 of~\cite{Buono2026sep} as a second
instance]
\label{rem:distributional}
\cite[Remark~8.20]{Buono2026sep} identifies its own Theorem~8.15 as an
instance of the same argument shape. There the invariant is
$P:\ \Pr[M=m\mid C=c,B=b]=\Pr[M=m]$ for all $m,c,b$; it holds after any
single fresh MR-OTP encryption (\cite[Prop.~8.1]{Buono2026sep}) and is
preserved by every further operation on the observable $(C,B)$, since
$C$ is a function of a fresh key independent of all prior data.

The correspondence with the syntactic case is term for term: the
message digits $m_i$ play the role of the Skolem constants $a,b$; the
ciphertext/base pair $(C,B)$ plays the role of the syntactic system
operating on symbols alone; and the frozen compounds $a+b,b+a$
correspond to the message $M$ itself. The two settings differ in one
respect only, and the source states it: the invariant holds
per-derivation in~\cite{Buono2026sip} and distributionally
here~\cite[Rem.~8.20]{Buono2026sep}.

Section~\ref{sec:recover-815} makes this dictionary rigorous, as a
generalization of the simpler Cell~(d) instance recovered in
Section~\ref{sec:cell-d}; Section~\ref{sec:cell-d-dynamic} then extends
both to a dynamic setting.
\end{remarknn}

Both instances share one feature this paper preserves throughout: the
invariant $P$ is a true fact, preserved because the operations available
preserve true facts --- never a false assertion built to force a
contradiction.

\section{The dynamic setting}
\label{sec:dynamic}

\subsection{Motivation}

Real codes do not sit still: their rules change over time. The static
principle of Section~\ref{sec:static} freezes two constants under a
fixed rule set, but says nothing about what happens when the rules
themselves are updated. This section proves that the hiding survives a
changing rule set, under one sharp condition on the update, and notes a
connection to an open problem posed in the source material.

This connects to Open Problem~6.4 of~\cite{Buono2026sep}, which
concerns a base sequence updated at every step as
$B_{t+1}=f(C_t,B_t)$ for publicly visible $C_t$, and asks under what
condition on $f$ a hidden quantity stays protected. The connection is
one of shape, and the source itself points to it:
\cite[\S6.4]{Buono2026sep} states the difficulty in this paper's own
vocabulary, cites the Syntactic Invariance Principle
of~\cite{Buono2026sip} by name, and frames its problem as a search for
an invariant of the joint distribution that holds after the first
session and is preserved by every subsequent application of $f$ ---
which is the shape of Theorem~\ref{thm:dynamic-sip}. One necessary
condition the source identifies informally, that the distribution of
$B_{t+1}$ given $C_t$ carries no information about $K_t$, is the
distributional reading of this paper's condition~(ii).

We record the shared shape and do not claim to settle Open
Problem~6.4: its distinctive demand --- an $f$ giving an adversary a
larger base search space than the partition protocol,
\emph{without} the disjointness condition of~\cite[Def.~6.1]{Buono2026sep}
--- lies outside what this paper establishes, and a proper account of
the connection would need an argument this paper does not attempt. What
this section does prove stands on its own, independently of that
problem: that the syntactic hiding of Section~\ref{sec:static} survives
a changing rule set under an opacity-preserving update
(Theorem~\ref{thm:dynamic-sip}), and, distributionally, that the
MR-OTP's Cell~(d) secrecy persists across a rolling base under a
swap-blind update (Theorem~\ref{thm:dynamic-celld}).

\subsection{Dynamic rewriting systems}

The idea is familiar from ordinary automata: a machine whose transition
rules change over time, the way a state machine's transition table
might be rewritten between runs. The one new ingredient is that the
rewriting may depend on everything that happened before it.

\begin{definition}[Dynamic rewriting system]
\label{def:dynamic-system}
A \emph{dynamic rewriting system} is a sequence $(R_n)_{n\ge 0}$ of
finite rewriting systems over a common alphabet $\Sigma$, together with
an initial term (or clause set) $x_0$ and local derivations
$(D_n)_{n\ge 0}$, where $D_0$ uses the rules of $R_0$ on $x_0$, and, for
$n\ge 1$, $D_n$ uses the rules of $R_n$ on the term produced by
$D_{n-1}$ under $R_{n-1}$. Write $H_n=(D_0,\dots,D_{n-1})$ --- the
\emph{history}, the record of everything the system has done up to step
$n$, and the only thing the update below is allowed to look at when it
decides the next rule set. An
\emph{update rule} is $\upsilon$ with $R_{n+1}=\upsilon(R_n,H_n)$; the
system is \emph{generated by $\upsilon$} if this holds for every $n$.
\end{definition}

\begin{definition}[Opacity-preserving update]
\label{def:opacity-preserving}
Let $a,b$ be fixed Skolem constants, fresh w.r.t.\ $\Sigma$. An update
$\upsilon$ is \emph{opacity-preserving w.r.t.\ $a,b$} if, for every
$R_n,H_n$, $R_{n+1}=\upsilon(R_n,H_n)$ satisfies:
\begin{enumerate}[label=(\roman*)]
  \item \textbf{$a,b$ stay frozen and stay fresh}: no left-hand side
    of any rule in $R_{n+1}$ has a
    clash-free unifier with $a$, with $b$, or with a subterm rooted at
    $+$ having $a$ or $b$ as an immediate argument (in particular, no
    rule fires at the root of $a+b$ or $b+a$ themselves, nor at $a$ or
    $b$ directly); and, further, neither $a$ nor $b$ occurs in the
    right-hand side of any rule in $R_{n+1}$, so no rule can introduce
    a fresh occurrence of $a$ or $b$ anywhere other than where they
    already sit. This is the same protection Lemma~\ref{lem:frozen}
    establishes for the fixed system $\{A1,A2\}$, together with the
    freshness that source already assumes of $a,b$ throughout, now
    required to hold afresh at every step of the sequence $(R_n)$, not
    merely at $R_0$.
  \item \textbf{The update does not leak the order}: $H_n\mapsto\upsilon(R_n,H_n)$
    is invariant under swapping every $a$ with $b$ throughout $H_n$: if
    $H_n'$ is $H_n$ with $a,b$ exchanged, $\upsilon(R_n,H_n')$ equals
    $\upsilon(R_n,H_n)$ with the same exchange applied.
\end{enumerate}
\end{definition}

The two conditions divide the work cleanly. Condition~(i) generalizes
Lemma~\ref{lem:frozen}, made load-bearing exactly where
Remark~\ref{rem:remark10} flagged it would need to be: the freezing
that held for one fixed rule set is now required afresh at every step.
Condition~(ii) governs the one channel the static case never had. When
$R$ may move, the update mechanism itself can carry information about
the order of $a,b$, independently of any rewriting; condition~(ii)
closes that channel. This is the extra price of letting $R$ move,
anticipated in Remark~\ref{rem:lower-bound-shape}, and it is why the
dynamic principle needs two conditions where the static one needed one.

\subsection{The characterization theorem}

The theorem below states the guarantee in full: under an
opacity-preserving update, the order of $a,b$ stays hidden at every
finite step, forever.

\begin{theorem}[Dynamic Syntactic Invariance Principle]
\label{thm:dynamic-sip}
Let $(R_n)_{n\ge 0}$ be generated by an opacity-preserving $\upsilon$,
started from an initial term (or clause set) $x_0$ in which $a,b$
already occur only inside $a+b,b+a$, under an $R_0$ satisfying
the hypotheses of Lemma~\ref{lem:frozen} on $a,b$. Then for every
$n\ge0$, every term produced by $D_0,\dots,D_n$ has $a,b$ occurring
only inside $a+b,b+a$, and their relative order is never exposed at
any finite step.
\end{theorem}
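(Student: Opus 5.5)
The plan is a double induction: an outer induction on the session index $n$, and inside each session an inner induction on the length of $D_n$ using Lemma~\ref{lem:sip-static}. Fix the invariant $P(t)$: ``every occurrence of $a$ or $b$ in $t$ sits as an immediate argument of a subterm $a+b$ or $b+a$.'' This is the property of Section~\ref{sec:static}, stated slightly more sharply. The goal is to show that $P$ is a syntactic invariant, in the sense of Definition~\ref{def:syninv}, for every $R_n$. We also need the starting term of each $D_n$ to satisfy $P$. Then Lemma~\ref{lem:sip-static}, applied once per session, gives the first claim.

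\textbf{Base ($n=0$).} $x_0$ satisfies $P$ by hypothesis. $R_0$ satisfies the hypotheses of Lemma~\ref{lem:frozen}, so no rule fires at the root of $a+b$, $b+a$, $a$ or $b$, and these subterms are frozen. Lemma~\ref{lem:sip-static} then covers every term produced by $D_0$.

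\textbf{Step ($n\ge1$).} The starting term of $D_n$ is the final term of $D_{n-1}$, which satisfies $P$ by the outer hypothesis. The rules are $R_n=\upsilon(R_{n-1},H_{n-1})$, so condition~(i) of Definition~\ref{def:opacity-preserving} applies. Suppose a rule $\ell\to r$ of $R_n$ rewrites at position $p$ of a term $t$ satisfying $P$, with matching substitution $\theta$. I will check $P$ for the result by a case split on where the frozen occurrences lie relative to $p$:
\begin{enumerate}[label=(\alph*)]
\item \emph{Disjoint from $p$.} The occurrences and their $+$-parents are untouched.
\item \emph{At $p$ or strictly above $p$.} Impossible. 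A redex at an occurrence of $a$ or $b$, or at its $+$-parent, needs a unifier of $\ell$ with $a$, $b$, or a $+$-term having $a$ or $b$ as an immediate argument, and (i) forbids all three. The same clause of (i) rules out any position strictly above $p$ inside the frozen compound.
\item \emph{Strictly below $p$.} Then the occurrence lies inside $\theta(x)$ for some variable $x$ of $\ell$, or inside a non-variable part of $\ell$. The second option is excluded because $a,b$ are fresh and $\ell$ is built from $\Sigma$. In the first option, any copy of $a$ or $b$ in $\theta(r)$ arrives inside a copy of $\theta(x)$. So it arrives together with its whole enclosing $a+b$ or $b+a$, because $P$ already held inside $\theta(x)$.
\end{enumerate}
Since (i) also forbids $a,b$ in right-hand sides, no new bare occurrence can appear. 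Hence $P$ is preserved and Lemma~\ref{lem:sip-static} closes session $n$.

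\textbf{Second claim (order never exposed).} Here condition~(ii) does the work. I would prove by induction on $n$ that swapping $a$ and $b$ commutes with the whole process. Write $\pi$ for the swap of $a$ and $b$ throughout a term. The claim is that running from $\pi(x_0)$ produces the $\pi$-image of every term and history, with the same rule sets $R_n$. The mechanism differs by part:
\begin{itemize}
\item \emph{Rewriting.} Rules of each $R_n$ mention neither $a$ nor $b$, so applying a rule commutes with $\pi$.
\item \emph{Update.} Condition~(ii) gives $\upsilon(R_n,\pi H_n)=\pi\,\upsilon(R_n,H_n)$. Because $R_{n+1}$ contains no $a,b$, this equals $\upsilon(R_n,H_n)$ itself.
\end{itemize}
Thus the two orderings generate identical rule sequences and $\pi$-related derivations. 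Nothing computed at any finite step, neither a derived term's shape modulo frozen compounds nor a rule choice, distinguishes the order. As in Section~\ref{sec:static}, no clash-free unification of $a$ with $b$ is ever attempted.

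I expect case (c) of the step to be the main obstacle. Variable duplication or erasure in $r$ must not separate $a$ from its partner. The argument rests on the fact that $\theta(x)$ always contains whole frozen compounds, which follows from (i) forbidding any match at or immediately above $a$ or $b$. I would make this precise first, by proving that any subterm of a $P$-term either contains a frozen compound wholly or contains no $a,b$ at all, except for the compound's own immediate arguments, which no redex can select.
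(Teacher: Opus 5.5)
Your architecture is the paper's: induction on sessions, Lemma~\ref{lem:sip-static} inside each session, and freshness plus condition~(ii) for the swap. However, case~(c) of your inductive step fails, and so does the auxiliary lemma you plan to prove for it.

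\emph{Where it fails.} Condition~(i) constrains only the \emph{root} of a redex: it forbids a left-hand side from unifying with $a$, $b$, $a+b$ or $b+a$ themselves. It does not forbid a redex rooted strictly above a frozen compound whose left-hand side has non-variable structure at the compound's root. Freshness keeps $a$ and $b$ out of $\ell$, but not the symbol $+$. So $\ell$ may contain a proper subterm $y+z$ that matches $a+b$ with $\theta(y)=a$ and $\theta(z)=b$. Then $\theta(y)$ does not contain the enclosing compound, so your step ``because $P$ already held inside $\theta(x)$'' is false. Your lemma's clause ``the compound's own immediate arguments, which no redex can select'' is false for the same reason.

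\emph{Counterexample to the first claim.} Take the rule $s(y+z)\to y+s(z)$, a true law of $\mathbb N$. It satisfies~(i): its left-hand side clashes with $a$, with $b$ and with every $+$-rooted term, and its right-hand side contains neither constant. An update that adds it at step~$1$ regardless of history satisfies~(ii). Yet from $x_0=s(a+b)$, which meets the theorem's hypothesis, one step of $D_1$ gives $a+s(b)$. There $a$ no longer sits inside $a+b$ or $b+a$, so $P$ is broken.

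\emph{Counterexample to the second claim.} The rule $s(y+z)\to s(z+y)$ is equally admitted by~(i). It rewrites the clause $s(a+b)\neq s(b+a)$ to $s(b+a)\neq s(b+a)$. This preserves your $P$, is fully swap-equivariant, and never attempts to unify $a$ with $b$. So neither $P$ nor equivariance yields non-exposure in the sense the paper's closing sentence needs. What is needed is that the compounds themselves are never rewritten, and (i) as stated does not guarantee it.

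The paper's own proof makes the same leap. It infers from ``no rule firing at $a,b$'s root'' that no occurrence is disturbed, ``by the same first-symbol-clash argument as Lemma~\ref{lem:frozen}''. That inference holds for $\{A1,A2\}$ only because their left-hand sides contain $+$ solely at the root. Your finer case split has therefore exposed a gap in the hypothesis rather than created one.

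\emph{How to close it.} Strengthen condition~(i) to require that no \emph{non-variable subterm} of any left-hand side unifies with $a$, $b$, $a+b$ or $b+a$. Then at any redex above a compound, the compound's root falls at or below a variable position of $\ell$. The whole compound lies inside some $\theta(x)$ and is copied intact, or erased, by the right-hand side. Your case~(c) and your auxiliary lemma then go through essentially verbatim, with the stronger invariant that every compound is carried unchanged. With that invariant, the paper's closing argument becomes valid: making $a+b$ and $b+a$ coincide would require unifying $a$ with $b$, because neither compound is ever rewritten.
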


\begin{proof}
The proof is by induction on $n$, generalizing the proof of
Lemma~\ref{lem:sip-static}.

\emph{Base case.} By hypothesis, $x_0$ already has $a,b$ occurring only
inside $a+b,b+a$: $P$ (``$a,b$ occur only
inside $a+b,b+a$'') holds initially, directly, with nothing further to
derive.

\emph{Inductive step.} Assume $P$ holds through step $n-1$. By~(i),
$R_n$ has no rule firing at $a,b$'s root, so no existing occurrence of
$a$ or $b$ (or of the frozen compounds $a+b,b+a$) is disturbed, by the
same first-symbol-clash argument as Lemma~\ref{lem:frozen}; and, since
neither $a$ nor $b$ occurs in the right-hand side of any rule in
$R_n$, no rule can introduce a fresh occurrence of $a$ or $b$
elsewhere in the term either. Together these give that $D_n$ preserves
$P$ exactly. By~(ii),
$H_{n-1}\mapsto R_n$ is swap-invariant: two systems agreeing except for
$a,b$'s order produce, at every step, swap-related rule sets. Combined
with~(i), the order never influences, nor is influenced by, any
observable part of the derivation. By induction $P$ holds at every $n$.
Since unifying $a+b$ with $b+a$ (exposing the order) requires unifying
$a$ with $b$, distinct constants no rule ever unifies, the order is
never exposed.
\end{proof}

\begin{remarknn}[The positive half of a characterization]
\label{rem:characterization}
Theorem~\ref{thm:dynamic-sip} holds for every opacity-preserving
$\upsilon$, and this conditioning on $\upsilon$ is exactly what the
result characterizes: an update violating either condition can expose
the order at a finite step. For instance $R_{n+1}=R_n\cup\{a\to0\}$,
triggered once a public signal in $H_n$ reaches a fixed value,
violates~(i) and exposes $a$ at once. The theorem is the positive half
of a characterization; Section~\ref{sec:necessity} establishes what
holds about necessity.
\end{remarknn}

\begin{remarknn}[Relation to the abstract obstruction framework
of~\cite{Buono2026obstruction}]
\label{rem:obstruction-integration}
Dynamic SIP and the abstract obstruction theorem
of~\cite{Buono2026obstruction} are two generalizations of the same
static fact (Lemma~\ref{lem:frozen}), along two independent axes: that
source generalizes across systems, holding the rule set fixed while
abstracting the signature, radius, and model; this section generalizes
across time, letting the rule set itself evolve. The two meet on
condition~(i) and diverge on condition~(ii).

On condition~(i) the correspondence is exact. Freeze $n$ and take a
single $R_n$ satisfying~(i): this is, term for term, a \emph{local
syntactic system} $\mathcal R=(\Sigma,V,\mathit{Rules},r_0)$ in the
sense of~\cite[Def.~2.5]{Buono2026obstruction}, with $r_0=1$ for
$\{A1,A2\}$. Condition~(i) plays two roles from that source's apparatus
at once. For the bare constants $a,b$ it is their \emph{protected set}
(\cite[Def.~2.9]{Buono2026obstruction}): no rule fires at $a$ or $b$,
and the definition's second half holds automatically, since constants
have no subterms to rewrite inside --- exactly as that source's
Example~2.11/Lemma~4.1 notes for this same pair. For the compounds
$a+b,b+a$, condition~(i) protects their root, which in that source's
more modular architecture is the job of the \emph{coherence} clause
(\cite[Def.~2.12(v)]{Buono2026obstruction}), verified there through its
Lemma~4.3. Condition~(i) folds both roles into a single requirement.
Both papers read the same fact about $\{A1,A2\}$, first isolated in
Lemma~\ref{lem:frozen}, through two formal lenses.

Condition~(ii) is where the two generalizations part. A static local
syntactic system has no update mechanism, and so nothing for a
swap-invariance requirement to constrain; condition~(ii) is this
paper's own contribution, needed exactly because $R_n$ here may change.
A dynamic analogue of that source's Case~2 --- a derivation-length
lower bound for systems whose protected set evolves under an
opacity-preserving update --- is left open, in the same spirit as that
source's Question~(Q2). This paper does not attempt it.

The invariant $P$ used here is the special case, under condition~(i),
of the refined invariant that source states in its Lemma~4.3 (matching
Lemma~9 of~\cite{Buono2026sip}): that $a$ occurs only inside some $a+u$
with $u$ containing $b$, and symmetrically for $b$. Once condition~(i)
keeps $a+b$ and $b+a$ frozen at their root at every step, $u$ can only
ever be $b$ itself, and the refined and simple invariants coincide on
every term this section's derivations produce. The refined form is
needed in that source's general setting only because its protected set,
alone, does not keep $a+b$'s root frozen; condition~(i) does that job
directly here.
\end{remarknn}

\begin{corollary}[Static SIP as the degenerate case]
\label{cor:static-as-degenerate}
Let $\upsilon_{\mathrm{id}}(R,H):=R$, and take $x_0$ to be the initial
clause $N: a+b\neq b+a$ (so $a,b$ trivially occur only inside
$a+b,b+a$ from the start). Then $\upsilon_{\mathrm{id}}$ is
opacity-preserving whenever $R_0$ satisfies Lemma~\ref{lem:frozen}, and
generates $R_n\equiv R_0$ for all $n$. Applying
Theorem~\ref{thm:dynamic-sip} recovers exactly
Lemma~\ref{lem:sip-static}, and hence~\cite[Thm.~11]{Buono2026sip}
($OI\not\subseteq TCSC$).
\end{corollary}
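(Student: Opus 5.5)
The plan is to check the two conditions of Definition~\ref{def:opacity-preserving} for $\upsilon_{\mathrm{id}}$ directly. I will then observe that the sequence it generates is constant, specialise Theorem~\ref{thm:dynamic-sip} to this case, and compare the conclusion with Lemma~\ref{lem:sip-static} and the empty-clause argument of~\cite[Lemma~9]{Buono2026sip}.

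\emph{Step 1: the generated sequence.} Induction on $n$ gives $R_{n+1}=\upsilon_{\mathrm{id}}(R_n,H_n)=R_n=R_0$.

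\emph{Step 2: condition~(ii).} Since $\upsilon_{\mathrm{id}}$ ignores $H_n$, we have $\upsilon_{\mathrm{id}}(R_n,H_n')=R_n=\upsilon_{\mathrm{id}}(R_n,H_n)$. The required swap-equivariance then reduces to $R_n$ being fixed by the exchange $a\leftrightarrow b$. This holds because $a,b$ are fresh with respect to $\Sigma$, so they occur in no rule of $R_0$ and the exchange acts trivially on it.

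\emph{Step 3: condition~(i).} Here I must show that every $R_n=R_0$ satisfies~(i). The freezing half is exactly what Lemma~\ref{lem:frozen} gives for $R_0$: no rule fires at $a$, at $b$, or at the root of a $+$-term with $a$ or $b$ as an immediate argument. For $\{A1,A2\}$ this is the first-symbol clash of Definition~\ref{def:clash}. The left-hand sides are $0+x$ and $s(x)+y$, and a $+$-term whose first argument is $a$ or $b$ clashes with $0$ or $s(\cdot)$. One case needs a closer look: a term $0+a$ or $s(t)+b$, where $a$ or $b$ is the \emph{second} argument. This is where I expect the main obstacle. I would handle it by reading the hypothesis ``$R_0$ satisfies Lemma~\ref{lem:frozen}'' as including the freezing clause required in~(i). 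I would also note that such subterms never arise from $x_0=N$, because by the invariant $P$ the only $+$-subterms containing $a,b$ are $a+b$ and $b+a$, whose first arguments clash with both rules. The freshness half of~(i) holds because $a,b$ do not occur in the right-hand sides $x$ and $s(x+y)$, again by freshness. Because $R_n$ is constant, verifying~(i) once for $R_0$ covers every step.

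\emph{Step 4: recovering the static results.} $x_0=N$ contains $a,b$ only inside $a+b$ and $b+a$. Theorem~\ref{thm:dynamic-sip} therefore applies and shows that every derived term keeps $a,b$ only inside these compounds, with their order never exposed. Since all $D_n$ use the same $R_0$, the concatenation $D_0D_1\cdots$ is just an ordinary $R_0$-derivation from $N$. Conversely, any finite $R_0$-derivation can be cut into such local derivations. So the conclusion is exactly Lemma~\ref{lem:sip-static} for the invariant $P$ of Section~\ref{sec:static}. Refuting $N$ would require unifying $a+b$ with $b+a$, and hence $a$ with $b$, which is impossible. The empty clause is therefore never derived, and~\cite[Thm.~11]{Buono2026sip} follows as in that source.
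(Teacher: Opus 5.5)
Your proposal is correct and follows the paper's proof essentially step for step. You show that $R_n\equiv R_0$, obtain condition~(ii) from the fact that $\upsilon_{\mathrm{id}}$ ignores $H_n$, and take condition~(i) from the hypothesis that $R_0$ satisfies Lemma~\ref{lem:frozen}; you then identify the constant-rule dynamic system with a single fixed system under unbounded derivation and close with the empty-clause step of \cite[Lemma~9]{Buono2026sip}. Your extra care refines the argument rather than changing it: you read~(ii) as equivariance and use freshness to make the swap act trivially on $R_0$, and you flag that $0+x$ and $s(x)+y$ do unify with $0+a$ and $s(t)+b$, so~(i) holds only under the reading the paper also adopts tacitly. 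In the last step the paper spells out the chain ($TCSC+\{A1,A2\}\nvdash W$ together with \cite[Lemma~8]{Buono2026sip}, $OI\vdash W$) where you only write ``as in that source''.
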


\begin{proof}
\emph{The base case on $x_0$.} $x_0=N$ has $a,b$ occurring only inside
$a+b,b+a$ by construction, exactly as
Theorem~\ref{thm:dynamic-sip}'s hypothesis requires.

\emph{Condition~(i).} $R_{n+1}=R_n=R_0$ by induction; $R_0=\{A1,A2\}$
satisfies Lemma~\ref{lem:frozen} by hypothesis, so~(i) holds at every
step with nothing further to check.

\emph{Condition~(ii).} $\upsilon_{\mathrm{id}}$ ignores its second
argument entirely: $\upsilon_{\mathrm{id}}(R_n,H_n)=R_n=\upsilon_{\mathrm{id}}(R_n,H_n')$
for \emph{every} $H_n,H_n'$, a fortiori for the swapped pair
required by~(ii).

So $\upsilon_{\mathrm{id}}$ is opacity-preserving and
Theorem~\ref{thm:dynamic-sip} applies. The dynamic system with
$R_n\equiv R_0$ for all $n$ is, by Definition~\ref{def:dynamic-system},
exactly a single fixed system subjected to unboundedly long derivation
--- the object Lemma~\ref{lem:sip-static} already quantifies over.
Taking $P(t)$ as in~\cite[Lemma~9]{Buono2026sip}, the conclusion
coincides with that lemma's Steps~1--2 exactly; Step~3 there (the empty
clause is underivable) follows identically. Hence
$TCSC+\{A1,A2\}\nvdash W$, and with~\cite[Lemma~8]{Buono2026sip}
($OI\vdash W$) and~\cite[Thm.~4.3]{HetzlVierling} ($TCSC\not\subseteq OI$),
this recovers~\cite[Thm.~11]{Buono2026sip}.
\end{proof}

\begin{remarknn}[The corollary verifies that the generalization reduces
correctly]
\label{rem:degenerate-purpose}
The purpose of this corollary is to verify a consistency requirement:
since the static case is the $\upsilon\equiv\upsilon_{\mathrm{id}}$
instance of the dynamic one, a correct generalization must reduce to
the original with no discrepancy in hypotheses, conclusion, or proof
structure. The check above confirms it does, explicitly rather than by
assertion. The Hetzl--Vierling question itself was already resolved
by~\cite[Thm.~11]{Buono2026sip}.
\end{remarknn}

\subsection{On necessity}
\label{sec:necessity}

Theorem~\ref{thm:dynamic-sip} shows the two conditions are jointly
sufficient. This subsection shows condition~(i) alone is not: the
update channel condition~(ii) governs is real, and its necessity is
exactly what the static setting could not see.

\begin{proposition}[Condition~(i) alone is not sufficient]
\label{prop:i-not-enough}
There is an update satisfying~(i) but not~(ii), exposing the order at
some finite step.
\end{proposition}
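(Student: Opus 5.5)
The construction is a single explicit counterexample: an update $\upsilon$ that satisfies condition~(i) at every step, so the frozen compounds $a+b,b+a$ are never touched by any rule, but which reads the order of $a,b$ off the history and writes it into the public rule set. Once the order sits in the rule set, it is exposed, because $R_{n+1}$ is part of what the system visibly does. Nothing in the term rewriting ever compares $a$ with $b$.

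\textbf{The data.} Start from $x_0=N:\ a+b\neq b+a$, or more usefully from a term in which one frozen compound appears in a distinguished left position, say $g(a+b,\,b+a)$ versus its swap $g(b+a,\,a+b)$, and take $R_0=\{A1,A2\}$. Fix two fresh public symbols $c_0,c_1\in\Sigma$ that differ from $a,b$ and occur nowhere in $x_0$.

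\textbf{The update.} Define
\[
\upsilon(R_n,H_n)=R_n\cup\{c_0\to c_0\}
\]
if the leftmost frozen compound recorded in $H_n$ has $a$ as its left argument, and
\[
\upsilon(R_n,H_n)=R_n\cup\{c_1\to c_1\}
\]
otherwise. (Any pair of syntactically distinct harmless rules over $\Sigma$ would serve equally well.)

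\textbf{Checking (i).} The added rules have left-hand sides $c_0$ or $c_1$. These clash at the first symbol with $a$, with $b$, and with any $+$-rooted term, by Definition~\ref{def:clash}. Their right-hand sides contain neither $a$ nor $b$. Hence (i) holds afresh at every step, and Theorem~\ref{thm:dynamic-sip}'s invariant $P$ is still preserved by each $D_n$.

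\textbf{Checking that (ii) fails.} Swapping $a$ and $b$ throughout $H_n$ flips which compound is leftmost, so it changes the output from $R_n\cup\{c_0\to c_0\}$ to $R_n\cup\{c_1\to c_1\}$. The required equivariance would instead demand the \emph{swap} of $R_n\cup\{c_0\to c_0\}$. Since $a,b$ do not occur in those rules, that swap is $R_n\cup\{c_0\to c_0\}$ itself, and it differs from what $\upsilon$ actually returns.

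\textbf{Exposure.} Compare the two initial configurations, $x_0$ and its $a\leftrightarrow b$ swap. They yield $R_1$ containing $c_0\to c_0$ and $c_1\to c_1$ respectively. So the order is determined by $R_1$, an observable part of the system, at finite step $n=1$. This exhibits exactly the channel that condition~(ii) closes.

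\textbf{Main obstacle.} The delicate step is making ``exposing the order'' precise. Theorem~\ref{thm:dynamic-sip} phrases exposure as ``unifying $a+b$ with $b+a$'', and that still never happens here, because $P$ is preserved. I would therefore state explicitly that exposure means the order is recoverable from the observable trajectory $(R_n,D_n)_n$, which is the reading condition~(ii) is designed to prevent. If instead a derivational exposure is wanted, I would first try to make $c_0$ and $c_1$ control a later rewrite: put a public marker $d$ into $x_0$, and let $R_2$ contain $d\to c_0$ or $d\to c_1$ according to the case, so that the produced term itself records the order. This must be done without violating (i), which forbids $a,b$ from appearing in any right-hand side, and only public symbols are used.
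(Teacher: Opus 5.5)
Your argument is correct, and it differs from the paper's in where the leaked bit travels.

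\textbf{The two constructions.} The paper keeps $R_n\equiv R_0$ and lets the visible history additionally record a bit $g(H_{n-1})$ saying whether $a$ preceded $b$. You keep $H_n=(D_0,\dots,D_{n-1})$ exactly as Definition~\ref{def:dynamic-system} defines it. Instead, you have $\upsilon$ itself write the order into the rule set, through harmless rules on fresh public symbols.

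\textbf{What your route buys.} The exchange fixes every rule not mentioning $a,b$, so equivariance would force equal outputs on $H_n$ and $H_n'$. Your $\upsilon$ therefore violates the literal clause of Definition~\ref{def:opacity-preserving}(ii). In the paper's sketch, $\upsilon$ is in effect $\upsilon_{\mathrm{id}}$, which Corollary~\ref{cor:static-as-degenerate} shows satisfies~(ii). The non-invariance there belongs to the recording map $g$, which sits outside both the update and the history as defined. The sketch is thus reading ``update mechanism'' more broadly than the definition does.

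\textbf{What the paper's route buys.} It is shorter, and it displays the side channel as plainly external to the rewriting, which is the point of Remark~\ref{rem:artificial-but-real}. Yours costs only a check of~(i) for the added rules.

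\textbf{On exposure.} You observe that under~(i) the invariant $P$ survives whatever~(ii) does. So exposure must mean recoverability from the observable trajectory, not unification of $a+b$ with $b+a$. This makes explicit what the paper's ``read off the bit directly'' leaves tacit.

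\textbf{One indexing slip.} $H_0=()$ is empty, so $R_1=\upsilon(R_0,H_0)$ cannot depend on the order. Both runs take your ``otherwise'' branch and get $R_1=R_0\cup\{c_1\to c_1\}$. The runs first separate at $R_2=\upsilon(R_1,(D_0))$, where only the unswapped run adds $c_0\to c_0$. The statement asks only for some finite step, so $n=2$ does the job.
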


\begin{proof}[Proof sketch]
Let $\upsilon$ leave $R_n\equiv R_0$ unchanged (so~(i) holds trivially),
but let the visible history additionally record a bit $g(H_{n-1})$
equal to $1$ exactly when $a$ preceded $b$ initially. This side-channel
is not a rewriting rule, so~(i) is unaffected, but it is not
swap-invariant, violating~(ii); the order is read off the bit directly.
\end{proof}

\begin{remarknn}[The side-channel is a genuine mechanism, not an
artifact]
\label{rem:artificial-but-real}
The construction is a side-channel, not a rewriting attack: it shows
that condition~(i) governs only the rewriting system and says nothing
about information flow outside it. The dynamic setting has exactly such
a channel by construction --- the update mechanism itself --- which the
static setting lacks.
\end{remarknn}

\begin{remarknn}[This concern is named in advance in the source
material]
\label{rem:structural-violation-precedent}
\cite[Def.~6.7]{Buono2026ow} names this phenomenon \emph{structural
violation}: a problem can be invariant at the level a formal statement
checks, while an algorithm solving it leaks, as a side effect, exactly
what that invariant was meant to hide. That source states the
connection to the static SIP directly: a syntactic calculus cannot
derive clauses violating a syntactic invariant, but a semantic
computation solving an invariant-respecting problem can leak that
invariant as an incidental side effect of how it
computes~\cite[Rem.~6.8]{Buono2026ow}. The same phenomenon appears in
this paper's dynamic setting, through a different channel: where that
source's structural violation is an algorithm's \emph{output} leaking
what the problem statement kept invariant, here the potential leak is
the \emph{update mechanism} itself, evolving the rules in a way that
would betray the order of $a,b$. Condition~(ii) of
Definition~\ref{def:opacity-preserving} is what rules that channel out.
The two are instances of one concern --- an invariant formally
respected but betrayed by a side channel --- named for the
static/output case in~\cite{Buono2026ow} and closed for the
dynamic/update case here.
\end{remarknn}

\begin{proposition}[Individual necessity is open]
\label{prop:not-nec}
Whether an update violating~(ii) but satisfying~(i) must always expose
the order, or whether some remain safe by accident, is not established.
Theorem~\ref{thm:dynamic-sip} and Proposition~\ref{prop:i-not-enough}
show joint sufficiency and that dropping either can fail; neither shows
individual necessity nor minimality.
\end{proposition}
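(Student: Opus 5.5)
The statement is a status claim, not a theorem with mathematical content. It makes two assertions: that joint sufficiency and the failure of condition~(i) alone are established, and that individual necessity and minimality are not. My plan is to justify each half by pointing to what the earlier results do and do not quantify over. For the positive half I cite Theorem~\ref{thm:dynamic-sip}, which gives sufficiency of~(i)$\wedge$(ii), and Proposition~\ref{prop:i-not-enough}, which exhibits one update satisfying~(i) but violating~(ii) that exposes the order. For dropping~(i) I cite Remark~\ref{rem:characterization}: the update $R_{n+1}=R_n\cup\{a\to0\}$ keeps~(ii) as stated, because the trigger reads a public signal, and it exposes $a$ at once.

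For the negative half I will show that these witnesses are existential, while individual necessity is a universal claim. Necessity of~(ii) would say that \emph{every} $\upsilon$ satisfying~(i) but not~(ii) exposes the order at some finite step. Proposition~\ref{prop:i-not-enough} gives only one such $\upsilon$, so the universal statement does not follow from it. To show that the gap is real rather than an artifact of the proofs so far, I plan to sketch a candidate update that violates~(ii) yet plausibly leaks nothing observable. One choice is a $\upsilon$ whose dependence on the $a,b$ order falls only on rules that can never fire. An example is adding to $R_{n+1}$ a rule whose left-hand side uses a fresh symbol absent from every term, and doing so only when the order is $a+b$. Then $\upsilon$ fails swap-equivariance, but every derivation $D_m$ is syntactically identical in the two cases.

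Taken at face value, this example would \emph{refute} individual necessity in the strong form, which is more than the proposition claims. So I have to be careful about what counts as ``exposing.'' The main obstacle is that the notion of exposure is not fixed. If the rule sets $R_n$ themselves count as observable, the example leaks the order through $R_n$. If only the derivations count, it does not leak. I would therefore state explicitly that the answer depends on which observation model is chosen for ``exposed,'' and that the paper does not pin that model down. This is exactly why the question is recorded as open rather than settled. Minimality, meaning that no weaker pair of conditions suffices, would need the same fixed observation model plus a classification of all updates. I would note that nothing in Section~\ref{sec:dynamic} attempts this, and conclude.
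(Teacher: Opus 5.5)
Your central point is exactly the justification the paper relies on: Theorem~\ref{thm:dynamic-sip} is a sufficiency result and Proposition~\ref{prop:i-not-enough} a single existential witness, so neither yields the universal claim ``every update satisfying~(i) but not~(ii) exposes the order'', nor minimality. The paper gives nothing beyond the statement itself. But two of your supporting steps fail as written.

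The first is the witness for dropping~(i). Condition~(ii) of Definition~\ref{def:opacity-preserving}, as worded, is equivariance: $\upsilon(R_n,H_n^{a\leftrightarrow b})$ must equal $\upsilon(R_n,H_n)$ \emph{with the exchange applied}. Take $R_{n+1}=R_n\cup\{a\to0\}$ with a swap-invariant public trigger. The left side still adds $a\to0$, while the exchanged right side contains $b\to0$, so~(ii) fails. Your claim holds only under the plain-invariance reading that the proof of Corollary~\ref{cor:total-opacity} slips into, and Remark~\ref{rem:characterization} itself claims only a violation of~(i). Use $R_{n+1}=R_n\cup\{a\to0,\,b\to0\}$ instead. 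It satisfies~(ii) under either reading, violates~(i), and rewrites both $a+b$ and $b+a$ to $0$, refuting $a+b\neq b+a$.

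The second is how you resolve the tension your dead-rule example creates: you treat the two halves under different observation models.
\begin{itemize}
\item The update in Proposition~\ref{prop:i-not-enough} is literally $\upsilon_{\mathrm{id}}$, which satisfies~(ii). Its bit lives in a ``visible history'' that Definition~\ref{def:dynamic-system} does not contain. Moving that bit into the update's own output, the only thing~(ii) constrains, is precisely your dead-rule construction.
\item So in any model where that proposition exposes the order (rule sets, or the update's output, observable), your example exposes too and is no counterexample.
\item Now take the derivations-only model, i.e.\ the notion the proof of Theorem~\ref{thm:dynamic-sip} actually establishes: no unification of $a+b$ with $b+a$. That proof uses only~(i) for $P$ and for the final non-unification. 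Hence no update satisfying~(i) ever exposes, Proposition~\ref{prop:i-not-enough} fails there, and~(ii) is simply not necessary.
\end{itemize}
You cannot cite the positive half as settled and keep the negative half open by switching models.

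Moreover, whichever model you fix, the strong form ``must always expose'' is false if~(ii) is read over all histories, as its ``$H_n\mapsto\upsilon(R_n,H_n)$'' phrasing suggests. Take an update that agrees with $\upsilon_{\mathrm{id}}$ everywhere except on one non-swap-symmetric history $H^\dagger$, unreachable from $x_0$ and $x_0^{a\leftrightarrow b}$. On $H^\dagger$ it adds a dead rule $c\to0$ ($c$ occurring nowhere); on $(H^\dagger)^{a\leftrightarrow b}$ it does not. This update violates~(ii), satisfies~(i), and has exactly the runs of $\upsilon_{\mathrm{id}}$, so it is safe by accident under every model that sees only the run.

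What holds under every reading is that the cited results establish neither individual necessity nor minimality. The necessity question becomes a genuine one only after exposure is fixed and~(ii) is restricted to reached histories.
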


\section{The limiting case: total opacity via the observational hierarchy}
\label{sec:total-opacity}

Condition~(ii) is a hypothesis about how $\upsilon$ \emph{behaves}, and
Proposition~\ref{prop:i-not-enough} already showed this hypothesis can
fail. This section asks a sharper question: can $\upsilon$ be built so
that condition~(ii) holds automatically, from what $\upsilon$ is even
\emph{able} to see, regardless of how it behaves? The answer is yes.
The idea is simple: if the update mechanism is physically blind to the
order of $a,b$ --- if it cannot even in principle tell the two
symmetric cases apart --- then however it is programmed, it cannot leak
that order. The rest of this section makes ``physically blind'' precise
and proves this.

To say what an update mechanism can see, we first need a general notion
of an observer. This is nothing more exotic than a hash function:
something that sorts strings into buckets, inducing the equivalence
``same bucket'', while never claiming that any string, or any bucket,
is true or false.

\begin{definition}[Structural observer {\cite[Def.~3.1]{Buono2026oh}}]
\label{def:structural-observer}
A \emph{structural observer} is a function $O:\Sigma^*\to S$ for some
set $S$. It induces an equivalence $x\sim_O y\iff O(x)=O(y)$, and does
nothing more than this: $O$ groups strings together, but never asserts
that any of them is true or false. Say $O_1\preceq O_2$ (``$O_1$ is no
richer than $O_2$'') if $O_1=f\circ O_2$ for some function $f$: every
distinction $O_1$ makes, $O_2$ already makes too. The trivial observer
$\Obot:x\mapsto\star$, which sends every string to the same single
value, is the poorest observer possible: $\Obot\preceq O$ for every
$O$ (\cite[Def.~3.7, 3.11]{Buono2026oh}).
\end{definition}

The next fact says precisely what it means for $\Obot$ to be the
poorest observer possible: a system fed only $\Obot$'s output cannot
recognise anything nontrivial, no matter how powerful it is otherwise.

\begin{proposition}[Triviality of $\Obot$ {\cite[Cor.~2.8]{Buono2026oh}}]
\label{prop:obot-trivial}
For any machine $T$ and any language $L\notin\{\emptyset,\Sigma^*\}$
(that is, any $L$ that is neither empty nor everything), a system
receiving only $\Obot(x)=\star$ does not recognise $L$, regardless of
computational power.
\end{proposition}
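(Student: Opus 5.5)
The plan is to formalise ``a system receiving only $\Obot(x)$'' as a procedure whose output on input $x$ depends on $x$ only through $\Obot(x)$. The argument is then a pigeonhole on the single value $\star$, with no appeal to computability.

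\textbf{Model.} A decision procedure driven by observer $O$ is any function $D:S\to\{0,1\}$, where $S$ is the codomain of $O$. The system's verdict on $x$ is $D(O(x))$. It recognises $L$ if $D(O(x))=1$ exactly when $x\in L$. ``Regardless of computational power'' is captured by allowing $D$ to be an arbitrary set-theoretic function, not necessarily computable. This also covers a machine $T$ of any kind, even one with oracles, because such a machine sees only $\star$ and so its behaviour on input $x$ is a function of $\star$ alone. If $T$ is allowed internal randomness, its acceptance behaviour (for example its acceptance probability) is still a function of $\star$, so the same argument applies to that behaviour.

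\textbf{Argument.} For $\Obot$ we have $S=\{\star\}$, so $D\circ\Obot$ is a constant function on $\Sigma^*$, equal either to $0$ or to $1$. Since $L\neq\emptyset$, pick $x\in L$. Since $L\neq\Sigma^*$, pick $y\notin L$. Then $\Obot(x)=\Obot(y)=\star$, so the procedure gives the same verdict on $x$ and $y$. A correct recogniser would have to accept $x$ and reject $y$, which is impossible. More generally, $O$ supports recognition of $L$ if and only if $L$ is a union of $\sim_O$-classes. Under $\Obot$ there is exactly one class, namely $\Sigma^*$, so only $\emptyset$ and $\Sigma^*$ qualify.

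\textbf{Main obstacle.} The only delicate point is the modelling step: making precise that the system has no access to $x$ beyond $\Obot(x)$, so that no side channel such as input length or running time leaks information. I would state this explicitly as the factorisation hypothesis that the verdict is $D\circ\Obot$, matching the ``does nothing more than this'' clause of Definition~\ref{def:structural-observer}. Once this hypothesis is fixed, the result follows immediately, and it is the $\Obot$ instance of the statement in \cite[Cor.~2.8]{Buono2026oh}.
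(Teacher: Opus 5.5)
Your argument is correct. Once the verdict is required to factor as $D\circ\Obot$ with $D$ an arbitrary function, it is constant on $\Sigma^*$, so any $x\in L$ and $y\notin L$ witness the failure. This holds for any $T$, randomized or not, because acceptance (or acceptance probability) is a function of $\star$ alone.

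The paper gives no self-contained proof of this proposition. It imports it from \cite[Cor.~2.8]{Buono2026oh}, and its only in-house derivation, Proposition~\ref{prop:order-blind-static}, takes a different route. That route reruns the argument of Corollary~\ref{cor:total-opacity} with a permutation-blind observer, i.e.\ it factors the verdict through $\mathrm{prof}(x)$ rather than through $\star$.

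What that route buys is a link to the Dynamic SIP: the result appears as an $n=0$ instance of total opacity. But it exploits only blindness to reordering. It therefore needs the extra hypothesis that $L$ is not permutation-closed, and says nothing about nontrivial permutation-closed languages such as the even-length strings, on which $\Obot$ fails just as surely. Your one-class pigeonhole covers exactly these.

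Your reformulation, that $O$ supports recognition of $L$ iff $L$ is a union of $\sim_O$-classes, also makes the relationship between the two results explicit:
\begin{itemize}
\item with $O=\Oprof$ it is the characterization of Theorem~\ref{thm:order-blind-char};
\item with $O=\Obot$, whose single class is $\Sigma^*$, it is the present proposition.
\end{itemize}
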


We now specialise ``poorest observer possible'' to the specific
symmetry that matters here: blindness to swapping $a$ and $b$.

\begin{definition}[Swap symmetry]
\label{def:swap}
Write $H_n^{a\leftrightarrow b}$ for $H_n$ with every occurrence of
$a$ and $b$ exchanged. A structural observer $O$ is \emph{swap-blind}
with respect to $a,b$ if $O(H_n)=O(H_n^{a\leftrightarrow b})$ for
every $H_n$: swapping $a$ and $b$ never changes what $O$ reports.
Restricted to telling apart the swapped and unswapped case, $O$ is
exactly as powerless as $\Obot$.
\end{definition}

The main result of this section follows: when the update mechanism's
only access to history is filtered through a swap-blind observer,
condition~(ii) holds automatically, for every update built this way ---
no longer something to be checked case by case, as in
Proposition~\ref{prop:i-not-enough}.

\begin{corollary}[Total opacity]
\label{cor:total-opacity}
Let $(R_n)_{n\ge0}$ be a dynamic rewriting system in which every
update has the form $R_{n+1}=\upsilon(R_n,O(H_n))$, for one fixed
swap-blind observer $O$ and some $\upsilon$ satisfying
Definition~\ref{def:opacity-preserving}(i). Then for \emph{every} such
$\upsilon$ --- not just some carefully chosen one --- the conclusion
of Theorem~\ref{thm:dynamic-sip} holds.
\end{corollary}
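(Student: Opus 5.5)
The plan is to reduce the corollary to Theorem~\ref{thm:dynamic-sip}: I will show that every update of the stated form is opacity-preserving in the sense of Definition~\ref{def:opacity-preserving}, and then invoke the theorem unchanged. Condition~(i) is assumed outright for $\upsilon$. So the whole task is to derive condition~(ii) from the swap-blindness of $O$ alone, with no assumption on how $\upsilon$ behaves. This matches the remark preceding the statement that (ii) should hold ``from what $\upsilon$ is even able to see''.

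First I will define the effective update $\tilde\upsilon(R_n,H_n):=\upsilon(R_n,O(H_n))$, so that the system is generated by $\tilde\upsilon$ in the sense of Definition~\ref{def:dynamic-system}. For condition~(ii), fix $R_n,H_n$ and let $H_n'=H_n^{a\leftrightarrow b}$. By Definition~\ref{def:swap}, $O(H_n')=O(H_n)$, and hence
\[
\tilde\upsilon(R_n,H_n')=\upsilon(R_n,O(H_n'))=\upsilon(R_n,O(H_n))=\tilde\upsilon(R_n,H_n).
\]
Condition~(ii) asks for more than this equality: it asks that $\tilde\upsilon(R_n,H_n')$ equal $\tilde\upsilon(R_n,H_n)$ \emph{with the exchange applied}. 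So I must check that the rule set $R_{n+1}$ is itself fixed by the swap.

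This is where condition~(i) does the work. By~(i), no right-hand side of any rule in $R_{n+1}$ contains $a$ or $b$. Because $a,b$ are fresh with respect to $\Sigma$, and $R_{n+1}$ is a rewriting system over $\Sigma$, no left-hand side contains them either. Hence $R_{n+1}^{a\leftrightarrow b}=R_{n+1}$, and the equality above is exactly the swap-equivariance condition~(ii) requires. The observer being a fixed function of $H_n$ alone is what rules out the side channel of Proposition~\ref{prop:i-not-enough}. In that construction the bit $g(H_{n-1})$ distinguishes $H$ from $H^{a\leftrightarrow b}$, so no swap-blind $O$ can pass it on to $\upsilon$.

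I expect the main obstacle to be this mismatch between invariance (what swap-blindness gives) and equivariance (what~(ii) literally asks). My first approach is the freshness argument just given, that rule sets over $\Sigma$ are fixed pointwise by the swap. If a reading of Definition~\ref{def:dynamic-system} allows rules to mention $a,b$ on the left, I would fall back on (i)'s ban on right-hand-side occurrences together with the non-unifiability clause. The aim would be to argue that such rules never fire, so they can be discarded without changing any $D_n$. With (i) and (ii) both established for the arbitrary $\upsilon$, and the base hypotheses on $x_0$ and $R_0$ carried over verbatim, Theorem~\ref{thm:dynamic-sip} yields the conclusion for every such $\upsilon$.
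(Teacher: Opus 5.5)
Your proposal is correct and follows the paper's own proof: swap-blindness gives $O(H_n)=O(H_n^{a\leftrightarrow b})$, so the effective update returns the same rule set on $H_n$ and on $H_n^{a\leftrightarrow b}$, and Theorem~\ref{thm:dynamic-sip} then applies. You add one step the paper leaves implicit when it calls this invariance ``exactly condition~(ii)'': rule sets over $\Sigma$ are fixed by the exchange because $a,b$ are fresh, so invariance already is the equivariance (ii) literally asks for. Your fallback would not work as stated, since a left-hand side such as $s(a+b)$ satisfies~(i) and can still fire, but under the freshness built into Definition~\ref{def:opacity-preserving} you never need it.
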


\begin{proof}
We only need to check condition~(ii), since condition~(i) is assumed
directly. Condition~(ii) asks for
$\upsilon(R_n,H_n')=\upsilon(R_n,H_n)$ swapped, where
$H_n'=H_n^{a\leftrightarrow b}$. Here is why this holds. By
construction, $\upsilon$ does not see $H_n$ directly; it only sees
$O(H_n)$. Since $O$ is swap-blind, $O(H_n)$ and $O(H_n^{a\leftrightarrow b})$
are the same value: $O(H_n)=O(H_n^{a\leftrightarrow b})=O(H_n')$.
So $\upsilon$, looking only at this shared value, cannot tell $H_n$
and $H_n'$ apart either, and produces the same output from both:
$\upsilon(R_n,H_n)=\upsilon(R_n,O(H_n))=\upsilon(R_n,O(H_n'))=\upsilon(R_n,H_n')$.
This is exactly condition~(ii), and it held without any case-by-case
argument --- it followed automatically from swap-blindness alone.
With both conditions in hand, Theorem~\ref{thm:dynamic-sip} applies
directly.
\end{proof}

\begin{remarknn}[What swap-blindness buys: condition~(ii) becomes
structural]
\label{rem:limiting-case-precise}
Corollary~\ref{cor:total-opacity} changes the status of condition~(ii).
Before this section it was a hypothesis about behaviour, one that
Proposition~\ref{prop:i-not-enough} showed can fail; for updates built
through a swap-blind observer it becomes a structural guarantee that
holds automatically. The reason is the one
Proposition~\ref{prop:obot-trivial} already illustrated: information
never received cannot be leaked, whatever the receiving mechanism then
does with it. Condition~(i) remains a separate hypothesis, so this is
Theorem~\ref{thm:dynamic-sip} specialised --- the swap-blind observer
discharges~(ii) at the level of what $\upsilon$ can see, rather
than~(i).
\end{remarknn}

\section{Connection to Observer World and to the MR-OTP's own invariance
theorem}
\label{sec:observer-world}

The last two sections built an abstract theory. This section connects
it to results already published: two theorems from earlier papers are
special, single-step cases of Theorem~\ref{thm:dynamic-sip} once
translated into its vocabulary. This serves two purposes. It confirms
the new theory reproduces known results, not just plausible-sounding
new ones. And it prepares genuinely new results: once a known static
fact is recognised as a single-step case of a general dynamic theorem,
extending it across many steps becomes a direct application of that
theorem rather than a fresh proof.

\subsection{Cell (d), the simplest instance, recalled and identified as
the $n=0$ case}
\label{sec:cell-d}

We start with the simplest known result, ``Cell (d)'' from Observer
World: an adversary who sees nothing at all about the plaintext learns
nothing at all about it. This is almost a tautology as stated, but
worth recalling precisely, because the next proposition shows it is
already an instance of this paper's general theorem.

\begin{proposition}[Cell (d): Cryptomania $\times\,\Obot$ {\cite[Prop.~4.9]{Buono2026ow}}]
\label{prop:cell-d}
In Cryptomania, an adversary with observer $\Obot$ on the plaintext
space cannot recover any information about the plaintext from the
ciphertext, for any scheme. For the MR-OTP with
$C=(c_1,\dots,c_L)=\Enc(M,K)$, $C$ is $\Obot$-blind on $M$: the
ciphertext-only adversary is structurally, not merely computationally,
unable to distinguish any two plaintexts~\cite[Remark~4.10]{Buono2026ow}.
\end{proposition}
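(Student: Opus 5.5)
The plan splits the statement into its two assertions and proves each directly from Definition~\ref{def:structural-observer} and Proposition~\ref{prop:obot-trivial}; no facts about Cryptomania are needed beyond the setting.

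\emph{The general claim, for any scheme.} I will formalise ``an adversary with observer $\Obot$ on the plaintext space'' as an adversary whose entire access to $M$ is the value $\Obot(M)=\star$, together with the ciphertext. The key point is that $\star$ is one fixed value.

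Take any two plaintexts $m,m'$. The adversary's view of $M$ is $\star$ in both cases, so its output is the same function of the same input, whatever its computational power. Hence it cannot distinguish $m$ from $m'$.

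To phrase this as recognition, I will let $L\subseteq\Sigma^*$ be any nontrivial property of plaintexts, that is, $L\notin\{\emptyset,\Sigma^*\}$. Proposition~\ref{prop:obot-trivial} then says no machine fed only $\Obot$'s output recognises $L$, so no nontrivial information about $M$ is recovered.

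\emph{The MR-OTP claim.} This needs more care, because here the adversary is given the ciphertext $C$ itself, not merely $\Obot(M)$. I will use perfect secrecy of the MR-OTP, taken from the source as \cite[Prop.~8.1]{Buono2026sep} and already recalled in Remark~\ref{rem:distributional}: $\Pr[M=m\mid C=c,B=b]=\Pr[M=m]$ for all $m,c,b$.

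Perfect secrecy gives that the distribution of $C$ is the same whatever $M$ is. So the map $M\mapsto\mathrm{law}(C)$ is constant, and in that sense $C$ factors through $\Obot$. This is what ``$C$ is $\Obot$-blind on $M$'' means.

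Any ciphertext-only adversary $A$ outputs $A(C)$. Its output distribution is therefore identical for every pair $m,m'$, so its distinguishing advantage is exactly $0$. The bound uses no computational assumption, which is why the inability is structural rather than computational.

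\emph{Expected obstacle.} The hard step is stating precisely in what sense $C$ is ``$\Obot$-blind''. $C$ is not a deterministic function of $\Obot(M)$; only its distribution is independent of $M$. I will therefore state the distributional reading explicitly, namely $\mathrm{law}(C\mid M=m)=f(\Obot(m))$ for some fixed $f$. This mirrors the per-derivation versus distributional distinction the source draws in \cite[Rem.~8.20]{Buono2026sep}.
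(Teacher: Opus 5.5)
\textbf{Gap in the ``any scheme'' half.} You formalise the adversary as receiving $\Obot(M)=\star$ \emph{together with the ciphertext}, and then argue that ``the adversary's view of $M$ is $\star$ in both cases, so its output is the same function of the same input''. Under your own formalisation the input is not the same for $m$ and $m'$. The ciphertext is part of it, and for an arbitrary scheme the ciphertext depends on $M$.

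In fact the claim is false under that reading. Take $\Enc(M,K)=M$: the adversary reads $M$ straight off $C$, and being in Cryptomania does not prevent this. So this is not a missing justification; the step fails. Proposition~\ref{prop:obot-trivial} does not rescue it, because it concerns a system that receives \emph{only} $\Obot(x)=\star$. Your next paragraph (``here the adversary is given the ciphertext $C$ itself, not merely $\Obot(M)$'') shows you were actually treating the first case as $\star$-only. The two halves of your plan use inconsistent models of the adversary.

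\textbf{The repair.} State the reading the paper uses. In Cell~(d), $\Obot$ is the adversary's entire view, ciphertext included. The proof of Proposition~\ref{prop:celld-is-n0} puts it as ``the adversary's operations never depend on $M$ beyond what $\Obot$ discloses --- which is nothing''. Section~\ref{sec:cell-d} accordingly calls the claim ``almost a tautology as stated''. With the adversary's output a function of $\star$ and independent coins alone, your distinguishing argument and the appeal to Proposition~\ref{prop:obot-trivial} go through unchanged. The ``for any scheme'' clause is then genuinely scheme-independent.

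\textbf{The MR-OTP half is correct.} Only this half, where $C$ really is in view, needs a property of the scheme. Your derivation works: $\Pr[M=m\mid C=c,B=b]=\Pr[M=m]$ says $M$ is independent of the pair $(C,B)$, hence of $C$, so a ciphertext-only adversary has advantage exactly $0$ with no computational assumption. One small refinement: for plaintexts of zero prior mass, use the form $\Pr[C=c\mid M=m,B=b]=\prod_i 1/b_i$, which the paper quotes from the same source in the proof of Proposition~\ref{prop:thm815-recovered}.

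\textbf{Comparison with the paper.} The paper does not reprove Proposition~\ref{prop:cell-d}; it imports it from the source. Its only in-text justification is the $n=0$ identification of Proposition~\ref{prop:celld-is-n0}, together with Remark~\ref{rem:815-vs-celld}: a uniform key independent of everything else makes the law of $C$ independent of $M$. That is exactly your argument, and your distributional reading $\mathrm{law}(C\mid M=m)=f(\Obot(m))$ matches the paper's. Your direct route is, if anything, more transparent than passing through Corollary~\ref{cor:total-opacity}.
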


To see this as an instance of Corollary~\ref{cor:total-opacity}, we
need a dictionary between the two settings. The role of $(a,b)$ ---
the two symmetric values whose order stays hidden --- is played here
by two candidate plaintext digits $(m,m')$: ``the order of $a,b$''
becomes ``which of $m,m'$ was actually encrypted''.

\begin{proposition}[Cell (d) as the $n=0$ instance of total opacity]
\label{prop:celld-is-n0}
Fix a position $i$ and two candidate digits $m,m'$ there, with every
other digit and key held fixed. Under the dictionary above,
Proposition~\ref{prop:cell-d} is exactly what
Corollary~\ref{cor:total-opacity} concludes in the degenerate case
$n=0$ (a single step, with nothing further happening), taking
$O=\Obot$.
\end{proposition}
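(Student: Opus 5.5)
The plan is to make the dictionary explicit and then check, item by item, that the hypotheses and conclusion of Corollary~\ref{cor:total-opacity} at $n=0$, with $O=\Obot$, translate into the hypotheses and conclusion of Proposition~\ref{prop:cell-d}.

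\emph{Step 1: fix the dictionary.} The Skolem pair $(a,b)$ corresponds to the two candidate digits $(m,m')$ at position $i$, with all other digits and the key held fixed. The swap $a\leftrightarrow b$ corresponds to exchanging which of $m,m'$ sits at position $i$. The history $H_0$ is empty, since at $n=0$ nothing has happened yet. The only object an observer is applied to is the single encryption event, and what the adversary receives about $M$ is its $\Obot$-image. The conclusion ``the order of $a,b$ is never exposed'' corresponds to ``the adversary cannot distinguish the encryption of $m$ from that of $m'$''.

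\emph{Step 2: check the hypotheses of Corollary~\ref{cor:total-opacity}.} $\Obot$ is swap-blind in the sense of Definition~\ref{def:swap}, because it is constant: $\Obot(H)=\star=\Obot(H^{a\leftrightarrow b})$ for every $H$. Condition~(i) of Definition~\ref{def:opacity-preserving} must hold for $R_0$; under the dictionary this says the encryption never exposes the digit at position $i$ through the observed channel. I would discharge it by reading the adversary's entire access to $M$ as factoring through $\Obot$, exactly as Proposition~\ref{prop:cell-d} stipulates. Since $n=0$, no update $\upsilon$ is ever applied, so condition~(ii) holds vacuously. The corollary's proof shows it would also follow from swap-blindness in any case.

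\emph{Step 3: read off the conclusion.} The corollary gives that the swapped and unswapped cases are indistinguishable to anything seeing only $O(\cdot)$. By Proposition~\ref{prop:obot-trivial}, no machine receiving only $\star$ recognises the nontrivial language $\{x: x_i=m\}$ relative to $\{x: x_i=m'\}$, whatever its computational power. This is the ``structurally, not merely computationally'' clause of Proposition~\ref{prop:cell-d}. Because $i$, $m$ and $m'$ were arbitrary, indistinguishability holds for every pair of plaintexts differing in one digit. Chaining such single-digit swaps extends it to any two plaintexts, which gives the full statement.

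The main obstacle is Step~2's treatment of condition~(i). Corollary~\ref{cor:total-opacity} is stated for rewriting systems, while Cell~(d) concerns a cipher, so ``no rule fires at $a$ or $b$'' has no literal meaning here. I would first try to make the identification definitional: model the ciphertext-only adversary's pipeline as a one-step system whose only access to $M$ is $\Obot(M)$, so that freezing is automatic because nothing reads position $i$. The risk is that this makes the statement an analogy rather than an instance. I would therefore state explicitly that ``exactly'' means both the hypotheses and the conclusions match under the dictionary, rather than that one object literally is the other.
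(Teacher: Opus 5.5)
Your proposal is correct at the level of rigour the paper itself works at, and it follows essentially the paper's route. You fix the $(a,b)\mapsto(m,m')$ dictionary, take $O=\Obot$, discharge condition~(i) by letting the adversary's access to $M$ factor through $\Obot$, and observe that condition~(ii) is automatic. The paper obtains (ii) by formally extending $R_0$ with $\upsilon_{\mathrm{id}}$ and citing its history-independence; you obtain it from the emptiness of $H_0$ and the constancy of $\Obot$, which covers every extension of the form $\upsilon(R_n,\Obot(H_n))$.

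Your Step~3, which invokes Proposition~\ref{prop:obot-trivial} and chains single-digit swaps to reach arbitrary plaintext pairs, spells out a read-off the paper merely asserts. Your caveat about condition~(i) applies equally to the paper's proof. In both, the fact that actually makes an MR-OTP ciphertext blind on $M$ --- a uniform key independent of $M$, as used in Proposition~\ref{prop:thm815-recovered} --- is stipulated through $\Obot$ rather than derived.
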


\begin{proof}
Let $R_0$ stand for the MR-OTP map under a fixed base $B$, extended
formally by $\upsilon_{\mathrm{id}}$ so that the system is well-defined
for every $n$; only $n=0$ actually matters here, since no further
encryption happens in this setting. We check the two conditions of
Corollary~\ref{cor:total-opacity} in turn.

For condition~(i): the adversary's observer is $\Obot$ on the
plaintext space, meaning the adversary's operations never depend on
$M$ beyond what $\Obot$ discloses --- which is nothing. So no
operation ever ``fires at'' $m$ or $m'$ directly, and condition~(i)
holds.

For condition~(ii): $\upsilon_{\mathrm{id}}$ does not depend on
history at all (this was already verified, in the strongest possible
sense, in the proof of Corollary~\ref{cor:static-as-degenerate}), so
it is certainly swap-blind.

With both conditions verified, Corollary~\ref{cor:total-opacity}
applies at $n=0$, and its conclusion is exactly
Proposition~\ref{prop:cell-d}.
\end{proof}

\begin{remarknn}[Purpose: calibrating the correspondence]
\label{rem:why-n0}
This $n=0$ identification calibrates the correspondence before it is
put to work below: first to recover a slightly richer static instance
(Section~\ref{sec:recover-815}), then to extend both to a dynamic
setting (Section~\ref{sec:cell-d-dynamic}). Cell~(d) itself is a known
result.
\end{remarknn}

\subsection{Recovering the distributional instance as a natural
generalization}
\label{sec:recover-815}

Proposition~\ref{prop:celld-is-n0} used the simplest possible observer,
$\Obot$: the adversary sees nothing about the plaintext at all. The
MR-OTP's own invariance theorem allows the adversary slightly
more --- the base $B$ --- and asks whether that extra information
changes anything. It does not, for the same underlying reason.

\begin{proposition}[Theorem 8.15 of~\cite{Buono2026sep} as a degenerate
instance of Theorem~\ref{thm:dynamic-sip}]
\label{prop:thm815-recovered}
Fix a message digit position, and two candidate values $m,m'$ for $M$
there, with everything else held fixed. Consider a single fresh MR-OTP
encryption, with an adversary who is additionally given the base $B$.
This is a length-$1$ dynamic rewriting system: $R_0$ is the MR-OTP
encryption map under base $B$, there is no $R_1$, and the adversary's
observer is $O=(B,\,\cdot\,)$ --- the base together with the
ciphertext, but not the key. Under this reading,
Corollary~\ref{cor:total-opacity}'s conclusion at $n=0$ is exactly
\cite[Thm.~8.15]{Buono2026sep}:
$\Pr[M=m\mid C=c,B=b]=\Pr[M=m]$ for all $m,c,b$.
\end{proposition}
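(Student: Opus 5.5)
The plan follows the template of the proof of Proposition~\ref{prop:celld-is-n0}: set up the dictionary, check the two hypotheses of Corollary~\ref{cor:total-opacity} at $n=0$, and then identify its conclusion with the displayed equation. The one genuinely new ingredient is the observer $O=(B,\cdot)$ in place of $\Obot$. Because $O$ is richer, its swap-blindness cannot be obtained for free from Definition~\ref{def:structural-observer}. It will have to be read off the one-time-pad property of a single fresh encryption.

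\emph{Setup and condition~(i).} I will take $R_0$ to be the MR-OTP encryption map under the fixed base $B$, and extend it by $\upsilon_{\mathrm{id}}$ so the system is defined for every $n$, exactly as in Proposition~\ref{prop:celld-is-n0}; only $n=0$ matters. The swap $a\leftrightarrow b$ is read as exchanging the candidate digits $m\leftrightarrow m'$ at the fixed position, with all other digits held fixed. For condition~(i), the adversary never operates on $M$ directly: its operations act only on the observable pair $(C,B)$. So no operation ``fires at'' $m$ or $m'$, and the argument is verbatim that of Proposition~\ref{prop:celld-is-n0}. Condition~(ii) for $\upsilon_{\mathrm{id}}$ needs no new argument, since it is already established in Corollary~\ref{cor:static-as-degenerate}.

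\emph{Swap-blindness of $O=(B,\cdot)$.} This is the main step. The base $B$ is chosen independently of $M$, so the swap leaves the $B$-component untouched. For the $C$-component, swap-blindness has to be read distributionally, as in Remark~\ref{rem:distributional}: the law of $(C,B)$ under $M=m$ must equal its law under $M=m'$. I will obtain this from \cite[Prop.~8.1]{Buono2026sep}. Conditioned on $B=b$, the key digit at the fixed position is uniform modulo $b_i$, so $c_i=m+k_i \bmod b_i$ is uniform for every $m$. The remaining positions are unaffected by the swap.

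\emph{Obstacle and conclusion.} The step I expect to be hardest is that Definition~\ref{def:swap} demands pointwise equality $O(H_0)=O(H_0^{a\leftrightarrow b})$, which fails for a single fixed key. The way around this is to average over the fresh key, so that ``same value'' becomes ``same distribution''. I will state explicitly that this is the distributional reading of the invariant flagged in \cite[Rem.~8.20]{Buono2026sep}, not a literal instance of the syntactic definition. Once swap-blindness holds in this sense, Corollary~\ref{cor:total-opacity} at $n=0$ gives that the observable $(C,B)$ does not separate $m$ from $m'$. Since this holds for every pair $m,m'$, the likelihood $\Pr[C=c\mid M=m,B=b]$ is independent of $m$. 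Bayes' rule then yields $\Pr[M=m\mid C=c,B=b]=\Pr[M=m]$ for all $m,c,b$, which is \cite[Thm.~8.15]{Buono2026sep}.
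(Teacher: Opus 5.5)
Your proposal is correct and follows essentially the paper's own route: extend $R_0$ by $\upsilon_{\mathrm{id}}$, get swap-blindness of $O=(B,C)$ in distribution from \cite[Prop.~8.1]{Buono2026sep}, and invoke Corollary~\ref{cor:total-opacity} at $n=0$; the swap-blindness step works because a uniform key independent of $M$ and $B$ makes the law of $C$ given $B=b$ independent of $m$. Where you differ, you are more careful than the paper: you make the final Bayes step explicit along with its use of the independence of $B$ and $M$, you flag that swap-blindness holds here only in distribution rather than pointwise as Definition~\ref{def:swap} requires, and you attach the $\upsilon_{\mathrm{id}}$ argument to condition~(ii), whereas the paper's proof cites it for condition~(i).
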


\begin{proof}
As in the proof of Proposition~\ref{prop:celld-is-n0}, extend $R_0$
formally by $\upsilon_{\mathrm{id}}$ so the system is well-defined for
every $n$, with only $n=0$ actually mattering.

We check condition~(ii) first, since it takes more work here.
$K$ is uniform and independent of both $B$ and $M$, so
$\Pr[C=c\mid M=m,B=b]=\prod_i1/b_i$ does not depend on $m$
(\cite[Prop.~8.1]{Buono2026sep}). $B$ itself is fixed and public, so it
is independent of $M$ and $K$ as well. Together, these two facts say
that the distribution of $O=(B,C)$, conditional on $M=m$, is the same
as its distribution conditional on $M=m'$ --- which is exactly
swap-blindness with respect to $(m,m')$.

Condition~(i) follows even more easily, exactly as in the proof of
Proposition~\ref{prop:celld-is-n0}: $\upsilon_{\mathrm{id}}$ does not
depend on history at all.

With both conditions verified, Corollary~\ref{cor:total-opacity}
(with this $O$) applies at $n=0$, giving exactly
\cite[Thm.~8.15]{Buono2026sep}.
\end{proof}

\begin{remarknn}[Same argument, richer observer]
\label{rem:815-vs-celld}
Proposition~\ref{prop:thm815-recovered} and
Proposition~\ref{prop:celld-is-n0} (Cell~(d)) come from the \emph{same}
$n=0$ argument, differing only in what the observer $O$ is allowed to
see: Cell~(d) uses $O=\Obot$, nothing about the plaintext; here
$O=(B,\,\cdot\,)$, since the base is public even though the key is not.
Both are swap-blind with respect to $(m,m')$ for one underlying reason:
because $K$ is uniform and independent of everything else, $C$'s
distribution never depends on $M$, regardless of what else the
adversary is given, as long as it excludes $K$ itself.
Theorem~\ref{thm:dynamic-sip} makes this the same fact doing the work
in both cases --- a single fact that \cite{Buono2026ow}
and~\cite{Buono2026sep} each discovered independently, one instance
apiece, without citing the other.
\end{remarknn}

\subsection{The dynamic extension: base-rolling under a swap-blind update}
\label{sec:cell-d-dynamic}

Both static instances above cover one encryption.
Neither~\cite{Buono2026ow} (static only) nor~\cite{Buono2026sep} (which
poses Open Problem~6.4 but proves nothing about it) addresses the
base-rolling scenario $B_{t+1}=f(C_t,B_t)$. This subsection supplies
the theorem, covering both instances at once.

\begin{definition}[Base-rolling dynamic system]
\label{def:base-rolling}
Fix a position $i$ and symmetric candidates $(m,m')$ at session~$1$. A
\emph{base-rolling dynamic system} is a sequence of MR-OTP sessions in
which the base itself evolves, $B_{t+1}=\upsilon(B_t,H_t)$, with
$H_t=(C_1,\dots,C_t)$. This matches
Definition~\ref{def:dynamic-system}, where $R_t$ plays the role of the
MR-OTP map under $B_t$, and $D_t$ plays the role of session $t$'s
encryption.
\end{definition}

\begin{theorem}[Dynamic Cell (d)]
\label{thm:dynamic-celld}
Let $(B_t)_{t\ge1}$ be a base-rolling dynamic system. Fix any session
$s\ge1$ and symmetric candidates $(m,m')$ for $M_s$ at some position,
with everything else at session~$s$ held fixed. Assume two things
about $\upsilon$. First, $\upsilon(B_t,H_t)$ depends on $H_t$ only
through $O(H_t)$, for one fixed structural observer $O$ that is
swap-blind with respect to $(m,m')$ at session~$s$. Second, $\upsilon$
never lets $B_{t+1}$ depend directly on any session's key digits. Then
for every $t\ge s$,
\[
\Pr[M_s=m\mid H_t,B_{s+1},\dots,B_t] = \Pr[M_s=m\mid H_s],
\]
where $H_t=(C_1,\dots,C_t)$.

Two special cases are worth stating on their own. Taking $t=s$
recovers ordinary session-by-session secrecy at every session,
$\Pr[M_t=m\mid C_t=c,C_1=c_1,\dots,C_{t-1}=c_{t-1}]=\Pr[M_t=m]$ for
every $t$ --- the secrecy target whose preservation under a rolling
base Open Problem~6.4 is concerned with, here established in the
swap-blind, partition-structured regime. Taking $t>s$ gives something
stronger and more persistent.
Suppose the adversary is handed every subsequent base
$B_{s+1},\dots,B_t$ outright, in full --- not merely whatever
$\upsilon$'s public behaviour reveals about them indirectly. Even
then, nothing in this later, fully-revealed base evolution ever
reveals more about which of $m,m'$ was encrypted at session~$s$ than
was already knowable immediately after that session.
\end{theorem}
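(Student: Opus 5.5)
The plan is to prove the displayed identity by induction on $t\ge s$, treating each later session as one more step of a dynamic system in the sense of Definition~\ref{def:dynamic-system}. The invariant to carry is the distributional reading of condition~(ii): conditional on $H_s$, the joint law of everything the adversary sees after session~$s$ does not depend on whether $M_s=m$ or $M_s=m'$. This plays the role of the invariant $P$ in Theorem~\ref{thm:dynamic-sip}. The base case $t=s$ holds trivially, since nothing beyond $H_s$ is conditioned on. For the special case stated after the theorem, I would combine this with Proposition~\ref{prop:thm815-recovered} applied to session~$t$. The key $K_t$ is fresh, uniform, and independent of $(B_t,H_{t-1},M_t)$. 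So the argument of \cite[Prop.~8.1]{Buono2026sep} gives $\Pr[C_t=c\mid M_t,B_t,H_{t-1}]=\prod_i 1/(B_t)_i$, which is free of $M_t$. Hence $\Pr[M_t=m\mid H_t]=\Pr[M_t=m]$, provided $M_t$ is independent of the earlier history and $B_t$ is a function of the earlier ciphertexts only, which the second hypothesis on $\upsilon$ ensures.

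For the inductive step from $t$ to $t+1$, I would factor the new observation $(B_{t+1},C_{t+1})$. First, $B_{t+1}=\upsilon(B_t,O(H_t))$ is a deterministic function of $B_t$ and $O(H_t)$, and by the second hypothesis it does not depend on key digits. Because $O$ is swap-blind with respect to $(m,m')$, the value $O(H_t)$ is the same on the two branches, just as in Corollary~\ref{cor:total-opacity}. Conditioning additionally on $B_{t+1}$ therefore changes nothing, since $B_{t+1}$ is measurable with respect to data already conditioned on. Second, $C_{t+1}$ is produced from the fresh key $K_{t+1}$, which is independent of $M_s$ and of all prior data. Its conditional law given $(M_s,H_t,B_{s+1},\dots,B_{t+1})$ is the uniform product law $\prod_i 1/(B_{t+1})_i$, assuming $M_{t+1}$ is independent of $M_s$. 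This law does not involve $M_s$. Bayes' rule then cancels the $C_{t+1}$ likelihood between numerator and denominator, giving
\[
\Pr[M_s=m\mid H_{t+1},B_{s+1},\dots,B_{t+1}]=\Pr[M_s=m\mid H_t,B_{s+1},\dots,B_t],
\]
and the induction hypothesis closes the step.

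The main obstacle is making the swap-blindness hypothesis do honest probabilistic work. Definition~\ref{def:swap} is pointwise: $O(H)=O(H^{a\leftrightarrow b})$. The induction needs a conditional-independence statement instead: given $H_s$, $B_{t+1}$ carries no information about $M_s$. I would first read ``swap-blind with respect to $(m,m')$ at session~$s$'' as saying that $O(H_t)$ takes the same value whether $C_s$ arose from $m$ or from $m'$ with the key adjusted to match. Since the two branches yield the same $C_s$ with equal probability by the uniform-key computation, the law of $B_{t+1}$ is the same on both branches. If that reading proves too weak, I would strengthen the hypothesis explicitly to ``$O(H_t)$ is a function of $H_t$ whose conditional law given $M_s$ is the same for $m$ and $m'$,'' as the proof of Proposition~\ref{prop:thm815-recovered} does.

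The second hypothesis is essential. Without it, $B_{t+1}$ could encode $K_s$, and together with $C_s$ this would reveal $M_s$. So I would invoke it exactly at the point where the measurability of $B_{t+1}$ with respect to ciphertext data is used.
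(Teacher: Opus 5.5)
Your proposal is the paper's proof with the probability written out. It is an induction on $t\ge s$ whose base is the single-session computation behind Proposition~\ref{prop:thm815-recovered}; the hypotheses on $\upsilon$ make $B_{t+1}$ add nothing, and the fresh key $K_{t+1}$ makes the likelihood of $C_{t+1}$ cancel in Bayes' rule, a cancellation the paper only asserts.

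The one step to tighten is $t=s\to s+1$. There, measurability of $B_{s+1}=\upsilon(B_s,O(H_s))$ in $H_s$ needs $B_s$ itself to be determined by public data, i.e.\ a known initial base, which the no-key-dependence hypothesis does not supply. If the bases are hidden, carry $B_s$ in the conditioning, as the paper does by placing it in the base-case observer. Your identity $\Pr[C_s=c\mid M_s,B_s,H_{s-1}]=\prod_i 1/(B_s)_i$, together with independence of $M_s$ from the earlier history, gives $M_s$ independent of $(B_s,H_s)$; average $B_s$ out at the end, and the rest of your induction is unchanged.
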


\begin{proof}
This proof mirrors the proof of Corollary~\ref{cor:total-opacity}, read
distributionally (Remark~\ref{rem:distributional}) and reindexed to
start at session~$s$ rather than at $n=0$: $(m,m')$ plays the role
$(a,b)$ played there, and $B_t$ plays the role $R_n$ played there.

By Proposition~\ref{prop:cell-d} (or
Proposition~\ref{prop:thm815-recovered}, with $B_s$ included in the
observer), $P$ holds right after session~$s$; this is the base case.
The hypothesis on $\upsilon$ gives the distributional analogue of
condition~(i) for every $t\ge s$. Swap-blindness of $O$ gives
condition~(ii) automatically, exactly as in the proof of
Corollary~\ref{cor:total-opacity}. Each $C_t$ with $t>s$ is produced
under a fresh key, independent of session~$s$'s key (this is the
partition structure of \cite[Def.~6.1]{Buono2026sep}), so $C_t$ carries
no further information about $M_s$ beyond what $B_t$ already carries.

Induction on $t\ge s$ now transports Theorem~\ref{thm:dynamic-sip}'s
argument distributionally, giving the claim at every session $t\ge s$.
Since $s$ itself was arbitrary, the claim holds for every session, not
only for one distinguished first session.
\end{proof}

\begin{remarknn}[What this adds, and how it relates to Open
Problem~6.4]
\label{rem:what-this-adds}
Theorem~\ref{thm:dynamic-celld} stands on its own: it extends the
static Cell~(d)/Theorem~8.15 secrecy to a rolling base under a
swap-blind update, and the $t>s$ case adds a persistence guarantee ---
secrecy about session~$s$ survives arbitrarily many further sessions,
even when the adversary is handed every subsequent base outright ---
that no source result states. Its relation to Open Problem~6.4 is one
of shared shape, not of solution. The $t=s$ case meets the secrecy
target that problem is concerned with, but in the swap-blind regime and
under the partition structure of~\cite[Def.~6.1]{Buono2026sep}; the
distinctive thing Open Problem~6.4 asks --- whether that target can be
met by an $f$ giving a \emph{larger} base search space than the
partition protocol, \emph{without} its disjointness condition --- is
not addressed here, and neither is the construction of a concrete,
practically superior $f$. Both remain open, and a fuller account of the
connection would need an argument this paper does not undertake.
\end{remarknn}

\begin{remarknn}[One theorem covers both static instances]
\label{rem:815-dynamic-extension}
The proof above is agnostic between using
Proposition~\ref{prop:cell-d} and Proposition~\ref{prop:thm815-recovered}
as its base case --- they differ only in whether $B_s$ is included in
the observer --- so Theorem~\ref{thm:dynamic-celld} extends both
Cell~(d) and the MR-OTP's own invariance theorem to the dynamic setting
at once. The richer $(B,\,\cdot\,)$-observer instance is already
covered, with no separate dynamic theorem required.
\end{remarknn}

\section{Connection to the order-blind automaton, and its dynamic
extension}
\label{sec:order-blind}

This section repeats the pattern of Section~\ref{sec:observer-world}
with a third source result: an automaton that can only read a
\emph{count} of each symbol, never their order. As before, we recall
the static result, show it is a single-step case of this paper's
general theorem, then extend it across many steps.

\subsection{The static result recalled}

An order-blind automaton does not read a string symbol by symbol. It
first reduces the string to its \emph{profile} --- a histogram of how
many times each symbol occurs, exactly the ``bag of words''
representation familiar from basic text processing, with the order
thrown away entirely --- and only then decides.

\begin{definition}[Profile, order-blind automaton {\cite[Def.~2.1, 2.3]{Buono2026oh}}]
\label{def:order-blind-automaton}
The \emph{profile} of a string $x$ is
$\mathrm{prof}(x)=(c_1,\dots,c_k)\in\mathbb N^k$, where $c_i=|x|_{a_i}$
counts occurrences of the $i$-th symbol. An \emph{order-blind
automaton} is $A=(Q,\Sigma,\delta,q_0,F)$ with transition function
$\delta:Q\times\mathbb N^k\to Q$, recognising
$L(A)=\{x\mid\delta(q_0,\mathrm{prof}(x))\in F\}$: the profile is read
in a single step, and the symbol sequence itself is never read at all.
\end{definition}

\begin{theorem}[Characterization {\cite[Thm.~2.5]{Buono2026oh}}]
\label{thm:order-blind-char}
$L$ is recognisable by an order-blind automaton if and only if $L$ is
permutation-closed (closed under reordering the symbols of any string
in it).
\end{theorem}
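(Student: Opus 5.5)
The plan is to prove the two directions of Theorem~\ref{thm:order-blind-char} separately, working directly from Definition~\ref{def:order-blind-automaton}. The single fact doing the work is the elementary observation that two strings $x,y\in\Sigma^*$ satisfy $\mathrm{prof}(x)=\mathrm{prof}(y)$ if and only if $y$ is a permutation of $x$. I would state this first as a small lemma. The proof: a permutation preserves every symbol count $|x|_{a_i}$. Conversely, equal counts let us match the occurrences of each $a_i$ in $x$ bijectively with those in $y$, and these matchings assemble into a permutation of positions carrying $x$ to $y$.

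For the forward direction, suppose $L=L(A)$ for some order-blind automaton $A$. Take $x\in L$ and any permutation $y$ of $x$. By the lemma, $\mathrm{prof}(y)=\mathrm{prof}(x)$, so $\delta(q_0,\mathrm{prof}(y))=\delta(q_0,\mathrm{prof}(x))\in F$, and hence $y\in L$. Thus $L$ is permutation-closed. In the paper's vocabulary, membership in $L(A)$ factors through the observer $\Oprof$, so $L(A)$ is a union of $\sim_{\Oprof}$-classes, and these classes are exactly the permutation classes.

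For the converse, suppose $L$ is permutation-closed. I would construct $A$ explicitly with $Q=\{q_0,q_{\mathrm{acc}},q_{\mathrm{rej}}\}$ and $F=\{q_{\mathrm{acc}}\}$. Set $\delta(q_0,v)=q_{\mathrm{acc}}$ if $v=\mathrm{prof}(x)$ for some $x\in L$, and $\delta(q_0,v)=q_{\mathrm{rej}}$ otherwise. On the remaining states $\delta$ may be defined arbitrarily, since only one step is ever taken. By construction $L\subseteq L(A)$. For the reverse inclusion, suppose $y\in L(A)$. Then $\mathrm{prof}(y)=\mathrm{prof}(x)$ for some $x\in L$, so by the lemma $y$ is a permutation of $x$, and permutation-closure puts $y\in L$.

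The only point needing care is the nature of $\delta$. Definition~\ref{def:order-blind-automaton} allows $\delta:Q\times\mathbb N^k\to Q$ to be an arbitrary function, with no finiteness or computability requirement. The characterization therefore holds with no effectiveness condition on $L$, and the construction is legitimate even when $L$ is non-regular or undecidable. If the source intended a finite or effective $\delta$, the converse would need additional hypotheses on $L$. I would therefore state explicitly that the theorem is read with $\delta$ arbitrary, matching the cited \cite[Thm.~2.5]{Buono2026oh}.
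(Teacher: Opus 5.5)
Your proof is correct and complete. The paper gives no proof of Theorem~\ref{thm:order-blind-char}, importing it from \cite[Thm.~2.5]{Buono2026oh}, and your argument is the standard one; your forward direction is exactly the observation the paper reuses in Proposition~\ref{prop:order-blind-static}, that acceptance depends on $x$ only through $\mathrm{prof}(x)$ and so factors through $\Oprof$. Your caveat about $\delta$ is also well placed: Definition~\ref{def:order-blind-automaton} puts no finiteness or effectiveness condition on $\delta:Q\times\mathbb N^k\to Q$, so your three-state automaton is legitimate for every permutation-closed $L$, decidable or not.
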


\begin{corollary}[{\cite[Cor.~2.8]{Buono2026oh}}]
\label{cor:oh-2-8}
For any machine $T$ and any $L\notin\{\emptyset,\Sigma^*\}$, a system
receiving only $\Obot(x)=\star$ does not recognise $L$, regardless of
computational power.
\end{corollary}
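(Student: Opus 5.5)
The statement is Corollary~\ref{cor:oh-2-8}, which repeats Proposition~\ref{prop:obot-trivial}. I would prove it directly from the definitions, without any dynamic machinery. The starting point is to model what ``a system receiving only $\Obot(x)$'' is. Whatever its computational power, such a system $T$ computes a decision $d(\Obot(x))\in\{\text{accept},\text{reject}\}$. Here $d$ is an arbitrary function, not necessarily computable, since power is unrestricted, and it is defined on the codomain of $\Obot$. Equivalently, the decision map factors as $d\circ\Obot$, so the language it recognises is $\{x\mid d(\Obot(x))=\text{accept}\}$. If $T$ is randomised or may not halt, I would fix its coins and its halting behaviour as part of $d$; the conclusion does not change.

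The key step uses the fact that $\Obot$ is constant: $\Obot(x)=\star$ for every $x\in\Sigma^*$. Hence $d(\Obot(x))=d(\star)$ for all $x$, and there are only two cases.
\begin{itemize}
\item If $d(\star)=\text{accept}$, the recognised language is $\Sigma^*$.
\item Otherwise the recognised language is $\emptyset$.
\end{itemize}
So every system fed only through $\Obot$ recognises one of $\emptyset$ or $\Sigma^*$. The hypothesis $L\notin\{\emptyset,\Sigma^*\}$ then gives strings $x\in L$ and $y\notin L$. Since $\Obot(x)=\Obot(y)$, the system returns the same verdict on both, so it errs on one of them and does not recognise $L$.

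An alternative, structural route goes through the richness order. Recognising $L$ amounts to $\chi_L\preceq\Obot$, that is, $\chi_L=f\circ\Obot$ for some $f$. That would force $\chi_L$ to be constant, which is impossible for nontrivial $L$. It would be worth noting this form, since it matches Definition~\ref{def:structural-observer}.

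The only real obstacle is the modelling step: making precise that ``receiving only $\Obot(x)$'' means the verdict factors through $\Obot$. I would state this explicitly as the meaning of the phrase, in line with Remark~\ref{rem:limiting-case-precise} (``information never received cannot be leaked''). Once it is stated, the rest is a two-line case split.
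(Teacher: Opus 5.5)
Your argument is correct, but it is not how the paper obtains the statement. The paper imports it from \cite[Cor.~2.8]{Buono2026oh} and then re-derives it in Proposition~\ref{prop:order-blind-static} as the $n=0$ case of Corollary~\ref{cor:total-opacity}. There the classification map plays the role of $R_0$, extended by $\upsilon_{\mathrm{id}}$. Condition~(i) of Definition~\ref{def:opacity-preserving} is read off from the fact that classification depends on $x$ only through $\mathrm{prof}(x)$, and condition~(ii) from the fact that $\upsilon_{\mathrm{id}}$ ignores history. The corollary's conclusion is then identified, by assertion, with non-recognition.

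That route buys placement rather than economy: it shows the static fact is the degenerate instance of the dynamic principle, which is exactly what Theorem~\ref{thm:dynamic-order-blind} then iterates over a stream.

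Your factorisation through the constant map $\Obot$ (equivalently, $\chi_L\preceq\Obot$ forcing $\chi_L$ to be constant) buys directness and full coverage. It needs no dictionary between ``order never exposed'' and ``$L$ not recognised''. It also uses the one property that actually matters, namely that $\Obot$ has a single fibre. That is strictly more than permutation-blindness. Proposition~\ref{prop:order-blind-static} is stated only for permutation-blind $O$ and non-permutation-closed $L$, so on its own it cannot reach nontrivial permutation-closed languages such as $\{x : |x|\text{ even}\}$, which $\Oprof$ does recognise. Your case split covers every $L\notin\{\emptyset,\Sigma^*\}$ at once.

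Your handling of randomised or non-halting systems is also sound, since the acceptance probability and the halting behaviour are themselves functions of $\star$ alone.
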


The structural fact behind this, stated
in~\cite[Remark~2.7]{Buono2026oh}, is that the power of the machine
cannot compensate for the weakness of the observer. Capturing it needs
a notion of blindness more general than Definition~\ref{def:swap}:
permutation-closure requires invariance under \emph{every} transposition
of positions, not one fixed swap. The difference from the richer
observer $(B,\,\cdot\,)$ of Section~\ref{sec:recover-815} is not
quantitative but qualitative: not one more piece of information
withheld, but blindness to an entire group of symmetries at once.

\subsection{Generalizing swap-blindness to permutation-blindness}

\begin{definition}[Permutation-blind observer]
\label{def:perm-blind}
A structural observer $O$ is \emph{permutation-blind} if
$O(x)=O(y)$ whenever $\mathrm{prof}(x)=\mathrm{prof}(y)$: in symbols,
$O\preceq\Oprof$. This is the analogue of
Definition~\ref{def:swap} for a different symmetry: where swap-blindness
hides one fixed transposition of the two symbols $a,b$, permutation-blindness
hides the entire symmetric group acting on the positions of the input.
The two are not nested --- they are blindness to different group actions
on different objects (the two Skolem symbols in one case, the input
positions in the other) --- but they are instances of one scheme, as
Remark~\ref{rem:opacity-observer-general} below records.
\end{definition}

With this notion in hand, Corollary~\ref{cor:oh-2-8} follows from the
same argument as Corollary~\ref{cor:total-opacity} --- this time
with the observer's blindness spread across every permutation of the
input positions, rather than the one $a\leftrightarrow b$ swap that
Corollary~\ref{cor:total-opacity} is stated for. The argument does not
depend on which of the two symmetries is in play, only on the observer
being blind to whatever the hidden invariant is; Proposition~\ref{prop:order-blind-static}
carries it out explicitly for the permutation case, and
Remark~\ref{rem:opacity-observer-general} states the shared form.

\begin{proposition}[Theorem~\ref{thm:order-blind-char}'s negative
direction as the static, fully-permuted case]
\label{prop:order-blind-static}
Let $L\notin\{\emptyset,\Sigma^*\}$ be not permutation-closed. Then
Corollary~\ref{cor:oh-2-8} follows at $n=0$ from the same argument that
proves Corollary~\ref{cor:total-opacity}, with $O$ permutation-blind
(Definition~\ref{def:perm-blind}) in place of the swap-blind observer
that corollary is stated for.
\end{proposition}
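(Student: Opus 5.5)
The plan is to transplant the proof of Corollary~\ref{cor:total-opacity} verbatim, replacing the single transposition $a\leftrightarrow b$ by the action of the symmetric group on input positions. The argument does not need the full dynamic machinery. It only uses the step ``information never received cannot be used'': if the decider sees $x$ only through $O$, and $O$ identifies $x$ with $y$, then the decider returns the same verdict on both.

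\emph{Set-up.} Regard a system recognising $L$ through an observer $O$ as a length-$1$ dynamic system, in the style of the proofs of Propositions~\ref{prop:celld-is-n0} and~\ref{prop:thm815-recovered}. The single step $R_0$ is the decision map $x\mapsto T(O(x))$, for an arbitrary machine $T$ of unrestricted power. We extend it formally by $\upsilon_{\mathrm{id}}$, so that only $n=0$ matters. Condition~(i) is then vacuous in the same sense as in those proofs, since the identity update never touches the input. The content lies in the analogue of condition~(ii), now read as invariance under every permutation $\pi$ of positions rather than under one swap.

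\emph{Key step.} Take $O$ permutation-blind, i.e.\ $O\preceq\Oprof$ by Definition~\ref{def:perm-blind}, so $O=g\circ\Oprof$ for some $g$. For any $x$ and any permutation $\pi$ we have $\mathrm{prof}(\pi x)=\mathrm{prof}(x)$. Hence
\[
T(O(\pi x))=T(g(\mathrm{prof}(\pi x)))=T(g(\mathrm{prof}(x)))=T(O(x)).
\]
This is the same chain of equalities as in Corollary~\ref{cor:total-opacity}, with $\pi$ in place of $a\leftrightarrow b$. It shows that every language recognised through such an $O$ is permutation-closed. Since $L$ is assumed not permutation-closed, there are $x\in L$ and $y=\pi x\notin L$, and the system must err on one of them. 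This gives the negative direction of Theorem~\ref{thm:order-blind-char}.

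\emph{Specialising to $\Obot$.} We have $\Obot\preceq\Oprof$, because $\Obot=\mathrm{const}\circ\Oprof$. So $\Obot$ is a particular permutation-blind observer, and the above already shows it cannot recognise any non-permutation-closed $L$. For the remaining $L\notin\{\emptyset,\Sigma^*\}$ that happen to be permutation-closed, we use the same template with the full group of bijections of $\Sigma^*$, under which $\Obot$ is trivially blind. Concretely, pick $x\in L$ and $y\notin L$; then $\Obot(x)=\Obot(y)$ forces $T(\star)$ to be wrong on one of them. That recovers Corollary~\ref{cor:oh-2-8} in full.

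\emph{Main obstacle.} The delicate point is interpretive rather than computational: stating precisely in what sense ``condition~(ii)'' is met when the symmetry is a group action on positions rather than on two Skolem symbols. I would handle it by phrasing (ii) abstractly as ``$O(\pi x)=O(x)$ for all $\pi$ in the group'' and noting that the proof of Corollary~\ref{cor:total-opacity} used only this equality and never the specific form of the swap. This is the shared scheme that Remark~\ref{rem:opacity-observer-general} is announced to record.
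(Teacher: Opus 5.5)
Your proof is correct, but it divides the work differently from the paper.

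\textbf{The paper's route.} It places the profile-factorisation under condition~(i): ``classification depends on $x$ only through $\mathrm{prof}(x)$, so no operation ever fires on order information.'' It gets condition~(ii) for free from $\upsilon_{\mathrm{id}}$'s independence of history. It then invokes Corollary~\ref{cor:total-opacity} and reads that corollary's conclusion as the negative direction of Corollary~\ref{cor:oh-2-8}/Theorem~\ref{thm:order-blind-char}.

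\textbf{Your route.} You put permutation-blindness on the (ii) side instead. That is where the swap-blind observer actually acts in the proof of Corollary~\ref{cor:total-opacity}. It also matches the proposition's own wording (``$O$ permutation-blind in place of the swap-blind observer'') and Remark~\ref{rem:opacity-observer-general}. You then carry out the equality chain $T(O(\pi x))=T(g(\mathrm{prof}(x)))=T(O(x))$ explicitly, proving that every language recognised through a permutation-blind observer is permutation-closed.

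\textbf{What each buys.} Your version gives a genuine proof of the recognition statement. The paper instead translates the corollary's conclusion, which literally says only that $a,b$ stay inside $a+b,b+a$. Your version also closes the gap between the hypothesis ``$L$ not permutation-closed'' and the full scope of Corollary~\ref{cor:oh-2-8}. It does so via $\Obot\preceq\Oprof$, plus your direct $x\in L$, $y\notin L$ argument for permutation-closed $L$; the paper's proof does not address that gap. The paper's version buys a uniform two-clause check, parallel to Proposition~\ref{prop:celld-is-n0}.

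\textbf{One small correction.} Condition~(i) is not vacuous ``because the identity update never touches the input''; $\upsilon_{\mathrm{id}}$ only propagates (i) from $R_0$. Since your $R_0=T\circ g\circ\mathrm{prof}$, what (i) would require of $R_0$ is exactly the factorisation you already prove, so nothing is missing.
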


\begin{proof}
Let $R_0$ stand for the order-blind automaton's classification map,
extended formally by $\upsilon_{\mathrm{id}}$.

For condition~(i): classification depends on $x$ only through
$\mathrm{prof}(x)$, so no operation ever fires on the specific
symbol-order information that distinguishes two permutations of the
same profile. This is exactly condition~(i), under the full
symmetric-group symmetry of Definition~\ref{def:perm-blind}.

For condition~(ii): as in the proof of
Corollary~\ref{cor:static-as-degenerate}, $\upsilon_{\mathrm{id}}$'s
independence from history gives condition~(ii) immediately.

With both conditions verified, Corollary~\ref{cor:total-opacity}
(taking $O$ permutation-blind) applies at $n=0$, and its conclusion is
exactly the negative direction of
Corollary~\ref{cor:oh-2-8}/Theorem~\ref{thm:order-blind-char}.
\end{proof}

\begin{remarknn}[Calibrating the generalization]
\label{rem:order-blind-not-new}
As in Section~\ref{sec:cell-d}, this identification calibrates the
generalization before it is put to work below; the order-blind
characterization itself is a known result.
\end{remarknn}

\begin{remarknn}[The general form behind
Corollary~\ref{cor:total-opacity}]
\label{rem:opacity-observer-general}
Corollary~\ref{cor:total-opacity} is stated for a swap-blind observer,
but its proof uses only one property of that observer, and a more
general statement holds with the same proof. Let $G$ be any group
acting on histories, let $O$ be an observer with $O(H)=O(g\cdot H)$ for
every $g\in G$ (blindness to the whole $G$-action), and suppose the
hidden invariant --- the fact the derivation must not expose --- is
exactly a $G$-invariant. Then any update of the form
$R_{n+1}=\upsilon(R_n,O(H_n))$ satisfies
Definition~\ref{def:opacity-preserving}(ii) automatically: since
$\upsilon$ sees the history only through $O$, and $O$ cannot separate
$H_n$ from any $g\cdot H_n$, the update returns the same rule set on
all of them, so it cannot leak which $G$-orbit representative the
history was. The swap-blind case is $G=\mathbb Z_2$ acting by the
$a\leftrightarrow b$ exchange (Corollary~\ref{cor:total-opacity}); the
permutation-blind case is $G=S_{|x|}$ acting on input positions
(Proposition~\ref{prop:order-blind-static}). Both are the same fact:
condition~(ii) is free whenever the update's only view of history
factors through blindness to the group whose invariance is what must
stay hidden. We keep Corollary~\ref{cor:total-opacity} in its swap-blind
form, since that is the case the dynamic sections use, and record the
general version here rather than as a separate theorem, nothing below
depending on more than the two named instances.
\end{remarknn}

\subsection{The dynamic extension: streaming under a permutation-blind
update}

The question asked twice already --- does the static guarantee survive
if the observer itself evolves? --- applies here too. Now the observer
changes not across sessions of one cipher, but across a stream of many
different inputs.

\begin{definition}[Streaming profile system]
\label{def:streaming}
Consider a sequence of inputs $x_1,x_2,\dots$, each read by its own
observer $O_1,O_2,\dots$, where the observer itself evolves as
$O_{t+1}=\upsilon(O_t,H_t)$ for $H_t=(O_1(x_1),\dots,O_t(x_t))$. This
matches Definition~\ref{def:dynamic-system}, with $R_t$ playing the
role of $O_t$, and $D_t$ playing the role of the $t$-th
classification.
\end{definition}

\begin{theorem}[Dynamic order-blindness]
\label{thm:dynamic-order-blind}
Let $(O_t)_{t\ge1}$ be a streaming profile system satisfying two
conditions. First, every $O_t$ is permutation-blind. Second,
$\upsilon(O_t,H_t)$ depends on $H_t$ only through a permutation-blind
function of $H_t$. Then, for every $t$ and every
$L_t\notin\{\emptyset,\Sigma^*\}$ that is not permutation-closed, no
machine receiving $O_1(x_1),\dots,O_t(x_t)$ can correctly decide
$x_t\in L_t$ on both members of some $\mathrm{prof}(x_t)$-equivalent
pair --- regardless of computational power, and regardless of how the
observer sequence evolved.
\end{theorem}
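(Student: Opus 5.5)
The plan is to reduce the dynamic statement to a single indistinguishability fact about one $\mathrm{prof}$-equivalent pair, and then close with the static argument of Proposition~\ref{prop:order-blind-static}.

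Fix $t$ and a language $L_t\notin\{\emptyset,\Sigma^*\}$ that is not permutation-closed. Since $L_t$ is not permutation-closed, there are strings $x,y$ with $\mathrm{prof}(x)=\mathrm{prof}(y)$, $x\in L_t$ and $y\notin L_t$. This is the $\mathrm{prof}(x_t)$-equivalent pair we will use. Consider two streams that agree in their first $t-1$ inputs $x_1,\dots,x_{t-1}$ and differ only in the last input, $x_t=x$ versus $x_t=y$.

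The core step is to show by induction on $s\le t$ that the two streams produce identical observer sequences and identical views. For $s<t$ the inputs coincide. So if $O_s$ is the same in both runs, then $O_s(x_s)$ is the same, hence $H_s$ is the same, and hence $O_{s+1}=\upsilon(O_s,H_s)$ is the same. The base case is $O_1$, which is fixed and therefore common to both runs.

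This is where the second hypothesis, that $\upsilon$ sees $H_s$ only through a permutation-blind function, plays the role of condition~(ii) in Corollary~\ref{cor:total-opacity}. It guarantees that $O_t$ is the same observer in both runs, fixed before $x_t$ is read, so it cannot have been tuned to separate $x$ from $y$.

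I expect one point to be the main obstacle. For $s<t$ the histories are literally equal, so the blindness hypothesis on $\upsilon$ is not strictly needed there. It becomes essential only if the theorem is read as allowing the pair to sit at earlier positions, or the update to be applied after $x_t$. I will remark that in the general (Remark~\ref{rem:opacity-observer-general}) form, with $G=S_{|x|}$ acting on positions, any history containing $x$ versus $y$ is mapped by $\upsilon$ to the same next observer. This makes the conclusion also independent of how evolution continues past $t$.

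At step $t$, the first hypothesis says that $O_t$ is permutation-blind, i.e.\ $O_t\preceq\Oprof$. This gives $O_t(x)=O_t(y)$. The full view $(O_1(x_1),\dots,O_t(x_t))$ is therefore identical in the two runs. Any machine, of any computational power, is a function of this view, so it outputs the same answer on both runs. Since $x\in L_t$ and $y\notin L_t$, it errs on one member of the pair. This is exactly the $n=t$ analogue of Proposition~\ref{prop:order-blind-static} and Corollary~\ref{cor:oh-2-8}, with ``information never received cannot be used'' doing the work. Finally, $t$, the language $L_t$, and the evolution of the observers were all arbitrary, which gives the uniformity claimed in the statement.
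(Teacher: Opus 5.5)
Your proof is correct, and it takes a more direct route than the paper's. The paper proves the theorem by transporting the induction of Theorem~\ref{thm:dynamic-sip} through Corollary~\ref{cor:total-opacity}. Proposition~\ref{prop:order-blind-static} at $x_1$ serves as the base case. Permutation-blindness of every $O_t$ is read as condition~(i), and the permutation-blind dependence of $\upsilon$ on history is read as condition~(ii): two runs agreeing on every profile but differing in order produce identical observer sequences. The negative direction of Theorem~\ref{thm:order-blind-char} is then applied at each step.

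You instead fix one witness pair $x,y$ of non-permutation-closure, hold the prefix $x_1,\dots,x_{t-1}$ fixed, and check directly that the two views coincide. This avoids reinterpreting conditions that were stated for rewriting rules as conditions on observers. It also makes visible that only the permutation-blindness of $O_t$ is used, because $O_t$ is determined by $x_1,\dots,x_{t-1}$ alone. The paper's route buys the placement of the result inside the Dynamic SIP architecture. Yours buys a self-contained proof and shows that the hypothesis on $\upsilon$ is idle.

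One correction to your side remark: that hypothesis does not become essential under the other readings either. $H_s=(O_1(x_1),\dots,O_s(x_s))$ records only observer outputs. So two runs whose inputs differ at any positions by profile-preserving permutations already have equal histories by the first hypothesis alone, by induction on $s$. Then $\upsilon$ receives identical arguments whether or not it is blind. Its blindness would matter only if $\upsilon$ could read raw inputs, which Definition~\ref{def:streaming} does not allow.
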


\begin{proof}
This is Theorem~\ref{thm:dynamic-sip}, transported through
Corollary~\ref{cor:total-opacity}, with the single transposition
$a\leftrightarrow b$ generalized to the full symmetric group, exactly
as Definition~\ref{def:perm-blind} generalizes Definition~\ref{def:swap}.

By Proposition~\ref{prop:order-blind-static}, applied at the very first
input $x_1$ with $O_1$ in place of the single fixed observer there,
$P$ holds right after the first step; this is the base case.
Condition~(i) holds because every $O_t\preceq\Oprof$: no observer in
the sequence ever exposes order. Condition~(ii) holds because
$\upsilon$'s choice depends on history only through permutation-blind
data, so two runs that agree on every profile but differ in actual
order produce identical observer sequences at every step.

Induction, exactly as in the proof of Theorem~\ref{thm:dynamic-sip},
now gives the claim at every $t$, applying the negative direction of
Theorem~\ref{thm:order-blind-char} at each step.
\end{proof}

\begin{remarknn}[What this adds]
\label{rem:order-blind-what-new}
The time index and the evolving observer are new: \cite{Buono2026oh}
has neither. The extension holds for every permutation-blind schedule;
constructing a concrete, useful, non-trivial schedule beyond the fixed
$\Oprof$ remains open.
\end{remarknn}

\begin{remarknn}[The pattern across all four recovered instances]
\label{rem:general-pattern}
Sections~\ref{sec:cell-d}--\ref{sec:order-blind} have now shown the
same architecture four times, at three levels of observer richness ---
these four being the \emph{recovered} instances, results already in the
source papers re-derived here as one pattern, distinct from the four
\emph{independently arising} instances taken up later
(Sections~\ref{sec:orbital}, \ref{sec:andromeda}, \ref{sec:toe},
\ref{sec:omega}).
The simplest, $\Obot$-level blindness appears twice, in two unrelated
domains: which of two Skolem constants was used, in the static SIP
itself (Section~\ref{sec:dynamic}), and which plaintext was encrypted
(Cell~(d)). These two are parallel appearances of the same simplest
case in different settings, at the same level of the observer order,
neither ordered relative to the other. Richness then increases
genuinely twice: once inside cryptography, from $\Obot$ to
$(B,\,\cdot\,)$ on the plaintext (Theorem~8.15), and once by
generalizing from a single transposition to the full symmetric group
(the order-blind automaton's $\Oprof$-blindness). Each of the four
instances is recovered as the degenerate, single-step case of
Theorem~\ref{thm:dynamic-sip}, and each yields a new dynamic extension
via Corollary~\ref{cor:total-opacity}, save where one dynamic theorem
already covers two static instances at once
(Remark~\ref{rem:815-dynamic-extension}). That the same theorem, at one
fixed level of generality, recovers all four independent source
instances and extends each is evidence it sits at the right level for
the phenomenon common to all four.
\end{remarknn}

\section{The silent assumption behind Dynamic SIP: semantic
interpretation frames}
\label{sec:semantic-frame}

\subsection{The same move, three times}

The static Syntactic Invariance Principle (Lemma~\ref{lem:sip-static})
silently assumes a single, fixed rewriting system $R$; making that
assumption explicit and asking what happens when $R$ is allowed to
change produced Theorem~\ref{thm:dynamic-sip}, the Dynamic SIP. This
move --- take a result held finished, find the assumption it never
stated, and turn that assumption into a variable --- is the recurring
engine of this work, and one of its own source papers runs on the same
engine. \cite[Abstract]{Buono2026ow} observes that all five of
Impagliazzo's worlds~\cite{Impagliazzo1995} assume every party, adversary included, observes
the complete input --- an assumption so natural that, in that source's
own words, ``it is never stated''; that paper's entire contribution is
to make it explicit and relax it. The same sentence structure, applied
to a different silent assumption (full observation, rather than a fixed
rule set), by the same author, independently motivated an entire second
paper.

Theorem~\ref{thm:dynamic-sip} makes a silent assumption of its own. It
is stated entirely at the level of \emph{derivability} --- what a
syntactic system can or cannot produce --- and says nothing about how a
reader, or a machine, \emph{interprets} what is derived. Implicitly, it
assumes whoever reads the derivation reads it against one fixed,
intended model, the same one throughout. This section makes that
assumption explicit, following the same pattern a third time: state the
theorem, find what it silently fixed, and let that vary. Once the
interpreting model is not fixed, a phenomenon appears that no result so
far in this paper, and none in the four source papers, addresses.

\begin{remarknn}[A first, concrete encounter with the frame idea]
\label{rem:msdos-analogy}
An executable file for MS-DOS, read by a modern Linux kernel, is not
corrupted: every byte is exactly what it was written to be. What
differs is what the reader takes those bytes to \emph{mean}. Nothing in
the file signals which reading is intended, and a reader committed to
one interpretation produces a definite, internally consistent result
whether or not that interpretation is the one the bytes were written
under. The reader's choice of interpretation is doing work the bytes
themselves leave open --- a role played, later in this paper, by an
explicit \emph{frame} chosen independently of the syntax it reads. This
section makes the phenomenon precise for the syntactic systems studied
here, where it is provable rather than merely evocative; the concrete
picture is worth keeping in mind, because the same structure recurs in
richer form.
\end{remarknn}

\subsection{Semantic frames}

A semantic frame is simply a concrete mathematical structure that
gives a specific meaning to the symbols $0,s,+$ and to the constants
$a,b$ --- one particular way of reading the syntax as being about
actual objects.

\begin{definition}[Semantic frame]
\label{def:semantic-frame}
A \emph{semantic frame} for $\mathcal L=\{0,s,+\}$ is a structure $F$
with domain $D_F$, interpreting $0,s,+$ as $0^F,s^F,+^F$, together with
a designated interpretation $a^F,b^F\in D_F$ of the Skolem constants
$a,b$. $F$ \emph{models} a rewriting system $R$ if every rule of $R$
holds in $F$ under this interpretation (e.g.\ $F$ models $\{A1,A2\}$ if
$0^F+^Fx=x$ and $s^F(x)+^Fy=s^F(x+^Fy)$ for all $x,y\in D_F$).
\end{definition}

\begin{remarknn}[Semantics and syntax behave oppositely here]
\label{rem:no-silence}
Lemma~\ref{lem:frozen} shows the syntax of $\{A1,A2\}$ never derives
either $a+b=b+a$ or its negation: the syntactic system is permanently
silent on the question. A semantic frame $F$ cannot be silent this way.
Since $F$ is a specific structure, $a^F+^Fb^F$ and $b^F+^Fa^F$ are
specific, definite elements of $D_F$, and either they are equal or they
are not --- $F$ always has an answer, automatically, whether or not
anyone asks. This is the precise sense in which the two behave
oppositely: syntax withholds an answer; any given semantic frame
supplies one, unasked, and with no mechanism for flagging that the
answer it supplies might not be the only one consistent with the syntax
it reads.
\end{remarknn}

\subsection{Two frames, the same syntax, opposite answers}

Here is the phenomenon this section is built around, stated as starkly
as possible before any formalism: the same file, byte for byte
unchanged, can mean two different and mutually exclusive things to two
readers, with nothing in the file able to settle which reading is
right. This is not merely possible but forced --- for the simplest
arithmetic system there is.

\begin{theorem}[Semantic underdetermination]
\label{thm:semantic-underdetermination}
There exist semantic frames $F_1,F_2$, both modelling $\{A1,A2\}$, with
$F_1\models a+b=b+a$ and $F_2\models a+b\neq b+a$.
\end{theorem}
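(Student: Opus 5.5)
The plan is to exhibit the two frames explicitly. No general model-theoretic argument is needed, since Definition~\ref{def:semantic-frame} only requires that the two axioms $A1,A2$ hold and lets $a^F,b^F$ be chosen freely.

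For $F_1$ I will take the standard model: $D_{F_1}=\mathbb N$, with $0,s,+$ interpreted as zero, successor and ordinary addition, and $a^{F_1}=1$, $b^{F_1}=2$. Both rules $0+x=x$ and $s(x)+y=s(x+y)$ hold in $\mathbb N$. Commutativity of addition then gives $a+b=b+a$, so $F_1\models a+b=b+a$. Making $a^{F_1}\neq b^{F_1}$ keeps the example honest, since the constants are distinct.

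For $F_2$ I need a structure in which $A1$ and $A2$ hold yet addition fails to commute on the designated elements. Because $A1,A2$ constrain $x+y$ only when the left argument is $0$ or lies in the image of $s$, the plan is to add elements outside that image. Let $D_{F_2}=\mathbb N\cup\{\alpha,\beta\}\times\mathbb N$. Here $0$ and $s$ act as usual on $\mathbb N$, and on the new elements $s(\gamma,k)=(\gamma,k+1)$ for $\gamma\in\{\alpha,\beta\}$. Define $+$ by recursion on the left argument:
\[
n+y=s^n(y)\ (n\in\mathbb N),\qquad (\gamma,k)+y=(\gamma,k+\lvert y\rvert),
\]
where $\lvert n\rvert=n$ and $\lvert(\delta,j)\rvert=j$. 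In words, a sum whose left summand is non-standard keeps that left summand's tag. Set $a^{F_2}=(\alpha,0)$ and $b^{F_2}=(\beta,0)$. Then $a+b=(\alpha,0)$ and $b+a=(\beta,0)$, which are distinct elements, so $F_2\models a+b\neq b+a$.

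The only real work, and the step I expect to be the obstacle, is checking that $A2$ holds in $F_2$ for all left arguments. There are two cases.

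\textbf{Standard $x=n$.} We need $s(n)+y=s^{n+1}(y)=s(s^n(y))$, which holds by definition.

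\textbf{Non-standard $x=(\gamma,k)$.} We need $s(x)+y=(\gamma,k+1+\lvert y\rvert)$ and $s(x+y)=s(\gamma,k+\lvert y\rvert)=(\gamma,k+\lvert y\rvert+1)$. These agree.

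$A1$ is immediate, since $0+y=s^0(y)=y$. I also have to make sure $s^n$ applied to a non-standard $y$ is well defined, which it is, because $s$ is total on $D_{F_2}$.

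If the tag-and-count bookkeeping becomes awkward, a fallback is a simpler domain: $D=\mathbb N\cup\{\alpha,\beta\}$ with $s$ fixing $\alpha$ and $\beta$, and $\gamma+y=\gamma$ for non-standard $\gamma$. Then $A2$ at $x=\gamma$ reads $\gamma+y=s(\gamma)=\gamma$, which holds, while $a+b=\alpha\neq\beta=b+a$.

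This matches Lemma~\ref{lem:frozen}: the axioms never fire at the root of $a+b$, which is exactly the freedom both frames exploit.
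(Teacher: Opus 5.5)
Your proof is correct. Your $F_1$ is the paper's, with $a^{F_1}=1$, $b^{F_1}=2$ in place of $3,5$. Your fallback is exactly the paper's $F_2$. There, $D_{F_2}=\mathbb N\cup\{a^*,b^*\}$, $s$ fixes $a^*$ and $b^*$, a left sink absorbs ($a^*+y=a^*$, $b^*+y=b^*$), and $x+a^*=a^*$, $x+b^*=b^*$ for $x\in\mathbb N$. The last two clauses are what your $n+y=s^n(y)$ gives if you keep it in the fallback. Say so explicitly, since as written the fallback leaves $n+\gamma$ unspecified. The paper then checks A1 and A2 over an exhaustive five-case table.

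Note that this model does not follow your own heuristic. The sinks \emph{are} in the image of $s$. A2 at $x=\gamma$, namely $s(\gamma)+y=s(\gamma+y)$, only forces $\gamma+y$ to be a fixed point of $s$, and that is the freedom the paper exploits.

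Your main $F_2$ is a genuinely different model. It has two $\omega$-chains whose bottoms $(\gamma,0)$ really lie outside the image of $s$, so their left sums are unconstrained and A2 propagates them up each chain. Defining $+$ by $n+y=s^n(y)$ discharges A1 and the standard-left case of A2 in one line rather than by cases. The tag-and-count bookkeeping reduces the non-standard case to $k+1+\lvert y\rvert=k+\lvert y\rvert+1$.

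What your route buys is a sharper counterexample. Your successor is injective with no fixed points or cycles, so the failure of commutativity cannot be blamed on a degenerate successor. It comes only from the new elements being unreachable from $0$. The paper's route buys minimality: two extra points and a finite case check.
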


\begin{proof}
\emph{$F_1$: the standard model.} Let $D_{F_1}=\mathbb N$ with the
usual $0,s,+$, and $a^{F_1}=3$, $b^{F_1}=5$. Standard addition
satisfies $\{A1,A2\}$ trivially, and $3+5=5+3=8$, so
$F_1\models a+b=b+a$.

\emph{$F_2$: an explicit non-standard model.} Let
$D_{F_2}=\mathbb N\cup\{a^*,b^*\}$ for two fresh elements
$a^*\neq b^*$, $a^*,b^*\notin\mathbb N$. The idea, stated before the
mechanics: $a^*$ and $b^*$ act as two different ``sinks'' that swallow
whatever they are added to, and whichever one appears as the
\emph{left} argument wins --- this single asymmetry is what breaks
commutativity. Define $s^{F_2}$ to agree with
the usual successor on $\mathbb N$, and $s^{F_2}(a^*)=a^*$,
$s^{F_2}(b^*)=b^*$ (fixed points outside the standard part). Define
$+^{F_2}$ by cases:
\begin{itemize}
  \item $x,y\in\mathbb N$: the usual sum.
  \item $x=a^*$ (any $y\in D_{F_2}$): $a^*+y:=a^*$.
  \item $x=b^*$ (any $y\in D_{F_2}$): $b^*+y:=b^*$.
  \item $x\in\mathbb N$, $y=a^*$: $x+a^*:=a^*$.
  \item $x\in\mathbb N$, $y=b^*$: $x+b^*:=b^*$.
\end{itemize}
(These five cases cover every pair $(x,y)\in D_{F_2}\times D_{F_2}$
exactly once.)

\emph{$F_2$ satisfies A1 ($0+x=x$).} For $x\in\mathbb N$: standard. For
$x=a^*$: $0+a^*=a^*$ by the fourth case ($0\in\mathbb N$). For $x=b^*$:
$0+b^*=b^*$ by the fifth case.

\emph{$F_2$ satisfies A2 ($s(x)+y=s(x+y)$), checked exhaustively over
all five shapes of $(x,y)$:}
\begin{itemize}
  \item $x,y\in\mathbb N$: both sides equal the standard $s(x+y)$.
  \item $x\in\mathbb N,y=a^*$: $s(x)\in\mathbb N$, so
    $s(x)+a^*=a^*$ (fourth case); and $s(x+a^*)=s(a^*)=a^*$ (fourth
    case, then fixed point). Equal.
  \item $x\in\mathbb N,y=b^*$: symmetric, both sides $=b^*$.
  \item $x=a^*$ (any $y$): $s(a^*)+y=a^*+y=a^*$ (fixed point, then
    second case); and $s(a^*+y)=s(a^*)=a^*$ (second case, then fixed
    point). Equal.
  \item $x=b^*$ (any $y$): symmetric, both sides $=b^*$.
\end{itemize}
So $F_2$ models $\{A1,A2\}$.

\emph{$F_2$ disagrees on the order.} Set $a^{F_2}=a^*$, $b^{F_2}=b^*$.
Then $a^{F_2}+b^{F_2}=a^*+b^*=a^*$ (second case), while
$b^{F_2}+a^{F_2}=b^*+a^*=b^*$ (third case). Since $a^*\neq b^*$ by
construction, $F_2\models a+b\neq b+a$.
\end{proof}

\begin{corollary}[The MS-DOS phenomenon, made precise]
\label{cor:msdos-precise}
Fix any derivation under $\{A1,A2\}$ with Skolem constants $a,b$ (which,
by Lemma~\ref{lem:frozen}, never touches $a+b$ or $b+a$ at their root).
This single syntactic object is compatible with both $F_1$ and $F_2$ of
Theorem~\ref{thm:semantic-underdetermination}: nothing in the
derivation determines, or even hints, which of the two mutually
exclusive readings is intended. A reader (or machine) fixed on $F_1$
will treat ``$a+b=b+a$'' as available and true; a reader fixed on $F_2$
will treat it as false; neither receives any signal that the other
reading exists, is possible, or is equally consistent with everything
the syntax says.
\end{corollary}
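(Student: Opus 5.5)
The plan is to reduce the corollary to soundness of equational reasoning with respect to models, combined with Theorem~\ref{thm:semantic-underdetermination}. Any derivation $D$ under $\{A1,A2\}$ is a finite sequence of rewrite steps, each replacing an instance $\sigma(\ell)$ of a left-hand side by $\sigma(r)$. First I would prove by induction on the length of $D$ that every equality $t=t'$ obtained along $D$ holds in every frame $F$ that models $\{A1,A2\}$, under every assignment of the variables. The base case is that an instance of A1 or A2 holds in $F$ by Definition~\ref{def:semantic-frame}. The inductive step is that replacing a subterm by an $F$-equal one preserves $F$-value, since $+^F$ and $s^F$ are functions (a congruence argument). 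Theorem~\ref{thm:semantic-underdetermination} supplies two such frames, $F_1$ and $F_2$. Applying the induction to both shows that every equality the derivation records is true in both. This is the precise content of ``compatible with both''.

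Next I would show that nothing in $D$ discriminates between the two frames on the question at hand. By Lemma~\ref{lem:frozen}, no step of $D$ fires at the root of $a+b$ or $b+a$. So neither $a+b=b+a$ nor its negation occurs as a derived equation: the only equations produced about these subterms are trivial reflexive ones. The key property is that every statement $D$ asserts is true in both $F_1$ and $F_2$. A signal distinguishing the readings would have to be a derived statement true in one frame and false in the other. The soundness step rules this out. Consequently the verdict on $a+b=b+a$ comes from the frame and not from the syntax. In $F_1$, $3+5=5+3$ holds. In $F_2$, $a^*+b^*=a^*\neq b^*=b^*+a^*$.

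Finally I would make the ``reader'' clause precise. I would model a reader fixed on $F$ as the map sending a syntactic object to the truth values that $F$ assigns. For the single object $D$, the reader fixed on $F_1$ returns ``true'' on $a+b=b+a$, and the reader fixed on $F_2$ returns ``false''. The input $D$ is identical in both cases, so neither reader's input contains information pointing to the other frame. This is the same observation as Proposition~\ref{prop:obot-trivial}: the syntax acts as a constant observer on the frame choice.

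The main obstacle is formalizing ``receives no signal'' without overclaiming. Here I would commit to the soundness formulation above: every statement $D$ asserts is true in both frames. This is a concrete claim that the induction proves, and it avoids any stronger metaphysical claim about readers. A minor subtlety is that the frames must interpret $a,b$ in the same language $\mathcal L\cup\{a,b\}$. Theorem~\ref{thm:semantic-underdetermination} already provides this by giving designated values $a^{F_i},b^{F_i}$.
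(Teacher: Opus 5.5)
Your proposal is correct and takes essentially the paper's route. The paper gives no separate proof: it treats the corollary as immediate from Lemma~\ref{lem:frozen} and Theorem~\ref{thm:semantic-underdetermination}, and your soundness-by-induction step is exactly what it later makes explicit in Proposition~\ref{prop:encoding-soundness} and Remark~\ref{rem:boundary}. Your appeal to Lemma~\ref{lem:frozen} for non-derivability is not strictly needed, since soundness with $F_2$ (resp.\ $F_1$) already rules out deriving $a+b=b+a$ (resp.\ its negation).
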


\subsection{A worked example of use: ciphertexts without redundancy}
\label{sec:cipher-example}

The arithmetic instance above is deliberately minimal, to make the
phenomenon provable in full rather than merely evocative. The same
structure recurs, with no further construction needed, in an entirely
different and immediately recognisable setting.

\begin{proposition}[Cryptographic instance of semantic underdetermination]
\label{prop:cipher-instance}
Let $\Sigma=\{0,1\}$, plaintext space $\mathbb M=\Sigma^n$ (every
$n$-bit string a valid plaintext: no header, no reserved bits, no
checksum), key space $\mathcal K=\Sigma^n$, and
$\Enc(M,K)=M\oplus K=\Dec(M,K)$ (one-time pad). Fix any ciphertext
$C\in\Sigma^n$. For any two distinct keys $K_1\neq K_2\in\mathcal K$,
let $M_1:=\Dec(C,K_1)=C\oplus K_1$ and $M_2:=\Dec(C,K_2)=C\oplus K_2$.
Then:
\begin{enumerate}[label=(\roman*)]
  \item $M_1\neq M_2$ (since $M_1\oplus M_2=K_1\oplus K_2\neq 0^n$);
  \item $M_1,M_2\in\mathbb M$: both are, without qualification, valid
    plaintexts, since $\mathbb M=\Sigma^n$ has no internal structure a
    decryption could violate;
  \item $C$ alone does not determine which of $K_1,K_2$ (hence which of
    $M_1,M_2$) is correct: every one of the $2^n$ pairs $(K,M)$ with
    $M\oplus K=C$ is equally consistent with $C$, and nothing in $C$
    privileges one over another.
\end{enumerate}
\end{proposition}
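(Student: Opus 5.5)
The proof is a direct verification, carried out item by item using only the group structure of $(\Sigma^n,\oplus)$: every element is its own inverse, and $0^n$ is the identity. No earlier result of the paper is needed. The role of the proposition is to reproduce the shape of Theorem~\ref{thm:semantic-underdetermination} (one fixed syntactic object $C$, two incompatible readings $K_1,K_2$), not to rely on its proof.

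For (i), I would compute $M_1\oplus M_2=(C\oplus K_1)\oplus(C\oplus K_2)$. Associativity and commutativity of $\oplus$, together with $C\oplus C=0^n$, reduce this to $K_1\oplus K_2$. If $M_1=M_2$, then $M_1\oplus M_2=0^n$, so $K_1\oplus K_2=0^n$. XOR-ing both sides with $K_2$ then gives $K_1=K_2$, contradicting the hypothesis.

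For (ii), I would note that $\oplus$ maps $\Sigma^n\times\Sigma^n$ into $\Sigma^n$. Hence $C\oplus K_j\in\Sigma^n=\mathbb M$ for $j=1,2$. Since $\mathbb M$ is all of $\Sigma^n$, there is no header, reserved bit or checksum that a decryption could fail to satisfy.

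For (iii), which I expect to be the only step needing care, the main obstacle is stating precisely what ``$C$ alone does not determine'' means. I would make it exact in two parts.

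First, I would show that the consistent set $\{(K,M):M\oplus K=C\}$ is in bijection with $\Sigma^n$ via $K\mapsto(K,C\oplus K)$. This map is injective because its first coordinate is $K$ itself. It is surjective because $M\oplus K=C$ forces $M=C\oplus K$, obtained by XOR-ing both sides with $K$. So the consistent set has exactly $2^n$ elements, and every key occurs in exactly one consistent pair.

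Second, I would make ``nothing in $C$ privileges one over another'' exact. One option is an automorphism argument: for any two consistent pairs $(K,M)$ and $(K',M')$, the translation $x\mapsto x\oplus K\oplus K'$ maps the first pair to the second and fixes the constraint $M\oplus K=C$. The cleaner option, which I would use, is distributional: under a uniform key independent of $M$, $\Pr[C=c\mid M=m]=2^{-n}$ for every $m$. This is the one-time-pad analogue of \cite[Prop.~8.1]{Buono2026sep}, which the paper already used in Proposition~\ref{prop:thm815-recovered}. So $C$ carries no information selecting $K_1$ over $K_2$, and (iii) follows.
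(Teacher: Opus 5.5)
Your items (i) and (ii) are exactly the paper's argument. The first half of your (iii) is the paper's argument made complete. The paper only notes that for every $K$ the string $M:=C\oplus K$ satisfies $M\oplus K=C$ and is well-formed by (ii). Your surjectivity step is what actually justifies ``every one of the $2^n$ pairs''.

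Where you genuinely depart is the second half of (iii). The paper reads ``nothing in $C$ privileges one over another'' deterministically, as ``no candidate can be excluded on internal grounds'', since there is no format for a wrong decryption to violate. It proves nothing probabilistic. Your translation-symmetry option stays inside that deterministic setting and needs no extra hypothesis.

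Your preferred distributional option imports a model the proposition does not posit, namely a uniform key independent of $M$. Its last step also needs tightening. From $\Pr[C=c\mid M=m]=2^{-n}$ you get that the ciphertext is independent of $M$. Conditional on the observed ciphertext, $K=K_i$ holds exactly when $M=M_i$. So the posterior odds of $K_1$ against $K_2$ equal the prior odds of $M_1$ against $M_2$, and the ciphertext contributes a likelihood ratio of $1$. That is the correct sense of ``no information selecting $K_1$ over $K_2$''.

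It does not show that the ciphertext is uninformative about $K$ outright. In fact $\Pr[K=k\mid C=c]=\Pr[M=c\oplus k]$, which is uniform only when $M$ is. Any residual preference between keys therefore comes from prior knowledge of the plaintext. That is exactly the external verification the paper's setting excludes (Remark~\ref{rem:verification-oracle}).

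State the conclusion that way, or use the symmetry argument, and your (iii) is both correct and sharper than the paper's.
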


\begin{proof}
(i) $M_1\oplus M_2=(C\oplus K_1)\oplus(C\oplus K_2)=K_1\oplus K_2$,
nonzero since $K_1\neq K_2$. (ii) Immediate, since $\mathbb M=\Sigma^n$
places no constraint beyond length. (iii) For every $K\in\mathcal K$,
setting $M:=C\oplus K$ gives $\Enc(M,K)=M\oplus K=C$: every key yields
\emph{some} plaintext consistent with $C$, and by (ii) every such
plaintext is well-formed, so no candidate is excludable on internal
grounds.
\end{proof}

\begin{remarknn}[Why redundancy checks exist]
\label{rem:why-macs}
This is the structural reason redundancy checks (headers, MACs,
checksums) exist in real cryptographic practice: every key here gives
an equally valid reading, with nothing to exclude any of them on
internal grounds. A MAC or checksum is exactly an attempt to break this
underdetermination by adding a rule --- analogous to adding a third
axiom to $\{A1,A2\}$ --- that only the intended key's decryption
satisfies.
\end{remarknn}

\subsection{Two further worked examples}
\label{sec:more-examples}

\begin{proposition}[The MS-DOS instance, made precise]
\label{prop:msdos-precise}
Let $b\in\{0,1\}^N$ be a byte string with no distinguishing header ---
as in the classical \texttt{.COM} executable format, where the entire
string is executed directly as machine code from a fixed offset, with
no reserved signature bytes (unlike \texttt{.EXE} or ELF, both of which
\emph{do} carry a magic header precisely to avoid the phenomenon below).
Let $L_1,L_2$ be two interpreters (loaders), each a partial function
from byte strings to execution behaviour in some domain $\mathcal B$,
corresponding to two incompatible instruction-set conventions. Then:
$b$ is identical regardless of which loader receives it; neither $L_1$
nor $L_2$ receives, from $b$ alone, any signal indicating which
convention was intended; and $L_1(b),L_2(b)\in\mathcal B$ can be
arbitrary and unrelated to one another (including one being
well-defined, meaningful behaviour and the other being undefined,
degenerate, or simply different well-defined behaviour), with $b$ itself
giving no indication that a mismatch has occurred.
\end{proposition}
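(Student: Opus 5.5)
The proof will follow the template of Proposition~\ref{prop:cipher-instance}: the byte string $b$ plays the role of the ciphertext $C$, and the two loaders $L_1,L_2$ play the role of the two keys, or equivalently of the frames $F_1,F_2$ of Theorem~\ref{thm:semantic-underdetermination}. The statement makes three claims, and I will prove each separately.

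\emph{Claim 1 ($b$ is identical under both loaders).} This is definitional. A loader is modelled as a function \emph{applied} to $b$. The input $b\in\{0,1\}^N$ is fixed before either loader is chosen, so it is literally the same argument to $L_1$ and to $L_2$. There is no sense in which the loader modifies its input.

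\emph{Claim 2 (no signal from $b$).} I will formalise ``a signal indicating which convention was intended'' as a function $\chi:\{0,1\}^N\to\{1,2\}$ that some fixed procedure computes from $b$ alone and that returns the intended convention. The \texttt{.COM} hypothesis is exactly that $b$ carries no reserved bytes. So, just as in Proposition~\ref{prop:cipher-instance}(ii), every string in $\{0,1\}^N$ is a syntactically admissible image under both conventions.

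I will then argue by contradiction. Suppose some $\chi$ correctly recovered the intended convention on every $b$. For each string $b$, an author could have written it intending convention~1, and an author could equally have written the same string intending convention~2, since there are no header constraints. The intended convention is therefore not a function of $b$. This is the same two-model argument as Corollary~\ref{cor:msdos-precise}, with ``intended'' playing the role of the designated frame.

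\emph{Claim 3 (arbitrary, unrelated outputs).} Here I will give explicit witnesses, mirroring the $F_1,F_2$ construction.

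\begin{itemize}
  \item \textbf{Meaningful versus undefined.} Take $L_1$ to decode a byte as a halt instruction, and take $L_2$ to be a partial function undefined on that same opcode. Then $L_1(b)$ is well-defined halting behaviour, while $L_2(b)$ is undefined.
  \item \textbf{Two different well-defined behaviours.} Take $L_2$ to be $L_1$ precomposed with a fixed bijection on opcodes, for instance swapping two opcodes with distinct effects. Both loaders are then total on the relevant strings, and $L_1(b)\neq L_2(b)$.
\end{itemize}

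The final clause, that $b$ gives no indication of a mismatch, then reduces to Claim~2: detecting a mismatch would itself be a signal $\chi$ computed from $b$ alone.

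The main obstacle is that the statement is informal. Words such as ``signal'', ``intended'' and ``arbitrary'' must be given a precise reading before anything can be proved. I will first try the information-theoretic reading above, namely that the intended convention is not determined by $b$. If that proves too weak, I will instead adopt the reading ``for every pair of behaviours there exist loaders realising them on $b$'' and prove it by directly defining $L_1,L_2$ pointwise on the single string $b$. This is legitimate because loaders are arbitrary partial functions.
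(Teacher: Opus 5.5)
The paper gives no proof of this proposition (Remark~\ref{rem:msdos-formality} leaves it informal), so your formalization carries all the weight. Claim~1 and the pointwise construction in Claim~3 are fine. Claim~2, and the final clause you reduce to it, are not.

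You pass from ``no header'' to ``every string in $\{0,1\}^N$ is admissible under both conventions'', and that step does not follow. The absence of reserved signature bytes says nothing about whether the instruction stream lies in $\mathrm{dom}(L_1)\cap\mathrm{dom}(L_2)$. The analogy with Proposition~\ref{prop:cipher-instance} breaks exactly here: there $\mathbb M=\Sigma^n$ and decryption is a total bijection, whereas loaders are partial.

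Under your own formalization of a signal as any $\chi:\{0,1\}^N\to\{1,2\}$, Claim~2 is then false in general. Suppose authors only intend a convention under which their program runs, and the two ISAs accept disjoint sets of first opcodes. Then reading the first opcode recovers the intended convention on every admissible string, with no header anywhere.

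Your own Claim~3 witness violates the premise directly, and for it the reduction ``detecting a mismatch would itself be a signal $\chi$'' fails outright. The map $b\mapsto\mathbf 1[b\text{ contains the opcode on which }L_2\text{ is undefined}]$ is computed from $b$ alone and flags precisely the mismatch.

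The two-model argument of Corollary~\ref{cor:msdos-precise} works because $F_1$ and $F_2$ both model $\{A1,A2\}$, i.e.\ both readings are legitimate for the same object. Here its analogue covers only the case where both loaders are defined on $b$. The undefined case is the analogue of Category~1 (Proposition~\ref{prop:category1}), which the paper itself treats as detectable: the oracle orbital machine of Definition~\ref{def:oracle-orbital} rejects ill-formed input at once.

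The repair is to formalize ``no signal from $b$ alone'' as invariance rather than non-recoverability. Model a scenario as a pair $(b,i)$, with $i$ the intended convention. ``No distinguishing header'' then means the bytes handed to any loader are $\mathrm{pres}(b,i)=b$, independent of $i$. Every function of what a loader receives, the loaders themselves included, therefore takes the same value on $(b,1)$ and $(b,2)$, so nothing a loader does can depend on the intended convention. This is a one-line determinism argument needing neither an admissibility premise nor a contradiction.

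The final clause must be read at the same level: $b$ contains no field that flags a mismatch. In the undefined case, running $L_2$ (or scanning for the illegal opcode) may still reveal one, but that is a fact about execution rather than information carried in $b$.

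If you want to keep the strong reading instead, add the hypothesis $b\in\mathrm{dom}(L_1)\cap\mathrm{dom}(L_2)$. Then restrict the undetectable-mismatch claim to your opcode-swap witness, choosing the swap to move an opcode that actually occurs in $b$ so that $L_1(b)\ne L_2(b)$ genuinely holds.
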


\begin{remarknn}[Level of formality]
\label{rem:msdos-formality}
This proposition is stated at the level of generality the point needs,
and no further: it does not fix a specific $\mathcal B$ or specific
$L_1,L_2$ with a full instruction-set semantics, since the claim
concerns any header-free byte string under any two incompatible
conventions, and formalising two complete ISAs is orthogonal to it. The
opening picture (Remark~\ref{rem:msdos-analogy}) is now a claim about
arbitrary header-free byte strings and interpreters, rather than an
anecdote.
\end{remarknn}

\begin{proposition}[Transformation-cascade cipher]
\label{prop:cascade-cipher}
Let $\Enc_0$ be a cipher such that, for a uniformly random key, $\Enc_0(M)$
is computationally indistinguishable from a uniformly random string of
$\Sigma^n$ (a standard assumption on $\Enc_0$, e.g.\ a strong
pseudorandom permutation; we do not prove this property, only assume
it). Fix a public toolkit $\mathcal T$ of classical transformations
(permutations, substitutions) on $\Sigma^n$, each a bijection. Given
plaintext $M$, compute $C_0=\Enc_0(M)$, then apply a secret sequence of
$k\ge 0$ transformations $t_1,\dots,t_k\in\mathcal T$ (secret both in
number $k$ and in which elements of $\mathcal T$, and in what order),
obtaining $C=t_k(\cdots t_1(C_0)\cdots)$, retained with no header or
checksum. Then, for any candidate $k'\ge 0$ and any candidate sequence
$t_1',\dots,t_{k'}'\in\mathcal T$, the candidate recovery
$\widehat C_0:=(t_1')^{-1}(\cdots(t_{k'}')^{-1}(C)\cdots)$ is a
well-defined element of $\Sigma^n$ (since every $t\in\mathcal T$ is a
bijection), and is, under the assumption on $\Enc_0$, computationally
indistinguishable from $C_0$ itself regardless of whether the candidate
sequence is the correct one: nothing in $C$ certifies which candidate
recovery, among the combinatorially many choices of $(k',t_1',\dots,t_{k'}')$,
is correct.
\end{proposition}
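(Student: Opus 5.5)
The proof splits into two claims: $\widehat C_0$ is well defined, and it is computationally indistinguishable from $C_0$ (hence from uniform) whether or not the candidate sequence is correct.

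\textbf{Well-definedness.} This part is immediate. Each $t'\in\mathcal T$ is a bijection of $\Sigma^n$, so each $(t_j')^{-1}$ is a total function on $\Sigma^n$. Induction on $k'$ then shows the composite $\Psi':=(t_1')^{-1}\circ\cdots\circ(t_{k'}')^{-1}$ is a bijection $\Sigma^n\to\Sigma^n$, with the empty composite ($k'=0$) the identity. Hence $\widehat C_0=\Psi'(C)\in\Sigma^n$.

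\textbf{Indistinguishability.} Write $\Theta:=t_k\circ\cdots\circ t_1$, so $\widehat C_0=\Psi'(\Theta(C_0))=\Pi(C_0)$ with $\Pi:=\Psi'\circ\Theta$ a bijection of $\Sigma^n$. Two facts then give the result.
\begin{enumerate}[label=(\alph*)]
  \item If $U$ is uniform on $\Sigma^n$, then $\Pi(U)$ is again exactly uniform, since a bijection preserves the uniform distribution.
  \item If $\Pi$ is efficiently computable, then $D\circ\Pi$ is an efficient distinguisher for any efficient distinguisher $D$. Applying the assumption on $\Enc_0$ to $D\circ\Pi$, the advantage of $D$ in separating $\Pi(C_0)$ from $\Pi(U)$ is negligible.
\end{enumerate}
By (a), $\Pi(U)$ is uniform, so $\widehat C_0\approx_c U$. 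The assumption itself gives $C_0\approx_c U$, and the hybrid argument (transitivity of $\approx_c$) yields $\widehat C_0\approx_c C_0$. This holds for every candidate, the correct one ($\Pi=\mathrm{id}$) included. I would then conclude the "no certificate" clause in the style of Proposition~\ref{prop:cipher-instance}(iii). Any efficient test that accepted the correct recovery and rejected wrong ones would distinguish $\Pi(C_0)$ from $C_0$ for some wrong $\Pi$, contradicting the above. And no header or checksum supplies internal grounds for exclusion, just as in Remark~\ref{rem:why-macs}.

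\textbf{Main obstacle.} The delicate point is step (b). It needs $\Pi$ to be efficiently computable and fixed independently of the key of $\Enc_0$. I would make both explicit hypotheses: every element of $\mathcal T$ and its inverse is poly-time computable, and the lengths $k,k'$ are polynomially bounded. I would also require the secret sequence to be chosen independently of $\Enc_0$'s key, so that $\Pi$ can be hardwired as non-uniform advice into the reduction. With these in place the reduction is one line. The plan should also record what "regardless of correctness" means here. Indistinguishability is of distributions over the key of $\Enc_0$ for each fixed candidate. It does not claim that a given concrete string $C$ carries no information, which would be false for a known plaintext.
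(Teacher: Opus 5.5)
Your proof is correct, and it is essentially the argument the paper relies on. The paper states Proposition~\ref{prop:cascade-cipher} without a proof of its own, but the proof of Proposition~\ref{prop:blind-cascade-nopredicate} runs exactly your idea: every candidate outcome is a bijective image of a string indistinguishable from uniform, so a verifier beating chance would be a distinguisher against $\Enc_0$. What you add is the rigorous form of that sketch: the explicit reduction through $D\circ\Pi$, the fact that a bijection maps uniform to uniform, and the hypotheses the sketch uses silently when it calls all outcomes ``samples from the same distribution'' and $C$ ``by assumption'' indistinguishable from uniform, namely efficiently computable, polynomially long compositions and a secret sequence chosen independently of $\Enc_0$'s key (the latter is not automatic, since the paper's blind-cascade construction derives its layers from $K_1$).
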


\begin{remarknn}[Two phenomena compounded: underdetermination and
search]
\label{rem:cascade-caveat}
This construction illustrates the interaction of two phenomena the rest
of the paper keeps separate. The underdetermination of
Section~\ref{sec:semantic-frame} --- no internal check distinguishes a
correct recovery from an incorrect one, exactly as in
Proposition~\ref{prop:cipher-instance} --- compounds with a
combinatorially large search space (candidates range over all finite
sequences from $\mathcal T$) to produce a search problem with no
verifiable stopping condition short of external information: not merely
hard to search, as in the Base Recovery Problem
of~\cite{Buono2026sep}, but, absent external information, not verifiably
searchable at all. Two matters bound the scope of this. The
noise-likeness of $\Enc_0$ is a computational assumption, not the
information-theoretic guarantee of the one-time pad in
Proposition~\ref{prop:cipher-instance}; and the classical
transformations, all bijections on a fixed-size alphabet, add no
entropy of their own. The underdetermination half is this paper's
subject; the resulting search-complexity question is a separate matter,
recorded here but not developed.
\end{remarknn}

\subsection{The blind-cascade cipher: security by removal of the
verification predicate}
\label{sec:blind-cascade}

Proposition~\ref{prop:cascade-cipher} is worth developing on its own,
because when the secret sequence is itself keyed and its length is a
further secret, the construction exhibits a security property that
deserves to be stated and proved carefully --- not because the scheme
is one to use in practice, but because it exposes, by carrying it to
its limit, the hidden assumption behind almost all of cryptography:
that there is some way to recognise a correct decryption. We call the
scheme the \emph{blind-cascade cipher}, there being, as far as we know,
no established term in the literature for a construction of this shape.

\subsubsection*{The construction}

The plaintext $M$ is carried to the ciphertext $C$ in layers. First, a
\emph{base encryption}: one chooses a strong standard cipher $\Enc_0$
from among $n$ candidates --- the choice itself secret --- and a key
$K_1$, and computes $C_0=\Enc_0(M,K_1)$. Under the standard assumption
on $\Enc_0$, $C_0$ is indistinguishable from a uniform string: already
here, to anyone without $K_1$, $C_0$ is noise. Next, a \emph{confusion
cascade}: fix a public set $\mathcal T=\{\tau_1,\dots,\tau_m\}$ of
classical reversible bijections on strings --- permutations,
substitutions, rotations, row and column swaps, and any other confusion
operation --- and apply to $C_0$ a sequence of $L$ layers of operations
drawn from $\mathcal T$. The number of layers $L$ is a third secret
parameter, given as input, and used to derive a second key $K_2$. The
operations of layer $i$ --- which, how many, in what order, with what
parametrising bits --- are determined by
$\kappa_i=\mathrm{KDF}(K_1,K_2,i)$. No fresh random bit enters after
$K_1$: the whole system is a deterministic expansion of the seed
$s=(K_1,K_2,L)$. Write $C=\Phi_s(M)$ for the entire transformation, and
$\Phi_s^{-1}$ for its inverse, computable by whoever holds $s$. The
ciphertext $C$ carries no header, no checksum, no verification oracle of
any kind: nothing in $C$ records how many layers, which operations, or
which base cipher. The legitimate recipient, holding $s$ and the choice
of cipher, reconstructs the $\kappa_i$, inverts the layers in order, and
decrypts $C_0$.

\subsubsection*{The attacker's position}

The most direct way to see why the scheme is strong is to try to break
it. The attacker has $C$ and nothing else: not $K_1$, not $K_2$, not
$L$, not which base cipher among $n$; no header, checksum, or oracle.
Suppose the attacker tries to decrypt. The confusion layers must be
unwound first, but their number and identity are unknown. Applying the
inverse $\tau_j^{-1}$ of some bijection yields a string --- is it closer
to the solution? There is no way to tell: the string is noise, as $C$
was, as anything the attacker produces will be. Another operation:
again noise. There is no gradient to follow, no signal reading ``warmer,
colder.'' Every move must be tried as though it were the right one,
because nothing distinguishes it from a wrong one. Suppose fortune
intervenes and the attacker unwinds every layer correctly, arriving at
$C_0$. They would not know it: $C_0$ is the output of a strong cipher,
as indistinguishable from uniform as any failed attempt. They hold the
intermediate solution and cannot recognise it. And if they went further,
guessed $K_1$ and the cipher and obtained $M$ --- again, with no
expected plaintext to compare against, they would not know they had
finished.

We now make this experience precise, in two steps: first \emph{how
many} attempts the attacker faces, then --- the crux --- why \emph{no}
attempt yields any information.

\begin{proposition}[Relative unboundedness of the attack space]
\label{prop:blind-cascade-unbounded}
Let $\mathcal T$ be the public, non-empty set of confusion bijections,
each with a parametrisation of size at least $2$, and let $C=\Phi_s(M)$.
An adversary without the seed $s$ and without a verification oracle has
no effective upper bound $L^\star$ on the number of layers: for every
$\ell$ there is a valid seed $s'$ with exactly $\ell$ layers such that
$\Phi_{s'}^{-1}(C)$ is a well-formed plaintext. The set of operation
sequences the adversary must consider therefore has cardinality at
least $\sum_{\ell\ge 1}|\mathcal T|^{\ell}$, which diverges whenever
$|\mathcal T|\ge 1$.
\end{proposition}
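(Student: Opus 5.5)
The plan is to prove the statement in three moves: fix an arbitrary $\ell$, exhibit a seed with exactly $\ell$ layers that decrypts $C$ to something well-formed, and then count. The key structural fact is the same one behind Proposition~\ref{prop:cipher-instance}(ii) and Proposition~\ref{prop:cascade-cipher}. Every $\tau\in\mathcal T$ is a bijection, and $\Dec_0(\cdot,K)$ is a bijection for each base key. Hence $\Phi_{s'}^{-1}$ is a total bijection on the ciphertext space for every seed $s'$. Since no header or checksum exists, its output on $C$ is automatically well-formed.

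\textbf{Existence of a seed for each $\ell$.} Fix $\ell\ge1$ and take any $K_1',K_2'$ for which the derivation from $(K_1',K_2',\ell)$ is defined. The reading of the construction needed here is that $L$ is a free input and $K_2$ is derived from it, so every $\ell$ is admissible. Set $s'=(K_1',K_2',\ell)$. Then $\kappa_i'=\mathrm{KDF}(K_1',K_2',i)$ for $i\le\ell$ determines a composite bijection $\Phi_{s'}$ with exactly $\ell$ layers. Applying $\Phi_{s'}^{-1}$ to $C$ inverts each layer, which is possible by bijectivity, and then applies $\Dec_0$. The result is a string in the plaintext space. Because the plaintext space carries no internal structure, that string is well-formed, exactly as in Proposition~\ref{prop:cipher-instance}(ii).

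\textbf{No effective bound $L^\star$.} Suppose an adversary-computable bound $L^\star$ existed. Take $\ell>L^\star$ and the seed $s'$ just constructed. The adversary's view is $C$ alone, and nothing in $C$ rules $s'$ out. So any procedure that discards seeds with more than $L^\star$ layers discards a candidate the adversary cannot distinguish from the true seed by internal checks. Hence $L^\star$ is not a sound upper bound on the candidate space. The step I expect to be the main obstacle is making precise that ``no effective bound'' is relative to the adversary's information, i.e.\ the absence of a verification oracle. I will handle this by stating it as a consistency claim: for every $\ell$, some seed with $\ell$ layers is consistent with $C$. That is exactly what the construction above delivers. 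I will not attempt a probabilistic or computational indistinguishability claim about $s'$ versus $s$, which would need the assumption of Proposition~\ref{prop:cascade-cipher}.

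\textbf{Counting.} For each $\ell$, every word $(\tau_{j_1},\dots,\tau_{j_\ell})\in\mathcal T^\ell$ yields a candidate inverse $\tau_{j_1}^{-1}\circ\cdots\circ\tau_{j_\ell}^{-1}$ applied to $C$, and these are again well-formed. The assumption that each parametrisation has size at least $2$ guarantees that layer contents genuinely vary, so none of these words is excluded a priori. Summing over lengths, the adversary's candidate set contains the disjoint union of the sets $\mathcal T^\ell$, of cardinality $\sum_{\ell\ge1}|\mathcal T|^\ell$. The terms are at least $1$ when $|\mathcal T|\ge1$, so the sum diverges. Divergence is a statement about distinct operation sequences as syntactic objects, not about distinct resulting maps. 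Two sequences may compose to the same bijection, but the adversary cannot know this without the seed, so the count stands as a count of sequences to consider.
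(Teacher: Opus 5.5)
Your proposal is correct and follows the paper's own argument: finite compositions of the bijections in $\mathcal T$ are bijections, so every candidate unwinding of $C$, of every length $\ell$, yields a string that counts as a well-formed plaintext. Hence no bound $L^\star$ can be derived from $C$, and since $\mathcal T$ is non-empty there is at least one sequence of each length, so the sum diverges. Your explicit requirement that $\Dec_0(\cdot,K)$ also be a bijection is a useful sharpening: the paper's proof argues only about the confusion layers and leaves the base-decryption step implicit.
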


\begin{proof}
Since every $\tau\in\mathcal T$ is a bijection, every finite composition
of operations from $\mathcal T$ is a bijection, so for each length
$\ell$ and each choice of operations the resulting map sends $C$ to
\emph{some} string, which is a well-formed plaintext (every string is,
absent an imposed format). There is thus no $\ell$ beyond which the
attempts are empty: the space to be traversed is not bounded above by
any $L^\star$ derivable from $C$. Since $\mathcal T$ is non-empty, the
number of sequences of each finite length is at least one and the total
diverges over unbounded $\ell$.
\end{proof}

The count alone does not ground the security --- an unbounded space
whose elements carried a signal of correctness would be traversed by a
guided search, eliminating candidates as they failed. The decisive
point, established next, is that no such signal exists: no candidate is
eliminable, because eliminating one would require verifying its outcome,
and no verification is available. It is this, not the size of the space,
that does the work.

\begin{proposition}[Indistinguishability of outcomes: no verification
predicate]
\label{prop:blind-cascade-nopredicate}
Under the assumption that $\Enc_0$ produces output indistinguishable
from uniform, for every pair of operation sequences $u\neq u'$
applicable to $C$, the outcomes $u^{-1}(C)$ and $u'^{-1}(C)$ have the
same distribution in the eyes of any observer not holding the seed $s$.
Consequently there is no function $\mathsf{Verify}(\cdot)$ that, given a
candidate decryption of $C$ and without access to $s$, returns
``correct / incorrect'' with success probability above chance.
\end{proposition}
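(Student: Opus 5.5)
The plan is to argue first in an \emph{idealised} world and then transfer the result to the real scheme by a standard hybrid reduction. In the ideal world, $C_0=\Enc_0(M,K_1)$ is replaced by a truly uniform string $U\in\Sigma^n$, and the layer keys $\kappa_i=\mathrm{KDF}(K_1,K_2,i)$ are replaced by independent uniform strings. In that world the whole cascade is a random bijection $\pi$, drawn independently of $U$, so $C=\pi(U)$ is uniform. The key observation then follows from the fact already used in Proposition~\ref{prop:blind-cascade-unbounded}: every finite composition of elements of $\mathcal T$ is a bijection. Hence, for any fixed operation sequence $u$, the candidate $u^{-1}(C)$ is the image of a uniform string under a fixed bijection. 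It is therefore exactly uniform on $\Sigma^n$, and this holds whether $u$ is the correct sequence or not. So in the ideal world $u^{-1}(C)$ and $u'^{-1}(C)$ have literally the same distribution, just as in Proposition~\ref{prop:cipher-instance}(iii).

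The transfer to the real scheme goes through three hybrids. $H_0$ is the real experiment. $H_1$ is $H_0$ with the KDF outputs replaced by fresh uniform layer keys. $H_2$ is $H_1$ with $C_0$ replaced by uniform. A distinguisher between $H_0$ and $H_1$ breaks the pseudorandomness of the KDF. A distinguisher between $H_1$ and $H_2$ breaks the assumed indistinguishability of $\Enc_0$: the reduction receives a challenge string $y$, samples $K_2$, $L$ and the layer keys itself, applies the cascade to $y$, and then runs the distinguisher on the resulting pair $(u^{-1}(C),u'^{-1}(C))$, which it can compute because $u,u'$ are public candidate sequences. In $H_2$ the two outcome distributions coincide exactly. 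So in $H_0$ they are computationally indistinguishable, and ``same distribution'' in the statement will be read in this computational sense, matching the hypothesis of Proposition~\ref{prop:cascade-cipher}.

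The non-existence of $\mathsf{Verify}$ then follows by contraposition. A verifier that labels candidates ``correct/incorrect'' with advantage $\varepsilon$ yields a distinguisher between the outcome of the correct sequence $u^\ast$ (namely $C_0$) and the outcome of any incorrect $u'$, with advantage at least $\varepsilon$ up to averaging over the choice of candidate. This contradicts the previous paragraph.

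The main obstacle is the key-dependence inside the first hybrid. The layer keys are derived from $K_1$, the same key that produces $C_0$, so the cascade is \emph{not} a priori independent of $C_0$, and the step from $H_0$ to $H_1$ needs the KDF to stay pseudorandom even when $K_1$ is reused by $\Enc_0$. My first attempt would be to assume the KDF is keyed primarily by $K_2$, which is independent of $K_1$, so that $H_0\to H_1$ reduces to ordinary PRF security under $K_2$ alone. Failing that, I would state joint security of $(\Enc_0,\mathrm{KDF})$ under $K_1$ as an explicit assumption.

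A second, scope-level caveat also has to be recorded. The argument concerns candidates for $C_0$, or plaintexts drawn from an unstructured space $\mathbb M=\Sigma^n$. If $M$ carries redundancy, the final decryption under $K_1$ would itself reintroduce a verification predicate. The proposition would then be restricted to the header-free setting of Proposition~\ref{prop:cipher-instance}.
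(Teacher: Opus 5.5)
Your proposal is correct, and its core matches the paper's proof. Each candidate $u^{-1}(C)$ is a fixed public bijection applied to something that looks uniform, and a $\mathsf{Verify}$ beating chance would distinguish $C_0=\Enc_0(M,K_1)$ from uniform.

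You decompose the argument differently, and the difference matters. The paper's proof passes in one line from the hypothesis on $\Enc_0$ to ``$C$ is by assumption indistinguishable from uniform.'' Your hybrids $H_0\to H_1\to H_2$ show that this step needs more than the stated hypothesis, because the layer keys $\kappa_i=\mathrm{KDF}(K_1,K_2,i)$ reuse $K_1$. The extra assumption you propose is genuinely necessary, not an artifact of your method. Suppose $\Enc_0(M,K_1)=M\oplus G(K_1)$ for a PRG $G$, and the KDF makes the single layer an XOR substitution with parameter $G(K_1)$. Then $C=M$, although $\Enc_0$ satisfies the hypothesis. So your route gives a correct statement under explicit assumptions, while the paper's gains brevity by leaving that step implicit.

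Two things should be stated outright in your write-up. First, the hybrids exclude only efficient verifiers, so the conclusion should read ``no polynomial-time $\mathsf{Verify}$'' rather than ``no function.'' Second, your redundancy caveat is Shannon's unicity-distance phenomenon. The paper's proof claims $M$'s structure is reachable ``only with $s$,'' which overlooks that an unbounded verifier can enumerate $K_1$ itself.

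Conversely, take the redundancy-free case where $M$ is uniform on $\Sigma^n$ and each $\Enc_0(\cdot,K_1)$ is a permutation. Then $C$ is exactly uniform and independent of $s$, so every candidate is exactly uniform. The proposition then holds literally, against unbounded verifiers, with no security assumption on $\Enc_0$ or the KDF at all.
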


\begin{proof}
Each operation of $\mathcal T$ is a public bijection; applying or
inverting it is a one-to-one transformation that neither adds nor
removes information about the seed. The only source of structure that
would distinguish the correct outcome is the plaintext $M$; but that
structure is reachable only after all layers have been inverted with the
right parameters and $\Enc_0$ decrypted with $K_1$ --- that is, only
with $s$. Without $s$, the outcome of every sequence is a bijective
function of $C$, and $C$ is by assumption indistinguishable from
uniform; hence all outcomes are, for the observer without the key,
samples from the same distribution. A $\mathsf{Verify}$ effective
against chance would constitute a test distinguishing the correct
outcome from the others --- that is, a test distinguishing
$C_0=\Enc_0(M,K_1)$ from uniform, against the assumption on $\Enc_0$.
\end{proof}

It is this second proposition that carries the security to the level
required. The first says the paths are unbounded; the second says
\emph{no path is distinguishable from another} --- and it is the second
that matters, because it is what makes the search not merely long but
\emph{without a stopping criterion}. The structure that would seem to
signal success can arise from an incorrect decryption as readily as from
the correct one --- confusion bijections applied to wrong bits produce
apparent structure continually --- so that the apparent signal is
indistinguishable from the noise that imitates it.

\subsubsection*{The point of view, and a concept to keep in mind}

What we obtain, which is like what one obtains from Shannon's one-time
pad but from a different point of view, is: \emph{an output
indistinguishable from noise for anyone without a key.} A concept we
shall return to, and that it is worth beginning to keep in mind.

The point of view is different because the way the information
identifying the correct reading is made inaccessible is opposite: in the
one-time pad that information is absent because the key is uniform and
as long as the message; here it is absent because every verification has
been removed, at every layer. In both cases the result, for the observer
without the key, is the same --- an output fully determined and at once
indistinguishable from true noise.

The strength is thus one of \emph{kind}, not of \emph{degree}: not ``the
attacker must do a great deal of work,'' but ``the object --- the
verifier --- that would turn work into an answer does not exist.'' This
is why the entropy of the seed, finite as it is, is not the right
measure: it would count the cost of a search, but a search is defined
only where the solution is recognisable, and here it is not.

\subsubsection*{Reflexes to set aside}

The scheme is simple but far from the ordinary use of cryptography, and
one examining it slips almost inevitably into a sequence of reflexes.
Each is correct in its usual context and misleading here.

\emph{``Security is the entropy of the key.''} The instinct computes
$|K_1|+|L|+\log_2 n$ and concludes: finite security. But entropy
measures the cost of a search, and a search is well defined only if the
solution is recognisable. By
Proposition~\ref{prop:blind-cascade-nopredicate} it is not here. Counting
the seed's bits --- correct in itself --- measures a procedure that
cannot conclude.

\emph{``The attacker enumerates keys and verifies.''} Verifies with
what? Proposition~\ref{prop:blind-cascade-nopredicate} says that
$\mathsf{Verify}$ without $s$ does not exist. Enumeration remains
possible but does not conclude.

\emph{``When structure emerges, I have decrypted.''} The structure can
emerge \emph{false}, from chance; it certifies nothing. It is the
systematic decoy of the proof above.

\emph{``The space of cascades is finite, hence enumerable.''} Finite
\emph{for whoever holds the seed} (the recipient); unbounded and
undifferentiated \emph{for whoever does not}
(Proposition~\ref{prop:blind-cascade-unbounded}). Finiteness is relative
to the observer --- as is every notion in this paper.

\emph{``If it is secure it must be usable, so there is a flaw.''}
Usability is not a requirement: this is a theoretical object, studied
for its mathematical properties. Unusability is not a flaw but the
direct manifestation of the property.

\subsubsection*{A note on robustness}

A scheme with no internal verification is fragile against transmission
errors. This is resolved without touching the property, by enclosing the
blob in an external container with a checksum --- a compressed archive,
say. The checksum verifies the \emph{channel}, not the \emph{reading}:
it confirms that the bits arrived intact, not which of their decryptions
is correct. Verification of transport integrity sits outside the blob
and reinstates no oracle on the content.

\subsubsection*{Why it has not been isolated before}

That a property this sharp has not been isolated is not an oversight of
the field: it is that the object lies in the exact blind spot of the
five reflexes. The standard cryptographic eye measures entropy,
presupposes a verifier, trusts emerging structure, counts a finite
space, assumes usability. An object that becomes more secure precisely
where it becomes unusable, and whose security grows by \emph{subtraction}
of verification rather than by \emph{addition} of work, falls outside
all five.

\begin{remarknn}[A further silent assumption removed: the verification
oracle]
\label{rem:verification-oracle}
Most discussions of ciphertext recovery, including standard
security-game formalizations, silently assume some means of
recognising a correct decryption once found --- a header or checksum,
or, in the known-plaintext setting, the adversary's prior knowledge of
what the plaintext should look like. The construction of
Proposition~\ref{prop:cascade-cipher} removes this assumption too, not
only the header: with no checksum and no known plaintext to compare
against, nothing --- internal to the ciphertext or external to it ---
certifies that any candidate recovery $\widehat C_0$ is the right one,
even to someone who has correctly guessed the transformation sequence.
This is one more instance of the pattern the paper traces throughout:
an assumption so natural it usually goes unstated. Making it explicit
is what lets the construction reach the same \emph{structural} property
Theorem~\ref{thm:semantic-underdetermination} isolates --- no check, of
any kind, available to any party, distinguishes a correct reading from
an incorrect one --- rather than the information-theoretic guarantee of
the one-time pad, which Remark~\ref{rem:cascade-caveat} already
separates off. A system built this way has essentially no practical
use, since the \emph{legitimate} recipient equally has no way to
confirm success; this is exactly why real systems reinstate a
verification oracle deliberately. The point of recording it is the same
as the rest of this section's: to show precisely which assumption does
the protective work, by exhibiting what remains once it is removed.
\end{remarknn}

\begin{remarknn}[Scope: a fixed frame, read against a static or dynamic
syntax]
\label{rem:semantic-frame-static-only}
Theorem~\ref{thm:semantic-underdetermination} and
Corollary~\ref{cor:msdos-precise} concern a single, fixed semantic
frame $F$, read against a syntactic system that may itself be dynamic
(Section~\ref{sec:dynamic}) or static (Section~\ref{sec:static}). A
frame that itself changes over the course of a derivation --- a
``semantic frame selector'' $\phi$ with $F_{n+1}=\phi(F_n,H_n)$,
parallel in shape to the syntactic update $\upsilon$ of
Definition~\ref{def:opacity-preserving} --- is a different object, not
treated here. An evolving frame alongside an evolving rule set $(R_n)$
is a genuine two-axis system, and the frame constructed here stays on
one axis: $F_1,F_2$ are simply and verifiably true frames, not
assertions engineered to contradict anything. There is also a prior
question of what ``opacity-preserving'' should even mean for a frame
selector $\phi$: a frame update changes what \emph{is true}, not merely
what has been \emph{derived}, and the two are governed by different
logical rules, so the two conditions of
Definition~\ref{def:opacity-preserving} do not transfer to $\phi$ by
analogy.
\end{remarknn}

\begin{remarknn}[Why this is a separate fact from Theorem~\ref{thm:dynamic-sip}]
\label{rem:not-contained-revisited}
A theorem that does not mention a phenomenon is not thereby a theorem
\emph{about} it. Theorem~\ref{thm:dynamic-sip} quantifies over
derivations and says what they can never expose;
Theorem~\ref{thm:semantic-underdetermination} quantifies over models
and says they can, individually and unavoidably, decide what the
derivation does not --- and disagree with each other while doing so.
The first is a fact about syntax, the second a fact about semantics;
neither is a special case of the other.
Corollary~\ref{cor:msdos-precise} exists to state the relationship
between them precisely, rather than leave it as an analogy.
\end{remarknn}

\subsection{Interpreted machines: encoding and frame fixed once, at the
start}
\label{sec:interpreted-machine}

We now package this section's phenomenon as a property of computing
machines directly. The simplest case is a machine that fixes both its
encoding and its interpretation once, before computation begins, and
never revisits either. First we need one more piece: a precise way to
say exactly which facts a given encoding does, and does not, settle on
its own.

\begin{definition}[Explicit encoding]
\label{def:explicit-encoding}
An \emph{explicit encoding} is a pair $E=(R_0,I_{R_0})$ where $R_0$ is
a rewriting system and
\[
I_{R_0}:=\{\,s=t \;:\; s,t\text{ terms}, R_0\vdash s=t\,\}
\]
is the set of equalities \emph{derivable} from $R_0$ (in the sense of
Lemma~\ref{lem:sip-static}: $s=t\in I_{R_0}$ iff some derivation using
the rules of $R_0$ identifies $s$ and $t$). $I_{R_0}$ is not chosen
independently of $R_0$: it is a deterministic consequence of it, fixed
the moment $R_0$ is fixed.
\end{definition}

\begin{proposition}[Soundness of explicit encodings]
\label{prop:encoding-soundness}
Let $E=(R_0,I_{R_0})$ be an explicit encoding and let $F$ be a semantic
frame modelling $R_0$ (Definition~\ref{def:semantic-frame}). Then
$F\models s=t$ for every $s=t\in I_{R_0}$: every frame modelling $R_0$
agrees with everything $R_0$ derives.
\end{proposition}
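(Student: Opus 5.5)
The proof is a standard soundness argument for equational rewriting: an induction on the length of the derivation witnessing $s=t\in I_{R_0}$. The invariant carried along is that every equality obtained so far holds in $F$ under every assignment of its variables to elements of $D_F$. Constants are read through $F$'s interpretation, including $a^F,b^F$. This has the same shape as Lemma~\ref{lem:sip-static}: truth in $F$ is a property that holds at the start and is preserved by every rule application. Here, though, it is a semantic property rather than a syntactic one.

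\emph{Base case.} The derivation steps are taken to be uses of rules of $R_0$, together with reflexivity, symmetry and transitivity. Reflexivity $t=t$ holds trivially in $F$. A single rule $l=r$ of $R_0$ holds in $F$ for all assignments because $F$ models $R_0$, which is exactly the clause of Definition~\ref{def:semantic-frame}.

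\emph{Inductive step.} Several cases must be checked.
\begin{itemize}
\item \emph{Substitution.} If $l=r$ holds under all assignments, then so does $l\theta=r\theta$ for any substitution $\theta$. Evaluating $l\theta$ under an assignment $\alpha$ is the same as evaluating $l$ under the composite assignment $x\mapsto[\![\theta(x)]\!]_\alpha$; this is the substitution lemma for term evaluation.
\item \emph{Context.} If $u=v$ holds in $F$ under all assignments, then so does $c[u]=c[v]$ for any context $c$. This follows by induction on $c$, since $0^F,s^F,+^F$ are functions, so equal arguments give equal values.
\item \emph{Symmetry and transitivity.} These hold because equality in $D_F$ is an equivalence relation.
\end{itemize}
Together these cases show that a one-step rewrite $c[l\theta]\to c[r\theta]$ preserves truth in $F$, and so does any chain of such steps. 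This gives $F\models s=t$.

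The main obstacle is not mathematical but definitional. Definition~\ref{def:explicit-encoding} says only that ``some derivation using the rules of $R_0$ identifies $s$ and $t$''. The plan therefore begins by fixing this reading explicitly: $I_{R_0}$ is the equational closure of $R_0$, namely the closure under reflexivity, symmetry, transitivity, congruence and instantiation. Equivalently, $s$ and $t$ are joinable or convertible under $\to_{R_0}$, which is contained in that closure.

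Two further points need care.
\begin{itemize}
\item If the underlying calculus is superposition over clauses, as in~\cite{Buono2026sip}, the same induction goes through with the invariant ``every derived clause is true in $F$''. Each inference (superposition, equality resolution, factoring) is sound for equality interpretations. The fresh constants $a,b$ cause no difficulty, since $F$ assigns them the definite values $a^F,b^F$.
\item Only soundness is claimed, never completeness. This is consistent with Theorem~\ref{thm:semantic-underdetermination}: $a+b=b+a\notin I_{\{A1,A2\}}$ by Lemma~\ref{lem:frozen}, so $F_1$ and $F_2$ are free to disagree on it while both agree on all of $I_{R_0}$.
\end{itemize}
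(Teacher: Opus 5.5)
Your proposal is correct and follows the paper's own argument: induction on the length of the derivation witnessing $s=t\in I_{R_0}$, with truth in $F$ as the invariant and the fact that $F$ models every rule of $R_0$ doing the work. The paper compresses the inductive step into one line (``each step applies a rule of $R_0$ \dots\ so the step preserves truth in $F$''); your substitution, congruence and symmetry/transitivity cases, and your reading of $I_{R_0}$ as the equational closure, spell out what that line leaves implicit.
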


\begin{proof}
The proof is by induction on the length of the derivation witnessing
$s=t\in I_{R_0}$.
Each step of the derivation applies a rule of $R_0$; since $F$ models
$R_0$, that rule holds in $F$ (Definition~\ref{def:semantic-frame}), so
the step preserves truth in $F$. The base case (the empty derivation,
$s=t$ already syntactically identical) is immediate. Hence the term
identifications $I_{R_0}$ certifies all hold in $F$.
\end{proof}

\begin{remarknn}[The exact boundary Theorem~\ref{thm:semantic-underdetermination}
lives on]
\label{rem:boundary}
Proposition~\ref{prop:encoding-soundness} says every frame modelling
$R_0$ agrees on $I_{R_0}$, and says nothing about facts \emph{outside}
$I_{R_0}$; that silence is its exact content, not a gap. For
$R_0=\{A1,A2\}$, Lemma~\ref{lem:frozen} gives
$a+b=b+a\notin I_{R_0}$ (neither it nor its negation is derivable): the
Skolem pair's order is precisely a fact the explicit encoding does not
settle, which is why $F_1,F_2$ in
Theorem~\ref{thm:semantic-underdetermination} are free to disagree on
it without either violating Proposition~\ref{prop:encoding-soundness}.
An explicit encoding thus does more than list what it exposes: it
draws, provably, the exact line between what every legitimate
interpretation must agree on and what is genuinely open for
interpretation to decide.
\end{remarknn}

The four examples above --- the bare arithmetic derivation of
Section~\ref{sec:total-opacity}, the header-free cipher, the loader
mismatch, and the transformation cascade --- are independent
applications of the same phenomenon
(Theorem~\ref{thm:semantic-underdetermination}), given to show it has
genuine, recognisable instances outside the toy arithmetic language.
This subsection develops something additional: a formal packaging of
the phenomenon as a property of computing machines directly, in the
simplest case, where the machine does not use the dynamic apparatus of
Section~\ref{sec:dynamic} at all. It fixes one explicit encoding and
one interpretation once, before computation begins, and never revisits
either --- the familiar situation of a program compiled once against
one fixed target architecture, with neither the source language nor the
hardware's semantics changing mid-run.

\begin{definition}[Interpreted machine]
\label{def:interpreted-machine}
An \emph{interpreted machine} is a triple $\mathcal I=(T,E,F)$ where
$T$ is a standard Turing machine, $E=(R_0,I_{R_0})$ is an explicit
encoding (Definition~\ref{def:explicit-encoding}) fixed once, before
computation begins, governing the syntax $T$ operates on (in the sense
of Definition~\ref{def:dynamic-system} with the degenerate update
$\upsilon_{\mathrm{id}}$ of Corollary~\ref{cor:static-as-degenerate}:
$R_0$ is chosen at the start and never updated), and $F$ is a semantic
frame (Definition~\ref{def:semantic-frame}) modelling $R_0$. Both $E$
and $F$ are chosen once and never changed during the run.
\end{definition}

\begin{proposition}[Interpreted machines are externally indistinguishable
across frames]
\label{prop:interpreted-machine-indist}
Let $\mathcal I_1=(T,E,F_1)$ and $\mathcal I_2=(T,E,F_2)$ share the
same underlying machine $T$ and the same explicit encoding
$E=(R_0,I_{R_0})$, differing only in the frame: $F_1\neq F_2$, though
both model $R_0$ (take, for instance, the $F_1,F_2$ of
Theorem~\ref{thm:semantic-underdetermination}). Two things are true of
this pair at once.

First, the two machines are computationally identical as syntactic
processes: every step $T$ performs under $R_0$ is the same for
$\mathcal I_1$ and $\mathcal I_2$, since $R_0$ alone governs
derivability. They agree with each other, and with $T$'s own
derivations, on every fact in $I_{R_0}$.

Second, they disagree wherever $R_0$ leaves a fact undetermined. If
either machine's semantics is queried on a fact outside $I_{R_0}$ ---
such as $a+b\stackrel{?}{=}b+a$ --- $\mathcal I_1$ and $\mathcal I_2$
give opposite answers, by construction of $F_1,F_2$.

Put these together: no output or observable computational trace of
either machine, taken alone, reveals which of the two it is. The only
way to tell them apart is if $\mathcal I_i$'s own program is
explicitly written to query $F_i$ and report the answer --- and even
then, this requires $F_i$ to already be given to the machine as input
or oracle; it cannot be inferred from $E$ alone.
\end{proposition}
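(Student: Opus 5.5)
The plan is to prove the three assertions in order, each from a result already available. For the first (syntactic identity), I will argue that the transition relation of $\mathcal I_i$ is, by Definition~\ref{def:interpreted-machine}, that of $T$ operating on the syntax governed by $R_0$ under the degenerate update $\upsilon_{\mathrm{id}}$. The frame $F_i$ appears nowhere in this definition of a step. So I will show by induction on the number of steps that, from the same input, the two runs produce identical configuration sequences. The base case is the common initial configuration. The inductive step holds because the next configuration is a function of the current one, of $T$, and of $R_0$ only.

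Agreement on $I_{R_0}$ then follows from Proposition~\ref{prop:encoding-soundness}: both $F_1$ and $F_2$ model $R_0$, so each satisfies every $s=t\in I_{R_0}$. Hence the two frames agree with each other and with every identification $T$ derives.

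For the second assertion (disagreement off $I_{R_0}$), I will instantiate with the frames of Theorem~\ref{thm:semantic-underdetermination}. There $F_1\models a+b=b+a$ and $F_2\models a+b\neq b+a$. Remark~\ref{rem:boundary}, via Lemma~\ref{lem:frozen}, confirms that neither this equation nor its negation lies in $I_{R_0}$, so this disagreement does not contradict soundness. For a general pair $F_1\ne F_2$, I will read ``disagree wherever $R_0$ leaves a fact undetermined'' as the existence of such a fact, which is all the statement needs. The $a+b$ query is the witness, and I will remark that for two arbitrary frames that agree on everything there is no disagreement to exhibit.

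The third assertion (no trace reveals the frame) combines the first two. Every observable output or trace is a function of the configuration sequence, and that sequence is frame-independent by step one. So any distinguisher applied to traces returns the same value on $\mathcal I_1$ and $\mathcal I_2$. This is the same factorization argument used in Corollary~\ref{cor:total-opacity}, with ``trace'' in the role of a blind observer. The exception is a program that queries $F_i$. Such a query requires $F_i$ to be supplied as input or oracle, because $E$ determines only $I_{R_0}$ (Definition~\ref{def:explicit-encoding}). $E$ is common to both machines, so no function of $E$ can separate them.

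The main obstacle is the first step: making precise that an ``observable trace'' excludes the frame. Definition~\ref{def:interpreted-machine} does not say formally how $F$ enters the computation. I will resolve this by taking the definition literally: $F$ is attached data that $T$'s transition function never reads unless it is explicitly given as oracle. That is exactly the proviso in the statement's last sentence.
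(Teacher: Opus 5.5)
Your proposal is correct and follows the paper's proof. For the first claim, both arguments use frame-independence of the run together with Proposition~\ref{prop:encoding-soundness} applied to each of $F_1,F_2$; for the second, both use Theorem~\ref{thm:semantic-underdetermination} with Remark~\ref{rem:boundary}. The differences are minor: you derive frame-independence by a direct induction on configurations from Definition~\ref{def:interpreted-machine}, where the paper simply cites Corollary~\ref{cor:static-as-degenerate}, and you spell out the third assertion and the existential reading of ``disagree wherever'', both of which the paper leaves implicit.
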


\begin{proof}
The first claim is Corollary~\ref{cor:static-as-degenerate}: the
behaviour of $T$ under fixed $R_0$ is exactly the standard
Turing-machine case, independent of $F$. Agreement on $I_{R_0}$ then
follows from Proposition~\ref{prop:encoding-soundness}, applied once
to $F_1$ and once to $F_2$.

The second claim is Theorem~\ref{thm:semantic-underdetermination}:
$F_1,F_2$ model the same $R_0$ yet give opposite answers to a fact
that Remark~\ref{rem:boundary} already showed lies outside $I_{R_0}$.
\end{proof}

\begin{remarknn}[A machine is doubly silent by default]
\label{rem:interpreted-machine-adds}
Corollary~\ref{cor:static-as-degenerate} already shows a standard
Turing machine is the $\upsilon\equiv\upsilon_{\mathrm{id}}$ instance of
Dynamic SIP: encoding fixed, never updated.
Definition~\ref{def:interpreted-machine} through
Proposition~\ref{prop:interpreted-machine-indist} add the orthogonal,
independent observation that fixing the encoding this way --- even an
\emph{explicit} one, which names precisely which facts it does and does
not settle (Definition~\ref{def:explicit-encoding}) --- says nothing
about fixing an interpretation for what it leaves open. A standard
Turing machine is compatible with any frame $F$ modelling its encoding,
agreeing with all of them on $I_{R_0}$ and with none of them
necessarily on anything outside it, and nothing internal to the machine
signals which frame, if any, is intended. A machine is therefore doubly
silent by default: about how its rules may evolve (resolved, when they
do not, by Corollary~\ref{cor:static-as-degenerate}) and about how
whatever its rules leave open is to be read (the subject of this
section). Making the encoding explicit narrows the second silence
without removing it: it tells you exactly where the silence lies, not
how to break it.
\end{remarknn}

\section{A first, independently arising instance: two selectors over
countable spaces}
\label{sec:orbital}

\subsection*{Motivating Analogy}

Before presenting the main results, it is useful to introduce a short
conceptual analogy that highlights the type of semantic and structural
limitations that motivate this work.

\begin{quote}
Once upon a time there was  a machine of incredible power, capable of answering 
any question not by performing calculations, but by extracting the answer directly 
from the very structure of reality itself. but we will understand this later. 
Unfortunately, there is a problem: no one truly knows how to ask the question. 
Even if such a question existed, there would be infinitely many ways to formulate it, 
infinitely many ways to ``select'' the encoding through which the question is posed, 
and infinitely many (perhaps even more) ways to interpret it. Infinite universes 
of meaning in which each specific encoding of the question might admit one answer, 
two answers, infinitely many answers, or none at all.

In some worlds the question would not make sense; in others it would be the
primordial question at the origin of the universe, yet asked to a hamster. In
another world it might be addressed to the only entity capable of answering it,
but that entity would be unable to understand it. Or perhaps the problem lies in
the answer itself, which might be impossible for us to interpret. As with Deep
Thought in Douglas Adams' \textit{The Hitchhiker's Guide to the Galaxy}, the
answer might well be 42, but we would have no idea how to make sense of it.

And there is more. Even if we had one selector for the encoding and another for
the semantic reality in which the question is to be evaluated, we could not even
attempt to test all possibilities. If we combine the values of both selectors
into a list of pairs, Cantor would immediately remind us that there will always
be a new pair which, by diagonalization, was not in the original list.

But even that is not enough, because things get worse. Every combination carries
a different degree of semantic blindness that compromises the result. And even
if we happened to guess the one reality that is completely transparent for a
given encoding, we would obtain an answer which, once produced as output by our
hypothetical machine, would acquire its own points of blindness. We cannot
guarantee that it preserves its semantic value, in fact, statistically, it is
impossible. In a sense, it is a reflection of Rice's theorem, like a famous
separation problem. And the legend has it that this same phenomenon
is the nonexistence at the base of every logical paradox, but this is another story.
\end{quote}

This analogy does not affect the rest of the text or the results and should not
be considered part of the formal scientific contribution. It is simply intended
to provide a useful mental image that facilitates reading.

\begin{remarknn}[This section formalizes the fable, and owes it a real
debt]
\label{rem:formalizing-the-fable}
The story came first. It is not a theory arrived at independently and
then illustrated by the story above; the order is the reverse. The
story already contained, in informal and narrative form, essentially
every structural idea this section makes precise: the two selectors,
the diagonalization over their combinations, the varying degrees of
blindness, and the explicit invocation of Rice's theorem. The debt runs
further still. Two of the story's lines --- an answer received correctly
and still ``no idea how to make sense of'', and every combination
carrying ``a different degree of semantic blindness'' --- reach past
this section entirely. The first is a phenomenon the fable states in its
own right --- a correct answer that no reader can tell from noise ---
of which Chaitin's $\Omega$ (Section~\ref{sec:omega}) is later shown to
be one sharp instance; the phenomenon was in the fable before $\Omega$
had been introduced into this paper at all, and does not depend on it.
The second is the fact, proved
in Remark~\ref{rem:infinite-shades} once Categories~1--3 were in hand,
that those degrees are countably infinite in each direction, not a
manner of speaking. Neither was visible as a precise claim when the
fable was written; both were already sitting, correctly, in it. The
``machine of incredible power'' that answers by reading reality rather
than by computing is owed to Douglas Adams' \textit{The Hitchhiker's
Guide to the Galaxy} and its Deep Thought: that is where the author
first encountered the shape of the problem this section formalizes, and
it is the genuine origin of the intuition, not a decorative reference.
Narrative and metaphor arriving before formal proof is not unusual in
the history of mathematics and science; the correspondence between this
story and what is proved below, checked line by line at the end of the
section (Remark~\ref{rem:mapping-caveats} and the table preceding it),
is close enough to be worth setting out explicitly, once there is
something to compare it against.
\end{remarknn}

\subsection*{From here on, the formal content begins}

The fable asked what happens when a question can be posed in
infinitely many ways, and read back in infinitely many more. We now
build the machine the fable was describing --- starting with its two
knobs: one choosing the encoding the question is posed in, one choosing
the frame it is read back through. This section shows one thing: both
knobs, despite offering infinitely many settings, offer only
\emph{countably} infinitely many. This is worth proving carefully,
because the next section shows a closely related space is \emph{not}
countable at all, and the two must not be confused. The impossibility
results and the connection back to the rest of the paper come later,
once the machine is on the table.

\begin{remarknn}[A name, and where it comes from]
\label{rem:orbital-name}
We call the machine below an \emph{orbital machine}. The name carries
no mathematical content: it is inherited from an earlier, unrelated
piece of code in which the settings of a pair of selectors were
labelled ``orbits'', and which first suggested the construction here as
an intuition pump. Nothing proved or assumed in this section is
inherited from anywhere else, and the name implies no continuity with
any other construction that may have used it; it is kept only because it
carries the original intuition for the author.
\end{remarknn}

\subsection{Two countable spaces}

Picture the machine's two knobs concretely. The first chooses an
\emph{encoding}: which finite set of rewriting rules the question gets
posed in. The second chooses a \emph{frame}: which structure the
answer gets read back through. Both, in principle, offer infinitely
many settings --- but ``infinitely many'' hides a distinction the rest
of the paper depends on, and this subsection nails it down before
anything is built on top of it.

\begin{definition}[Encoding space]
\label{def:encoding-space}
Fix a countable alphabet $\Sigma_0=\{0,s,+,a,b,c_1,c_2,c_3,\dots\}$ (the
base symbols together with countably many auxiliary constants available
to be used as fresh Skolem symbols). The \emph{encoding space}
$\mathcal E_{\mathrm{sp}}$ is the set of all finite rewriting systems
$R_0$ over (a finite subset of) $\Sigma_0$.
\end{definition}

\begin{proposition}[$\mathcal E_{\mathrm{sp}}$ is countably infinite]
\label{prop:encoding-space-countable}
$\mathcal E_{\mathrm{sp}}$ is countably infinite.
\end{proposition}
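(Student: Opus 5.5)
The plan is to prove the two inequalities $|\mathcal E_{\mathrm{sp}}|\le\aleph_0$ and $|\mathcal E_{\mathrm{sp}}|\ge\aleph_0$ separately.

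For the upper bound, I would exhibit an injection of $\mathcal E_{\mathrm{sp}}$ into the set $\Gamma^*$ of finite strings over a countable alphabet. Then I would use the standard fact that $\Gamma^*=\bigcup_{k\ge0}\Gamma^k$ is a countable union of countable sets, hence countable.

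To build the injection, take $\Gamma=\Sigma_0\cup V\cup\{(,),\to,;,\texttt{,}\}$. Here $V=\{x_1,x_2,\dots\}$ is a countable supply of variables, and the remaining symbols are punctuation. $\Gamma$ is countable because it is a union of countably many countable sets.

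Every term over a finite subset of $\Sigma_0$ is a finite string over $\Gamma$, written in prefix notation with parentheses. Hence a rule $l\to r$ is also a finite string.

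A finite rewriting system $R_0=\{l_1\to r_1,\dots,l_k\to r_k\}$ is a finite set, not a sequence. To get a well-defined code, I would fix a canonical order. The simplest choice is a well-order of $\Gamma$ (for example via an enumeration $\Gamma\cong\mathbb N$), extended to the length-lexicographic order on $\Gamma^*$. I would then list the rules in increasing order, separated by $;$.

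Distinct finite sets of rules yield distinct sorted lists, so the map is injective. There is one subtlety: two systems differing only by a renaming of variables could be identified or kept apart. Injectivity holds either way if I encode the systems literally. If the paper intends systems up to variable renaming, I would pass to a quotient, which is still countable as the image of a countable set.

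For the lower bound, I would exhibit infinitely many distinct elements. The one-rule systems $R^{(j)}=\{c_j\to 0\}$, for $j\ge1$, are pairwise distinct finite rewriting systems over finite subsets of $\Sigma_0$. Since $j\mapsto R^{(j)}$ is injective, $\mathcal E_{\mathrm{sp}}$ is infinite.

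Combining the two bounds gives exactly $\aleph_0$. If the paper allows the empty system, nothing changes.

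I expect the only real obstacle to be definitional rather than mathematical: making the encoding of a \emph{set} of rules canonical, and settling what counts as the "same" system. I would handle this by fixing the sorted-list encoding explicitly and noting that any coarser identification only shrinks the set, while the lower-bound witnesses $R^{(j)}$ stay distinct under every reasonable identification. This is because they differ in the constant symbol $c_j$, which cannot be renamed away.
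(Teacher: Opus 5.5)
Your proof is correct and follows essentially the same route as the paper: countability because a finite rewriting system is a finite object over a countable alphabet, and infinitude via an explicit family of pairwise distinct systems built from the auxiliary constants $c_j$. The differences are minor. The paper codes a finite set of rules by $\sum_{i\in R_0}2^i$ under a fixed enumeration of rule pairs rather than by a sorted string, and uses the witnesses $\{A1,A2\}\cup\{c_i\to c_i:1\le i\le n\}$ instead of $\{c_j\to 0\}$. Your explicit variable supply $V$ also fills a small gap the paper leaves implicit, since its $\Sigma_0$ lists no variables even though A1 and A2 use them.
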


\begin{proof}
Terms over the countable alphabet $\Sigma_0$ are finite strings (or
finite trees) over a countable set of symbols, hence countably many
(a standard Gödel numbering: enumerate $\Sigma_0=\{\sigma_0,\sigma_1,\dots\}$
and code each finite term as a natural number via its syntax tree).
A finite rewriting system $R_0$ is a finite set of pairs of terms;
finite subsets of a countable set are themselves countably many
(a finite subset of $\mathbb N$ is coded by, e.g., the sum
$\sum_{i\in R_0} 2^i$ under any fixed enumeration of pairs-of-terms by
$\mathbb N$), so $\mathcal E_{\mathrm{sp}}$ is countable. Infinitude:
for each $n\ge 0$, letting
$R_0^{(n)}=\{A1,A2\}\cup\{c_i\to c_i : 1\le i\le n\}$ gives infinitely
many distinct finite rewriting systems, hence infinitely many distinct
encodings.
\end{proof}

\begin{definition}[Computable frame]
\label{def:computable-frame}
A \emph{computable frame} is a structure with domain $D_F=\mathbb N$,
interpreting $0,s,+$ as $0^F=0$ and total computable functions
$s^F:\mathbb N\to\mathbb N$, $+^F:\mathbb N\times\mathbb N\to\mathbb N$,
together with designated elements $a^F,b^F\in\mathbb N$. It
\emph{models} a rewriting system $R_0$ if every rule of $R_0$ holds
under this interpretation. The \emph{frame space} $\mathcal F_{\mathrm{sp}}$
is the set of all computable frames modelling $\{A1,A2\}$.
\end{definition}

\begin{proposition}[$\mathcal F_{\mathrm{sp}}$ is countably infinite]
\label{prop:frame-space-countable}
$\mathcal F_{\mathrm{sp}}$ is countably infinite.
\end{proposition}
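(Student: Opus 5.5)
The plan follows the proof of Proposition~\ref{prop:encoding-space-countable}: an injection into a countable set gives the upper bound, and an explicit infinite family gives the lower bound.

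\emph{Countability.} A computable frame is determined by three data: the pair of total computable functions $(s^F,+^F)$ and the pair of designated elements $(a^F,b^F)\in\mathbb N^2$. The value $0^F=0$ is fixed by Definition~\ref{def:computable-frame}. Every total computable function is computed by some Turing machine, so it has an index $e\in\mathbb N$. Choosing, for each frame, the least index for $s^F$ and the least index for $+^F$ gives an injection
\[
\mathcal F_{\mathrm{sp}}\hookrightarrow\mathbb N^4,\qquad F\mapsto(e_s,e_+,a^F,b^F).
\]
This map is injective because distinct frames differ in at least one function or one designated element, and then the least indices or the constants differ. Since $\mathbb N^4$ is countable via a standard pairing function, $\mathcal F_{\mathrm{sp}}$ is at most countable. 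Frames are compared extensionally, so the choice of least index is needed to make the map well defined; this injection does not require computing anything about the indices.

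\emph{Infinitude.} Take the standard model $F_1$ of Theorem~\ref{thm:semantic-underdetermination}, restricted to domain $\mathbb N$ with the usual successor and addition. For each $k\in\mathbb N$, vary only the designated elements, setting $a^{F^{(k)}}=k$ and $b^{F^{(k)}}=k+1$. The rules A1 and A2 do not mention $a$ or $b$, and standard arithmetic satisfies them, so every $F^{(k)}$ models $\{A1,A2\}$. Usual successor and addition are total computable, so each $F^{(k)}$ lies in $\mathcal F_{\mathrm{sp}}$. The frames $F^{(k)}$ are pairwise distinct because their designated elements differ. This gives an injection $\mathbb N\hookrightarrow\mathcal F_{\mathrm{sp}}$.

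The only step that needs care is the requirement that every member of the family really satisfies A1 and A2 for all $x,y$. Varying only $a^F,b^F$ avoids this issue entirely, because the rule check then reduces to the single standard model. The non-standard $F_2$ of Theorem~\ref{thm:semantic-underdetermination} is not used here, since its domain is not $\mathbb N$. Combining the upper bound and the lower bound shows that $\mathcal F_{\mathrm{sp}}$ is countably infinite.
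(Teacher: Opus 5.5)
Your proof is correct and follows the paper's route. For countability you code each frame by $(e_s,e_+,a^F,b^F)\in\mathbb N^4$, and for infinitude you vary only the designated elements over the standard successor and addition. Your use of least indices is a genuine tightening: it makes the map into $\mathbb N^4$ an honest injection when frames are compared extensionally. The paper's phrase ``in bijection with a subset of $\mathbb N^4$'' passes over the fact that many indices compute the same function; either reading still yields countability.
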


\begin{proof}
Fix a standard enumeration $(\varphi_e)_{e\in\mathbb N}$ of all partial
computable functions $\mathbb N\to\mathbb N$ (respectively
$\mathbb N\times\mathbb N\to\mathbb N$, coding pairs via a fixed
computable pairing), via Turing machine indices. A computable frame is
determined by a tuple $(e_s,e_+,a^F,b^F)\in\mathbb N^4$ (the index for
$s^F$, the index for $+^F$, and the two designated elements), subject
to $\varphi_{e_s},\varphi_{e_+}$ being total and to $F$ modelling
$\{A1,A2\}$: $\mathcal F_{\mathrm{sp}}$ is in bijection with a subset of
$\mathbb N^4$, and $\mathbb N^4$ is countable (a finite product of
countable sets), so any subset of it is countable.
Infinitude: for every $a,b\in\mathbb N$ with $a\ne b$, taking $s^F,+^F$
to be the standard successor and addition functions (a single fixed
pair of indices $e_s^*,e_+^*$, which model $\{A1,A2\}$ since ordinary
addition satisfies A1, A2) gives a distinct computable frame; there are
infinitely many such pairs $(a,b)$.
\end{proof}

\begin{remarknn}[Relation to Theorem~\ref{thm:semantic-underdetermination}'s
$F_1,F_2$]
\label{rem:f1f2-computable}
Both $F_1$ and $F_2$ of Theorem~\ref{thm:semantic-underdetermination}
are, after recoding, computable frames in the sense of
Definition~\ref{def:computable-frame}: $F_1$ (the standard model with
$a^{F_1}=3,b^{F_1}=5$) is already one, taking $e_s^*,e_+^*$ to code the
standard successor and addition functions. $F_2$'s domain
$\mathbb N\cup\{a^*,b^*\}$ is countably infinite, hence in bijection
with $\mathbb N$; composing that bijection with $F_2$'s operations
(mapping $a^*,b^*$ to two fixed naturals not otherwise used, e.g.\ via
any computable enumeration of $\mathbb N\cup\{a^*,b^*\}$) yields a
computable frame isomorphic to $F_2$. Both $F_1,F_2\in\mathcal
F_{\mathrm{sp}}$ once so recoded; this confirms that $\mathcal
F_{\mathrm{sp}}$ is rich enough to contain the frames already shown to
disagree, and is used nowhere below.
\end{remarknn}

\subsection{Two selectors, and countably many pairs}

\begin{definition}[Encoding and frame selectors]
\label{def:two-selectors}
An \emph{encoding selector} is a function $\sigma_E$ taking values in
$\mathcal E_{\mathrm{sp}}$; a \emph{frame selector} is a function
$\sigma_F$ taking values in $\mathcal F_{\mathrm{sp}}$. The
\emph{orbital machine} (Remark~\ref{rem:orbital-name}) with selectors
$(\sigma_E,\sigma_F)$ is a standard Turing machine $T$ together with a
value $(E,F)=(\sigma_E,\sigma_F)$: $T$'s syntax is governed by the
rewriting system $E$, and (where a specific fact about $T$'s
computation is to be read off) that fact is evaluated under the frame
$F$. What, if anything, $\sigma_E,\sigma_F$ are permitted to depend on
is not specified in this section and is left open for later work.
\end{definition}

Countability matters twice here, in opposite directions. The theorem
below shows the space of possible \emph{settings} $(E,F)$ is countable
--- an honest, complete list of them exists.
Section~\ref{sec:selector-functions} then shows the space of
\emph{selectors} choosing among those settings is not countable at all.
Pinning down the first fact precisely now is what makes the second a
genuine surprise rather than a confusion about which space is in play.

\begin{theorem}[Countably many pairs]
\label{thm:countable-pairs}
$\mathcal E_{\mathrm{sp}}\times\mathcal F_{\mathrm{sp}}$ is countably
infinite: there are countably infinitely many possible pairs
$(E,F)\in\mathcal E_{\mathrm{sp}}\times\mathcal F_{\mathrm{sp}}$, i.e.\
countably infinitely many possible settings of the pair of selectors
$(\sigma_E,\sigma_F)$ at a single evaluation point.
\end{theorem}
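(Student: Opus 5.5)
The plan is to get the result directly from Propositions~\ref{prop:encoding-space-countable} and~\ref{prop:frame-space-countable}, splitting the claim into an upper bound (countability) and a lower bound (infinitude). No new construction is needed beyond a standard pairing function.

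\emph{Countability.} By Proposition~\ref{prop:encoding-space-countable} there is an injection $\iota_E:\mathcal E_{\mathrm{sp}}\to\mathbb N$. By Proposition~\ref{prop:frame-space-countable} there is an injection $\iota_F:\mathcal F_{\mathrm{sp}}\to\mathbb N$; concretely, $\mathcal F_{\mathrm{sp}}$ embeds in $\mathbb N^4$, and a fixed coding of $\mathbb N^4$ into $\mathbb N$ gives $\iota_F$. I will then compose with the Cantor pairing $\pi(x,y)=\tfrac{(x+y)(x+y+1)}{2}+y$, which is a bijection $\mathbb N^2\to\mathbb N$. This yields the map
\[
(E,F)\mapsto\pi\bigl(\iota_E(E),\iota_F(F)\bigr).
\]
It is injective because it is a composition of injections. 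Hence $\mathcal E_{\mathrm{sp}}\times\mathcal F_{\mathrm{sp}}$ is countable.

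\emph{Infinitude.} Fix one frame $F^\ast\in\mathcal F_{\mathrm{sp}}$, for instance $F_1$ of Theorem~\ref{thm:semantic-underdetermination} with standard $s,+$. The map $E\mapsto(E,F^\ast)$ is injective from $\mathcal E_{\mathrm{sp}}$ into the product. The infinitely many distinct systems $R_0^{(n)}$ from the proof of Proposition~\ref{prop:encoding-space-countable} therefore give infinitely many distinct pairs. Symmetrically, fixing $E=\{A1,A2\}$ and varying $F$ also works. Countable and infinite together give countably infinite.

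Finally, I will restate the result in the selector vocabulary of Definition~\ref{def:two-selectors}. At a single evaluation point, a setting of $(\sigma_E,\sigma_F)$ is exactly one element of this product, so the set of possible settings is the same countably infinite set.

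The main point requiring care is not computational but definitional. The pair must be well-defined as an element of the product, which means checking that $\mathcal F_{\mathrm{sp}}$ is a genuine set of frames rather than of indices. Several index tuples $(e_s,e_+,a^F,b^F)$ can present the same frame. I would handle this by working with injections rather than bijections: send each frame to its least coding tuple in $\mathbb N^4$, so that multiplicity of codes never threatens countability. I also need to confirm that nonemptiness of $\mathcal F_{\mathrm{sp}}$, which the infinitude step uses, is supplied by the standard-model frames exhibited in Proposition~\ref{prop:frame-space-countable}.
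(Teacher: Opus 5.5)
Your proof is correct and follows essentially the paper's route: you combine the countability of $\mathcal E_{\mathrm{sp}}$ and $\mathcal F_{\mathrm{sp}}$ (Propositions~\ref{prop:encoding-space-countable} and~\ref{prop:frame-space-countable}) through the Cantor pairing function. The paper composes fixed bijections $\alpha,\beta$ with $\pi^{-1}$ to get a bijection $\mathbb N\to\mathcal E_{\mathrm{sp}}\times\mathcal F_{\mathrm{sp}}$ directly, whereas you use injections and check infinitude separately; your least-code injection has the small advantage of not needing the paper's loose claim that $\mathcal F_{\mathrm{sp}}$ is in bijection with a subset of $\mathbb N^4$, since distinct index tuples can present the same frame.
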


\begin{proof}
By Propositions~\ref{prop:encoding-space-countable} and
\ref{prop:frame-space-countable}, $\mathcal E_{\mathrm{sp}}$ and
$\mathcal F_{\mathrm{sp}}$ are each countably infinite, hence each in
bijection with $\mathbb N$: fix bijections
$\alpha:\mathbb N\to\mathcal E_{\mathrm{sp}}$,
$\beta:\mathbb N\to\mathcal F_{\mathrm{sp}}$. The Cantor pairing
function $\pi:\mathbb N\times\mathbb N\to\mathbb N$,
$\pi(i,j)=\tfrac{1}{2}(i+j)(i+j+1)+j$, is a bijection, obtained by
enumerating $\mathbb N\times\mathbb N$ along successive finite
diagonals $\{(i,j):i+j=k\}$, $k=0,1,2,\dots$ (each diagonal finite, so
every pair is reached at a finite stage). Composing,
$(i,j)\mapsto(\alpha(i),\beta(j))$ is a bijection
$\mathbb N\times\mathbb N\to\mathcal E_{\mathrm{sp}}\times\mathcal
F_{\mathrm{sp}}$, and $\pi^{-1}$ composed with it gives a bijection
$\mathbb N\to\mathcal E_{\mathrm{sp}}\times\mathcal F_{\mathrm{sp}}$
directly. Hence $\mathcal E_{\mathrm{sp}}\times\mathcal F_{\mathrm{sp}}$
is countably infinite (infinite since each factor is infinite and
nonempty; countable as a countable union, indexed by $i\in\mathbb N$,
of the countable sets $\{i\}\times\mathcal F_{\mathrm{sp}}$).
\end{proof}

\begin{remarknn}[On the word ``diagonalization'' here]
\label{rem:diagonal-terminology}
The enumeration used above --- listing pairs along successive finite
diagonals $i+j=0,1,2,\dots$ of the grid $\mathbb N\times\mathbb N$ --- is
the classical technique showing a countable union of countable sets is
countable, sometimes itself called a ``diagonal enumeration''. It is a
different technique, doing the opposite job, from Cantor's better-known
diagonal argument for \emph{uncountability} (used, e.g., to show
$2^{\mathbb N}$ or $\mathbb R$ is strictly larger than $\mathbb N$).
Keeping the two apart matters here: Theorem~\ref{thm:countable-pairs}
uses only the countability-preserving enumeration, and the
uncountability argument enters only in the next subsection, on a
different space.
\end{remarknn}

\begin{remarknn}[What has, and has not, been shown]
\label{rem:orbital-scope-so-far}
This section shows exactly one thing: the space of possible
(encoding, frame) pairs available to a two-selector orbital machine is
countably infinite, via an explicit, checked bijection. It does not yet
address how $\sigma_E,\sigma_F$ might depend on a computation's history,
whether any analogue of Definition~\ref{def:opacity-preserving}'s two
conditions applies to a pair of selectors jointly, or whether any
result of the preceding sections transfers to this two-selector
setting. Those are taken up later, each checked step by step rather
than assumed to carry over.
\end{remarknn}

\subsection{The genuine diagonalization: selector-\emph{functions}, not
selector-\emph{values}}
\label{sec:selector-functions}

Theorem~\ref{thm:countable-pairs} shows the space of possible
\emph{values} $(E,F)$ is countable: a complete, honest enumeration of
it misses nothing. The motivating analogy's appeal to Cantor
(``there will always be a new pair... not in the original list'')
refers to something different --- not the space of values, but the space
of \emph{selectors}: functions from a computation's history to a
choice of $(E,F)$, exactly what a \emph{policy} is in control theory or
a \emph{strategy} is in game theory, a rule for choosing based on
everything seen so far. Most of these are not computable at all, and
that is where the diagonal argument for uncountability applies.

\begin{definition}[Selector, computable selector]
\label{def:selector-function}
Let $\mathrm{Hist}$ be the set of finite sequences of local derivation
steps (a countably infinite set, by the same finite-object-over-a-countable-alphabet
argument as Proposition~\ref{prop:encoding-space-countable}). A
\emph{selector} is any function
$\sigma:\mathrm{Hist}\to\mathcal E_{\mathrm{sp}}\times\mathcal F_{\mathrm{sp}}$,
with no further restriction. A selector is \emph{computable} if it is
computed by some Turing machine (total on $\mathrm{Hist}$, under a
fixed effective coding of $\mathrm{Hist}$ and of
$\mathcal E_{\mathrm{sp}}\times\mathcal F_{\mathrm{sp}}$ by naturals,
available since both are countable by
Theorem~\ref{thm:countable-pairs}).
\end{definition}

\begin{proposition}[Selectors are uncountable; computable selectors are
not]
\label{prop:selectors-uncountable}
The set of all selectors has cardinality $2^{\aleph_0}$. The set of
computable selectors is countably infinite. In particular, almost all
selectors, in the cardinality sense, are non-computable.
\end{proposition}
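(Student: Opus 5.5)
The plan is to compute both cardinalities directly, using only Theorem~\ref{thm:countable-pairs} and the countability of $\mathrm{Hist}$ recorded in Definition~\ref{def:selector-function}. Write $V:=\mathcal E_{\mathrm{sp}}\times\mathcal F_{\mathrm{sp}}$. By Theorem~\ref{thm:countable-pairs}, $|V|=\aleph_0$, and $|\mathrm{Hist}|=\aleph_0$ as well. So the set of all selectors is $V^{\mathrm{Hist}}$, a set of functions from a countably infinite set to a countably infinite set. Its cardinality is $\aleph_0^{\aleph_0}$.

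For the upper bound, fix bijections of $\mathrm{Hist}$ and of $V$ with $\mathbb N$. Then $V^{\mathrm{Hist}}$ is in bijection with $\mathbb N^{\mathbb N}$. This set injects into $2^{\mathbb N\times\mathbb N}\cong 2^{\mathbb N}$ by sending $g$ to its graph. Hence $|V^{\mathrm{Hist}}|\le 2^{\aleph_0}$.

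For the lower bound, pick two distinct pairs $v_0\neq v_1\in V$, which exist because $V$ is infinite. Send each $X\subseteq\mathrm{Hist}$ to the selector $\sigma_X$ that takes the value $v_1$ on $X$ and $v_0$ off $X$. This map is injective, so $|V^{\mathrm{Hist}}|\ge 2^{\aleph_0}$. The Cantor--Schröder--Bernstein theorem then gives exactly $2^{\aleph_0}$. If a self-contained proof of uncountability is preferred, I would also include the explicit Cantor diagonal argument, since the surrounding text motivates the section by it. Given any list $(\sigma_k)$ of selectors and an enumeration $(h_k)$ of $\mathrm{Hist}$, define $\sigma^*(h_k)$ to be some value in $V$ different from $\sigma_k(h_k)$. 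Then $\sigma^*$ differs from every $\sigma_k$, so no list is complete.

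For the computable selectors, I would work with the fixed effective codings of $\mathrm{Hist}$ and $V$ by naturals from Definition~\ref{def:selector-function}. Each computable selector is computed by at least one Turing machine, so choosing one index per selector gives an injection into $\mathbb N$. Hence there are at most countably many computable selectors. For infinitude, the constant selectors $\sigma\equiv\alpha(i)$, one for each $i\in\mathbb N$, are computable: a machine ignores its input and outputs the fixed code. They are pairwise distinct because $V$ is infinite, so the computable selectors form a countably infinite set.

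The final clause follows by cardinality. The non-computable selectors are the complement of a countable set inside a set of size $2^{\aleph_0}$, so they have cardinality $2^{\aleph_0}$, since $2^{\aleph_0}$ minus a countable set is still $2^{\aleph_0}$.

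The main obstacle is not mathematical depth but bookkeeping. One issue is making sure the upper bound $\aleph_0^{\aleph_0}\le 2^{\aleph_0}$ is stated cleanly. The other is making explicit that computability is relative to the fixed codings, so that "computed by a Turing machine" gives a well-defined injection into machine indices. I would handle the first with the graph injection above, and the second by citing the codings already fixed in Definition~\ref{def:selector-function}.
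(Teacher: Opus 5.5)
Your proof is correct and follows the paper's route. Both arguments identify selectors with $\mathbb N^{\mathbb N}$ via the fixed codings and obtain $2^{\aleph_0}$ from a two-valued slice, an upper bound and Cantor--Schr\"oder--Bernstein; your graph injection into $2^{\mathbb N\times\mathbb N}$ replaces the paper's estimate $|\mathbb N^{\mathbb N}|\le|(2^{\mathbb N})^{\mathbb N}|$, and the diagonal argument and the bound on computable selectors by machine indices (with constant selectors witnessing infinitude) are the same. The differences are small: your $\alpha$ should be a bijection $\mathbb N\to V$ rather than the paper's $\alpha:\mathbb N\to\mathcal E_{\mathrm{sp}}$, and your explicit count showing the non-computable selectors still number $2^{\aleph_0}$ is a step the paper leaves implicit.
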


\begin{proof}
Fix bijections $\mathrm{Hist}\cong\mathbb N$ and
$\mathcal E_{\mathrm{sp}}\times\mathcal F_{\mathrm{sp}}\cong\mathbb N$
(both available: $\mathrm{Hist}$ is countably infinite by construction,
and the second by Theorem~\ref{thm:countable-pairs}). Selectors then
correspond exactly to functions $\mathbb N\to\mathbb N$, i.e.\ to
elements of $\mathbb N^{\mathbb N}$. This set has cardinality
$2^{\aleph_0}$: it injects into $2^{\mathbb N}$ trivially (functions
into $\{0,1\}\subset\mathbb N$ already give $2^{\aleph_0}$ elements),
and $2^{\aleph_0}$ is an upper bound since
$|\mathbb N^{\mathbb N}|\le|(2^{\mathbb N})^{\mathbb N}|=2^{\aleph_0\cdot\aleph_0}=2^{\aleph_0}$;
by Cantor--Schr\"oder--Bernstein~\cite{CantorSchroederBernstein},
$|\mathbb N^{\mathbb N}|=2^{\aleph_0}$.
(This is exactly Cantor's original diagonal argument: given any purportedly
complete countable list $g_0,g_1,g_2,\dots$ of functions
$\mathbb N\to\mathbb N$, the function $h(n):=g_n(n)+1$ differs from
every $g_n$ at input $n$, so no countable list exhausts
$\mathbb N^{\mathbb N}$.)

Computable selectors, by contrast, are each specified by a finite
Turing machine program; there are only countably many finite programs
over a fixed finite alphabet, so the set of computable selectors is
countable, and infinite (e.g.\ the constant selectors
$\sigma\equiv(E,F)$, one for each of the countably many
$(E,F)\in\mathcal E_{\mathrm{sp}}\times\mathcal F_{\mathrm{sp}}$, are
each computable and pairwise distinct).
\end{proof}

\begin{remarknn}[The uncountability does not come from the richness of
the choices, but from the shape of the selector]
\label{rem:uncountable-not-from-values}
It is worth being exact about where the $2^{\aleph_0}$ comes from,
because the natural guess is wrong. One might think the selector space
is uncountable because there are infinitely many pairs $(E,F)$ to
choose among --- that the size of the codomain is what drives it. It is
not. The proof above already used only a two-element slice of the
codomain: functions $\mathrm{Hist}\to\{0,1\}$ alone are $2^{\aleph_0}$
of them. So even if the two selectors ranged over a \emph{finite} menu
of settings --- even just two, one bit of choice per history --- the
space of selectors would still be uncountable, and a complete list of
them would still be defeated by the same diagonal argument
($h(n):=1-g_n(n)$ suffices when the codomain is $\{0,1\}$). What makes
the selector space uncountable is the infinite \emph{domain} ---
$\mathrm{Hist}$, the unbounded history a selector reads --- not the
number of values it may return. This is the sharp form of the point:
the two-selector machine's unlistability survives shrinking the pool of
encodings and frames all the way down to a finite one, because it was
never the pool that was too large; it was the space of policies over an
unbounded past. It is exactly here, and not in
Theorem~\ref{thm:countable-pairs}'s countable space of values, that
Cantor's uncountability argument does its work
(Remark~\ref{rem:diagonal-terminology}).
\end{remarknn}

\begin{definition}[Oracle selector]
\label{def:oracle-selector}
An \emph{oracle selector} is any selector (Definition~\ref{def:selector-function})
that is not computable.
\end{definition}

\begin{corollary}[Oracle selectors exist, by cardinality alone]
\label{cor:oracle-exists}
Oracle selectors exist: since the computable selectors are countable
(Proposition~\ref{prop:selectors-uncountable}) and all selectors are
uncountable, at least one --- in fact, all but countably many --- of
the selectors is not computable.
\end{corollary}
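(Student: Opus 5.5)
The plan is a pure cardinality comparison, taking Proposition~\ref{prop:selectors-uncountable} as the only input. That proposition gives two facts. The set $\mathcal S$ of all selectors $\mathrm{Hist}\to\mathcal E_{\mathrm{sp}}\times\mathcal F_{\mathrm{sp}}$ has cardinality $2^{\aleph_0}$. The subset $\mathcal S_{\mathrm{comp}}\subseteq\mathcal S$ of computable selectors is countable. By Definition~\ref{def:oracle-selector}, the oracle selectors are exactly $\mathcal S\setminus\mathcal S_{\mathrm{comp}}$. So the claim reduces to two statements: this difference is nonempty, and the complement of a countable subset in a set of size $2^{\aleph_0}$ is all but countably many of its elements.

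The steps run as follows. First, suppose toward a contradiction that $\mathcal S\setminus\mathcal S_{\mathrm{comp}}=\emptyset$. Then $\mathcal S=\mathcal S_{\mathrm{comp}}$ would be countable, contradicting $|\mathcal S|=2^{\aleph_0}>\aleph_0$. The strict inequality is Cantor's theorem, already exhibited in the diagonal construction $h(n):=g_n(n)+1$ inside the proof of Proposition~\ref{prop:selectors-uncountable}. Hence at least one oracle selector exists.

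Second, for the parenthetical ``all but countably many'': the excluded set $\mathcal S_{\mathrm{comp}}$ is countable by that same proposition, so only countably many selectors fail to be oracle selectors. I would also note the sharper count $|\mathcal S\setminus\mathcal S_{\mathrm{comp}}|=2^{\aleph_0}$. This follows because $\mathcal S$ is the union of $\mathcal S_{\mathrm{comp}}$ and $\mathcal S\setminus\mathcal S_{\mathrm{comp}}$. If the latter had cardinality at most $\aleph_0$, the union of two countable sets would be countable, which is impossible.

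There is no real obstacle, since everything was paid for in Proposition~\ref{prop:selectors-uncountable}. The only point needing care is that ``computable'' is measured against the fixed effective codings of $\mathrm{Hist}$ and $\mathcal E_{\mathrm{sp}}\times\mathcal F_{\mathrm{sp}}$ chosen in Definition~\ref{def:selector-function}. I would remark that the countability of $\mathcal S_{\mathrm{comp}}$ does not depend on which fixed coding is chosen, because each coding still assigns at most one selector per Turing program. The argument is non-constructive: it yields no explicit oracle selector. If an explicit witness were wanted, I would take the selector that returns one of two fixed pairs $(E,F)$ according to whether the history's code lies in the halting set.
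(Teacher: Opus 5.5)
Your proof is correct and is the paper's own argument: the paper gives no proof beyond the cardinality comparison written into the corollary itself, namely that countably many computable selectors cannot exhaust the $2^{\aleph_0}$ selectors of Proposition~\ref{prop:selectors-uncountable}, and your first two steps spell that comparison out. One caveat concerns only your optional aside: the union argument shows that $\mathcal S\setminus\mathcal S_{\mathrm{comp}}$ is uncountable, not that it has cardinality exactly $2^{\aleph_0}$ (without CH these differ), so the sharper count needs either $\aleph_0+\kappa=\max(\aleph_0,\kappa)$ or an explicit injection of $2^{\mathbb N}$ into the non-computable selectors, e.g.\ interleaving an arbitrary $x\in 2^{\mathbb N}$ with the characteristic function of the halting set.
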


\begin{remarknn}[The existence claim rests on cardinality alone]
\label{rem:cardinality-safe}
This existence claim uses nothing beyond a cardinality comparison: a
countable set cannot cover an uncountable one, so oracle selectors are
left over automatically, without any of them being singled out or
constructed to defeat anything specific. No assertion is made about
what any particular oracle selector does; none is needed. This is the
same style of argument used for the static case in
Section~\ref{sec:dynamic}: an oracle-computable object escapes any
fixed countable enumeration of computable ones purely because it is not
among them. What cardinality alone gives is existence, not a specific
task at which every computable selector provably fails and some
specific oracle selector provably succeeds; constructing such a task is
a separate question, and one to approach with the caution urged
throughout this paper --- an earlier, unrelated attempt at a
construction of this shape could not be completed without manufacturing
a false semantic assertion.
\end{remarknn}

\subsection{The problem reopens on the output: a Rice-style obstruction}
\label{sec:rice-output}

Even granting a selector, computable or not, that picks a good
$(E,F)$ for a computation's \emph{input}, whether the resulting
\emph{output} remains semantically well-behaved under that same frame
is a separate question. Checking it, in general, is impossible for a
computable process --- not by any construction specific to this paper,
but as a direct instance of one of the oldest results in computability
theory.

\begin{theorem}[Rice's theorem {\cite{Rice1953}}]
\label{thm:rice}
Let $\mathcal P$ be any property of partial computable functions
$\mathbb N\to\mathbb N$ that is non-trivial (some partial computable
function has it, some does not) and extensional (depends only on the
function computed, not on which program computes it). Then
$\{M : $ the function $M$ computes has property $\mathcal P\}$ is
undecidable.
\end{theorem}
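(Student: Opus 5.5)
I would prove Theorem~\ref{thm:rice} by the classical reduction from the halting problem, using only the undecidability of $K=\{e : \varphi_e(e)\downarrow\}$ (equivalently, of the halting set) and the $s$-$m$-$n$ theorem. Nothing else from the paper is needed: the statement is a standalone fact of computability theory. The plan is to assume, for contradiction, that the index set $I_{\mathcal P}=\{M : \varphi_M \text{ has } \mathcal P\}$ is decidable, and to build from a decider for it a decider for $K$.

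The first step is a normalisation. Let $\bot$ denote the everywhere-undefined partial function. Since $\mathcal P$ is extensional, either $\bot$ has $\mathcal P$ or it does not. If it does, replace $\mathcal P$ by its complement $\neg\mathcal P$. This complement is still non-trivial and extensional, and its index set is the complement of $I_{\mathcal P}$, so it is decidable iff $I_{\mathcal P}$ is. We may therefore assume $\bot\notin\mathcal P$. By non-triviality, we then fix some partial computable $\psi$ with $\psi\in\mathcal P$, computed by a fixed program $M_\psi$.

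The second step is the construction. For each $e$, we define a program $N_e$ that, on input $x$, first runs $\varphi_e(e)$. If that halts, it then runs $M_\psi$ on $x$ and returns the result. By the $s$-$m$-$n$ theorem the map $e\mapsto N_e$ is total computable. Its semantics split cleanly:
\begin{itemize}
\item if $e\in K$, then $\varphi_{N_e}=\psi$, which has $\mathcal P$;
\item if $e\notin K$, then $\varphi_{N_e}=\bot$, which does not.
\end{itemize}
Extensionality is exactly what licenses reading off membership in $I_{\mathcal P}$ from the computed function alone, whatever the program text of $N_e$ looks like. Hence $e\in K\iff N_e\in I_{\mathcal P}$. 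Composing the computable map with the assumed decider for $I_{\mathcal P}$ decides $K$, contradicting the undecidability of the halting problem.

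The only delicate point I anticipate is the normalisation step. The construction needs to know on which side of $\mathcal P$ the function $\bot$ lies, and the complement trick must preserve both hypotheses. I would state this explicitly rather than leave it implicit. I would also remark that both hypotheses are genuinely used: non-triviality supplies $\psi$, and extensionality makes the reduction well defined on index sets.
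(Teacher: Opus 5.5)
Your proof is correct. It is the standard reduction $K\le_m I_{\mathcal P}$ via the $s$-$m$-$n$ theorem, after passing to the complement if necessary so that the everywhere-undefined function lacks $\mathcal P$, and both hypotheses are used exactly where you say. The paper gives no proof of its own here: it imports the result as a black box by citing Rice's 1953 paper, so your argument is simply the classical one that citation stands for.
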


\begin{definition}[$(F,\varphi)$-transparency]
\label{def:transparency}
Fix a computable frame $F\in\mathcal F_{\mathrm{sp}}$ and a target fact
$\varphi$ (an equality $s=t$ or inequality $s\ne t$ between terms). A
machine $M$ is \emph{$(F,\varphi)$-transparent} if, for every input $x$
on which $M$ halts, the output $M(x)$, read as a term and evaluated
under $F$, satisfies $\varphi$.
\end{definition}

\begin{proposition}[Transparency is undecidable, when non-trivial]
\label{prop:transparency-undecidable}
Let $F,\varphi$ be such that some machine is $(F,\varphi)$-transparent
and some machine is not (e.g.\ $F=F_2$ of
Theorem~\ref{thm:semantic-underdetermination}, recoded as a computable
frame per Remark~\ref{rem:f1f2-computable}, and $\varphi:$ ``output
$=a$'': the machine that always outputs the term $a$ is
$(F_2,\varphi)$-transparent, the machine that always outputs the term
$b$ is not, since $a^{F_2}\ne b^{F_2}$). Then
$\{M : M\text{ is }(F,\varphi)\text{-transparent}\}$ is undecidable.
\end{proposition}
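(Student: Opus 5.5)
The plan is to reduce to Rice's theorem (Theorem~\ref{thm:rice}). The one real subtlety is that Rice's theorem speaks of properties of partial computable functions $\mathbb N\to\mathbb N$, while transparency is phrased via outputs read as terms and evaluated under $F$. So the first step is to fix a computable Gödel coding of terms over $\Sigma_0$ by naturals, as in the proof of Proposition~\ref{prop:encoding-space-countable}. Under this coding every machine $M$ computes a partial computable function $f_M:\mathbb N\to\mathbb N$, and ``$M(x)$ read as a term'' means the term coded by $f_M(x)$. Outputs that code no term can be assigned a fixed default term, say $0$, or counted as violating $\varphi$; either convention works provided it depends only on the output value.

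Next I would verify \textbf{extensionality}. $(F,\varphi)$-transparency quantifies over inputs $x$ with $x\in\mathrm{dom}(f_M)$, and asks a condition of the value $f_M(x)$ alone, namely that the decoded term, evaluated in $F$, satisfies $\varphi$. Hence if $f_M=f_{M'}$, then $M$ is transparent iff $M'$ is. Transparency is therefore a property $\mathcal P_{F,\varphi}$ of partial computable functions, and the set in question is exactly $\{M: f_M\in\mathcal P_{F,\varphi}\}$. For this step it is worth noting that the evaluation of a term under $F$ is well defined, since $F$ is a structure. Computability of $F$ is not even needed for extensionality. It only matters if one wants the property to be, for example, co-r.e., which is irrelevant here.

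The third step is \textbf{non-triviality}, which is exactly the hypothesis of the proposition: some machine is transparent and some is not. For the parenthetical example, I would check it directly. The constant-$a$ machine outputs a term evaluating to $a^{F_2}$, so it satisfies ``output $=a$''. The constant-$b$ machine outputs a term evaluating to $b^{F_2}\neq a^{F_2}$, by Theorem~\ref{thm:semantic-underdetermination} transported along the recoding of Remark~\ref{rem:f1f2-computable}. Rice's theorem then yields undecidability.

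The main obstacle I expect is purely definitional: making ``read as a term and evaluated under $F$'' precise enough that the property is manifestly a property of the computed function. I would also need to confirm that the target fact ``output $=a$'' fits the format of Definition~\ref{def:transparency}, as the equality $t_{\mathrm{out}}=a$ with $t_{\mathrm{out}}$ the output term. Once the coding is fixed, the rest is a direct instantiation of Theorem~\ref{thm:rice}.
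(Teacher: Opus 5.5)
Your proposal is correct and follows the same route as the paper. Both show that $(F,\varphi)$-transparency is extensional, since it constrains only the values of the computed partial function on its domain, and non-trivial, by hypothesis and witnessed by the constant-$a$ and constant-$b$ machines, and then apply Rice's theorem directly; the only difference is that you make explicit the G\"odel coding of output terms by naturals, and a convention for outputs coding no term, which the paper's one-paragraph proof leaves implicit.
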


\begin{proof}
$(F,\varphi)$-transparency depends only on the partial function $M$
computes (two machines computing the same function produce identical
outputs on identical inputs, hence agree on whether every output
satisfies $\varphi$ under $F$): it is extensional in the sense required
by Theorem~\ref{thm:rice}. It is non-trivial by the hypothesis (and the
worked example given). Theorem~\ref{thm:rice} applies directly.
\end{proof}

\begin{remarknn}[The recursive reopening]
\label{rem:recursive-reopening}
Even after a selector has fixed a frame $F$ that correctly interprets a
computation's input, no algorithm can decide, in general, whether a
given machine's \emph{output} continues to respect any fixed,
non-trivial semantic fact under that same $F$: by
Proposition~\ref{prop:transparency-undecidable}, the transparency
problem is exactly as hard, in the sense of ordinary undecidability, as
any other non-trivial semantic property of programs. Interpretability
of the question does not propagate to interpretability of the answer,
and provably cannot in general.
\end{remarknn}

\begin{remarknn}[What hides the output, and what is merely undecidable
about it, are different things]
\label{rem:output-two-sources}
The undecidability just stated should not be mistaken for the reason the
output is opaque. What hides the output is structural, and there are two
such structural reasons, both already in play in this paper. First, the
static principle acting on the output: $M(x)$ is itself a syntactic
object, so whatever semantic invariant is protected in it --- partially
or totally --- is invisible to any purely syntactic inspection of it,
for the structural reason of Lemma~\ref{lem:sip-static}, that syntax
cannot see semantic invariants. Second, the encoding--frame relativity
this very section is built on: whether the output reads as meaningful is
relative to a pair $(E,F)$, and under a pair other than the intended one
it may fail to parse, answer a different question, or answer nothing
discriminating (Categories~1--3,
Propositions~\ref{prop:category1}--\ref{prop:category3}) --- so an output
can be opaque because no available reading renders it, quite apart from
any syntactic hiding. Either inaccessibility is the blindness, and each
is present whether or not anything about the output is decidable. Rice's
theorem concerns a different question again: not what hides the
invariant, but whether one can algorithmically \emph{decide} that a
fixed invariant holds --- which, by
Proposition~\ref{prop:transparency-undecidable}, one cannot in general.
The three must be kept apart: the SIP and the encoding--frame relativity
are why there is, or is not, a readable invariant at all; Rice is why
verifying a fixed one is beyond an algorithm. The fable's ``reflection
of Rice's theorem'' is a narrative allusion, gesturing at material
outside this paper's scope, not a claim that Rice is the source of the
blindness. This same separation is set out for the output row of the
fable's table (Section~\ref{sec:fable-mapped}); it is recorded here
because this is where the output's reopening is first proved.
\end{remarknn}

\subsection{One correct pair, and three ways to fail: the fable made
precise}
\label{sec:three-ways-fail}

The motivating analogy describes a question with an intended meaning,
posed under some encoding, read under some frame, where only one pair
of choices recovers what was meant; every other pair either fails to
parse, answers something else, or answers nothing discriminating at
all. This subsection makes each precise, for a specific, fully worked
question, in the same spirit as Section~\ref{sec:total-opacity}'s
worked examples: concrete enough to prove completely.

\begin{remarknn}[Scope: one concrete question]
\label{rem:three-ways-scope}
This is worked for the specific question of
Theorem~\ref{thm:semantic-underdetermination}, fully rigorously. Doing
the same for an \emph{arbitrary} externally chosen target question
would require a general model-existence fact --- if an encoding $E$
derives neither a question nor its negation, models of $E$ disagreeing
on it exist --- which holds in general (it follows from Gödel's
completeness theorem) but is built here only for the specific case at
hand, by direct construction. Extending to arbitrary $(E^*,q^*)$ is
open, and would need that general model-existence argument built with
the same care given to every other construction here.
\end{remarknn}

To make the fable's three ways of failing precise, we need a fixed
target to fail against: a specific question, posed under a specific
encoding, with a specific frame that gives the answer meant. This
target is fixed once, from outside the construction --- exactly as the
true key $K^*$ was fixed externally in
Proposition~\ref{prop:cipher-instance}.

\begin{definition}[External target]
\label{def:external-target}
Fix, once and externally, three things: a target encoding
$E^*=\{A1,A2\}\in\mathcal E_{\mathrm{sp}}$; a target question $q^*$,
taken to be ``$a+b\stackrel{?}{=}b+a$''; and a target frame
$F^*\in\mathcal F_{\mathrm{sp}}$ modelling $\{A1,A2\}$ that gives the
intended answer (say $F^*=F_1$ of
Theorem~\ref{thm:semantic-underdetermination}, recoded as computable,
so the intended answer is ``true''). A candidate pair $(E,F)$
\emph{recovers} $q^*$ if $E=E^*$ and $F\models a+b=b+a$.
\end{definition}

\begin{proposition}[Category 1: syntactic vacuity]
\label{prop:category1}
There are countably infinitely many $E\in\mathcal E_{\mathrm{sp}}$
under which $q^*$ is not even a well-formed term: $E$'s rewriting
system does not have $a$ and $b$ both among its symbols.
\end{proposition}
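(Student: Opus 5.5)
The plan is to exhibit an explicit infinite family inside $\mathcal E_{\mathrm{sp}}$ whose members all omit $a$ or $b$ from their symbols. Countability is then inherited from Proposition~\ref{prop:encoding-space-countable}, since the family is a subset of a countable set. The construction mirrors the infinitude half of that proposition. For each $n\ge 0$ I would set
\[
R^{(n)}_{\neg a}:=\{A1,A2\}\cup\{c_i\to c_i : 1\le i\le n\}.
\]
This is a finite rewriting system over the finite subset $\{0,s,+,c_1,\dots,c_n\}\subseteq\Sigma_0$, so it lies in $\mathcal E_{\mathrm{sp}}$ by Definition~\ref{def:encoding-space}. The systems are pairwise distinct because they differ in their number of rules.

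The key step is to fix what ``$E$'s rewriting system has $a$ and $b$ among its symbols'' means. I read it as the signature of $E$: the set of symbols occurring in some rule of $E$, which is how Definition~\ref{def:encoding-space} lets each encoding live over a finite subset of $\Sigma_0$. Under this reading, the term $a+b$ or $b+a$ is not a term over the signature of $R^{(n)}_{\neg a}$, because neither $a$ nor $b$ occurs in $A1$, $A2$, or any $c_i\to c_i$. So $q^*$ is not well-formed under $E$, and in particular $E\neq E^*$ in the sense of Definition~\ref{def:external-target}.

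One subtlety is that $E^*=\{A1,A2\}$ itself does not contain $a,b$ in its rules either. The statement must therefore treat the declared signature of $E$ as part of the encoding, with $E^*$ declared over $\{0,s,+,a,b\}$ as in Lemma~\ref{lem:frozen}. I would make this explicit by indexing each encoding with its finite signature $\Sigma_E\subseteq\Sigma_0$, and define the family using $\Sigma_E=\{0,s,+,c_1,\dots,c_n\}$. This pairing of a finite rule set with a finite signature is still a finite object over a countable alphabet, so Proposition~\ref{prop:encoding-space-countable}'s counting argument applies unchanged.

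To conclude, I would note that the family is infinite and contained in the countable $\mathcal E_{\mathrm{sp}}$, so it is countably infinite. I expect the main obstacle to be precisely the convention about signatures, and I would settle it first before giving the construction.
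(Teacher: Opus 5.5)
Your proof is correct and is essentially the paper's: exhibit an explicit infinite family of finite systems whose symbols avoid $a,b$, and inherit countability from Proposition~\ref{prop:encoding-space-countable}. The paper uses the one-rule systems $E_n=\{c_n\to c_n\}$ ($n\ge 1$, which lack even $+$), pairwise distinct because their constants differ, while you use $\{A1,A2\}\cup\{c_i\to c_i : 1\le i\le n\}$, distinct by rule count.

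Your declared-signature convention is extra care rather than a different argument. It settles a point the paper's proof leaves implicit: read literally as ``symbols occurring in the rules'', $E^*=\{A1,A2\}$ (and your $n=0$ member) would itself lack $a,b$ and fall into Category~1.
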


\begin{proof}
For each $n\ge1$, let $E_n=\{c_n\to c_n\}$ (using
one of the countably many auxiliary symbols of
Definition~\ref{def:encoding-space}, none of them $a$ or $b$): $a+b$ is
not a term over $E_n$'s alphabet at all, so $q^*$ cannot be posed under
$E_n$. The $E_n$ are pairwise distinct (distinct symbols $c_n$), giving
countably infinitely many such $E$.
\end{proof}

\begin{proposition}[Category 2: semantic disagreement]
\label{prop:category2}
There are countably infinitely many $F\in\mathcal F_{\mathrm{sp}}$
modelling $E^*=\{A1,A2\}$ that disagree with $F^*$ on $q^*$, i.e.\ with
$F\models a+b\ne b+a$.
\end{proposition}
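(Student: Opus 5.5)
Since $\mathcal F_{\mathrm{sp}}$ requires domain $\mathbb N$ with $0^F=0$ and total computable $s^F,+^F$, the plan is to build a computable, parametrised copy of the sink model $F_2$ from Theorem~\ref{thm:semantic-underdetermination} directly on $\mathbb N$. Varying one integer parameter should then give infinitely many pairwise distinct frames that all model $\{A1,A2\}$ and all satisfy $a+b\ne b+a$.

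\textbf{The family.} For each $k\ge1$, fix a computable bijection $\beta_k:\mathbb N\cup\{a^*,b^*\}\to\mathbb N$ with $\beta_k(n)=n$ for $n<k$, $\beta_k(a^*)=k$, $\beta_k(b^*)=k+1$, and $\beta_k(n)=n+2$ for $n\ge k$. Transport $F_2$'s operations along $\beta_k$. Concretely, decode arguments by $\beta_k^{-1}$, apply the five-case definition of $+^{F_2}$ and the successor of $F_2$ (with $a^*,b^*$ as fixed points), and re-encode by $\beta_k$. This defines $s^{F^{(k)}},+^{F^{(k)}}$, with $a^{F^{(k)}}=k$ and $b^{F^{(k)}}=k+1$. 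Since $k\ge1$, we get $\beta_k(0)=0$, so $0^{F^{(k)}}=0$ as Definition~\ref{def:computable-frame} requires. Both operations are total and computable, being compositions of computable case distinctions on $\mathbb N$.

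\textbf{Modelling and disagreement.} Each $F^{(k)}$ is isomorphic to $F_2$ via $\beta_k$. Satisfaction of the equational axioms A1, A2 is preserved under isomorphism, so the exhaustive verification in the proof of Theorem~\ref{thm:semantic-underdetermination} transfers verbatim, and $F^{(k)}$ models $\{A1,A2\}$. Likewise
\[
a^{F^{(k)}}+^{F^{(k)}}b^{F^{(k)}}=\beta_k(a^*)=k\ne k+1=\beta_k(b^*)=b^{F^{(k)}}+^{F^{(k)}}a^{F^{(k)}},
\]
so $F^{(k)}\models a+b\ne b+a$. This is the opposite of $F^*=F_1$, which gives ``true'' by Definition~\ref{def:external-target}.

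\textbf{Distinctness and countability.} The frames $F^{(k)}$ are pairwise distinct because their designated elements $a^{F^{(k)}}=k$ differ. This gives infinitely many such frames, and Proposition~\ref{prop:frame-space-countable} bounds the whole family by $\aleph_0$, so the set is countably infinite.

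The main obstacle is bookkeeping, not depth. I must make sure the recoding fixes $0$; this is why the sinks are placed at $k\ge1$ rather than $0,1$. I must also check that computability of $\beta_k$ and $\beta_k^{-1}$ is uniform enough that each transported operation is a genuine total computable function. If the convention $0^F=0$ proves awkward, a simpler alternative is to keep $\beta$ fixed and vary the sink labels along an infinite computable family of relabelings.
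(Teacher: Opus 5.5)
Your proof is correct. It uses the paper's idea: relabel the two sinks of $F_2$ as a fresh pair of naturals for each index, and get countability from Proposition~\ref{prop:frame-space-countable}. But you carry it out by a different mechanism, and the difference matters.

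The paper writes the operations down directly. It keeps the \emph{ordinary} successor and addition on $\mathbb N\setminus\{p_n,q_n\}$, adds the sink rules at $p_n,q_n$, and then claims that the five-case check of Theorem~\ref{thm:semantic-underdetermination} applies verbatim. Read literally, it does not. In $F_2$ the ordinary part $\mathbb N$ is closed under $s$ and $+$, whereas $\mathbb N\setminus\{p_n,q_n\}$ is not, so ordinary arithmetic runs into a sink. For instance, take $m=\min(p_n,q_n)\ge1$ and any $y>\max(p_n,q_n)$. Then $s(m-1)+y=m$, because the sink absorbs $y$, but $s((m-1)+y)=m+y$, so A2 fails. (If $0\in\{p_n,q_n\}$, A1 fails instead.)

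What is actually needed is the transported structure of Remark~\ref{rem:f1f2-computable}, whose successor skips over the two sink labels. That is exactly what your $\beta_k$ produces: shifting everything from $k$ upward by $2$ keeps the copy of $\mathbb N$ closed under the operations and disjoint from $\{k,k+1\}$.

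Transport along an explicit bijection gives you A1, A2 and the disagreement for free, since isomorphisms preserve equations and inequations. Your choice $k\ge1$ also secures $0^{F^{(k)}}=\beta_k(0)=0$. The paper's direct formulas would gain only explicitness, and only after being corrected to the shifted ones.

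Two small points. Your concern about uniformity in $k$ is unnecessary, because each $F^{(k)}$ only needs to be computable on its own. The closing fallback can be dropped.
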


\begin{proof}
Theorem~\ref{thm:semantic-underdetermination} exhibits one such frame,
$F_2$ (recoded as computable per
Remark~\ref{rem:f1f2-computable}). For infinitely many pairwise-distinct
variants, note that Remark~\ref{rem:f1f2-computable}'s recoding already
moves $F_2$'s domain from $\mathbb N\cup\{a^*,b^*\}$ to $\mathbb N$
itself, by relabelling $a^*,b^*$ as two natural numbers not otherwise
used. The same freedom gives further variants directly, each built on
the same ``leftmost sink wins'' idea as $F_2$ itself
(Theorem~\ref{thm:semantic-underdetermination}'s proof): for each
$n\ge1$, pick a fresh pair $p_n\ne q_n\in\mathbb N$, disjoint from all
previously chosen pairs, and define $s^{F^{(n)}},+^{F^{(n)}}$ to agree
with the standard successor and addition on
$\mathbb N\setminus\{p_n,q_n\}$, but with
$s^{F^{(n)}}(p_n)=p_n$, $s^{F^{(n)}}(q_n)=q_n$ (fixed points), and
$+^{F^{(n)}}(p_n,y):=p_n$, $+^{F^{(n)}}(q_n,y):=q_n$ for all $y$,
$+^{F^{(n)}}(x,p_n):=p_n$, $+^{F^{(n)}}(x,q_n):=q_n$ for
$x\notin\{p_n,q_n\}$ (the same five-case definition as
Theorem~\ref{thm:semantic-underdetermination}'s $F_2$, now with $p_n,q_n$
literal elements of $\mathbb N$ rather than adjoined outside it). The
same case-by-case check as that theorem's proof (A1, A2 verified
exhaustively over the five shapes of argument pairs) applies verbatim
with $p_n,q_n$ in place of $a^*,b^*$, giving $F^{(n)}\in\mathcal
F_{\mathrm{sp}}$ modelling $\{A1,A2\}$, with $a^{F^{(n)}}:=p_n$,
$b^{F^{(n)}}:=q_n$ satisfying $a+^{F^{(n)}}b=p_n\ne q_n=b+^{F^{(n)}}a$.
The $F^{(n)}$ are pairwise distinct (distinct pairs $(p_n,q_n)$), giving
countably infinitely many such $F$.
\end{proof}

\begin{proposition}[Category 3: semantic vacuity --- the ``beaver''
frames]
\label{prop:category3}
There is a frame $F^{(1)}\in\mathcal F_{\mathrm{sp}}$ modelling
$\{A1,A2\}$ under which $q^*$ receives a well-defined answer that
carries no discriminating information whatsoever: $F^{(1)}$ answers
``$a+b=b+a$'' regardless of what $a,b$ are, or what $q^*$ was really
asking. Countably many formally distinct (though isomorphic) recodings
of it exist in $\mathcal F_{\mathrm{sp}}$.
\end{proposition}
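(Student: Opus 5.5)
I will exhibit $F^{(1)}$ explicitly as the \emph{trivial-on-the-constants} frame: domain $\mathbb N$, with the standard successor $s^{F^{(1)}}(x)=x+1$ and standard addition $+^{F^{(1)}}$, and with designated elements $a^{F^{(1)}}=b^{F^{(1)}}=0$. This choice is admissible because Definition~\ref{def:computable-frame} only asks for designated elements $a^F,b^F\in\mathbb N$; it does not require them to be distinct. The model-checking step is then immediate. The operations are those of $F_1$, so A1 and A2 hold exactly as in the proof of Theorem~\ref{thm:semantic-underdetermination}, and the same indices $e_s^*,e_+^*$ used in Proposition~\ref{prop:frame-space-countable} witness computability. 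Hence $F^{(1)}\in\mathcal F_{\mathrm{sp}}$.

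Next I will argue that the answer is non-discriminating. Under $F^{(1)}$ both $a+b$ and $b+a$ evaluate to $0+0=0$, so $F^{(1)}\models a+b=b+a$. The point to make precise is that this verdict does not depend on what $a,b$ were meant to be: the frame collapses the two Skolem constants to a single element. Any equation $t(a,b)=t(b,a)$ therefore holds trivially, because swapping $a,b$ changes nothing in the interpretation. To sharpen this, I will note that $F^{(1)}$ would answer ``equal'' even under the $F_2$-style intended meaning. Concretely, the answer is invariant under the swap $a\leftrightarrow b$ of Definition~\ref{def:swap} for every term. So $F^{(1)}$ is, on the question $q^*$, exactly as informative as $\Obot$, the same observation the paper draws between swap-blindness and $\Obot$. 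This is the ``beaver'' behaviour: the frame returns the socially expected answer for a reason that has nothing to do with the question. A cleaner option is $a^{F^{(1)}}=b^{F^{(1)}}=c$ for an arbitrary fixed $c$, and the argument is identical.

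For the countability clause, I will take any computable bijection $\pi:\mathbb N\to\mathbb N$ and transport the structure. That is, set $s^{\pi}=\pi\circ s\circ\pi^{-1}$, $+^{\pi}(x,y)=\pi(\pi^{-1}x+\pi^{-1}y)$, and $a^\pi=b^\pi=\pi(0)$; the constant $0^F=0$ must be kept fixed, so I restrict to bijections with $\pi(0)=0$. Transport preserves A1 and A2 and computability, so each $F^{(1),\pi}$ lies in $\mathcal F_{\mathrm{sp}}$ and is isomorphic to $F^{(1)}$. Taking the transpositions $\pi_n=(1\;n)$ for $n\ge2$ gives pairwise distinct successor functions, since $s^{\pi_n}(0)=n$. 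This yields countably infinitely many formally distinct copies; countability of the collection is inherited from Proposition~\ref{prop:frame-space-countable}.

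I expect the main obstacle to be definitional rather than technical. The difficulty is justifying that collapsing $a^F=b^F$ counts as a legitimate frame rather than a degenerate cheat, and stating ``carries no discriminating information'' as a precise claim. I will handle this by citing the letter of Definition~\ref{def:computable-frame} and formalizing non-discrimination as swap-invariance of every evaluation. A second subtlety is the $0^F=0$ constraint in the recoding step, which the restriction $\pi(0)=0$ resolves.
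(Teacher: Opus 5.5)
\textbf{The gap: your witness is swap-blind, not vacuous.} You keep the standard successor and addition and only designate $a^{F^{(1)}}=b^{F^{(1)}}=0$.

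\begin{itemize}
\item Those are exactly the operations of the target frame $F^*=F_1$ (Definition~\ref{def:external-target}). There $a+b=b+a$ holds for \emph{every} choice of designated elements, by commutativity.
\item So identifying the constants does no work for $q^*$. Your frame returns the intended verdict for $F^*$'s own reason. It does not even agree with $F^*$ ``for the wrong reason'' in the sense of Remark~\ref{rem:unique-pair}. It is $F^*$'s algebra with a non-injective designation, not a degenerate frame.
\item Nor is it uninformative. It still separates terms: $a=s(a)$ and $a+b=s(b+a)$ are false in it. So it does not answer ``yes'' regardless of what is asked.
\item The property you formalize, swap-invariance of every evaluation, is the frame-level analogue of the swap-blindness of Definition~\ref{def:swap}. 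That is blindness to one transposition, not the collapse the proposition describes.
\end{itemize}

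The idea you are missing is the paper's: collapse the structure itself. Take the one-element domain $\{0\}$ with $s^F(0)=0$ and $0+^F0=0$. Then A1 and A2 each reduce to $0=0$, the designation $a^F=b^F=0$ is forced rather than chosen, and \emph{every} equation holds. An equality oracle for this frame is therefore constantly $1$. That is what makes the verdict on $q^*$ carry no information whatsoever. It is also what the paper relies on when it identifies the beaver with $\Obot$ and places it at the bottom of the frame order (Remarks~\ref{rem:beaver-is-obot} and~\ref{rem:frame-order-open}).

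\textbf{Your concern about $\mathcal F_{\mathrm{sp}}$ is still well founded.} The paper passes over it: a singleton domain is not $\mathbb N$, as Definition~\ref{def:computable-frame} requires, and renaming its point by some $n\neq0$ breaks $0^F=0$. Your restriction to recodings with $\pi(0)=0$ is the right care, and your transport argument is correct for the frame you chose.

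But you need not trade vacuity for membership in $\mathcal F_{\mathrm{sp}}$:
\begin{itemize}
\item Collapsing all closed terms requires $s^F(0)=0$. Then A2 at $x=0$ together with A1 gives $s^F(y)=0+^Fy=y$, so $s^F=\mathrm{id}$.
\item After that, any computable $+^F$ with $0+^Fy=y$ satisfies A1 and A2.
\item For instance, $s^F=\mathrm{id}$, $x+^Fy:=y$, $a^F=b^F=0$ lies in $\mathcal F_{\mathrm{sp}}$ and sends every closed term over $\{0,s,+,a,b\}$ to $0$. This realizes the paper's one-point frame as the substructure generated by the constants.
\item Conjugation by a $\pi$ fixing $0$ leaves this particular frame unchanged. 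For the recoding clause, first perturb $+^F$ away from the generated point: set $1+^Fy:=y+1$ and keep $x+^Fy:=y$ for $x\neq1$. A1, A2 and the collapse survive this. Then apply your transpositions $(1\;n)$.
\end{itemize}
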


\begin{proof}
Let $F^{(1)}$ have domain $D=\{0\}$, with $0^F=0$, $s^F(0)=0$, and
$+^F(0,0)=0$. Checking A1: $0+^F0=0$ matches $0^F=0$, as required.
Checking A2: $s^F(0)+^F0=0+^F0=0$, and separately
$s^F(0+^F0)=s^F(0)=0$; the two sides agree. So $F^{(1)}$ models
$\{A1,A2\}$.

Now, since $D$ has only one element, $a^{F^{(1)}}=b^{F^{(1)}}=0$ is
forced --- there is nowhere else for either to point. Hence
$a+^{F^{(1)}}b=0=b+^{F^{(1)}}a$ holds trivially, whatever $a,b$ were
originally intended to name. This is not a correct answer so much as
an inability to ask the question at all: the frame cannot express, let
alone correctly or incorrectly answer, anything depending on two
things being distinguishable, because it has nowhere to put a second
thing.

Finally, recoding this same one-element frame under any of the
countably many naturals as the name of its single point produces
countably many formally distinct elements of $\mathcal F_{\mathrm{sp}}$,
all isomorphic to $F^{(1)}$.
\end{proof}

\begin{remarknn}[The same idea as $\Obot$-blindness, in a new setting]
\label{rem:beaver-is-obot}
$F^{(1)}$ is the same underlying idea as
Proposition~\ref{prop:obot-trivial}, transplanted from language
recognition to equality queries: a structure with only one point cannot
distinguish any two things put into it, exactly as an observer
collapsing everything to $\star$ cannot distinguish any two inputs. The
two are not literally the same instance --- one concerns a machine
receiving $\Obot(x)=\star$, the other a degenerate frame answering
equality queries --- but the shape is identical. This is the fable's
hamster, or beaver: not wrong, not silent, but structurally incapable
of holding the distinction the question depends on.
\end{remarknn}

\begin{remarknn}[Uniqueness of the correct pair, up to extensional
identification]
\label{rem:unique-pair}
$(E^*,F^*)$ recovers $q^*$ by construction
(Definition~\ref{def:external-target}). Every $E$ of
Proposition~\ref{prop:category1} fails outright (Category~1); every $F$
of Proposition~\ref{prop:category2} gives the wrong answer to a
well-posed question (Category~2); $F^{(1)}$ of
Proposition~\ref{prop:category3} gives an answer that agrees
numerically but for the wrong reason, carrying no information about
$a,b$ specifically --- arguably a fourth, degenerate way of not
recovering $q^*$, the agreement being accidental to total collapse
rather than a correct resolution of the Skolem pair. Categories~1--3 are
salient, each-infinite ways of failing to recover $q^*$, exhibited to
show how varied and how numerous the failures are; they are not claimed
to be an exhaustive or mutually exclusive partition of every failing
pair (the ``fourth way'' just noted already shows the boundaries are not
sharp), and nothing below depends on their being one. What \emph{is}
used is only the other direction: that $(E^*,F^*)$ recovers $q^*$ and,
up to the extensional identification below, does so uniquely. As already
noted for encodings (Remark~\ref{rem:cardinality-safe}) and selectors
(Section~\ref{sec:selector-functions}), the correct pair is unique only
up to extensional identification: any frame computing exactly the same
function as $F^*$, however differently coded, recovers $q^*$ equally
well and counts as the same pair, not a second one.
\end{remarknn}

\begin{remarknn}[What the orbital machine shows that no binary verdict
can]
\label{rem:infinite-shades}
What has just been proved has a shape worth stating in its own right.
Categories~1--3 are not merely non-empty; each contains \emph{countably
infinitely many} genuinely distinct members: countably many encodings
under which the question cannot even be posed
(Proposition~\ref{prop:category1}), countably many frames that pose it
correctly and answer it wrongly, each in its own distinct way rather
than one shared error repeated (Proposition~\ref{prop:category2}), and
countably many degenerate framings that answer without saying anything
at all (Proposition~\ref{prop:category3}) --- set against exactly one
pair that gets it right. A plain accessible/blocked verdict, of the
kind the static and Dynamic SIP each return, cannot show this: it
reports whether a fact can be seen or not, one bit, where the orbital
machine resolves the whole space of ways of \emph{almost} seeing it, or
failing to, into a structured landscape of distinguishable failures,
most of them countably infinite in their own right. The same
finer-grained question recurs later in a different vocabulary: an
algorithm's output is there called $\mathcal C$-usable or
$\mathcal C$-opaque (Definition~\ref{def:usable-algorithm}), a single
cut for a fixed class $\mathcal C$, appropriately coarse for the
statistical question that section asks. Binary usability tells you
which side of one line an object falls on; the orbital machine maps the
terrain the line was drawn across.
\end{remarknn}

\subsection{The orbital machine as an active instance of ``all
assumptions made explicit''}
\label{sec:orbital-as-interpreted}

Section~\ref{sec:interpreted-machine} defined an interpreted machine as
a standard Turing machine equipped with a fixed encoding and a fixed
frame, the frame used only \emph{passively}, to read a specific target
fact off the machine's output once computation halts. The orbital
machine, defined independently above, instantiates that same idea of
``an encoding and a frame, made explicit'' --- but with the frame used
\emph{actively}, consulted during the computation itself, not only at
the end.

\begin{definition}[Semantic oracle]
\label{def:semantic-oracle}
For $F\in\mathcal F_{\mathrm{sp}}$, the \emph{semantic oracle} $O_F$ is
the function taking a pair of terms $(s,t)$ to $O_F(s,t)=1$ if
$F\models s=t$, and $O_F(s,t)=0$ otherwise. (Since $F$ is a computable
frame, $O_F$ is itself a computable function; it is called an
``oracle'' because of the role it plays --- direct, one-step
consultation of a specific world's answer --- not because it exceeds
what a Turing machine can compute. This matches the fable's Einstein
example: being asked a question in ancient Egyptian rather than German
is not a matter of the answerer's raw intelligence, but of which
channel the question arrives through.)
\end{definition}

The architecture about to be defined --- a computable function
consulted mid-computation, with later queries free to depend on
earlier answers --- is an already-studied notion.
\cite[Remark~7.2]{Buono2026ow} formalizes exactly this, independently,
as an \emph{adaptive observer}, of which a computational oracle queried
sequentially is one instance. What follows is, in that source's
vocabulary, an adaptive observer of unbounded depth, built on a
computable oracle rather than an arbitrary one; this uses only that
basic point, not that source's further Landauer-cost or quantum
development of the idea.

In plain terms, before the formal definition: an oracle orbital
machine works like an ordinary Turing machine, except that at any
point it may pause and ask the semantic oracle a yes/no question about
two terms it has built so far, then continue using the answer.

\begin{definition}[Oracle orbital machine]
\label{def:oracle-orbital}
Given selectors fixed to values $(\sigma_E,\sigma_F)=(E,F)$, the
\emph{oracle orbital machine} $T^{O_F,E}$ is a Turing machine $T$ with
three properties. First, it reads its input as a term over $E$'s
alphabet, halting and rejecting immediately if the input is not
well-formed under $E$ (Category~1, Proposition~\ref{prop:category1}).
Second, at any point during its computation it may query $O_F$ on any
pair of terms it has constructed so far, receiving a definite answer
in one step. Third, apart from these queries, it proceeds as an
ordinary Turing machine operating under $E$'s rewriting rules.
\end{definition}

\begin{proposition}[The passive Interpreted Machine is the
never-query case]
\label{prop:oracle-recovers-interpreted}
An oracle orbital machine $T^{O_F,E}$ that never queries $O_F$ during
its computation, consulting $F$ only once, passively, after halting,
to evaluate a single target fact about its output, is exactly the
interpreted machine $\mathcal I=(T,E,F)$ of
Definition~\ref{def:interpreted-machine}.
\end{proposition}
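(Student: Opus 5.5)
The argument is a component-by-component matching of the two definitions, carried out by unfolding Definition~\ref{def:oracle-orbital} under the extra hypothesis that no query to $O_F$ is ever made, and checking that what remains is literally the triple of Definition~\ref{def:interpreted-machine}. I will set up the correspondence $T^{O_F,E}\mapsto(T,E,F)$, with the same underlying machine $T$, the same encoding $E=(R_0,I_{R_0})$ supplied by $\sigma_E$, and the same frame $F$ supplied by $\sigma_F$. I will then verify that each clause of one definition is matched by a clause of the other, in both directions.

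First, the syntax. By the third property of Definition~\ref{def:oracle-orbital}, apart from queries the machine runs as an ordinary Turing machine under $E$'s rewriting rules. With no queries this is the whole computation, so the run is governed by the fixed $R_0$ alone. Since the selectors are fixed to values, $R_0$ is never updated, which is the degenerate $\upsilon_{\mathrm{id}}$ case that Definition~\ref{def:interpreted-machine} requires, as in Corollary~\ref{cor:static-as-degenerate}.

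Second, the frame. $F\in\mathcal F_{\mathrm{sp}}$ models $\{A1,A2\}$, and Definition~\ref{def:interpreted-machine} requires $F$ to model $R_0$. I will handle this either by restricting to $E=E^*=\{A1,A2\}$, or by adding the standing assumption that the selected pair is compatible, so that $F$ models $E$. I will flag this explicitly rather than pass over it. The single post-halting evaluation of a target fact is exactly the passive reading in Definition~\ref{def:interpreted-machine}.

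Third, the well-formedness check (the first property, reject on non-$E$-terms). I will argue that it is part of $T$'s ordinary computation under $E$ and does not add oracle power, so it is absorbed into $T$.

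Conversely, given $\mathcal I=(T,E,F)$, I take constant selectors $\sigma\equiv(E,F)$, which are computable by Proposition~\ref{prop:selectors-uncountable}'s example. I will let $T$ first perform the well-formedness check and never query. This yields a never-query oracle orbital machine whose evaluated output agrees with $\mathcal I$'s.

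The main obstacle is precision about ``exactly''. The identification holds only up to the extensional identification of Remark~\ref{rem:unique-pair}: the rejection step and the frame-compatibility requirement must be shown not to change the function computed or the fact evaluated. I expect to state the proposition as a bijective correspondence modulo these conventions.
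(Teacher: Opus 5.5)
Your forward direction is the paper's proof. With no queries, the run of $T^{O_F,E}$ is, step for step, the run of $T$ under $E$ alone, which is the degenerate $\upsilon_{\mathrm{id}}$ case of Corollary~\ref{cor:static-as-degenerate}. The only role left for $F$ is the single post-halting evaluation, which is the passive reading of Definition~\ref{def:interpreted-machine}, and the paper's proof says nothing beyond this.

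Two of your additions improve on it.
\begin{itemize}
\item The frame-compatibility flag is needed. $\mathcal F_{\mathrm{sp}}$ only guarantees that $F$ models $\{A1,A2\}$, and a computable frame interprets only $0,s,+,a,b$. Take $E=\{c_1\to c_1\}$, or $E=\{A1,A2,\,a+b\to b+a\}$ paired with the recoded $F_2$. In either case the triple $(T,E,F)$ is not an interpreted machine at all. So the statement tacitly presupposes your hypothesis that $F$ models $E$, or that $E=E^*$.
\item Absorbing the well-formedness check into $T$ is also right. Definition~\ref{def:oracle-orbital} makes the check a property of $T$ itself, so the same $T$ sits in the triple.
\end{itemize}

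You should drop, or sharply restrict, the converse and the announced bijective correspondence. The statement does not ask for it, and as you set it up it fails at two points.
\begin{itemize}
\item The frame of an interpreted machine can be any semantic frame modelling $R_0$. It may have any domain, possibly uncountable, and its operations need not be computable. So the constant selector $\sigma\equiv(E,F)$ exists only when $F\in\mathcal F_{\mathrm{sp}}$. For example, $\mathbb R$ with $s(x)=x+1$ and ordinary addition models $\{A1,A2\}$ but has no counterpart in $\mathcal F_{\mathrm{sp}}$, whose frames all have domain $\mathbb N$.
\item Prepending the well-formedness check replaces $T$ by a different machine $T'$. $T'$ rejects ill-formed inputs on which $T$ may halt with some output. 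So the step you plan, showing the rejection does not change the function computed, is false in general. What you obtain is $(T',E,F)$, not $(T,E,F)$.
\end{itemize}
A converse holds only for interpreted machines with $R_0\in\mathcal E_{\mathrm{sp}}$, $F\in\mathcal F_{\mathrm{sp}}$ modelling $R_0$, and $T$ already rejecting ill-formed inputs.

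Finally, you do not need the extensional identification of Remark~\ref{rem:unique-pair} anywhere. The forward identification is literal, not up to equivalence.
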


\begin{proof}
This follows immediately from the two definitions. With no queries during
computation, $T^{O_F,E}$'s run is identical, step for step, to $T$
operating under $E$ alone (Definition~\ref{def:dynamic-system} with
the degenerate update $\upsilon_{\mathrm{id}}$, as already used in
Corollary~\ref{cor:static-as-degenerate}). The only remaining role for
$F$ is the single post-halting evaluation, which is exactly
Definition~\ref{def:interpreted-machine}'s passive reading.
\end{proof}

\begin{remarknn}[What active querying adds, precisely]
\label{rem:active-query-adds}
The passive case is recovered exactly, not approximated
(Proposition~\ref{prop:oracle-recovers-interpreted}): nothing built in
Section~\ref{sec:interpreted-machine} is lost, and the oracle orbital
machine is a strict extension of it. What active querying adds is
availability \emph{during} computation of facts $E$ alone cannot
derive. By Lemma~\ref{lem:frozen} applied to $E=\{A1,A2\}$, $T$
operating on $E$'s rules alone can never determine whether $a+b=b+a$,
no matter how long it runs; $T^{O_F,\{A1,A2\}}$ can ask $O_F$ and
receive a definite answer in a single step, mid-computation, usable in
its subsequent behaviour (e.g.\ branching on the answer). This is the
precise sense in which the oracle matters here: not because $O_F$
computes anything $T$ could not eventually compute by other means, but
because $E$'s own rules never license deriving that specific fact at
all.
\end{remarknn}

\begin{remarknn}[Every machine here is, in raw power, exactly a Turing
machine]
\label{rem:always-standard-tm}
Three results in three separate places combine into one worth stating
plainly, in a single place: every machine built in this paper is, in
raw computational power, an ordinary Turing machine, no more and no
less. The interpreted machine of Section~\ref{sec:interpreted-machine}
is one by definition, with $F$ read only passively at the end
(Definition~\ref{def:interpreted-machine}), and the syntactic process
it runs is exactly the standard, single-fixed-system case
(Corollary~\ref{cor:static-as-degenerate}). The orbital machine, before
any oracle is added, is the same object with $(E,F)$ chosen by
selectors rather than by hand (Definition~\ref{def:two-selectors});
once a value is chosen, it is again exactly that case. Even the oracle
orbital machine, which \emph{can} query $O_F$ mid-computation rather
than only at the end, adds nothing beyond Turing power: $F$ being a
computable frame (Definition~\ref{def:computable-frame}) makes $O_F$ a
computable function, so a mid-run query could always be replaced by an
ordinary Turing machine computing that same answer inline, with no
oracle (Remark~\ref{rem:active-query-adds} makes this precise). This
extends to a richer setting exactly what~\cite[Remark~9.1]{Buono2026oh}
establishes for a plain structural observer: such an observer answers
no queries and never extends the underlying computational model, since,
in that source's words, ``the observer only reduces available
information and never increases computational power''. What each layer
here adds is not computational power but honesty: making explicit a
choice --- which encoding, which interpretation, when to consult it ---
that an ordinary, unlabelled Turing machine always makes too, silently,
through whoever builds it.
\end{remarknn}

\begin{remarknn}[$E$ and $F$ do distinct jobs: what is askable, and what
the answer is]
\label{rem:e-f-jointly-decide}
Turing-equivalence is one half of the architecture; the other half is
that $E$ and $F$ play genuinely different roles, neither substituting
for the other. $E$ decides which semantic invariants are even
\emph{askable}: if $E$'s alphabet lacks $a$ or $b$, the question
``does $a+b=b+a$'' is not merely unanswered but not well-formed at all
(Category~1, Proposition~\ref{prop:category1}). $F$ then decides, for
whichever question $E$ has made askable, what the \emph{answer} is when
$O_F$ is consulted (Definition~\ref{def:semantic-oracle}). Change $E$
alone and the menu of questions changes; change $F$ alone, holding $E$
fixed, and the answers to that same menu can change (Category~2,
Proposition~\ref{prop:category2}). What the oracle orbital machine can
learn mid-computation takes the pair, doing two distinct jobs,
together.
\end{remarknn}

\begin{remarknn}[What remains unconnected, and why]
\label{rem:still-unconnected}
This subsection connects the orbital machine to
Section~\ref{sec:interpreted-machine}'s Interpreted Machine only. The
connection to the Dynamic SIP of
Sections~\ref{sec:dynamic}--\ref{sec:total-opacity}, to the
structural-observer framework of
Sections~\ref{sec:total-opacity}--\ref{sec:order-blind}, and to Observer
World (\cite{Buono2026ow}) is the third and final step, made next ---
after the orbital machine's standalone construction and this
active-oracle instantiation, in that order.
\end{remarknn}

\subsection{The third connection: Dynamic SIP, the observational
hierarchy, and Observer World}
\label{sec:third-connection}

This connection is made in two independent pieces, one per selector,
never jointly. Making both $\sigma_E$ and $\sigma_F$ history-dependent
\emph{at once} would be a genuine two-axis system --- new mathematics,
not an application of what is already proved --- and nothing below
attempts it; the two axes are connected one at a time, each a direct
application of results already in hand.

\begin{remarknn}[What is at stake throughout: a semantic invariant]
\label{rem:semantic-invariant-framing}
\cite{Buono2026sip}'s own title states the target: \emph{syntactic
systems cannot see semantic invariants}. The two connections below
should be read with that target in view. In the encoding-axis
connection, the semantic invariant at stake is the one
Theorem~\ref{thm:dynamic-sip} already concerns: the relative order of
$a,b$, a fact about truth (whether $a+b=b+a$ holds), not about symbols.
In the frame-axis connection, the invariant differs across its three
settings but is the same in kind: membership of a string in a language
$L$ (the non-recognition of~\cite[Cor.~2.8]{Buono2026oh} is exactly
non-access to whether $x\in L$, a property of what $x$ \emph{means}),
and identity of a plaintext (the impossibility of recovery
in~\cite[Prop.~4.9]{Buono2026ow} is non-access to which $M$ was sent).
Both axes connect to these targets, not to a generic notion of
syntactic restriction.
\end{remarknn}

\subsubsection*{The encoding axis, alone, is Dynamic SIP}

\begin{remarknn}[Direct identification, not analogy]
\label{rem:encoding-is-dynamic-sip}
Suppose the encoding selector $\sigma_E$ is permitted to depend on
history, $E_{n+1}:=\sigma_E(H_n)$, while the frame is held fixed. Then
$(E_n)_{n\ge0}$, together with the local derivations it generates, is a
dynamic rewriting system generated by the update
$\upsilon:=\sigma_E$, in the exact sense of
Definition~\ref{def:dynamic-system} --- literally, term for term, one.
Consequently:
\begin{itemize}
  \item if $\sigma_E$ is opacity-preserving
    (Definition~\ref{def:opacity-preserving}), Theorem~\ref{thm:dynamic-sip}
    applies to the encoding axis directly: the semantic invariant at
    stake --- the relative order of whichever Skolem pair is frozen in
    $E_0$, i.e.\ the truth of $a+b=b+a$ --- remains inaccessible at
    every subsequent value $E_n$ the selector produces, however long
    the computation runs; the syntax never settles it, and the
    selector's own updates cannot make it settle it either;
  \item if, further, $\sigma_E$'s access to $H_n$ is filtered through a
    swap-blind structural observer $O$ (Definition~\ref{def:swap}),
    Corollary~\ref{cor:total-opacity} upgrades this to hold for
    \emph{every} such $\sigma_E$ satisfying condition~(i), with no
    case-by-case verification of condition~(ii).
\end{itemize}
No new theorem is proved here: this is a direct application of
Sections~\ref{sec:dynamic}--\ref{sec:total-opacity} to the encoding
selector, made precise by noting that the orbital machine's encoding
axis was, from Definition~\ref{def:two-selectors} onward, already the
same object as Dynamic SIP's $R_n$ --- simply not yet identified as
such, per the deliberate independence of Section~\ref{sec:orbital}'s
opening.
\end{remarknn}

\subsubsection*{The frame axis, alone: the beaver, $\Obot$, and Cell (d)}

The frame axis is not identified with a structural observer in the same
direct way: a structural observer $O$
(Definition~\ref{def:structural-observer}) only partitions strings and
asserts nothing, while a frame $F$ decides definite truth values. These
remain distinct objects throughout. What connects them, per
Remark~\ref{rem:semantic-invariant-framing}, is that each blocks access
to a specific semantic invariant by the same underlying shape.

\begin{remarknn}[The same shape, three times, on three semantic
invariants: beaver, $\Obot$, Cell~(d)]
\label{rem:beaver-obot-celld-chain}
Remark~\ref{rem:beaver-is-obot} noted that Category~3's one-element
beaver frame (Proposition~\ref{prop:category3}) shares its underlying
shape with $\Obot$'s triviality (Proposition~\ref{prop:obot-trivial}):
both collapse every distinction to a single output, and both are
unbreakable by computational power for the same reason --- there is
nowhere for a second value to be told apart from the first. The blocked
invariant differs across the three settings, and naming it case by case
is the point here. For the beaver frame, it is the truth of $a+b=b+a$.
For $\Obot$, it is membership $x\in L$, for whatever non-trivial $L$ is
asked. For Cell~(d) (Proposition~\ref{prop:cell-d}), it is which
plaintext $M$ was sent. The chain runs: beaver frame (Category~3, this
section), then $\Obot$ (Section~\ref{sec:total-opacity}), then Cell~(d)
(Section~\ref{sec:cell-d}), with
Proposition~\ref{prop:celld-is-n0} already making the last step precise
as the degenerate, single-step case of total opacity. Three different
semantic invariants, in three different formal settings (a one-element
algebraic structure, a string-partitioning function, a cryptographic
adversary), all blocked by the identical collapse-to-one-point
argument.
\end{remarknn}

\begin{remarknn}[An order on frames: the natural next definition]
\label{rem:frame-order-open}
Sections~\ref{sec:total-opacity}--\ref{sec:order-blind} order
structural observers by richness
($\Obot\preceq\cdots\preceq\Oprof\preceq\cdots$). The analogous order on
frames --- $F_1\preceq F_2$ if every fact $F_1$ decides, $F_2$ also
decides and agrees with --- is the natural next definition, under which
the beaver frame sits at the bottom, exactly where $\Obot$ sits in the
observer order. What has been shown here is the bottom point of that
order: the beaver (Category~3) behaves like $\Obot$. Whether the order
as a whole is rich enough to support results of the same strength as
Sections~\ref{sec:total-opacity}--\ref{sec:order-blind} --- a full
characterization, a dynamic extension --- is open.
\end{remarknn}

\begin{remarknn}[The two-axis case, deliberately left alone]
\label{rem:two-axis-again}
Both connections above hold one selector fixed while the other varies.
A joint statement --- an analogue of Theorem~\ref{thm:dynamic-sip} or
Corollary~\ref{cor:total-opacity} for $\sigma_E,\sigma_F$ evolving
together --- is not attempted here, for a reason this paper has met
before: this is precisely the shape of construction that, in a separate
and unrelated project, could not be completed without manufacturing a
false semantic assertion. The identifications above are safe because
each is a one-axis application of results already fully proved; a
two-axis version would be new mathematics, not an application, and is
left for future work approached with the same caution.
\end{remarknn}

\subsection{The fable, mapped}
\label{sec:fable-mapped}

With the formal content built, the correspondence promised in
Remark~\ref{rem:formalizing-the-fable} can be checked line by line,
rather than asserted in general terms.

\begin{center}
\small
\begin{tabular}{p{5.3cm}p{6.8cm}}
\toprule
\textbf{What the fable says} & \textbf{What this section proves} \\
\midrule
A machine that answers by reading reality directly, not by calculating
& The active-query architecture built above:
Definitions~\ref{def:semantic-oracle}, \ref{def:oracle-orbital}
(Section~\ref{sec:orbital-as-interpreted}); named informally already at
Definition~\ref{def:two-selectors} (Remark~\ref{rem:orbital-name}). \\
Infinitely many ways to select the encoding through which the question
is posed
& The encoding selector $\sigma_E$ and encoding space
$\mathcal E_{\mathrm{sp}}$ (Definitions~\ref{def:encoding-space},
\ref{def:two-selectors}). \\
Infinitely many ways to interpret it
& The frame selector $\sigma_F$ and frame space
$\mathcal F_{\mathrm{sp}}$ (Definitions~\ref{def:computable-frame},
\ref{def:two-selectors}). \\
In some worlds the question would not make sense; addressed to the
only entity capable of answering it, yet unable to understand it
& Category~1, syntactic vacuity: infinitely many encodings under which
the question is not even a well-formed term
(Proposition~\ref{prop:category1}). \\
The primordial question, asked to a hamster
& Category~3, the ``beaver'' frames: a frame structurally incapable of
holding the distinction the question depends on, answering something
well-defined but discriminating nothing
(Proposition~\ref{prop:category3}, Remark~\ref{rem:beaver-is-obot}). \\
The answer might be impossible for us to interpret --- 42, with no idea
how to make sense of it
& Two results, not one: no algorithm decides, in
general, whether a machine's output remains semantically transparent
(Proposition~\ref{prop:transparency-undecidable}) --- and, more
literally, an output can be produced correctly by the right pair and
still be statistically indistinguishable from noise to every observer
who receives it, however powerful, as Section~\ref{sec:omega} shows
much later in this paper (Definition~\ref{def:usable-algorithm},
Proposition~\ref{prop:omega}, Remark~\ref{rem:one-observer-picture}):
not merely undecidable whether it is readable, but actually opaque, and
opaque for reasons that do not turn on the reader's computational power
--- $\Omega$ being the sharpest witness, not the only route. \\
Even with both selectors, we could not test every possibility: Cantor
reminds us there is always a new pair not in the list, by
diagonalization
& The genuine diagonal argument, applied correctly to
selector-\emph{functions} rather than to selector-\emph{values}
(Proposition~\ref{prop:selectors-uncountable}; see
Remark~\ref{rem:cardinality-safe} for why this is not the same claim as
``the pairs themselves are uncountable'', which is false ---
Theorem~\ref{thm:countable-pairs} shows the opposite for values). \\
Every combination carries a different degree of semantic blindness
& Not a loose figure of speech: Categories~1, 2, and~3 are each shown
to contain \emph{countably infinitely many}, genuinely distinct
members, not one representative apiece
(Remark~\ref{rem:infinite-shades}), together with the graded
disagreement of Category~2 itself (Proposition~\ref{prop:category2})
and the spectrum of observer/frame richness running through
Sections~\ref{sec:total-opacity}--\ref{sec:order-blind}
($\Obot$, $(B,\cdot)$, $\Oprof$-blindness). \\
Even the one transparent reality produces an answer that acquires its
own points of blindness --- a reflection of Rice's theorem, like a
famous separation problem
& The blindness is the static SIP: an output, once produced, is again a
syntactic object, and whatever semantic invariant is protected in it ---
partially or totally --- is invisible to any later syntactic inspection
of it (Lemma~\ref{lem:sip-static}, ``syntax cannot see semantic
invariants''). This is the ``separation problem'' the fable places next
to the phrase, the same separation running through this paper's source
material (\cite{Buono2026sip}'s title;
\cite{Buono2026obstruction}'s \emph{syntactic separation}). The fable's
``reflection of Rice's theorem'' is a narrative allusion, not a claim
that Rice explains the blindness; Rice enters as a separate, genuine
fact --- whether an output's transparency \emph{holds} is itself
undecidable (Theorem~\ref{thm:rice},
Proposition~\ref{prop:transparency-undecidable},
Remark~\ref{rem:recursive-reopening}) --- which concerns deciding the
invariant, not the syntactic inaccessibility that hides it. \\
The same phenomenon is the nonexistence at the base of every logical
paradox --- but that is another story
& Not addressed. Left exactly where the fable leaves it. \\
\bottomrule
\end{tabular}
\end{center}

\begin{remarknn}[Where the correspondence is not exact]
\label{rem:mapping-caveats}
Two entries above are looser than the rest, and worth flagging as such.
First, the Category~1 row folds together two images the fable states
separately --- ``the question would not make sense'' and ``addressed to
the only entity capable of answering it, yet unable to understand it''
--- and reads both as syntactic vacuity, an unparseable question; the
fable's ``asked to a hamster'' is placed one row down, with Category~3's
structurally undiscriminating answerer. The fable does not itself draw
the line exactly where Categories~1 and~3 draw it (unparseable question
versus well-formed question put to an answerer that cannot discriminate),
so these placements are by best fit, not by a claim that the fable
distinguishes precisely these formal cases. Second, the
oracle-consultation architecture in the first row is named early
(Remark~\ref{rem:orbital-name}) and built as an explicit mechanism
earlier in this section (Section~\ref{sec:orbital-as-interpreted}); by
the point this table appears, that mechanism is already complete, not
merely promised.
\end{remarknn}

\section{Scope}
\label{sec:scope}

Two separate limitations apply to the two halves of this paper, and
this section states them plainly.

For the Dynamic SIP of Sections~\ref{sec:dynamic}--\ref{sec:order-blind}:
this paper generalizes the single-encoding case --- one dynamic rule
sequence, one update mechanism, one limiting case filtered through a
single structural observer. Two independently varying axes, and an
unconditional oracle-based separation theorem, are outside its scope.
An earlier, independent attempt at the latter, in a different project,
rested on manufacturing a false assertion and was abandoned rather than
repaired; every result here stays on the side of that line where the
invariant is a true fact, preserved because the available operations
preserve true facts.

For the semantic frames of Section~\ref{sec:semantic-frame}: every
result there concerns a single, fixed frame, never one that evolves
over the course of a derivation. A frame that updates dynamically, in
the same shape as the encoding update $\upsilon$, is identified but not
attempted (Remark~\ref{rem:semantic-frame-static-only}): it would
combine with a dynamic encoding into the genuine two-axis system just
referred to, and the right notion of opacity-preservation is itself
unclear once frame updates can change what is \emph{true} rather than
only what has been \emph{derived}.

\section{A second, independently arising instance: the Andromeda
paradox}
\label{sec:andromeda}

This section develops, on its own terms and without depending on
anything constructed above, a second phenomenon that instantiates this
paper's central theorem in a completely different domain, discovered
independently rather than built to fit. It is presented exactly as
Section~\ref{sec:orbital} was: standalone for now, with the
reconnection to the rest of the paper deferred rather than forced.

\subsection{The story}

\begin{quote}
On a planet in the Andromeda galaxy, Chicco stands on the beach
looking at the Earth, and uses his own concept of ``now'' to define
which events on Earth are simultaneous with this instant. According to
Chicco, the events happening on Earth right now are, say, those of
1026~AD.

Then there is Bruno Sacchi, who is lazy and decides to go visit his
friend Chicco on planet 3C with his spaceship. So he takes the shuttle
and travels very fast toward Chicco. But Bruno has a different space of
simultaneity. His concept of ``now'' is rotated, relative to Chicco's,
in spacetime. So when Bruno looks toward the Earth to see what is
happening now, he sees a completely different epoch --- perhaps the
year 3026~AD.

But at the very same physical instant, Chicco sees the past and Bruno
sees the future, and both are right.

One might think this creates a paradox, but it does not, because the
speed of light imposes a delay that sets everything right again.
\end{quote}

The story retells a real and well-established phenomenon in special
relativity, the relativity of simultaneity. The underlying argument was
discovered independently twice: first by C.~W.~Rietdijk in
1966~\cite{Rietdijk1966}, then by Hilary Putnam in 1967~\cite{Putnam1967},
each using it to argue for a substantive philosophical thesis --- that
the relativity of simultaneity, combined with a transitivity
requirement on what counts as ``real'', forces a block-universe view in
which future events are already as real as past ones. Roger Penrose,
crediting earlier related work by Rindler, gave the argument its
now-standard, vivid concrete form in 1989 --- two pedestrians passing on
an Earth street, one walking toward the Andromeda galaxy, disagreeing by
days about whether a distant space fleet has already
launched~\cite{Penrose1989}. That is the version most readers know as
``the Andromeda paradox'', and the version this paper's retelling is
built to be recognised against.

\begin{remarknn}[How this retelling differs, and why]
\label{rem:andromeda-vs-penrose}
Two things distinguish the telling above from Penrose's. First, the
geometry is reversed: Penrose places both observers on Earth, looking
outward across the vast distance to Andromeda, so that an ordinary
walking-speed velocity difference is amplified, by that distance alone,
into a difference of centuries in what each calls ``now''. The telling
here instead places one observer (Chicco) far away, on the Andromeda
side, looking back at Earth, with the second observer (Bruno) making the
crossing at genuinely relativistic speed. Second, and following from the
first, the effect here is driven directly by a large relative velocity
between the two observers, rather than by an imperceptible velocity
difference amplified by cosmic distance. The physical content, the
relativity of simultaneity itself, is identical in both, and
Remark~\ref{rem:andromeda-precision} below applies to each without
modification. The reversal is expository: separating the two effects
that do the work in Penrose's version --- a tiny velocity difference and
an enormous distance, multiplying together --- into one observer at rest
and one moving fast isolates the relativity-of-simultaneity effect on
its own, with nothing left for a reader to attribute, mistakenly, to the
distance. This paper's use of the argument is also narrower in purpose
than Rietdijk's or Putnam's: it takes no position on their philosophical
conclusion about the reality of future events, determinism, or
eternalism, and needs none. It uses only the uncontested physical fact
--- two valid, disagreeing simultaneity planes --- as an instance of
Theorem~\ref{thm:semantic-underdetermination}.
\end{remarknn}

\subsection{Getting the physics precisely right}
\label{sec:andromeda-physics}

\begin{remarknn}[What is, and is not, hidden]
\label{rem:andromeda-precision}
The story's line ``both are right'' needs a careful reading, since an
imprecise one would misstate the physics. It does \emph{not} mean that
there exists a single, true, global ``now'' on Earth that both observers
fail to see and that some more privileged vantage point could recover:
special relativity gives no such fact to fail to see. What is genuinely
invariant, independent of every observer's state of motion, is the
entire four-dimensional causal structure of spacetime --- the complete
history of events and their causal relations. Each observer's ``now'' is
a three-dimensional slice through that structure, cut at a different
angle depending on velocity; no slice is the whole structure, and none
is privileged over any other. Chicco's slice and Bruno's slice are two
different, equally legitimate cuts through one and the same invariant
four-dimensional reality, neither of which exposes that reality whole.
The speed-of-light delay in the story's last line is what keeps this
from producing any causal contradiction: neither observer can \emph{act}
on what they see as ``now'' on Earth before a light signal could, in
principle, have made the relevant influence causally possible.
\end{remarknn}

\subsection{The precise mapping}
\label{sec:andromeda-mapping}

\begin{remarknn}[Level of formality]
\label{rem:andromeda-formality}
As with the MS-DOS instance (Proposition~\ref{prop:msdos-precise},
Remark~\ref{rem:msdos-formality}), this mapping is stated at the level
of precision the physics supports: a precise structural correspondence
to Theorem~\ref{thm:semantic-underdetermination}, without reducing the
Lorentz transformations themselves to a rewriting system in the sense of
Definition~\ref{def:dynamic-system}.
\end{remarknn}

The correspondence with
Theorem~\ref{thm:semantic-underdetermination} is direct. The laws of
special relativity, shared identically by Chicco and Bruno, play the
role of the encoding $E^*=\{A1,A2\}$: neither observer violates them,
and the laws alone --- like $\{A1,A2\}$ alone --- do not settle the
target question. The target question $q^*$ is ``which events on Earth
are simultaneous with here-now'': the direct analogue of
``$a+b\stackrel{?}{=}b+a$''. Each observer's state of motion fixes a
specific, legitimate frame ($F_{\text{Chicco}}$, $F_{\text{Bruno}}$),
each a genuine, internally consistent structure satisfying the shared
laws, exactly as $F_1,F_2$ both model $\{A1,A2\}$ in
Theorem~\ref{thm:semantic-underdetermination}. Chicco's answer (the
events of 1026~AD) and Bruno's answer (the events of 3026~AD) are the
direct analogue of $F_1\models a+b=b+a$ against $F_2\models a+b\ne b+a$:
two frames, both modelling the same shared laws, giving different,
individually consistent answers to a question those laws leave open.
This is specifically an instance of Category~2 (semantic disagreement,
Proposition~\ref{prop:category2}), not Category~1 or~3: neither
observer fails to pose the question (it is perfectly well-formed for
both), and neither observer's frame is degenerate or uninformative
(each gives a definite, non-trivial answer) --- they simply, and
correctly, disagree.

\subsection{The hidden assumption, developed: constrained selectors}
\label{sec:andromeda-constrained}

\begin{remarknn}[The hidden assumption, stated precisely]
\label{rem:andromeda-hidden-assumption}
In this physical instance, unlike the general orbital machine of
Section~\ref{sec:orbital}, the frame selector is not free: which frame
an observer has is \emph{determined} by their state of motion, not
chosen independently of it. Definition~\ref{def:two-selectors}
deliberately left what $\sigma_E,\sigma_F$ may depend on unspecified;
the Andromeda paradox is the case where $\sigma_F$ is constrained to be
a function of something else entirely --- velocity --- rather than free.
The rest of this subsection develops that case.
\end{remarknn}

\begin{definition}[Constrained selector]
\label{def:constrained-selector}
Let $\mathcal V$ be a set of \emph{physical states} (for Andromeda,
$\mathcal V=[0,c)$, the possible relative speeds). A \emph{constrained
frame selector} over $\mathcal V$ is a function
$\varphi:\mathcal V\to\mathcal F_{\mathrm{sp}}$, with $\sigma_F:=\varphi$
understood as depending on the physical state alone, not on history or
on any other input.
\end{definition}

This is the pigeonhole principle, run at infinite scale: with only
countably many holes and uncountably many pigeons, some hole must
catch uncountably many of them.

\begin{proposition}[Constrained selectors over a continuum collapse
uncountably many states onto a single frame]
\label{prop:constrained-collapse}
Let $\mathcal V$ be uncountable (in particular, of cardinality
$2^{\aleph_0}$, as for $[0,c)$) and let
$\varphi:\mathcal V\to\mathcal F_{\mathrm{sp}}$ be any constrained
selector. Then there exists $F\in\mathcal F_{\mathrm{sp}}$ with
$\varphi^{-1}(F)$ uncountable: some single frame is the image of
uncountably many distinct physical states.
\end{proposition}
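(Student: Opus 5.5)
The plan is a direct cardinality argument: the fibres of $\varphi$ partition $\mathcal V$, and there are only countably many of them, so one must be uncountable. First, by Proposition~\ref{prop:frame-space-countable}, $\mathcal F_{\mathrm{sp}}$ is countable. Hence the family of fibres $\{\varphi^{-1}(F) : F\in\mathcal F_{\mathrm{sp}}\}$ is a countable family of pairwise disjoint subsets of $\mathcal V$. Because $\varphi$ is a total function on $\mathcal V$, every $v\in\mathcal V$ lies in exactly one fibre $\varphi^{-1}(\varphi(v))$, so
\[
\mathcal V=\bigcup_{F\in\mathcal F_{\mathrm{sp}}}\varphi^{-1}(F).
\]

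Next, I argue by contradiction. Suppose every fibre $\varphi^{-1}(F)$ were countable. Then $\mathcal V$ would be a countable union of countable sets, and hence countable. This contradicts the hypothesis that $\mathcal V$ is uncountable. Therefore some $F\in\mathcal F_{\mathrm{sp}}$ has $\varphi^{-1}(F)$ uncountable, which is the claim.

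The step that needs care is the fact that a countable union of countable sets is countable. This uses countable choice: we must pick an enumeration of each fibre simultaneously. I will state this explicitly, since it is the only place anything beyond finite combinatorics enters. It is also the same countability-preserving enumeration used in Theorem~\ref{thm:countable-pairs}, which Remark~\ref{rem:diagonal-terminology} distinguishes from Cantor's uncountability argument. The uncountability of $[0,c)$ itself, which makes the Andromeda case an instance, is the standard $|\mathbb R|=2^{\aleph_0}$; it is an input to the proposition, not something to prove.

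Finally, I would remark that the argument gives only existence and no more. It does not identify which frame absorbs the uncountable fibre, and it does not show the fibre has cardinality exactly $2^{\aleph_0}$. The latter would follow for Borel-type sets, or under CH, but I would not claim it. This matches the existence-by-cardinality caution of Remark~\ref{rem:cardinality-safe}.
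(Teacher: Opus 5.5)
Your proof is correct and is the paper's own argument: $\mathcal F_{\mathrm{sp}}$ is countable, so $\mathcal V$ is a countable union of the fibres $\varphi^{-1}(F)$, and if every fibre were countable then $\mathcal V$ would be countable, a contradiction. One correction to your closing aside: a fibre of full size does not need CH or a Borel hypothesis, because K\"onig's theorem gives $\mathrm{cf}(2^{\aleph_0})>\aleph_0$, so a countable union of sets each of cardinality $<2^{\aleph_0}$ has cardinality $<2^{\aleph_0}$, and therefore when $|\mathcal V|=2^{\aleph_0}$ some fibre already has cardinality exactly $2^{\aleph_0}$ in ZFC.
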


\begin{proof}
$\mathcal F_{\mathrm{sp}}$ is countable
(Proposition~\ref{prop:frame-space-countable}), so
$\mathcal V=\bigcup_{F\in\mathcal F_{\mathrm{sp}}}\varphi^{-1}(F)$
expresses $\mathcal V$ as a countable union of the fibres
$\varphi^{-1}(F)$. If every fibre were countable, this union would be a
countable union of countable sets, hence countable, contradicting
$|\mathcal V|=2^{\aleph_0}$. So some fibre is uncountable.
\end{proof}

\begin{remarknn}[What this means physically]
\label{rem:andromeda-collapse-physical}
In the real physics, the map from velocity to ``which year is
simultaneous with here-now'' is continuous and, on any interval,
essentially injective: distinct velocities give distinct simultaneity
answers, with no natural collapsing.
Proposition~\ref{prop:constrained-collapse} is therefore not a fact
about the physics; it is a fact about \emph{this paper's own formal
apparatus}. $\mathcal F_{\mathrm{sp}}$
(Definition~\ref{def:computable-frame}) was built countable on purpose,
tied to Turing-machine codes, because Section~\ref{sec:orbital}'s
selectors were meant to range over effectively describable objects.
Andromeda's velocity parameter has no such restriction --- it is a
genuine continuum --- and
Proposition~\ref{prop:constrained-collapse} shows that forcing it
through $\mathcal F_{\mathrm{sp}}$ necessarily discards information:
uncountably many physically distinct states are identified, by the
pigeonhole argument alone, regardless of how $\varphi$ is chosen. A
fully faithful model of the constrained selector would need a frame
space rich enough to distinguish continuum-many states --- an extension
of Definition~\ref{def:computable-frame} to uncountably many, not
necessarily computable, frames --- which is outside this paper's scope.
\end{remarknn}

\begin{remarknn}[A blindness one level up]
\label{rem:blindness-one-level-up}
Every other instance of semantic-invariant inaccessibility in this
paper (Remark~\ref{rem:semantic-invariant-framing}) concerns an
observer, selector, or adversary \emph{inside} the framework being
blind to a fact the framework can nonetheless state. This one is
different in kind: it is the framework itself ---
$\mathcal F_{\mathrm{sp}}$'s countability --- that is blind to
distinctions a physically continuous parameter genuinely carries. This
is one further instance of the pattern traced since
Remark~\ref{rem:formalizing-the-fable}, a silent assumption revealed
(here: ``every frame worth considering is countable, hence effectively
describable''), but with the assumption located in the paper's own
modelling choices rather than in any object the paper studies. Whether
lifting it --- allowing $\mathcal F_{\mathrm{sp}}$ to be uncountable ---
preserves, breaks, or changes the shape of
Theorem~\ref{thm:semantic-underdetermination} and everything built on
it in Section~\ref{sec:orbital} is a substantial further question, not
a small addendum, and is left open.
\end{remarknn}

\begin{remarknn}[Independence, and the promise of reconnection]
\label{rem:andromeda-independent}
The Andromeda paradox's core mathematical content --- its instantiation
of Theorem~\ref{thm:semantic-underdetermination} as a Category~2
disagreement (Section~\ref{sec:andromeda-mapping}) --- needs only that
theorem and depends on nothing else in this paper.
Section~\ref{sec:andromeda-constrained}'s development of the hidden
assumption stands differently: its central proposition
(Proposition~\ref{prop:constrained-collapse}) uses
$\mathcal F_{\mathrm{sp}}$'s countability
(Proposition~\ref{prop:frame-space-countable}) directly, a genuine
dependency on Section~\ref{sec:orbital}, not a light comparison. The
paradox's main claim is independent; its further development is not.
How this section reconnects more fully to the rest of the paper's
thread --- beyond the dependency just named --- is deferred, in the same
spirit as Remark~\ref{rem:still-unconnected} deferred the orbital
machine's own reconnection.
\end{remarknn}

\section{A third, independently arising question: can a formal theory
of everything exist?}
\label{sec:toe}

\begin{quote}
\textbf{The answer, stated plainly before anything else, proved
below.} No theory of everything that could ever actually be written
down --- by a person, by a computer, by any process producing symbols
one at a time, however long it runs --- can capture every true law of
physics. This is not a limitation of present-day science that better
instruments or cleverer physicists might overcome. It is a mathematical
certainty, of exactly the same kind as Cantor's 1891 diagonal
proof~\cite{Cantor1891} that there are more real numbers than natural
numbers: the space of possible physical laws is simply too large, in a
precise, countable sense, for any theory that could be written on paper
or run on a machine to equal it. And this is not bad news for physics.
It means physics can never be finished: for any theory physicists ever
reach, there is provably --- not just possibly --- a further true law
still waiting to be found. The rest of this section proves this claim
rigorously, names precisely what kind of barrier it is, and explains,
in Section~\ref{sec:toe-not-closure}, why this is the right way to read
it.
\end{quote}

This section, like Sections~\ref{sec:orbital} and~\ref{sec:andromeda}
before it, is developed on its own terms, without depending on anything
constructed above, and its reconnection to this paper's main thread is
deferred rather than forced. A speculative idea, checked as rigorously
as it can be checked at this stage, is worth setting down explicitly ---
including exactly where it is rigorous and exactly where it is not.
Marking that line precisely is treated here as an obligation.

\subsection{What follows, and what it does not claim}
\label{sec:toe-reading-key}

\begin{remarknn}[Reading key, stated before the argument]
\label{rem:toe-reading-key}
The argument below is developed in the vocabulary of a computer
scientist --- cardinality, diagonalization, formal undecidability ---
applied to theoretical physics, and is offered as a seed for physicists
to examine, correct, or develop, rather than as a finished result in
physics. Two commitments follow, and this section keeps them: every
step that can be made a rigorous mathematical statement is made one,
with a complete proof, exactly as elsewhere in this paper; and every
step that cannot yet be made rigorous --- because it rests on an
assumption about physical reality this paper does not establish, or on
an analogy rather than a theorem --- is marked as such at the point it
occurs, not folded silently into the rigorous parts.
Section~\ref{sec:toe-not-closure} returns to this at the end, once the
mathematics is on the table, to say why a result of this shape is an
opening rather than a closure.
\end{remarknn}

\subsection{The Buckingham $\pi$ theorem, recalled correctly}
\label{sec:toe-buckingham}

The argument starts from a classical, universally accepted theorem
about physics: any physical law involving several quantities can always
be rewritten using a smaller number of dimensionless combinations of
them. We recall it precisely, because getting exactly what it does and
does not say right matters for everything that follows.

\begin{theorem}[Buckingham $\pi$ theorem, classical~\cite{Buckingham1914}]
\label{thm:buckingham}
Let $q_1,\dots,q_n\in\mathbb R_{>0}$ be physical variables, each with a
dimension expressed as a product of powers of $k$ independent
fundamental quantities ($[M],[L],[T],\dots$), and let $F(q_1,\dots,q_n)=0$
be a relation among them that is invariant under changes of the units
used to measure the fundamental quantities. Then there exist
$r=n-k$ dimensionless products $\pi_1(q),\dots,\pi_r(q)$ and a function
$\Phi:\mathbb R^r\to\mathbb R$ such that
\[
F(q)=0 \iff \Phi(\Pi(q))=0,\qquad \Pi(q):=(\pi_1(q),\dots,\pi_r(q)).
\]
\end{theorem}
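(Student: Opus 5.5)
The plan is the standard linear-algebra proof, working in logarithmic coordinates. First, encode dimensions as vectors. Write $[q_i]=\prod_{j=1}^k [Q_j]^{d_{ji}}$ and collect the exponents into the $k\times n$ dimension matrix $D=(d_{ji})$. We will assume, as the statement intends by ``$k$ independent fundamental quantities'', that $D$ has rank $k$. A change of units rescales each fundamental quantity by $\lambda_j>0$. This acts on the variables by $q_i\mapsto q_i\prod_j\lambda_j^{d_{ji}}$. Passing to $x_i=\log q_i$ and $\mu_j=\log\lambda_j$, the unit group $(\mathbb R_{>0})^k$ becomes $\mathbb R^k$, acting on $\mathbb R^n$ by translation $x\mapsto x+D^{\mathsf T}\mu$. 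So the orbits are exactly the cosets of the $k$-dimensional subspace $W=\operatorname{im}D^{\mathsf T}$.

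Second, construct the dimensionless products. By rank–nullity, $\ker D$ has dimension $r=n-k$. Choose a basis $v^{(1)},\dots,v^{(r)}$ of it and set $\pi_\ell(q)=\prod_i q_i^{v^{(\ell)}_i}$. Each $\pi_\ell$ is dimensionless because $Dv^{(\ell)}=0$. In log coordinates the map $\log\Pi:x\mapsto(\langle v^{(\ell)},x\rangle)_\ell$ is linear and surjective onto $\mathbb R^r$. Its kernel is $(\ker D)^\perp=\operatorname{im}D^{\mathsf T}=W$. Hence two points $q,q'$ have $\Pi(q)=\Pi(q')$ if and only if they lie in the same orbit of the unit-change group. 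In other words, $\Pi$ induces a bijection from orbits onto $(\mathbb R_{>0})^r$.

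Third, define $\Phi$ and check the equivalence. Unit invariance of the relation means that the zero set $Z=\{q:F(q)=0\}$ is a union of orbits. Define $\Phi(p)=0$ if the orbit $\Pi^{-1}(p)$ lies in $Z$, and $\Phi(p)=1$ otherwise, for $p\in(\mathbb R_{>0})^r$. Extend $\Phi$ arbitrarily (say by $1$) off the positive orthant. Then $F(q)=0\iff\Phi(\Pi(q))=0$ follows directly from the orbit characterization. If one prefers $\Phi$ to inherit the values of $F$ rather than just its zero set, fix a section $s:(\mathbb R_{>0})^r\to(\mathbb R_{>0})^n$ of $\Pi$ and put $\Phi=F\circ s$. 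This section can be chosen linear in log coordinates, via a complement of $W$.

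The main obstacle is interpretive rather than technical: fixing exactly what ``invariant under changes of units'' means. I would read it as invariance of the zero set, not of $F$ itself. I would also need to state the rank assumption explicitly, since if $D$ had rank below $k$ the count becomes $n-\operatorname{rank}D$. The statement only asks for existence of some $\Phi$, with no regularity claimed, so the set-theoretic construction above suffices. A remark would note that continuity or smoothness of $\Phi$ follows from that of $F$ when the section is chosen to be smooth.
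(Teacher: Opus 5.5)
The paper gives no proof of this statement; it quotes it as a classical result with a citation to Buckingham, so there is no competing route to compare with. Your proof is the standard modern one, and it is correct, including the degenerate case $r=0$. In outline: you pass to log coordinates and identify the orbits of the unit-change group with the cosets of $W=\operatorname{im}D^{\mathsf T}$. You then build $\Pi$ from a basis of $\ker D$, so that $\log\Pi$ has kernel exactly $(\ker D)^{\perp}=W$, and define $\Phi$ on the orbit space.

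What your argument buys over the bare citation is that it exposes hypotheses the printed statement leaves implicit. Your rank caveat is genuinely needed, not pedantry. Independence of $[M],[L],[T],\dots$ does not make the dimension matrix of the given variables have rank $k$, and without that the statement as printed fails. Take three lengths with $k=3$ and the unit-invariant relation $q_1=q_2$. Then $r=0$, so $\Phi(\Pi(q))$ is a constant, while the zero set is neither empty nor everything. The correct count is $r=n-\operatorname{rank}D$, which your argument delivers verbatim.

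Reading ``invariant'' as invariance of the zero set is also the right call. It is the weakest reading, and it is implied both by exact invariance of $F$ and by the usual homogeneity $F(\lambda\cdot q)=c(\lambda)F(q)$ with $c(\lambda)\neq 0$.

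Two small points:
\begin{itemize}
\item $\Phi=F\circ s$ reproduces the values of $F$ along each orbit only when $F$ itself, not just its zero set, is invariant. It still gives the required equivalence either way.
\item If you want integer exponents in the $\pi_\ell$, as in Buckingham's formulation, take a rational basis of $\ker D$ (available because $D$ has rational entries) and clear denominators.
\end{itemize}
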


\begin{remarknn}[This theorem is a completeness result, not an
incompleteness one]
\label{rem:buckingham-completeness}
Theorem~\ref{thm:buckingham} is a statement about a \emph{fixed, given}
list of $n$ variables: once that list is settled, dimensional analysis
captures every dimensionally-consistent relation among exactly those
variables, completely, via the $r=n-k$ groups. It says nothing about
whether the list $q_1,\dots,q_n$ is itself the right or complete list
of variables for the phenomenon at hand; that is an empirical question
the theorem does not address. This matters below: nothing in
Theorem~\ref{thm:buckingham} asserts that ``something is always left
out''. Whatever incompleteness is established below comes from a
different, independent argument.
\end{remarknn}

\subsection{The space of candidate physical laws is not merely
uncountable, but far larger}
\label{sec:toe-cardinality}

\begin{definition}[The space of candidate laws]
\label{def:law-space}
Fix $r\ge1$. After Theorem~\ref{thm:buckingham} has reduced a
phenomenon to $r$ dimensionless variables, every candidate physical law
governing it is, formally, a function $\Phi:\mathbb R^r\to\mathbb R$
(with $\Phi(\Pi(q))=0$ picking out the physically realised
configurations). Write $\mathfrak c:=2^{\aleph_0}$ for the cardinality
of the continuum, and
\[
L:=\{\Phi:\mathbb R^r\to\mathbb R\}.
\]
\end{definition}

\begin{proposition}[$|L|=2^{\mathfrak c}$]
\label{prop:law-space-cardinality}
$|L|=2^{\mathfrak c}$, strictly greater than $\mathfrak c=|\mathbb R^r|$
itself.
\end{proposition}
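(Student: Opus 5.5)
I will prove the two cardinal bounds $|L|\ge 2^{\mathfrak c}$ and $|L|\le 2^{\mathfrak c}$ separately and conclude by Cantor--Schr\"oder--Bernstein~\cite{CantorSchroederBernstein}, exactly as in the proof of Proposition~\ref{prop:selectors-uncountable}. Then I will obtain the strict inequality $2^{\mathfrak c}>\mathfrak c$ from Cantor's theorem. The first preliminary fact is $|\mathbb R^r|=\mathfrak c$ for every $r\ge1$. It follows from $|\mathbb R|=2^{\aleph_0}$ and $(2^{\aleph_0})^r=2^{r\aleph_0}=2^{\aleph_0}$. For $r=1$ this is immediate, and for general $r$ one can also interleave decimal or binary expansions.

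\emph{Lower bound.} I will inject $2^{\mathbb R^r}$, the set of subsets of $\mathbb R^r$, into $L$ by sending each $S\subseteq\mathbb R^r$ to its characteristic function $\chi_S:\mathbb R^r\to\{0,1\}\subset\mathbb R$. Distinct subsets give distinct functions, so $|L|\ge|2^{\mathbb R^r}|=2^{|\mathbb R^r|}=2^{\mathfrak c}$. This is the same move as in Remark~\ref{rem:uncountable-not-from-values}: a two-element slice of the codomain already suffices. For the reading of $L$ as a space of \emph{laws} one can even take $\Phi_S:=1-\chi_S$, so that the zero set of $\Phi_S$ is exactly $S$. Then distinct laws pick out distinct sets of realised configurations, not merely distinct functions.

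\emph{Upper bound.} Every $\Phi\in L$ is a subset of $\mathbb R^r\times\mathbb R$, namely its graph. Hence $|L|\le|2^{\mathbb R^r\times\mathbb R}|=2^{\mathfrak c\cdot\mathfrak c}=2^{\mathfrak c}$, using $\mathfrak c\cdot\mathfrak c=\mathfrak c$ from the preliminary fact. Equivalently, $|\mathbb R^{\mathbb R^r}|=\mathfrak c^{\mathfrak c}=(2^{\aleph_0})^{\mathfrak c}=2^{\aleph_0\cdot\mathfrak c}=2^{\mathfrak c}$.

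\emph{Strictness.} I will apply Cantor's theorem~\cite{Cantor1891} to the set $X=\mathbb R^r$: no map $g:X\to 2^X$ is surjective, because the set $D=\{x: x\notin g(x)\}$ is missed. This gives $2^{\mathfrak c}>\mathfrak c$. Explicitly in terms of $L$: given any family $(\Phi_x)_{x\in\mathbb R^r}$ indexed by the continuum, the function $\Psi(x):=\Phi_x(x)+1$ lies in $L$ and differs from every $\Phi_x$ at the point $x$. This is the continuum-indexed analogue of the diagonal already used in Proposition~\ref{prop:selectors-uncountable}.

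None of the steps is a serious obstacle. The only point requiring care is the cardinal arithmetic $\mathfrak c\cdot\mathfrak c=\mathfrak c$ and $\aleph_0\cdot\mathfrak c=\mathfrak c$. I will justify it concretely through the bijection $2^{\mathbb N}\times 2^{\mathbb N}\cong 2^{\mathbb N}$ given by interleaving bits, rather than citing general cardinal arithmetic, so the proof stays self-contained.
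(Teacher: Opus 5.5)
Your proposal is correct and takes essentially the paper's route. The paper computes $|L|=\mathfrak c^{\mathfrak c}=(2^{\aleph_0})^{\mathfrak c}=2^{\aleph_0\cdot\mathfrak c}=2^{\mathfrak c}$, which is your ``equivalently'' line, and then uses Cantor's theorem for strictness, so your sandwich via characteristic functions, graphs and Cantor--Schr\"oder--Bernstein is a more explicit unpacking of that identity, and the explicit diagonal $\Psi(x):=\Phi_x(x)+1$ is a harmless addition.
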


\begin{proof}
$|\mathbb R^r|=\mathfrak c$ for any finite $r\ge1$ (a finite product of
continuum-cardinality sets has continuum cardinality). The set of all
functions from a set of cardinality $\mathfrak c$ to a set of
cardinality $\mathfrak c$ has cardinality $\mathfrak c^{\mathfrak c}$.
Using $\mathfrak c=2^{\aleph_0}$ and standard cardinal arithmetic
($\aleph_0\cdot\mathfrak c=\mathfrak c$ for infinite $\mathfrak c$):
\[
\mathfrak c^{\mathfrak c}=(2^{\aleph_0})^{\mathfrak c}=2^{\aleph_0\cdot\mathfrak c}=2^{\mathfrak c}.
\]
So $|L|=2^{\mathfrak c}$. By Cantor's theorem, $2^{\mathfrak c}>\mathfrak c$
strictly, for any cardinal $\mathfrak c$.
\end{proof}

\begin{remarknn}[A more conservative $L$ gives the same conclusion]
\label{rem:toe-continuous-refinement}
Proposition~\ref{prop:law-space-cardinality} allows $\Phi$ to be an
arbitrary function, most of which have no physical or even
computational meaning (e.g.\ functions definable only via the axiom of
choice, with no finite description). Restricting $L$ to continuous
functions $\mathbb R^r\to\mathbb R$ --- a far more physically defensible
choice, since a continuous function is determined entirely by its
values on the countable dense set $\mathbb Q^r$ --- gives
$|L_{\mathrm{cont}}|=\mathfrak c^{\aleph_0}=\mathfrak c$, a strictly
smaller cardinality but still uncountable ($\mathfrak c>\aleph_0$).
Everything proved in Section~\ref{sec:toe-diagonal} uses only that $L$
(or $L_{\mathrm{cont}}$) is uncountable, not the specific value
$2^{\mathfrak c}$; the argument is robust to this more conservative
choice, and to any choice of $L$ in between.
\end{remarknn}

\subsection{No countable theory can equal the space of laws}
\label{sec:toe-diagonal}

Here is the argument's decisive step, stated as a challenge before it
is stated as a theorem. Hand over any theory of physics you like ---
one law, a thousand laws, even a list that grows forever, one new law
added at every tick of a clock. It makes no difference. From that list
alone, a specific new law can always be built that is guaranteed not to
be on it. This is not a trick special to physics: it is the same move
Cantor used in 1891 to show the real numbers outnumber the counting
numbers, aimed here at laws instead of numbers.

\begin{proposition}[Diagonalization: an explicit law outside any given
countable list]
\label{prop:toe-diagonal}
Let $S=\{\Phi_1,\Phi_2,\dots\}\subseteq L$ be any countable subset.
Then there exists $\Psi\in L$ with $\Psi\notin S$.
\end{proposition}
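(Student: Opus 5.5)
The plan is an explicit Cantor-style diagonal construction, so that the proof exhibits $\Psi$ rather than merely inferring its existence. Cardinality alone would suffice, since $S$ is countable and $L$ is uncountable by Proposition~\ref{prop:law-space-cardinality}. The statement, however, promises an explicit law, in line with the remark on Cantor preceding it, so we build $\Psi$ directly from the list.

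First, fix a countable family of distinct evaluation points in $\mathbb R^r$. Concretely, take $x_n:=(n,0,\dots,0)$ for $n\ge1$. These are pairwise distinct because $r\ge1$.

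Second, define $\Psi:\mathbb R^r\to\mathbb R$ by two cases:
\[
\Psi(x_n):=\Phi_n(x_n)+1 \quad (n\ge1), \qquad \Psi(x):=0 \text{ for } x\notin\{x_n:n\ge1\}.
\]
This is well defined because the $x_n$ are distinct. Every value is a real number, since each $\Phi_n$ is real-valued. Hence $\Psi\in L$ with no further hypotheses needed.

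Third, check that $\Psi$ lies outside $S$. For each $n$ we have $\Psi(x_n)\neq\Phi_n(x_n)$, so $\Psi\neq\Phi_n$ as functions. Therefore $\Psi\notin S$.

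If $S$ is finite, the same construction applies: index $S$ as $\Phi_1,\dots,\Phi_N$ and diagonalize only over $x_1,\dots,x_N$. Alternatively, repeat the last element so that the list is formally infinite; then every element of $S$ appears as some $\Phi_n$ and the argument goes through unchanged.

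The only real obstacle is robustness to the more conservative space $L_{\mathrm{cont}}$ of Remark~\ref{rem:toe-continuous-refinement}, because the $\Psi$ above is discontinuous. For that variant, I would first try replacing the point changes with bumps. Let $\Psi:=\sum_n \beta_n$, where $\beta_n$ is a continuous tent function supported in the ball of radius $1/3$ about $x_n$ with peak value $\Phi_n(x_n)+1$ at $x_n$. The supports are disjoint and locally finite, so $\Psi$ is continuous, and still $\Psi(x_n)=\Phi_n(x_n)+1\neq\Phi_n(x_n)$. The statement as given concerns $L$ itself, though, and the pointwise construction already suffices for it.
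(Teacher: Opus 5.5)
Your proof is correct and is essentially the paper's own argument: the paper fixes the same injection $e(n)=(n,0,\dots,0)$, sets $\Psi(e(n)):=\Phi_n(e(n))+1$ and $\Psi:=0$ elsewhere, and concludes $\Psi\neq\Phi_n$ for every $n$. Your additional remarks go beyond what the paper proves but are sound: the handling of a finite $S$, and the tent-function variant that keeps $\Psi$ inside $L_{\mathrm{cont}}$ (disjoint, locally finite supports give continuity).
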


\begin{proof}
Fix an injection $e:\mathbb N\to\mathbb R^r$ (e.g.\
$e(n)=(n,0,\dots,0)$). Define $\Psi:\mathbb R^r\to\mathbb R$ by
\[
\Psi(y):=\begin{cases}\Phi_n(e(n))+1 & \text{if }y=e(n)\text{ for some }n\in\mathbb N,\\ 0 & \text{otherwise.}\end{cases}
\]
This is a well-defined function $\mathbb R^r\to\mathbb R$, hence
$\Psi\in L$. For every $n$, $\Psi(e(n))=\Phi_n(e(n))+1\ne\Phi_n(e(n))$,
so $\Psi$ and $\Phi_n$ disagree at the point $e(n)$: $\Psi\ne\Phi_n$.
Since this holds for every $n$, $\Psi\notin S$.
\end{proof}

\begin{remarknn}[This is Cantor's argument, and it alone already
suffices]
\label{rem:toe-cantor-suffices}
Proposition~\ref{prop:toe-diagonal} is the classical Cantor diagonal
argument, and it is already implied by
Proposition~\ref{prop:law-space-cardinality} alone: a set of strictly
smaller cardinality than $L$ cannot equal $L$, with no diagonal
construction needed to see it. The explicit construction is kept
because it exhibits a specific, concrete law $\Psi$ escaping any given
countable $S$, which is more informative for what follows than the bare
cardinality inequality; it invokes no obstruction beyond
Proposition~\ref{prop:law-space-cardinality}'s cardinality gap.
\end{remarknn}

\begin{corollary}[No countable, ever-growing sequence of theories
reaches $L$]
\label{cor:toe-regression}
Let $S_0\subseteq S_1\subseteq S_2\subseteq\cdots$ be any sequence of
countable subsets of $L$ (each $S_{k+1}$ a candidate ``repaired''
theory extending $S_k$ by countably many further laws). Then
$\bigcup_k S_k\ne L$.
\end{corollary}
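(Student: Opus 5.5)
The plan is to reduce the corollary to Proposition~\ref{prop:toe-diagonal} by showing that the union $S:=\bigcup_k S_k$ is itself a countable subset of $L$. First, $S$ is a countable union, indexed by $k\in\mathbb N$, of countable sets $S_k$. By the same fact already used in the proof of Theorem~\ref{thm:countable-pairs} and in Proposition~\ref{prop:constrained-collapse}, such a union is countable. Concretely, fix for each $k$ a surjection $g_k:\mathbb N\to S_k$ (or an enumeration, if $S_k$ is nonempty; empty $S_k$ can be discarded). Then the map $(k,j)\mapsto g_k(j)$, composed with the inverse Cantor pairing $\pi^{-1}$, is a surjection $\mathbb N\to S$. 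I note that choosing the $g_k$ simultaneously uses countable choice. This is standard, and I would flag it rather than avoid it.

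With $S$ countable, I list it as $S=\{\Phi_1,\Phi_2,\dots\}$, allowing repetitions, which is harmless because the diagonal construction only needs some enumeration. I then apply Proposition~\ref{prop:toe-diagonal} to this list and obtain $\Psi\in L$ with $\Psi\notin S$, so $S\neq L$. Alternatively, one can skip the construction entirely, as Remark~\ref{rem:toe-cantor-suffices} suggests: $|S|\le\aleph_0<2^{\mathfrak c}=|L|$ by Proposition~\ref{prop:law-space-cardinality}. I would present the diagonal route as the main argument and the cardinality route as a one-line check. I would also note, following Remark~\ref{rem:toe-continuous-refinement}, that the argument only needs $L$ to be uncountable, so it applies verbatim to $L_{\mathrm{cont}}$.

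The only step that is not purely routine is the countability of the union. That step is where the monotonicity $S_k\subseteq S_{k+1}$ turns out to be irrelevant: the argument works for any countable family of countable subsets. I would remark on this explicitly, since it shows that the "repair" structure adds nothing beyond countability. The degenerate cases also need a sentence: all $S_k$ empty, or $S$ finite, both of which are covered by allowing repetitions in the enumeration.
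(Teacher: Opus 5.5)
Your proposal is the paper's own proof: the paper likewise notes that $\bigcup_k S_k$ is a countable union of countable sets, hence countable, and then applies Proposition~\ref{prop:toe-diagonal} to that union. Two of your side remarks need tightening, though neither affects the corollary for $L$. First, the all-empty case is not covered by enumerating with repetitions, since $\emptyset$ has no $\mathbb N$-enumeration; it is trivial anyway because $L\neq\emptyset$. Second, the diagonal $\Psi$ of Proposition~\ref{prop:toe-diagonal} is in general discontinuous, so for $L_{\mathrm{cont}}$ only your cardinality route (or a continuous bump-function variant of the diagonal) works, not the construction verbatim.
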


\begin{proof}
A countable union of countable sets is countable, so
$\bigcup_k S_k$ is countable; Proposition~\ref{prop:toe-diagonal}
applied to this countable union gives $\Psi\in L\setminus\bigcup_k S_k$.
\end{proof}

\begin{remarknn}[What this establishes, precisely]
\label{rem:toe-establishes}
Corollary~\ref{cor:toe-regression} is the rigorous content behind the
regression ``$S\subset S'\subset S''\subset\cdots$'': however many
times, even countably infinitely many times, a candidate physical
theory is patched by adding new laws, the result remains countable and
never equals $L$. This holds for \emph{any} countable theory --- in
particular, for any theory that could ever be written down, since
anything written down, however long, uses finitely or countably many
symbols. This is a rigorously established obstruction to a complete,
countably-writable theory of everything.
\end{remarknn}

\begin{remarknn}[The argument is already complete before Gödel or
Tarski enter]
\label{rem:toe-already-complete}
What has been established, in full, by
Sections~\ref{sec:toe-cardinality}--\ref{sec:toe-diagonal} alone, is
worth stating plainly before Gödel or Tarski are mentioned at all: no
countable theory of everything can equal $L$
(Proposition~\ref{prop:toe-diagonal}), and no sequence of countably
many patches to a countable theory, however long, closes the gap
(Corollary~\ref{cor:toe-regression}). This is a complete proof, using
only cardinal arithmetic and one classical theorem of dimensional
analysis, of exactly the claim the claim box set out to make.
Everything that follows in this section is commentary, comparison, and
a separate further question: none of it is a premise the argument above
waits on, and none of it weakens or qualifies what is already shown.
\end{remarknn}

\begin{remarknn}[``Could not have been there before'', in two senses
that both hold]
\label{rem:toe-two-senses}
The escaping law $\Psi$ ``could not have been present before'' in two
distinct senses, and both hold. \emph{First}, already fully
established: $\Psi\notin S$ for the specific countable list $S$
diagonalized against (Proposition~\ref{prop:toe-diagonal}), with no
further hypothesis. \emph{Second}, at the level of the Buckingham
apparatus turning on itself: fix the diagonal construction of
Proposition~\ref{prop:toe-diagonal} using, as the indexing coordinate
$e(n)$, not an arbitrary embedding but the \emph{first dimensionless
group itself}, $e(n):=\pi_1{=}n$ (available directly from
Theorem~\ref{thm:buckingham}, needing no external apparatus). Then
$\Psi$, at the point $\pi_1=n$, is defined \emph{as} $\Phi_n$'s own
value there, altered: the escaping law is built by turning the $n$-th
candidate law's behaviour, at the coordinate value $\pi_1{=}n$ that
names its own position in the list, against itself. This is a genuine
self-application --- the $n$-th object addressed by its own index,
using only the coordinates Buckingham's theorem already supplies --- and
it is this feature, not mere counting, that licenses reading $\Psi$ as
something no given countable list built from the level-one vocabulary
could already have contained: not because no formula happened to find
it yet, but because the very act of listing candidate laws by index,
using the theory's own coordinate, is what the escaping law is defined
against. $\Psi$ uses no new vocabulary, only the same $r$ groups every
level-one law uses; what it escapes is not the vocabulary but any
specific countable enumeration built from it. This is lighter than
Tarski's requirement --- no syntactic self-representation, no diagonal
lemma, no encoding of proofs --- which is why it is a metaphor relative
to Tarski's actual theorem and not an instance of it; it is no less
formal, since Proposition~\ref{prop:toe-diagonal} proves it outright
under this natural choice of $e$.
\end{remarknn}

\begin{remarknn}[Corollary~\ref{cor:toe-regression} is the
Tarski-shaped content, precisely]
\label{rem:toe-tarski-precise}
Corollary~\ref{cor:toe-regression} contains, precisely and without
needing Tarski's theorem as a premise, the feature that makes the
parallel with Tarski substantive rather than decorative: not merely
that some $\Psi$ escapes a given $S$ (Proposition~\ref{prop:toe-diagonal}
gives that), but that \emph{no amount of patching within the same kind
of move} --- adding countably many further laws, however many times,
using the same variables and the same $r$ groups --- ever closes the
gap. This is exactly the shape of Tarski's theorem: no formula added
\emph{within} $L_0$, however cleverly chosen, ever defines
truth-in-$L_0$; only a genuinely richer metalanguage does.
Proposition~\ref{prop:toe-diagonal} supplies the Buckingham-side half
(every reduction leaves something out), and
Corollary~\ref{cor:toe-regression} supplies the Tarski-side half (that
something is not patchable by more of the same, only by a genuine
change of level) --- and it is the \emph{combination} that rules out
both ``we just haven't found $\Psi$ yet'' and ``we can add axioms
indefinitely and eventually get there''. What a single joint theorem,
rather than this combination of two, would additionally require is one
construction, named precisely later in this section: an account of what
it means for a physical theory to be a self-referential-capable formal
system. A literal Tarski application and a literal Gödel application
both wait on that one construction, not on two separate ones; building
it is a well-defined further project. None of this affects
Remark~\ref{rem:toe-already-complete}: the conclusion already stands,
in full, without it.
\end{remarknn}

\subsection{Cantor's argument and Gödel's simile}
\label{sec:toe-two-obstructions}

\begin{remarknn}[Gödel enters as a simile, at the level of formality it
earns]
\label{rem:toe-not-godel}
Gödel's theorem is invoked as a simile, the way this paper's own
opening fable invoked Douglas Adams
(Remark~\ref{rem:formalizing-the-fable}): a recognisable, illuminating
parallel, not a premise the proof above uses. The parallel holds in one
place and not in another, and both are worth being exact about.
Everything proved in Section~\ref{sec:toe-diagonal} is a cardinality
argument, needing no notion of provability or any specific formal
system. Gödel's first incompleteness theorem~\cite{Godel1931} is a
different mechanism: for a consistent, recursively axiomatizable theory
$T$ expressive enough to encode arithmetic, one specific sentence
$G_T$, in $T$'s own countable language, is true but unprovable in $T$
--- with no cardinality gap anywhere, since $T$'s language, its true
sentences, and its provable sentences are all merely countable. What
the two share is the family resemblance the simile points at: a formal
system, however carefully built, cannot capture everything true about
its own domain. What they do not share is the mechanism: Cantor's
argument needs only counting; Gödel's needs self-reference and a
specific undecidable sentence. The simile is kept and named as a
simile; the proof above does not rest on it.
\end{remarknn}

\begin{remarknn}[Where a literal use of Gödel could enter, and what
building it requires]
\label{rem:toe-second-obstruction}
Beyond the simile, there is a place a literal application of Gödel's
theorem could enter, distinct from Section~\ref{sec:toe-diagonal}'s
cardinality argument. Fix any single candidate theory of everything $S$
--- not the whole space $L$, one specific countable, consistent,
recursively axiomatized $S$, rich enough to encode arithmetic. Gödel's
theorem, applied directly, would say $S$ contains a sentence, in $S$'s
own language, true of the physical system $S$ describes but unprovable
from $S$'s own axioms: a second, independent obstruction to
completeness, this time \emph{within} one theory rather than
\emph{across} the whole space of laws. Making this literal requires one
construction: a formal language in which physical laws are sentences, a
deduction system in which physical reasoning is proof, and a
demonstration that this system can encode arithmetic (e.g.\ by encoding
natural-number statements as statements about a countable family of
physical configurations, the way Gödel numbering encodes syntax as
number theory). That construction is not carried out here.
\end{remarknn}

\begin{remarknn}[What kind of barrier this is, named precisely]
\label{rem:toe-barrier-type}
The barrier established in
Sections~\ref{sec:toe-cardinality}--\ref{sec:toe-diagonal} is worth
naming for what it is, rather than only distinguishing it from Gödel's
by elimination. Three families of formal barrier are well known, and
this one belongs to the oldest and logically simplest. Gödel's barrier
is \emph{self-referential}: one fixed system, expressive enough to talk
about its own proofs, cannot prove one specific sentence about itself.
Turing's barrier, from the halting problem, is \emph{algorithmic}: no
single algorithm decides, for every program and input, whether that
program halts --- again a fact about one fixed decision procedure
failing on a diagonal instance built from itself. The barrier here
needs neither self-reference nor an algorithm failing on a specific
instance: it is \emph{cardinal}, in the sense Cantor identified in
1891, older than either of the other two and requiring only counting ---
a set of one size cannot equal, or be enumerated by, a set of strictly
larger size. Every countable theory is, in this precise sense, simply
too small an object to be the uncountable object $L$, for the same
reason a list of natural numbers is too small to be a list of real
numbers: no cleverness in how the countable theory is built, no
self-reference, no undecidable instance, changes this. This is why
Section~\ref{sec:toe-diagonal}'s proofs needed nothing beyond the
arithmetic of infinite cardinals --- the barrier is exactly as
elementary as that arithmetic.
\end{remarknn}

\subsection{The Buckingham--Tarski parallel, at the level of formality
it earns}
\label{sec:toe-tarski}

You might think that once Buckingham has handed you your $r$
dimensionless groups, the reduction is finished: the phenomenon fully
described, nothing left to add. In one sense it is ---
Theorem~\ref{thm:buckingham} really does promise that no
dimensionally-consistent law among your original variables escapes
those $r$ groups. But ask a different question. Does anything escape a
\emph{countable list} of specific laws written using those same $r$
groups? Cantor already answered, in Section~\ref{sec:toe-diagonal}:
yes, always, however the list is built. And could you patch the list
--- add the missing law, then the next, then the next, forever --- and
finally close the gap? Corollary~\ref{cor:toe-regression} already
answered that too: no, never, not even with infinitely many patches.
This is the same shape Alfred Tarski found in a completely different
setting, and it is worth seeing why, in his own terms.

\begin{remarknn}[Level of formality]
\label{rem:toe-tarski-formality}
As with the MS-DOS instance (Remark~\ref{rem:msdos-formality}) and the
Andromeda mapping (Remark~\ref{rem:andromeda-formality}), what follows
is a structural parallel, not a joint theorem in which Tarski's theorem
is a premise used to derive a fact about dimensional analysis. The two
theorems concern different objects --- languages and truth-predicates
on one side, physical variables and dimensionless groups on the other
--- and are not shown here to be instances of one common formal
statement. The precise sense in which Tarski's theorem genuinely
matters here, naming exactly what Corollary~\ref{cor:toe-regression}
already establishes, was given in Remark~\ref{rem:toe-tarski-precise};
what follows is the illustrative table version of the same point, kept
for its expository value now that the precise content is on record.
\end{remarknn}

Tarski's undefinability theorem~\cite{Tarski1933} is usually quoted, as
just above, in its conclusion only: for a sufficiently expressive
formal language $L_0$, no formula of $L_0$ itself can define ``true in
$L_0$'' for sentences of $L_0$. The conclusion alone leaves the theorem
an assertion; the mechanism that turns it into a demonstration is worth
giving explicitly, both for its own sake and because it is what makes
precise, below, exactly how far the Buckingham/Cantor argument shares
it.

\begin{remarknn}[How Tarski's idea becomes a theorem: the mechanism]
\label{rem:tarski-mechanism}
Suppose, for contradiction, that some formula $\mathrm{True}_{L_0}(x)$
of $L_0$ itself correctly defines truth: for every sentence $\varphi$
of $L_0$, $\mathrm{True}_{L_0}(\ulcorner\varphi\urcorner)\leftrightarrow\varphi$
(where $\ulcorner\varphi\urcorner$ is a name, inside $L_0$, for the
sentence $\varphi$). Because $L_0$ is expressive enough to represent
its own syntax (the one substantive hypothesis the theorem needs), the
\emph{diagonal lemma} applies: for any formula $\theta(x)$ of $L_0$,
there is a sentence $\psi$ of $L_0$ with
$\psi\leftrightarrow\theta(\ulcorner\psi\urcorner)$ --- a sentence that,
via its own code, asserts $\theta$ of itself. Apply this to
$\theta(x):=\neg\mathrm{True}_{L_0}(x)$: there is a sentence $\lambda$
with $\lambda\leftrightarrow\neg\mathrm{True}_{L_0}(\ulcorner\lambda\urcorner)$
--- a Liar sentence, asserting its own untruth. Combined with the
assumed defining property of $\mathrm{True}_{L_0}$, applied to
$\varphi=\lambda$: $\mathrm{True}_{L_0}(\ulcorner\lambda\urcorner)\leftrightarrow\lambda\leftrightarrow\neg\mathrm{True}_{L_0}(\ulcorner\lambda\urcorner)$,
a direct contradiction. No such $\mathrm{True}_{L_0}$ can exist. This
is what turns the idea into a theorem: not the intuition that truth
``feels'' like it should need an outside vantage point, but a specific,
checkable contradiction, manufactured by the diagonal lemma turning
$L_0$'s own expressive power against the assumption.
\end{remarknn}

\begin{remarknn}[The precise family resemblance with
Proposition~\ref{prop:toe-diagonal}, and the one real difference]
\label{rem:tarski-vs-cantor-mechanism}
Both Remark~\ref{rem:tarski-mechanism}'s construction and
Proposition~\ref{prop:toe-diagonal}'s are, in the most literal sense,
\emph{diagonal} arguments: each builds an object (the sentence
$\lambda$; the function $\Psi$) by making it disagree, at a
self-selected point, with what a fixed enumeration or a fixed
assumption says about that very point. This is kinship at the level of
mechanism, not merely of outcome. The one substantive difference is the
one Remark~\ref{rem:toe-not-godel} flagged for Gödel, and it applies
here identically: Tarski's construction needs $L_0$ to represent its
own syntax, so the diagonal lemma has something to act on ---
self-reference is the engine, not an incidental feature.
Proposition~\ref{prop:toe-diagonal}'s construction needs none of this:
$\Psi$ is built directly, by evaluating the $n$-th candidate at the
$n$-th point of a fixed enumeration of $\mathbb R^r$ and changing the
answer, with no sentence naming itself, no code for anything, and no
representability hypothesis. Cantor's original 1891 argument is, in
this precise sense, the strictly simpler ancestor of both Tarski's and
Gödel's: it diagonalizes against an enumeration directly, where they
diagonalize against an enumeration \emph{via} a system representing
itself. This is why Remark~\ref{rem:toe-already-complete}'s claim needs
nothing from this subsection: the completed argument uses only the
simpler, ancestral technique.
\end{remarknn}

\begin{remarknn}[What is rigorously forced: a new law, not necessarily a
new variable]
\label{rem:toe-new-law-not-new-variable}
Assume, as the sketch does, that the starting list $q_1,\dots,q_n$ is
complete for the phenomenon at hand. Then
Theorem~\ref{thm:buckingham} guarantees the $r=n-k$ dimensionless
groups capture \emph{every} law expressible in $q_1,\dots,q_n$
(Remark~\ref{rem:buckingham-completeness}); the diagonal law $\Psi$ of
Proposition~\ref{prop:toe-diagonal} is, precisely, a function
$\Psi:\mathbb R^r\to\mathbb R$ of those \emph{same} $r$ groups, not of
any new dimensionless quantity. What is rigorously forced is only this:
$\Psi$ is a law not on any given countable list, using variables
already at hand --- a new \emph{law}, not, by the mathematics alone, a
new \emph{variable}. Reading $\Psi$ as forcing a genuinely new physical
quantity (a new fundamental dimension, as thermodynamics introduced
entropy or quantum mechanics introduced Planck's constant) is a
further, physically motivated step, not one the mathematics
necessitates: a reasonable expectation about how such laws tend to be
discovered and organised --- an arbitrary function of $r$ existing
quantities is rarely tractable until some new organising quantity
simplifies it --- but an empirical claim about physics, not a
consequence of Theorem~\ref{thm:buckingham} or
Proposition~\ref{prop:toe-diagonal}. With that distinction in place,
the ``new variable'' language below is the sketch's own framing of this
expectation, marked here rather than presented as proved.
\end{remarknn}

The sketch's framing, read with
Remark~\ref{rem:toe-new-law-not-new-variable}'s distinction in mind, is
that the new law $\Psi$, applied level by level and taken to motivate a
new organising variable at each stage, cannot already belong to the
dimensionless groups of the level before it --- for if it did,
Theorem~\ref{thm:buckingham} would already have captured it there. Both
settings share the same recursive shape: a level-$k$ description,
however complete on its own terms, cannot express something that only
becomes visible from level $k+1$, and this repeats without terminating:
\begin{center}
\begin{tabular}{p{4.7cm}p{4.7cm}}
\toprule
\textbf{Tarski's hierarchy} & \textbf{The Buckingham/Cantor hierarchy} \\
\midrule
Level~0: language $L_0$ and its sentences & Level~0: variables and
their dimensions \\
Level~1: metalanguage defining truth-in-$L_0$, provably outside $L_0$
& Level~1: dimensionless groups (Theorem~\ref{thm:buckingham}), complete
for the given variables \\
Level~2: a further metalanguage for truth-in-level-1, and so on &
Level~2: a law $\Psi$ (Proposition~\ref{prop:toe-diagonal}) outside the
level-1 groups by construction, and so on \\
\bottomrule
\end{tabular}
\end{center}
The two results reinforce each other in this qualified sense:
Theorem~\ref{thm:buckingham} guarantees each level is complete
\emph{for what it was given} (Remark~\ref{rem:buckingham-completeness}),
and Proposition~\ref{prop:toe-diagonal} guarantees something new is
always constructible outside it; Tarski's theorem is the source of the
\emph{template} --- a hierarchy that cannot be collapsed to a single
level --- not a premise feeding into either proof above. Whether a
single, unified theorem joining Buckingham-style completeness with
Tarski-style level-separation exists, rather than a parallel between two
structurally similar but formally separate arguments, is an open
question this section does not resolve.

\begin{quote}
\textit{IMPORTANT: To conclude, and to say it once more: these
observations are developed the way a computer scientist would develop
them, and are offered as a seed for physicists, who may read them and
develop the idea further should it prove correct. They do NOT preclude
the possibility that a metatheory exists that bypasses this result, nor
the existence of other hidden assumptions that could change the
results.}
\end{quote}

\subsection{Why this is not a closure: the key point}
\label{sec:toe-not-closure}

This is the promise made in the claim box at the very start of this
section, now made precise: the barrier just proved is not bad news.

\begin{remarknn}[Incompleteness as an opening, not a wall]
\label{rem:toe-opening}
It would be a misreading of Corollary~\ref{cor:toe-regression} to take
it as saying physics is impossible, or the search for deeper laws
futile. The correct reading is the opposite, and it is the same reading
Gödel's own theorem earned, after decades of being misread as a
limitation: incompleteness is not a wall at the edge of a finite
territory, but a guarantee that the territory has no edge.
Corollary~\ref{cor:toe-regression} does not say any particular true law
is unreachable forever; it says no finite or countable stopping point
can ever be the \emph{last} one --- that whatever countable theory
physics has reached at any moment in its history, $S_k$, there is
always a $\Psi\notin S_k$ still to be found, by the same argument
applied again. This is a statement about the \emph{shape} of the
search, not its termination: the search for physical law is not a
project that could, even in principle, be completed and closed off, and
Corollary~\ref{cor:toe-regression}, if its hypotheses are granted,
makes this a theorem rather than a hope. Read this way, what looks like
a negative result is a positive one: a rigorous argument that, on the
premises it names, the project of physics is inexhaustible, that every
advance provably leaves further advances still to be made, and that this
inexhaustibility is a structural fact about the space of possible laws
so described, not a contingent fact about the current state of
knowledge.

The difference this makes is concrete. Before
Corollary~\ref{cor:toe-regression}, a physicist who suspects a current
theory $S_k$ is incomplete has a hope, not a guarantee: $S_k$ might, for
all anyone can rule out, actually be complete, and the search for
$\Psi\notin S_k$ might be a search for something that does not exist.
After Corollary~\ref{cor:toe-regression}, the situation differs in kind,
not merely in confidence: for \emph{every} countable $S_k$, whatever it
is, $\Psi\notin S_k$ is guaranteed to exist, and
Proposition~\ref{prop:toe-diagonal}'s construction produces it
explicitly once $S_k$ is given. Finding a new law $\Psi_k$ and folding
it into $S_{k+1}:=S_k\cup\{\Psi_k\}$ does not close the search; it hands
the same guarantee, applied to $S_{k+1}$, a new $\Psi_{k+1}\notin
S_{k+1}$ to look for --- a specific discovery that was not visible, and
could not have been asked for, before $\Psi_k$ was found and $S_{k+1}$
existed to diagonalize against. This is the precise sense in which every
barrier here creates a new, previously invisible possibility: not a
metaphor, but the literal input-output structure of
Proposition~\ref{prop:toe-diagonal}, iterated as far as
Corollary~\ref{cor:toe-regression} allows --- which is to say, without
end.
\end{remarknn}

\begin{remarknn}[Not absolute --- and saying so is part of the argument]
\label{rem:toe-not-absolute}
Corollary~\ref{cor:toe-regression} is not claimed as an absolute
barrier, and saying so belongs to reading the result correctly, for two
reasons. First, this whole paper has been built on the discovery,
repeated several times over, that a result taken as settled rests on a
silent assumption once someone thinks to look for one
(Remark~\ref{rem:toe-reading-key}); this section's own argument has no
special exemption from that pattern, and every premise --- that
``physical law is a function $\mathbb R^r\to\mathbb R$'', that
``countable'' is the right notion of what a theory or a mind can
produce, or some premise not yet identified --- is left open as a place
a future hidden assumption could be found. Second, and separately,
nothing here rules out a metatheory that changes the terms of the
question entirely, rather than defeating the argument on its own terms;
this is the expected shape of a result like this one. Gödel's
incompleteness theorem did not close mathematics --- it opened
metamathematics, proof theory, and a century of work on what formal
systems can and cannot do, precisely because mathematicians read it as
a discovery about the shape of formal systems rather than a wall.
Turing's halting problem did not close computer science --- it opened
computability theory and, eventually, complexity theory, read the same
way. If Corollary~\ref{cor:toe-regression} is correct and its
hypotheses hold for physics, the response the same precedent recommends
is not to treat physics as closed off at some boundary, but to treat
the corollary as the first theorem of a new inquiry into the shape of
that boundary. This is why the section is offered as a beginning, and
why Remark~\ref{rem:toe-reading-key}'s obligation to flag every open
point is the entire point of writing it down.
\end{remarknn}

\begin{remarknn}[Independence of the proof; the connection now made]
\label{rem:toe-independent}
This section's rigorous content
(Sections~\ref{sec:toe-buckingham}--\ref{sec:toe-diagonal}) depends on
nothing constructed elsewhere in this paper: it is self-contained
cardinal arithmetic and one classical theorem of dimensional analysis,
and the proof stands exactly as given regardless of what follows. What
follows is the connection promised, and deferred, since
Section~\ref{sec:orbital}'s own opening.
\end{remarknn}

\begin{remarknn}[The same shape, a third time: countable description
against a larger space]
\label{rem:toe-connection}
Proposition~\ref{prop:law-space-cardinality} and
Proposition~\ref{prop:toe-diagonal} are a third instance of a pattern
this paper has already proved twice, in two unrelated domains.
Section~\ref{sec:selector-functions}'s
Proposition~\ref{prop:selectors-uncountable} shows the space of all
selector-\emph{functions} has cardinality $2^{\aleph_0}$, while the
computable ones --- everything a Turing machine can produce --- are only
countably many. Section~\ref{sec:andromeda-constrained}'s
Proposition~\ref{prop:constrained-collapse} shows a continuum of
physical states, forced through this paper's countable frame space,
must collapse uncountably many of them onto a single frame. Both are
instances of the same elementary fact used throughout
Section~\ref{sec:toe-diagonal}: a countable object cannot equal, or
faithfully represent, an object of strictly larger cardinality, however
that countable object is built. Here the countable object is any theory
that could ever be written down --- by a person, a computer, or any
process producing symbols one at a time, forever --- and the larger
object is $L$ itself. The three instances share no formal machinery
beyond cardinal arithmetic and, in each case, an explicit diagonal
witness (a selector escaping any computable enumeration; a physical
state escaping any single frame's fibre; a law escaping any countable
theory); they are not one theorem restated three times, but the same
one-line fact --- a countable set cannot exhaust an uncountable one ---
recognised, independently, as the load-bearing obstruction in three
domains that have nothing else in common: abstract computation, special
relativity, and the foundations of physical law. This is evidence of
the same kind Remark~\ref{rem:general-pattern} and
Remark~\ref{rem:beaver-obot-celld-chain} already offered for their own
repeated shapes: not a single grand theorem unifying all of this
paper's results (Section~\ref{sec:scope} explains why none is
attempted), but a record that the same elementary fact, checked freshly
and independently each time, turned out to be exactly what was needed,
three times running.
\end{remarknn}

The core argument of this section is now complete and stands on its own,
exactly as concluded above. What follows is a further, independently
arising instance of the same silent-assumption pattern, in a different
domain --- computability and cryptography rather than physics ---
developed here because it, too, connects back to this paper's central
theme, not because it continues the Buckingham/Cantor argument just
finished.

\section{A fourth, independently arising instance: an algorithm's
output can be indistinguishable from noise}
\label{sec:omega}

Like Sections~\ref{sec:orbital}, \ref{sec:andromeda},
and~\ref{sec:toe} before it, this section stands on its own and depends
on nothing built above; it uses none of the theory-of-everything's
results, and reaches its conclusion from Turing's 1936 definition of
computability directly. It is placed here not to extend the paper but
because the hidden assumption it isolates is the sharpest of all the
ones this paper takes apart, and the only one that can be witnessed
\emph{unconditionally}. The assumption is the silent identification of
``indistinguishable from random'' with ``meaningless'' --- and once it
is dropped, the notion this whole paper has been circling, an object
fully determined yet opaque to a stated class of observers, is finally
named in its own right (Definition~\ref{def:usable-algorithm}), which
is why the section is load-bearing rather than illustrative.

One distinction governs everything that follows, and it is worth
stating before the claim, because the section's most famous object
tends to eclipse it. The phenomenon here is that an output can be
\emph{fully determined and meaningful} and yet \emph{indistinguishable
from random noise} to every observer in a stated class. This does
\emph{not} require the output to be non-computable. Ordinary,
perfectly computable algorithms already produce such output, and they
do so for reasons that are not all of a kind: the weakest is
computational --- a secure pseudorandom generator's output is the
standard example (Remark~\ref{rem:omega-crypto-family}), computable and
halting, its seed fully recoverable in principle, yet indistinguishable
from noise to every polynomial-time observer, though an unbounded one
could in principle break it --- while the stronger reasons are
structural and cede to no amount of computational power at all, so that
a computable algorithm's output can be opaque not merely to efficient
observers but to \emph{every} observer at once
(Remark~\ref{rem:computable-output-two-reasons}). Chaitin's $\Omega$,
which the next subsection leads with, is a third thing again, on its own
axis: not the output of any algorithm, and opaque to every computable
observer for the distinct reason that reading it would require computing
the uncomputable. It is included because it is the cleanest
\emph{unconditional} witness that meaning and opacity can coexist, the
sharpest single object of the three kinds --- but it is one witness of
the phenomenon, not the phenomenon itself, and the everyday computable
versions, the ones that reach cryptographic practice and that this paper
is really about, must not be collapsed into it
(Remark~\ref{rem:omega-not-computable} keeps them apart).

\subsection{The hidden assumption: meaningful output
indistinguishable from noise}
\label{sec:toe-omega}

\begin{quote}
\textbf{Claim, stated directly, proved below.} Standard computability
theory implicitly permits --- and this permission is realized, not
merely left open --- output that is fully determined and meaningful
yet indistinguishable from noise to a stated class of observers. The
opacity comes in more than one form, and running them together is the
error this section exists to prevent. Three forms will matter, and it
helps to hold them apart from the outset: one that a powerful enough
observer can break (the pseudorandom case), ones that no observer can
break because the obstacle is structural rather than a matter of power
(the SIP and the encoding--frame mismatch), and one sharper still that
no computable observer can break because reading it would mean computing
the uncomputable ($\Omega$). Now each in turn. The \emph{weakest} is
computational: for a computable algorithm, the opacity may be to a
bounded class only --- a halting, computable procedure can have output
indistinguishable from noise to every \emph{polynomial-time} observer
(Remark~\ref{rem:omega-crypto-family}), conditional on standard
cryptographic assumptions, and an unbounded observer may in principle
break it. That form cedes to more computational power. The others do
not, because they are not about power at all. A computable algorithm's
output can be indistinguishable from noise to \emph{every} observer,
bounded or unbounded, when what hides it is structural: a semantic
invariant syntax cannot see (the static SIP,
Lemma~\ref{lem:sip-static}), or an encoding--frame mismatch under which
no reading renders it (the orbital machine, Section~\ref{sec:orbital}) ---
neither of which draws any distinction of computational power, so the
opacity they produce holds against every observer alike, unconditionally
(Remark~\ref{rem:computable-output-two-reasons}). And in its sharpest
unconditional form the opacity is total to every computable observer for
a further, distinct reason --- non-computability --- but is then
witnessed by an object that is not an algorithm's output at all:
Chaitin's $\Omega$ (Proposition~\ref{prop:omega}), and, beyond $\Omega$
alone, a pairing available for every computable decision problem
whatsoever (Theorem~\ref{thm:every-decision-problem}). None of these is
an edge case tolerated by a loophole: the pseudorandom form is the
load-bearing assumption behind modern cryptography, the structural forms
are the subject of this whole paper, and the non-computable form follows
directly from Turing's own 1936 definition of computability
(Remark~\ref{rem:church-turing-root}). Underneath every one of them lies
the single fact this section is built to expose: the standard theory
admits such algorithms \emph{implicitly and with no distinction of
computational power of its own} --- it permits output indistinguishable
from noise not merely to efficient observers but to every observer, and
it does so regardless of which of these reasons, or which other reason,
makes any particular output opaque. The remainder of this section
establishes the computational and non-computable forms rigorously,
invokes the structural ones from where they are proved, and keeps all of
them apart; the unconditional half is proved, not inferred from what the
theory declines to forbid.
\end{quote}

\begin{remarknn}[The root: the founding definition of computability
never opened this question]
\label{rem:church-turing-root}
A Turing machine $M$ computes a function $f$ if, for every input $x$ on
which $f$ is defined, $M(x)$ halts and outputs $f(x)$ --- Turing's 1936
definition~\cite{Turing1936}, unchanged since. Nothing in it constrains
what $f(x)$ looks like: no clause requires the output to be
compressible, statistically unremarkable, or recognisable by any
observer as meaningful. The distinction is exact, and worth stating in
the algorithm's own terms rather than an observer's: the \emph{output}
of $M$ is the string written on the tape; the \emph{interpretation} of
that output is a separate question, about whether some agent, reading
the tape, can recover what it means. Computability theory, from its
founding definition onward, is a theory of the first --- the existence
of a correct, halting, deterministic map from input to output --- and
says nothing about the second. If $f(x)$ happens to look like noise to
some observer, that is, in the theory's own terms, a fact about
decoding, not about computation; the theory neither requires nor
forbids it, because readability was never part of what ``solving a
problem'' was defined to mean.
\end{remarknn}

\begin{remarknn}[The boundary the definition itself already draws: a die
roll is not an algorithm]
\label{rem:dice-not-algorithm}
Turing's definition already settles a boundary worth stating
explicitly, since it is what makes the rest of this section coherent
rather than permissive of anything whatsoever. ``$M$ computes $f$''
requires a \emph{function} $f$: a fixed, well-defined correspondence
between input and output, the same $f(x)$ every time $M$ is run on $x$.
A process that outputs unrelated noise on repeated runs of the same
input --- a die roll wired to the tape, independent of $x$ --- computes
no function at all, and so is not an algorithm in Turing's sense; this
is not an added restriction on top of the classical definition, it is
already inside it. What the classical definition permits, and this
section shows can genuinely occur, is the opposite case: a fixed,
deterministic, correct correspondence between input and output, in which
$f(x)$ itself, for each $x$, happens to look like noise to a stated
class of observers. The two cases are not close variants: one fails to
be an algorithm at all, the other is exactly as valid an algorithm as
any other, precisely because the definition was never about how $f(x)$
looks, only about whether $f(x)$ is what it is supposed to be, every
time. Chaitin's $\Omega$, discussed next, sits on neither side of this
boundary --- it is not computable at all, hence not an algorithm's
output in either sense --- but the pseudorandom-generator family below
is squarely on the valid side: $G$ is a genuine algorithm, $G(s)$ its
actual, deterministic output for each seed $s$, and the existence of a
key (the seed, or the search inverting $G$) to recover the input is
exactly why the classical theory accepts this case without reservation
--- information preserved rather than destroyed, appearance of disorder
notwithstanding.
\end{remarknn}

\begin{remarknn}[What computability theory is about, and what it is not]
\label{rem:objectivity-argument}
If an algorithm's output has high entropy yet contains the exact
solution to the problem it was built to solve, the information is
present, and computability theory concerns itself with the presence of
information and the possibility of generating it, not with its
aesthetics or with how easily it can be recognised by inspection.
Accepting this limiting case is required for the theory's own logical
consistency, not an optional generosity toward exotic examples:
excluding algorithms on the basis of the quality or readability of
their output would import a subjective criterion into a discipline whose
definitions are built to need none --- two observers could disagree
about whether an output ``looks like'' a solution, where they cannot
disagree, given the specification, about whether it \emph{is} one. This
is not hypothetical: many genuinely hard problems are solved by reducing
them to simpler ones, and the intermediate or final representations
produced along the way routinely have no intuitive sense to a human
reader, while being exactly, formally equivalent to the solution. An
algorithm solving a problem this way is accepted without qualification,
because the classical theory defines a solution as the capacity to map
an input to a correct output, deliberately setting aside whether that
output is comprehensible, elegant, or apparently random.
\end{remarknn}

\begin{remarknn}[The same point, from algorithmic information theory]
\label{rem:kolmogorov-root}
This connects directly to Kolmogorov complexity~\cite{Kolmogorov1965}.
By the Levin--Schnorr theorem~\cite{Schnorr1973,Levin1973}, an infinite
sequence is Martin-Löf random exactly when its finite prefixes are
Kolmogorov-incompressible up to a bounded additive constant: ``looking
like noise to every statistical test'' and ``carrying the maximum
possible information density, with no redundancy left to compress away''
are, formally, the same phenomenon, not opposites. Apparent disorder in
an algorithm's output is not evidence of missing information; for the
output of a valid algorithm it can be exactly the reverse --- information
at maximum density, in a representation with nothing left over for an
unequipped observer to exploit.
\end{remarknn}

\begin{remarknn}[Why this is the root, and what the rest of this section
adds]
\label{rem:root-to-rest}
Remarks~\ref{rem:church-turing-root} and~\ref{rem:kolmogorov-root} are
why a relative, per-observer-class vocabulary for usability, introduced
later in this section, could not be reached by strengthening the
classical definition of algorithm: that definition never opened the
question of usability, for any observer class, in the first place. What
the rest of this section adds is not a correction to this founding
silence --- the silence is exactly right, by design --- but the
strongest available concrete witness that the silence has real content:
not merely that the classical definition permits noise-like output in
principle, already established above, but that a specific, meaningful
answer to a specific, genuine question can be produced this way,
unconditionally, and shown so with proof, starting from Chaitin's
$\Omega$ next.
\end{remarknn}

\begin{remarknn}[What standard theory silently conflates, in the matter
of pseudorandomness]
\label{rem:omega-conflation}
Standard treatments of pseudorandomness and algorithmic output speak of
a string being ``computationally indistinguishable from random'' as if
this settled a single question. It settles only an access question ---
no computable procedure succeeds at telling this string apart from a
fair coin sequence --- and says nothing, by itself, about a different,
ontological question: whether the string encodes something meaningful at
all. The silent assumption is that these two questions collapse into
one. They do not, and both a specific mathematical object with no
algorithm behind it and a specific algorithm's actual output can
independently witness this, as the rest of this section shows for each
in turn.
\end{remarknn}

Here is a real number that seems, at first hearing, impossible. It
answers a genuine question completely, bit by bit. And yet no computer
on Earth, however long you let it run, could ever tell its digits apart
from the flip of a fair coin. It exists. Chaitin found it in 1975. It
is called $\Omega$.

\begin{proposition}[Chaitin's $\Omega$: meaningful and indistinguishable
from noise at once]
\label{prop:omega}
Let $\Omega=\sum_{p\downarrow}2^{-|p|}$, summed over halting programs
$p$ of a fixed universal prefix-free machine. Then: (i) knowing the
first $n$ bits of $\Omega$ decides the halting problem for every
program of length $\le n$~\cite{Chaitin1975}, so $\Omega$ is, bit for
bit, a complete and meaningful answer to a genuine decision problem;
and (ii) $\Omega$ is Martin-Löf random~\cite{MartinLof1966,Chaitin1975}: no
computable statistical test detects any deviation whatsoever from a
fair coin sequence. (i) and (ii) hold of the same object at once.
\end{proposition}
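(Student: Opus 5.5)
The plan is to prove the two halves of the statement separately, from the standard facts about a universal prefix-free machine $U$, and then note that nothing else is required for ``at once'': both (i) and (ii) are properties of the single real $\Omega=\Omega_U$. Throughout I rely on Kraft's inequality for prefix-free domains, which gives $0<\Omega<1$ and makes the binary expansion well defined. The expansion is not dyadic, since $\Omega$ is irrational; this will follow from (ii) anyway.

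For (i), I would give Chaitin's dovetailing argument. Suppose $\Omega_n$, the first $n$ bits of $\Omega$, is known. Then $\Omega_n\le\Omega<\Omega_n+2^{-n}$. Run all programs in parallel and keep a running lower approximation $\Omega^{(s)}=\sum_{p\text{ halted by stage }s}2^{-|p|}$, which is nondecreasing and computable, and wait for the first stage $s$ with $\Omega^{(s)}\ge\Omega_n$. Such a stage exists because $\Omega^{(s)}\to\Omega\ge\Omega_n$, with the boundary case handled by irrationality. Now let $p$ be any program with $|p|\le n$ that has not halted by stage $s$. If it ever halted, we would get $\Omega\ge\Omega^{(s)}+2^{-|p|}\ge\Omega_n+2^{-n}$, which is a contradiction. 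So the programs of length at most $n$ that halt are exactly those that have halted by stage $s$. The procedure is computable given $\Omega_n$, which establishes the ``meaningful answer'' clause.

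For (ii), my main route is through prefix complexity $K$ and the Levin--Schnorr characterization already recalled in Remark~\ref{rem:kolmogorov-root}. I would show $K(\Omega_n)\ge n-c$ for a constant $c$. From a shortest description of $\Omega_n$, a fixed machine computes $\Omega_n$, runs the procedure of (i), and lists every output $U(p)$ with $|p|\le n$. It then prints a string that is not in this list, which must have $K>n$. The length of this construction is $K(\Omega_n)+O(1)$, so $n<K(\Omega_n)+O(1)$. Levin--Schnorr then turns incompressibility of the prefixes into Martin-Löf randomness. Martin-Löf randomness means passing every effective null test, and in particular every computable statistical test, which is the form (ii) is stated in.

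The main obstacle is the precision of the wording rather than any mathematical step. Martin-Löf randomness says that $\Omega$ avoids every effective null set, not that ``no test detects any deviation whatsoever'' in an informal sense. I would make the statement precise as membership outside every Martin-Löf test. I would also note that the dependence on the choice of $U$ affects only constants.
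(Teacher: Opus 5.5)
The paper does not prove this proposition. Both halves are imported by citation: Chaitin for (i), and Martin-Löf and Chaitin for (ii). Your reconstruction is exactly the standard argument behind those citations, and essentially Chaitin's own: dovetailing for (i), then $K(\Omega_n)\ge n-O(1)$ via ``print a string outside the list of short outputs'', then Levin--Schnorr. The approach is right, but as written it contains one circularity.

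\textbf{The circularity.} You use in (i) that $\Omega$ is not a dyadic rational, and the argument genuinely needs it.
\begin{itemize}
\item If $\Omega=\Omega_n$ exactly, then $\Omega^{(s)}<\Omega$ for every $s$, because $\mathrm{dom}(U)$ is infinite. So the search for a stage with $\Omega^{(s)}\ge\Omega_n$ never halts.
\item If you instead take the expansion ending in $1$s, the search halts. But then $\Omega=\Omega_n+2^{-n}$, so the inequality $\Omega\ge\Omega_n+2^{-n}$ is no longer a contradiction.
\end{itemize}
The same fact is what gives $\Omega<1$; Kraft's inequality alone only gives $\Omega\le 1$. You then say irrationality ``will follow from (ii)''. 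But your proof of (ii) runs the procedure of (i) inside the description machine, and that search must terminate for every $n$. So the argument assumes what it later derives.

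\textbf{The repair.} Put a short lemma first that uses a computable approximation rather than the bits. Suppose $\Omega$ were computable. Then from $n$ one could compute a rational $r$ with $\Omega-2^{-n}<r<\Omega$, and wait for a stage with $\Omega^{(s)}>r$. The same counting as in (i) then shows that every program of length $\le n$ not yet halted never halts. That would make $\mathrm{dom}(U)$ decidable, contradicting the undecidability of the halting problem for $U$. Since every rational is computable, $\Omega$ is irrational, and (i) and (ii) then go through exactly as you wrote them.

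\textbf{Two smaller points.}
\begin{itemize}
\item In (ii), say explicitly that the machine which first runs $U$ on its input and then searches is prefix-free, since its domain is contained in $\mathrm{dom}(U)$. That is what licenses $K(x)\le K(\Omega_n)+O(1)$.
\item Your reading of ``no computable statistical test detects any deviation'' as ``$\Omega$ passes every Martin-Löf test'' is the correct precise form of the paper's informal wording.
\end{itemize}
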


\begin{remarknn}[What $\Omega$ is, and what it does not license]
\label{rem:omega-not-computable}
$\Omega$ is famously not a computable real number: an algorithm
outputting its bits one by one would decide the halting problem,
contradicting Turing~\cite{Turing1936}. So Proposition~\ref{prop:omega}
is a fact about $\Omega$ as a well-defined mathematical object (a
specific real number, given by a convergent sum), not about the output
of any halting Turing machine, and this sharpens
Remark~\ref{rem:church-turing-root}'s distinction further: an algorithm
whose output happens to look like noise is a different thing from an
object, like $\Omega$, that is not an algorithm's output at all --- a
case the classical theory was never asked about, since it was never a
question about algorithms to begin with. $\Omega$'s role here is as the
cleanest \emph{unconditional mathematical witness} that meaning and full
computable-opacity can coexist in principle; the genuinely algorithmic
witness --- a real, halting, computable procedure whose actual output is
noise-indistinguishable to a stated class --- comes later in this
section, from a conditional pseudorandom-generator family.

This licenses a narrower claim than it may seem to: most strings
indistinguishable from noise carry no hidden meaning at all. A genuinely
Martin-Löf random sequence from a real random source has nothing encoded
in it for any observer, however powerful, to find --- $\Omega$ is
special not merely for being indistinguishable from noise (true of
almost every infinite binary sequence) but for being, additionally and
independently, a specific, meaningful answer
(Proposition~\ref{prop:omega}(i)) that also satisfies (ii). The general
statement this licenses is conditional: \emph{if} a string is the output
of a valid algorithm solving a genuine decision problem, \emph{and} that
output is computationally indistinguishable from noise, \emph{then}
reading it requires a key beyond any computable observer's reach.
\end{remarknn}

\begin{proposition}[Not one instance but infinitely many, unconditionally]
\label{prop:omega-family}
For each universal prefix-free machine $U$, let
$\Omega_U:=\sum_{p\downarrow \text{ under } U}2^{-|p|}$. For infinitely
many choices of $U$, the resulting $\Omega_U$ are pairwise distinct real
numbers, each independently satisfying Proposition~\ref{prop:omega}(i)
and (ii) for its own $U$ (and each, by
Remark~\ref{rem:omega-not-computable}, equally not computable, hence
equally not the output of any algorithm). Chaitin's $\Omega$ is
therefore not a singular mathematical curiosity: it is one point in an
infinite, unconditionally proven family of objects with complete,
meaningful content that is Martin-Löf random.
\end{proposition}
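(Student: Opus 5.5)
The plan has two halves. First, show that for \emph{every} universal prefix-free machine $U$, the real $\Omega_U$ satisfies Proposition~\ref{prop:omega}(i) and~(ii). Second, build an explicit infinite sequence of universal prefix-free machines whose halting probabilities are pairwise distinct.

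For the first half, no new argument is needed. Chaitin's proof of~(i) reads the first $n$ bits of $\Omega_U$ and dovetails all programs under $U$ until the partial sum exceeds the truncated value. Every program of length $\le n$ that has not halted by then never halts, since otherwise $\Omega_U$ would exceed its own truncation by at least $2^{-n}$. This argument uses only that $U$ is prefix-free, so it applies verbatim to each $U$. Universality then transfers the decision to the halting problem for an arbitrary fixed machine, with a constant-length prefix. Property~(ii), Martin-L\"of randomness, holds for every universal prefix-free $U$ by Chaitin's incompressibility bound $K(\Omega_U\restriction n)\ge n-O(1)$ combined with the Levin--Schnorr theorem recalled in Remark~\ref{rem:kolmogorov-root}. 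Noncomputability then follows as in Remark~\ref{rem:omega-not-computable}.

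For the second half, I would use a padding construction. Fix one universal prefix-free $U_0$. For each $k\ge1$, define $U_k$ by
\[
U_k(0^k1p):=U_0(p),\qquad U_k(1q):=\text{halt with output }0\ \text{for every }q\text{ of length }k,
\]
and let $U_k$ diverge on every other input. Three properties need checking. The domain of $U_k$ is prefix-free: the two branches begin with different first bits, and each branch inherits prefix-freeness, from $U_0$ and from the fixed length $k$ respectively. $U_k$ is universal, because it simulates $U_0$ with a constant prefix. Its halting probability is
\[
\Omega_{U_k}=2^{-(k+1)}\Omega_{U_0}+2^k\cdot 2^{-(k+1)}=2^{-(k+1)}\Omega_{U_0}+\tfrac12 .
\]
Since $\Omega_{U_0}>0$, the map $k\mapsto 2^{-(k+1)}\Omega_{U_0}$ is strictly decreasing. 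Hence the reals $\Omega_{U_k}$ are pairwise distinct, giving infinitely many distinct witnesses. It is worth noting, as a remark, that each $\Omega_{U_k}$ is an affine image of $\Omega_{U_0}$ under dyadic-rational coefficients. They are therefore distinct but Turing-equivalent, which is exactly why the proposition claims only distinctness and not independence of content.

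I expect the main obstacle to be the prefix-freeness and universality bookkeeping in the padding construction. A careless choice, for example padding on only one branch, could break prefix-freeness or push the halting measure above~$1$. The formula above keeps $\Omega_{U_k}<1$ because $\Omega_{U_0}<1$. If that check proves delicate, my fallback is the cleaner citation route: Calude--Hertling--Khoussainov--Wang and Ku\v{c}era--Slaman showed that every left-c.e.\ Martin-L\"of random real is $\Omega_U$ for some universal $U$. Since there are infinitely many such reals, for example the $\Omega_{U_k}$ themselves, this also yields the claim directly.
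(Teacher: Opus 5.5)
Your argument is correct. It supplies something the paper does not: the paper states Proposition~\ref{prop:omega-family} with no proof at all. It relies implicitly on Chaitin's two theorems being proved uniformly for an arbitrary universal prefix-free machine, and it never justifies the claim that infinitely many of the resulting reals are pairwise distinct.

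Your first half is that implicit citation made explicit. Your second half is genuinely new content relative to the paper: the padding family $U_k$ with $\Omega_{U_k}=2^{-(k+1)}\Omega_{U_0}+\tfrac12$. It checks out on every point:
\begin{itemize}
\item the domain is prefix-free, since the first bit separates the two branches and each branch is prefix-free on its own;
\item $U_k$ is universal by adjunction with the prefix $0^k1$;
\item $\Omega_{U_0}>0$ gives strict monotonicity in $k$, hence pairwise distinctness.
\end{itemize}
What your route buys is an elementary, self-contained witness of distinctness. The Calude--Hertling--Khoussainov--Wang plus Ku\v{c}era--Slaman fallback buys far more, namely that every left-c.e.\ random real in $(0,1)$ is some $\Omega_U$, but it costs heavy machinery and is not needed here.

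Three small corrections.
\begin{itemize}
\item The dovetailing in~(i) terminates only because $\Omega_U$ lies strictly above its $n$-bit truncation, i.e.\ $\Omega_U$ is not a dyadic rational, which you get from~(ii). So that argument does not use prefix-freeness alone, and it is cleaner to establish~(ii) first.
\item Your worry about the halting measure exceeding $1$ is moot once prefix-freeness is verified, by Kraft's inequality.
\item The Turing-equivalence you point out is not a special feature of the padding family. Every $\Omega_U$ with $U$ universal is Turing-equivalent to $\emptyset'$, and all such reals are even Solovay-equivalent. So ``independently'' in the statement can only mean that each real satisfies~(i) and~(ii) on its own, which is exactly what you prove.
\end{itemize}
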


\begin{remarknn}[A second, conditional family, tied directly to
cryptographic practice]
\label{rem:omega-crypto-family}
A weaker, more widely applicable version of the same phenomenon is
standard in cryptography, conditional on an assumption this paper does
not prove: if a secure pseudorandom generator
$G:\{0,1\}^k\to\{0,1\}^m$ exists (following, unconditionally, from the
existence of one-way functions, themselves unproven but standard), then
for a uniformly random seed $s$, $G(s)$ is meaningful in exactly the
sense required here --- it determines $s$ completely, answering the
genuine question ``which seed produced this string'' --- while being
computationally indistinguishable from a uniformly random string of
$\{0,1\}^m$ for every polynomial-time observer. This is a strictly
weaker opacity than Proposition~\ref{prop:omega}(ii) (only
polynomial-time, not every computable, observer is excluded, and the
guarantee is conditional, not unconditional), stated separately and not
conflated with the $\Omega$ case; but it shows the same structural point
is the load-bearing assumption behind an entire field's central
security definitions, not an exotic edge case confined to algorithmic
information theory.
\end{remarknn}

\begin{remarknn}[A computable algorithm's output can be
noise-indistinguishable to \emph{every} observer, not only efficient
ones]
\label{rem:computable-output-two-reasons}
The pseudorandom case above is the \emph{weakest} of these, and it is
worth being explicit about why, because the weakness is easy to mistake
for the whole phenomenon. Its opacity is tied to computational power: a
secure pseudorandom generator's output is indistinguishable from noise
to every \emph{polynomial-time} observer, but an observer with unbounded
resources can, in principle, tell it from random; and the guarantee is
conditional on an unproven assumption. Power-bounded and conditional ---
that is the floor, not the ceiling.

A computable algorithm can do strictly more: it can produce output that
is indistinguishable from noise to \emph{every} observer, human or
machine, bounded or unbounded, always and everywhere. What makes this
possible is that the reasons need not be computational at all. At least
two are structural, and neither draws any distinction of computational
power, because the blindness they cause is not a matter of resources.
The first is total or partial semantic inaccessibility --- the static
SIP (Lemma~\ref{lem:sip-static}): a semantic invariant carried in the
output is invisible to any syntactic inspection of it, for every
observer alike, since syntax cannot see semantic invariants no matter
how much power is brought to bear. The second is the encoding--frame
relativity of the orbital machine (Section~\ref{sec:orbital},
Remark~\ref{rem:output-two-sources}): whether the output reads as
meaningful is relative to a pair $(E,F)$, and under a pair other than
the intended one it may fail to parse, answer a different question, or
answer nothing discriminating
(Categories~1--3, Propositions~\ref{prop:category1}--\ref{prop:category3})
--- again with no dependence on the reader's power, since no amount of
computation supplies a reading the frame withholds. Both hold against
every observer at once, unconditionally, exactly where the pseudorandom
case holds only against bounded ones and only conditionally.

The output might also be indistinguishable from noise for some further
reason entirely, and it does not matter which: the list is not meant to
be exhaustive, and nothing here depends on having found every route.
What matters is the one thing common to all of them, and it is the load
of this section: the standard theory admits the existence of such
algorithms --- computable procedures whose output is indistinguishable
from noise --- \emph{implicitly, and without any distinction of
computational power of its own}. Its implicit permission is not for
``opaque to efficient observers'' but for opaque, full stop; it grants
the strong, every-observer case no less than the weak, bounded one, and
grants it regardless of which of the reasons above, or which other
reason, is what makes a particular output opaque.

This is why no algorithm need be exhibited for each case, and the reason
is positive, not a concession. The pseudorandom case is already the
existence of one-way functions in another guise: to deny that a
computable algorithm can have output indistinguishable from noise is to
deny that secure pseudorandom generators exist, which is to give up the
foundations of modern cryptography and, with them, part of the standard
theory itself. The structural cases stand on ground just as firm: the
static SIP and the encoding--frame relativity are theorems, proved in
their own sections, and they produce indistinguishability from noise
with no dependence on the observer's power. The theory grants the
possibility; these mechanisms are what realize it; both are already in
hand. To insist that such an algorithm cannot exist until one is
displayed is therefore not a demand for rigor but a rejection of the
theory that admits it --- the same theory whose definitions were used to
raise the question. This is not the weak move of reading existence off
what the theory merely fails to forbid; it is the strong one of reading
it off what the theory positively contains --- one-way functions in the
pseudorandom case, proved theorems in the structural cases. The
existence is fixed by that content, and exhibiting a separate witness in
each case would add nothing the standard theory does not already own.

The two structural reasons are established elsewhere in this paper on
their own terms and only invoked here; the pseudorandom case, the two
structural reasons, any further reason, and the theory's power-blind
permission over all of them are kept distinct, the same discipline
applied throughout.
\end{remarknn}

Both cases above are instances of a single notion the classical theory
never named, though cryptographers have long used a narrower special
case of it under the name \emph{computational indistinguishability}.
Naming it is not bookkeeping: without a word for it, there is no way to
say, precisely and without contradiction, ``this algorithm's output is
real, correct, and meaningful, merely unverifiable by anyone in this
observer class'' --- the sentence has nowhere to attach, and the two
questions Remark~\ref{rem:omega-conflation} separates collapse back into
one by default, exactly the error $\Omega$ and
Theorem~\ref{thm:every-decision-problem} were needed to expose. The
definition below exists to block that collapse: it gives the vocabulary
to state, of a real algorithm accepted by every classical criterion,
that its output is opaque to some observers without that opacity being
mistaken, even silently, for absence of meaning. In plain terms, before
the formal statement: an algorithm is \emph{usable}, relative to a class
of observers, exactly when its output is \emph{not} indistinguishable
from pure random noise to everyone in that class --- at least one
observer in the class can tell it apart from randomness. It is
\emph{opaque}, relative to that same class, when its output looks like
noise to every observer in it. Nothing is ever claimed absolutely usable
or absolutely opaque; the class is always named.

\begin{definition}[$\mathcal C$-usable algorithm]
\label{def:usable-algorithm}
Fix a class $\mathcal C$ of observers (tests), each a procedure taking a
string and returning a verdict. A valid algorithm $A$ (in the ordinary,
classical sense: a total or partial computable function, correctly
computing whatever it is specified to compute) is
\emph{$\mathcal C$-usable} if some observer in $\mathcal C$ distinguishes
$A$'s output from a uniformly random string of the same length;
otherwise $A$ is \emph{$\mathcal C$-opaque}. This is a property of the
pair $(A,\mathcal C)$, not of $A$ alone: the same algorithm can be
$\mathcal C$-usable for one class and $\mathcal C'$-opaque for another.
\end{definition}

\begin{remarknn}[What this adds, what it leaves unchanged, and how far
it reaches]
\label{rem:usable-not-redefinition}
Definition~\ref{def:usable-algorithm} adds a name for something the
classical theory left implicit and unparametrized: usability is always
relative to $\mathcal C$, never absolute, in exactly the sense every
other notion in this paper has been relative (a frame, a selector, an
oracle). What it leaves unchanged is ``algorithm'' itself: an
algorithm's validity, in the classical sense, still depends only on
whether it correctly computes its specified function, regardless of any
observer class. Making ``algorithm'' itself depend on $\mathcal C$,
rather than adding a separate relative notion beside it, would have
risked exactly the error Remark~\ref{rem:omega-not-computable} warns
against --- treating a non-computable mathematical object as an
algorithm's output that had somehow stopped qualifying --- and the
definition's wording avoids it: ``some observer in $\mathcal C$
distinguishes this from a uniformly random string'' makes sense verbatim
for any fixed mathematical object (a specific real number's bit
sequence, a specific string), whether or not any algorithm produces it,
since nothing in the wording requires $A$'s output to come from a
halting computation, only that it name a specific string to test.
$\Omega$ (Remark~\ref{rem:omega-not-computable}) is exactly such a case:
not an algorithm's output, but an object with a well-defined
$\mathcal C_{\mathrm{comp}}$-usability profile all the same, studied
identically to the algorithmic case below.
\end{remarknn}

\begin{remarknn}[The whole point, in one place: this definition was
already there, unnamed]
\label{rem:usable-was-already-there}
A $\mathcal C$-usable algorithm is not a new kind of algorithm, built by
relaxing or strengthening Turing's 1936 definition; it is an ordinary
algorithm, unchanged, that happens to also satisfy one further,
independently checkable property. $\mathcal C$-usability is, in the
precise sense of the word, a \emph{restriction}: the
$\mathcal C$-usable algorithms are a subset of all algorithms, singled
out after the fact by a criterion --- does some test in $\mathcal C$
tell this output from noise --- that
Remark~\ref{rem:church-turing-root} already showed the classical
definition never asks about. This is why the distinction was already
there, latently, before this section named it: since classical
computability theory accepts an algorithm's validity regardless of how
its output looks, it was, all along, silently accepting both kinds at
once, usable and opaque alike, without a name to tell them apart. Naming
$\mathcal C$-usability adds nothing to what counts as an algorithm and
takes nothing away; it draws, after the fact, a line through a
distinction the classical theory's own silence had already made room
for. This is what makes the definition worth having rather than an
arbitrary addition: it does not compete with the classical theory, it
reads a distinction out of what that theory was quietly permitting the
entire time.
\end{remarknn}

\begin{remarknn}[Does the restriction cost the same for every
$\mathcal C$? It does not, and the difference is substantial]
\label{rem:restriction-not-free}
Restricting attention to $\mathcal C$-usable algorithms takes nothing
from classical computability theory and adds nothing to it --- but only
for one specific choice of $\mathcal C$. Whether this holds for every
$\mathcal C$ is worth asking directly, since the answer is no.

For $\mathcal C=\mathcal C_{\mathrm{comp}}$, the restriction really is
free, exactly as Remark~\ref{rem:no-full-opacity-for-algorithms} shows:
an unbounded observer can always redo the same computation and check the
output matches, so nothing is thrown away when the watcher is allowed
unlimited time. Restrict instead to $\mathcal C_{\mathrm{poly}}$-usable
algorithms --- watchers confined to a reasonable, polynomial amount of
time, the kind any real computer has --- and the picture changes
completely. An enormously useful class of real algorithms exists whose
whole purpose is to take a short, easily produced secret and stretch it
into a long stream of bits that no efficient watcher, however cleverly
built, can tell apart from a fair coin flipped over and over: this is
exactly what makes a stream of digital keystream, the raw material
behind most everyday encryption, safe to use. Such a construction is a
\emph{pseudorandom generator}, and calling one \emph{secure} means
precisely that no polynomial-time observer can win at telling its output
from true randomness --- the property this paper has been calling
$\mathcal C_{\mathrm{poly}}$-opaque, under a different name. So the
moment ``algorithm'' is redefined to exclude anything
$\mathcal C_{\mathrm{poly}}$-opaque, every secure pseudorandom generator
stops counting as an algorithm at all --- absurd on its own, and with a
precise, provable cost behind it. Building a secure pseudorandom
generator is exactly as hard, mathematically, as building a
\emph{one-way function}~\cite{DiffieHellman1976}: a rule easy to
compute in one direction and, once computed, practically impossible to
run backwards --- mixing two
colours of paint is the everyday version, trivial to do and, once done,
not undoable by looking at the result. The
Håstad--Impagliazzo--Levin--Luby theorem~\cite{HILL1999} makes the
equivalence exact: a secure pseudorandom generator exists if and only if
a one-way function does. So restricting ``algorithm'' to
$\mathcal C_{\mathrm{poly}}$-usable outputs is not a harmless
bookkeeping choice parallel to the $\mathcal C_{\mathrm{comp}}$ case; it
is equivalent, in effect, to declaring outright that no one-way function
exists anywhere --- that paint-mixing can always be undone by someone
clever and fast enough. In the vocabulary of Impagliazzo's worlds, used
elsewhere in this paper for Cell~(d) but naming a different world here
--- Cell~(d) sits at the opposite, strongest end, Cryptomania
(Proposition~\ref{prop:cell-d}), while this restriction sits at the weak
end --- that declaration places us in Pessiland or below,
per~\cite{Buono2026ow}'s own account. Nobody has proved that assumption
false, but nobody has proved it true either, and it is the single most
consequential open question modern cryptography is built on --- most of
the field actively expects it to be false, and has built an entire
applied science on that expectation.

The two halves of this remark are not two examples of the same kind of
cost. That the $\mathcal C_{\mathrm{comp}}$ restriction is free is a
\emph{theorem}: an algorithm blind regardless of time and computational
power does not exist, and provably so --- not because none has been
found, but because the recomputation argument rules it out for every
case at once. That the $\mathcal C_{\mathrm{poly}}$ restriction is
costly rests on no proof in either direction: whether a secure
pseudorandom generator, blind only to \emph{efficient} watchers, exists
is exactly as open as whether a one-way function does. One side of this
remark is settled; the other is one of the genuinely open questions of
the field, and it matters that a reader come away knowing which is
which, not with the two blended into a single ``it depends''. The
lesson is not about $\mathcal C_{\mathrm{poly}}$ specifically: ``restrict
algorithm to its usable part'' is not one move with one fixed price,
free at the very top of the hierarchy and, one step down, equivalent to
deciding one of the deepest open problems in the field by declaration
rather than by proof.
\end{remarknn}

\begin{remarknn}[``Blind regardless of resources'' is not one mechanism
but several, and they must be kept apart]
\label{rem:three-blindnesses}
The $\mathcal C_{\mathrm{comp}}$ case above is a blindness that holds no
matter how much computational power a watcher is given. This is not the
only place in this paper where power stops mattering, and the other
place gets there for a completely different reason --- so different that
calling both ``absolute blindness'' would blur something worth keeping
sharp.

Take $\Obot$ (Definition~\ref{def:structural-observer}): a system fed
only $\Obot(x)=\star$ cannot recognise anything nontrivial, regardless
of computational power (Proposition~\ref{prop:obot-trivial}). The static
and Dynamic SIP results, and Categories~1 and~3 of the orbital machine's
taxonomy (Propositions~\ref{prop:category1} and~\ref{prop:category3};
Category~2 is a different phenomenon --- underdetermination rather than
blindness --- deliberately excluded here, exactly as
Remark~\ref{rem:omega-vs-category2} insists), all share this shape. In
every one, no amount of computational power helps, for a reason that has
nothing to do with computation: the information was never carried
through in the first place. $\Obot$ maps every input to the same symbol,
so nothing distinguishing two inputs ever reaches what receives its
output; the frozen constants of the static SIP are never touched by any
rule, so no derivation brings their order into view; a beaver frame has
only one point to send anything to, so no distinction survives being
evaluated there. This is blindness by \emph{destruction}: the channel
discards the information, once and for all, before any watcher --- powerful
or not --- gets a turn.

One qualification keeps this from being read too rigidly. The SIP is
placed under destruction here for the case it was first proved in ---
the invariant on the \emph{input}, the frozen constants never carried
through --- but the same principle produces unreachability, not
destruction, when it acts on an \emph{output} instead. An output is
itself a syntactic object; a semantic invariant protected in it is
genuinely present, carried through in full, yet inaccessible to any
syntactic inspection of it (the reopening of
Section~\ref{sec:rice-output}, where this is separated from the distinct
question of decidability in Remark~\ref{rem:output-two-sources}).
There the information is not destroyed but unreachable, and the SIP is
the reason, exactly as it is the reason on the input side. So the SIP is
not confined to one column: it is destruction when the protected
invariant never enters, and unreachability when it enters but syntax
cannot see it --- one principle, appearing wherever a semantic invariant
is put beyond syntactic reach, by either route.

$\Omega$'s blindness to $\mathcal C_{\mathrm{comp}}$ adds a
\emph{second}, independent route to unreachability on top of that one.
The information is not destroyed; $\Omega$'s bits are a specific, fixed,
fully determined sequence, present in the strongest sense
(Proposition~\ref{prop:omega}(i)). What makes it unreachable to every
computable watcher is not that the channel discarded anything, but that
recovering it requires $\Omega$ itself to be computed, and $\Omega$
cannot be. This is unreachability by \emph{non-computability}, a
sharper obstruction than the syntactic inaccessibility just described
and not to be conflated with it: $\Omega$'s answer is beyond syntactic
reach in the SIP's sense \emph{and}, further, beyond any computable
reach at all. It is exactly why no algorithm can ever land here, while
$\Obot$-style blindness by destruction is something an ordinary,
mundane algorithm produces routinely, on purpose, whenever it is built
to discard information. Three mechanisms, then, not one generic ``totally
blind'', sit side by side in this paper: blindness by \emph{limited
resources} (the pseudorandom generator, contingent on an open
conjecture), blindness by \emph{destruction} ($\Obot$, the frozen SIP
constants, Categories~1 and~3, proved and unconditional), and blindness
by \emph{unreachability} --- itself of two kinds, the syntactic
inaccessibility of an invariant present in an output (the SIP acting on
the output side, Remark~\ref{rem:output-two-sources}) and, sharper
still, $\Omega$'s non-computability (proved and unconditional,
available only to objects no algorithm computes). Keeping these
mechanisms apart is the same discipline this paper has applied to every
other pair of things that resemble each other without being the same.
\end{remarknn}

$\Omega$ is remarkable, but is it a freak accident of one particular
number, or does the same trick work everywhere? It works everywhere.
Take any decision problem you like --- any yes/no question a computer
could ever be programmed to answer, however mundane --- and there is a
way to dress its answer up so that it, too, becomes indistinguishable
from noise, while staying completely recoverable to whoever holds the
key.

\begin{theorem}[Every decision problem admits a noise-indistinguishable,
fully meaningful pairing]
\label{thm:every-decision-problem}
Let $D=(d_1,d_2,d_3,\dots)$ be the answer sequence of \emph{any}
computable decision problem (an algorithm deciding, for each
$i\in\mathbb N$, some yes/no question indexed by $i$). Define
$X:=D\oplus\Omega$, the bitwise XOR of $D$ with Chaitin's $\Omega$ from
Proposition~\ref{prop:omega} (written $X$, not $E$, to avoid any
collision with this paper's unrelated use of $E$ for an explicit
encoding in Sections~\ref{sec:interpreted-machine}
and~\ref{sec:orbital}). Then: (i) $X$ is fully meaningful ---
given oracle access to $\Omega$, $D$ is recovered exactly, bit for
bit, via $D=X\oplus\Omega$; and (ii) $X$ is Martin-Löf random, hence
$\mathcal C_{\mathrm{comp}}$-opaque (Definition~\ref{def:usable-algorithm}),
regardless of which decision problem $D$ was. As with $\Omega$ itself
(Remark~\ref{rem:omega-not-computable}), $X$ is not computable (if it
were, $\Omega=X\oplus D$ would be computable too, $D$ being computable,
contradicting Proposition~\ref{prop:omega}): this theorem is a
mathematical pairing fact about well-defined sequences, not a
construction any algorithm carries out.
\end{theorem}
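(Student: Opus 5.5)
Part (i) and the non-computability clause are immediate; the substance is part (ii). For (i), XOR is an involution on each coordinate, so $X\oplus\Omega=(D\oplus\Omega)\oplus\Omega=D$ bit for bit. Given $\Omega$ as an oracle, the map $X\mapsto X\oplus\Omega$ is computable relative to that oracle, and $D$ is recovered exactly. For non-computability, note that $D$ is computable by hypothesis. If $X$ were computable, then $\Omega=X\oplus D$ would be computable, because the bitwise XOR of two computable sequences is computable. That contradicts Remark~\ref{rem:omega-not-computable}.

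For (ii) I will use the fact that Martin-Löf randomness is invariant under computable measure-preserving transformations of Cantor space. The map $\tau_D:Y\mapsto Y\oplus D$ is computable, since $D$ is. It is its own inverse. It preserves the uniform (Lebesgue) measure, since on each length-$n$ prefix it permutes the $2^n$ cylinders. Now let $(U_k)$ be a Martin-Löf test that captures $X$: a uniformly c.e.\ sequence of open sets with $\mu(U_k)\le 2^{-k}$ and $X\in\bigcap_k U_k$. Pull it back to $V_k:=\tau_D^{-1}(U_k)$. Concretely, $V_k$ is obtained by XOR-ing every enumerated cylinder $[\sigma]$ of $U_k$ with $D\restriction|\sigma|$. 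This keeps the $V_k$ uniformly c.e. Measure preservation gives $\mu(V_k)=\mu(U_k)\le 2^{-k}$. Also $\Omega=\tau_D(X)\in\bigcap_k V_k$. So $(V_k)$ would be a Martin-Löf test capturing $\Omega$, contradicting Proposition~\ref{prop:omega}(ii). Hence $X$ is Martin-Löf random.

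It remains to pass from Martin-Löf randomness to $\mathcal C_{\mathrm{comp}}$-opacity in the sense of Definition~\ref{def:usable-algorithm}. Here I read ``computable observer distinguishing the string from uniform'' the same way Proposition~\ref{prop:omega}(ii) does, as a computable statistical test. Every such test is subsumed by the universal Martin-Löf test, so a Martin-Löf random $X$ passes all of them. As a remark, the same conclusion also follows from the Levin--Schnorr characterization (Remark~\ref{rem:kolmogorov-root}). Since $D$ is computable, $K(X\restriction n)\ge K(\Omega\restriction n)-O(1)$, where the constant depends only on the program for $D$. Incompressibility therefore transfers directly. I would include this as a cross-check.

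The main obstacle is this last interpretive step, not the measure-theoretic core. Definition~\ref{def:usable-algorithm} speaks informally of observers ``distinguishing'' an output from a random string of the same length, while Martin-Löf randomness is a statement about infinite sequences. I will state explicitly that for infinite objects such as $\Omega$ and $X$, Remark~\ref{rem:usable-not-redefinition} reads $\mathcal C_{\mathrm{comp}}$-opacity as passing every computable statistical test, and I will invoke exactly that reading. Any stronger, finite-prefix reading would need a separate argument.
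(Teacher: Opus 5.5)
Your proof is correct and follows the paper's route. Part (ii) transports any test capturing $X$ through the computable involution $Y\mapsto Y\oplus D$ to a test capturing $\Omega$, contradicting Proposition~\ref{prop:omega}(ii), and part (i) and the non-computability clause rest on the same XOR identities. Your version is more careful than the paper's sketch: you check explicitly that the pulled-back sets stay uniformly c.e.\ with measure still $\le 2^{-k}$, and you rightly flag that passing from Martin-L\"of randomness to $\mathcal C_{\mathrm{comp}}$-opacity depends on reading Definition~\ref{def:usable-algorithm}, for infinite sequences, as ``passes every computable statistical test''.
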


\begin{proof}
(i) is immediate: XOR is its own inverse, so
$X\oplus\Omega=(D\oplus\Omega)\oplus\Omega=D$. (ii): Martin-Löf
randomness is preserved under bitwise XOR with any \emph{computable}
sequence. If some computable statistical test $T$ detected
non-randomness in $X=D\oplus\Omega$, then, since $D$ is computable,
$T':=T$ composed with ``XOR the input with $D$'s bits'' is itself a
computable test, and $T'$ applied to $\Omega$ recovers exactly $T$'s
verdict on $X$ (since $\Omega=X\oplus D$), giving a computable test
detecting non-randomness in $\Omega$ --- contradicting
Proposition~\ref{prop:omega}(ii). So no such $T$ exists: $X$ is
Martin-Löf random.
\end{proof}

\begin{remarknn}[What this theorem does, and does not, claim]
\label{rem:every-decision-precision}
Theorem~\ref{thm:every-decision-problem} answers directly why
Definition~\ref{def:usable-algorithm} is not motivated by two narrow
examples: the phenomenon is not exotic to $\Omega$ and secure
pseudorandom generators --- every computable decision problem
whatsoever, including ordinary combinatorial and decision algorithms of
every kind, can be \emph{paired}, as a mathematical fact, with an
equally meaningful, fully Martin-Löf-random object, via a single fixed
transformation. This pairing, exactly like $\Omega$ itself
(Remark~\ref{rem:omega-not-computable}), is not something any algorithm
performs --- it requires $\Omega$, which is not computable --- so the
theorem does \emph{not} claim any decision algorithm can be made to
directly output such a pairing; the genuinely algorithmic witness of the
phenomenon remains the pseudorandom-generator family
(Remark~\ref{rem:omega-crypto-family}). What the theorem does claim is
narrower and still worth having: the mathematical phenomenon
Proposition~\ref{prop:omega} exhibits for one specific case is
available, as a fact about objects, for every decision problem without
exception. It does not claim that decision algorithms typically produce
noise-indistinguishable output (the untransformed answer sequence $D$ of
an ordinary algorithm --- a satisfying assignment, a shortest path, a
primality verdict --- is usually far from noise-indistinguishable), nor
that the pairing is itself computable. It is an existence statement
about a mathematical relationship, not a claim about typical algorithmic
behaviour or about what any algorithm can be made to output --- the same
distinction Remark~\ref{rem:omega-not-computable} drew for $\Omega$,
now extended to every decision problem.
\end{remarknn}

\begin{proposition}[A usability profile, not a single verdict --- for
objects and for algorithms alike]
\label{prop:usability-profile}
Let $\mathcal C_{\mathrm{comp}}$ be the class of all computable
observers and $\mathcal C_{\mathrm{poly}}\subsetneq\mathcal C_{\mathrm{comp}}$
the class of polynomial-time observers. Three examples show that
usability is a profile across classes, not a single verdict.

First, take $\Omega_U$ for any $U$ as in
Proposition~\ref{prop:omega-family}. This is an object, not an
algorithm's output, but Remark~\ref{rem:usable-not-redefinition}
already covers this case under Definition~\ref{def:usable-algorithm}.
By Proposition~\ref{prop:omega}(ii), $\Omega_U$ is
$\mathcal C_{\mathrm{comp}}$-opaque, hence also
$\mathcal C_{\mathrm{poly}}$-opaque. Yet it is $\{O_F\}$-usable for a
frame $F$ with oracle access to $\Omega_U$
(Section~\ref{sec:orbital-as-interpreted}).

Second, take a genuine algorithm: one computing $G(s)$ for a secure
pseudorandom generator $G$ (Remark~\ref{rem:omega-crypto-family}).
This is $\mathcal C_{\mathrm{poly}}$-opaque, conditionally on $G$'s
security. Yet it is $\mathcal C_{\mathrm{comp}}$-usable: a single
unbounded search, either over all seeds or over $\mathrm{Image}(G)$,
distinguishes it computably, though not efficiently.

Third, take an ordinary algorithm producing, say, a sorted list. This
is $\mathcal C_{\mathrm{poly}}$-usable directly, since sortedness is a
polynomial-time-checkable property.

No single one of $\mathcal C_{\mathrm{comp}}$, $\mathcal C_{\mathrm{poly}}$,
or a specific oracle class settles usability once and for all: each
object or algorithm has a profile across classes, not a verdict, and
comparing the first example against the second and third shows that
this profile is worth tracking whether or not any algorithm is behind
the object in question.
\end{proposition}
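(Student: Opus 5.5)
The plan is to verify the three examples one at a time against Definition~\ref{def:usable-algorithm}, and then draw the conclusion by comparing them. Throughout I use two inclusions. First, $\mathcal C$-opacity is inherited downward: if $\mathcal C'\subseteq\mathcal C$ and no test in $\mathcal C$ distinguishes, then none in $\mathcal C'$ does. Second, usability is inherited upward. Both follow directly from the existential form of the definition. I read ``distinguishes $A$'s output from a uniformly random string'' distributionally for finite outputs (advantage bounded away from chance) and as failure of a Martin-L\"of test for infinite sequences such as $\Omega_U$. I would state this reading once at the start, so that the three examples are measured by the same yardstick.

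\emph{First example.} By Proposition~\ref{prop:omega-family}, each $\Omega_U$ satisfies Proposition~\ref{prop:omega}(ii), so no computable test rejects it. That makes it $\mathcal C_{\mathrm{comp}}$-opaque, and by downward inheritance also $\mathcal C_{\mathrm{poly}}$-opaque. For the usable half I would take the observer that, given oracle access to $\Omega_U$, compares the input with the oracle bits prefix by prefix. This observer accepts exactly $\Omega_U$ and rejects a uniformly random sequence with probability~$1$. To fit the statement's $\{O_F\}$ phrasing, I would note that $F$ here must carry $\Omega_U$ as an oracle. It is then not a computable frame in $\mathcal F_{\mathrm{sp}}$, but an oracle-enriched one in the sense the proposition explicitly allows.

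\emph{Second example.} $\mathcal C_{\mathrm{poly}}$-opacity is the definition of security of $G$ (Remark~\ref{rem:omega-crypto-family}), so it is conditional on the existence of one-way functions via~\cite{HILL1999}. For $\mathcal C_{\mathrm{comp}}$-usability I would give the computable test ``accept iff $y\in\mathrm{Image}(G)$'', which enumerates all $2^k$ seeds. On $G(s)$ it accepts with probability $1$. On a uniform string it accepts with probability at most $2^{k-m}\le 1/2$ once $m>k$. The advantage is therefore at least $1/2$, and this test is computable but exponential-time.

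\emph{Third example.} The test ``is the list sorted'' runs in linear time. It accepts the output of a sorting algorithm with probability $1$, and accepts a uniformly random list of $n\ge2$ distinct keys with probability $1/n!$. That gives $\mathcal C_{\mathrm{poly}}$-usability, and usability for $\mathcal C_{\mathrm{comp}}$ follows by upward inheritance.

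The concluding claim then follows by comparison. The first two examples receive opposite verdicts under $\mathcal C_{\mathrm{comp}}$, and the first receives different verdicts under $\mathcal C_{\mathrm{comp}}$ and under the oracle class. So no single class determines usability for every object. The main obstacle I expect is the first example's oracle-usability step. That is where the definition's uniform-string comparison has to be made precise for an infinite, non-computable object, and where the mismatch between $\{O_F\}$ and the computable frames of Section~\ref{sec:orbital-as-interpreted} has to be stated honestly rather than glossed over.
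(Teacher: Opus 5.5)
Your proposal is correct and follows the same route as the paper, which justifies this proposition inline rather than in a separate proof: Martin-L\"of randomness gives $\mathcal C_{\mathrm{comp}}$-opacity of $\Omega_U$, which passes down to $\mathcal C_{\mathrm{poly}}$; an observer holding $\Omega_U$ as oracle gives usability; security of $G$ gives $\mathcal C_{\mathrm{poly}}$-opacity, while exhaustive search over seeds or $\mathrm{Image}(G)$ gives $\mathcal C_{\mathrm{comp}}$-usability; and a polynomial-time sortedness check handles the third example --- your explicit advantage bounds, and your remark that the $\Omega_U$-observer is not the computable $O_F$ of Definition~\ref{def:semantic-oracle}, make precise what the paper leaves terse. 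One small correction: Definition~\ref{def:usable-algorithm} compares against a uniformly random string of the same length, which may repeat keys, so the sortedness test does not accept it with probability exactly $1/n!$, but for $n\ge 2$ keys of any fixed width it accepts with probability at most $3/4$, so a constant advantage, and your conclusion, remain.
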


\begin{proposition}[Usability is monotone in the observer class]
\label{prop:usability-monotone}
If $\mathcal C\subseteq\mathcal C'$, then $\mathcal C$-usable implies
$\mathcal C'$-usable (equivalently, $\mathcal C'$-opaque implies
$\mathcal C$-opaque).
\end{proposition}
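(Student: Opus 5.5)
The argument is a direct unwinding of Definition~\ref{def:usable-algorithm}, which is existential in the observer class. I would prove the first form, that $\mathcal C$-usability implies $\mathcal C'$-usability, and then obtain the parenthetical form as its contrapositive. There are no cases to split and no construction is needed. The only content is that enlarging the class of available tests can only enlarge the set of distinguishable objects.

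The steps are as follows.

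\emph{Step 1.} Assume $\mathcal C\subseteq\mathcal C'$ and that $A$ is $\mathcal C$-usable. By Definition~\ref{def:usable-algorithm}, there is some observer $T\in\mathcal C$ that distinguishes $A$'s output from a uniformly random string of the same length.

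\emph{Step 2.} Since $\mathcal C\subseteq\mathcal C'$, the same $T$ lies in $\mathcal C'$. Its verdict on $A$'s output is unchanged, because being a distinguisher is a property of the pair ($T$, the string or its distribution) alone. It does not depend on which class $T$ is regarded as belonging to.

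\emph{Step 3.} Hence some observer in $\mathcal C'$ distinguishes $A$'s output from uniform, so $A$ is $\mathcal C'$-usable.

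\emph{Step 4.} For the equivalent form, I would take the contrapositive. If $A$ is $\mathcal C'$-opaque, it is not $\mathcal C'$-usable. By Steps 1--3 it is then not $\mathcal C$-usable, that is, it is $\mathcal C$-opaque.

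By Remark~\ref{rem:usable-not-redefinition}, the same argument applies verbatim when $A$'s ``output'' is a fixed mathematical object such as $\Omega_U$, since nothing in Steps 1--4 uses the fact that the string was produced by a halting computation.

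The only point needing care, and the closest thing to an obstacle, is Step~2. I must check that ``distinguishes'' in Definition~\ref{def:usable-algorithm} carries no hidden dependence on the ambient class. An example of such a dependence would be a success threshold or advantage notion calibrated to $\mathcal C$, as with the negligible advantage implicit for $\mathcal C_{\mathrm{poly}}$. I would fix, once and for all, a single notion of ``$T$ distinguishes $x$ from uniform'' that is independent of the class, and treat both $\mathcal C$ and $\mathcal C'$ as subsets of one ambient set of tests. Under that reading the proof is immediate. As a sanity check, I would note that the result is consistent with Proposition~\ref{prop:usability-profile}: $\mathcal C_{\mathrm{poly}}\subseteq\mathcal C_{\mathrm{comp}}$, $\Omega_U$ is opaque to both, the PRG output is $\mathcal C_{\mathrm{poly}}$-opaque but $\mathcal C_{\mathrm{comp}}$-usable, and the sorted list is usable for both. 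None of these violates monotonicity.
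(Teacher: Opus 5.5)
Your proof is correct and is exactly the paper's argument: the witnessing observer in $\mathcal C$ also lies in $\mathcal C'\supseteq\mathcal C$, so $\mathcal C$-usability transfers to $\mathcal C'$-usability, and the opaque form is the contrapositive. Your point that ``distinguishes'' must be read as a notion independent of the class (not, say, an advantage threshold tied to $\mathcal C$) is a sensible clarification that the paper leaves implicit.
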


\begin{proof}
If some observer in $\mathcal C$ distinguishes $A$'s output from
random, that same observer is also in $\mathcal C'\supseteq\mathcal C$,
so $A$ is $\mathcal C'$-usable too.
\end{proof}

Monotonicity lets us define the exact dual of opacity cleanly: what it
means for an algorithm's output to be usable at every level at once,
not just some.

\begin{definition}[Transparent algorithm: the dual notion]
\label{def:transparent-algorithm}
Fix a baseline class $\mathcal C_0$ (in practice,
$\mathcal C_{\mathrm{poly}}$, the weakest class this paper treats as
practically relevant). An algorithm $A$ is \emph{transparent} if it is
$\mathcal C_0$-usable. By Proposition~\ref{prop:usability-monotone}, a
transparent algorithm is automatically $\mathcal C$-usable for
\emph{every} $\mathcal C\supseteq\mathcal C_0$ considered in this
paper, up to and including $\mathcal C_{\mathrm{comp}}$: transparency
is not one more point on the profile, but a guarantee that closes off
the entire richer end of it at once.
\end{definition}

\begin{remarknn}[Why this completes the picture, and how it names
cryptographic failure precisely]
\label{rem:transparent-completes}
Definition~\ref{def:transparent-algorithm} is the exact dual of
Proposition~\ref{prop:omega}(ii) and
Remark~\ref{rem:omega-crypto-family}'s security guarantee, not a new
idea bolted on afterward: an ordinary algorithm
(Proposition~\ref{prop:usability-profile}(iii)) is transparent by
construction, and this is the overwhelmingly typical case, exactly as
Remark~\ref{rem:every-decision-precision} insists. It also gives
cryptographic insecurity its precise name in this vocabulary: a cipher
or generator is \emph{broken}, in the sense every security proof in
cryptography is trying to rule out, exactly when its output is
$\mathcal C_{\mathrm{poly}}$-usable --- transparent, not opaque. Secure
constructions (Remark~\ref{rem:omega-crypto-family}) are secure
precisely by failing to be transparent; naming transparency explicitly
makes this the same statement from the other side, not a different one.
\end{remarknn}

\begin{remarknn}[A second layer: even the suspicion of meaning is
inaccessible]
\label{rem:omega-meta}
Martin-Löf randomness is stronger than it may first appear: it rules
out not only computable procedures that would decode $\Omega$'s
content, but every computable statistical test whatsoever, including one
designed only to flag ``this sequence may be worth investigating
further''. There is, in this precise sense, no computable way even to
become suspicious that $\Omega$ differs from genuine noise, let alone to
read what it encodes. The hidden assumption named in
Remark~\ref{rem:omega-conflation} therefore has two layers, not one: the
content is inaccessible to computable observers
(Proposition~\ref{prop:omega}(ii)), and so, at the same time, is any
computable signal that content is present to be looked for.
\end{remarknn}

\begin{remarknn}[Three connections, made explicit]
\label{rem:omega-connections}
This sharpens three things already in this paper. First,
Remark~\ref{rem:verification-oracle}'s observation that most
cryptographic discussions silently assume a means of verifying correct
decryption: Proposition~\ref{prop:omega} shows the gap runs one step
deeper --- not only is there no computable way to verify a candidate
reading is correct, there can be no computable way even to suspect a
reading is called for at all. Second, the orbital machine's active
oracle (Section~\ref{sec:orbital-as-interpreted}): $O_F$ was defined for
equality queries under a computable frame, not for the halting problem,
so the connection is architectural, not an identity --- but the
architecture is exactly the one this section needs. A hypothetical
oracle with direct access to $\Omega$ resolves the halting problem
outright, in one query, precisely because it consults rather than
derives; every computable observer, however constructed, cannot, and
(Remark~\ref{rem:omega-meta}) cannot even detect that anything is being
withheld --- the precise sense, already present in
Remark~\ref{rem:active-query-adds}, in which consulting an oracle is not
a matter of degree but of kind. Third, and most directly, the root theme
of this paper, \cite{Buono2026sip}'s own title: $\Omega$ is a case where
a semantic invariant --- the complete, correct answer to the halting
problem, up to the number of bits examined --- is genuinely present and
genuinely inaccessible to any syntactic (here, computable) procedure,
independent of the term-rewriting setting the original theorem was
proved in. The mechanism differs (Martin-Löf randomness and algorithmic
information theory, not the first-symbol-clash argument of
Lemma~\ref{lem:frozen}) and is named as a distinct construction, not
claimed as the same proof; what is shared, once again, is the pattern,
not the machinery.
\end{remarknn}

\begin{remarknn}[Where this sits relative to Categories~1--3]
\label{rem:omega-vs-categories}
Categories~1--3 (Section~\ref{sec:three-ways-fail}) all presuppose an
external target $(E^*,F^*,q^*)$ is already fixed, and classify how a
candidate pair fails to recover it. The question raised here is
logically prior to that setup: given only a string, with no target
question yet posited, is there a meaningful question it answers at all?
Remark~\ref{rem:omega-meta} shows this prior question can itself be
computably undecidable in the strongest sense --- not merely hard, but
Martin-Löf-random-indistinguishable from the case where the answer is
simply no. This is not forced into the existing taxonomy, which answers
a different question; it is recorded as a further, independent
observation standing next to it.
\end{remarknn}

\subsection{Extended connections: cryptography, and every hidden
assumption found so far}
\label{sec:omega-extended}

\begin{remarknn}[Cryptography's central definition, read through
Remark~\ref{rem:omega-conflation}]
\label{rem:crypto-central}
Semantic security and indistinguishability (IND-CPA and its
relatives~\cite{GoldwasserMicali1984}) are not a peripheral application
of Remark~\ref{rem:omega-conflation}'s conflation; they are the field's
central definitions, built directly on it. A cipher is called secure
precisely when its ciphertexts are computationally indistinguishable
from random strings to every polynomial-time adversary --- exactly the
access question Remark~\ref{rem:omega-conflation} isolates, with the
ontological question (does this ciphertext, or any other
indistinguishable-from-random string an adversary might encounter,
additionally encode something beyond its intended plaintext) left
unasked --- not because it is known to be irrelevant, but because the
field's definitions were never built to distinguish the two questions.
Remark~\ref{rem:omega-crypto-family}'s conditional family shows the gap
is not hypothetical: a secure cipher's own ciphertexts are, by design,
exactly the kind of string this section is about.
\end{remarknn}

\begin{remarknn}[This correspondence is already a theorem elsewhere:
what is, and is not, duplicated here]
\label{rem:crypto-not-duplicate}
The observation that syntactic hiding and cryptographic
indistinguishability are the same phenomenon is not new with this
remark: \cite[\S5]{Buono2026obstruction} proves it as a formal theorem,
with syntactic separation identified precisely with ciphertext
indistinguishability, protected positions with commitment schemes, and
the derivation-cost lower bound of that source's Case~2 with an
adversary's negligible advantage, stated and proved with an explicit
advantage function and an unconditional bound. That source's question is
structural and quantitative: how many steps hiding costs to break. This
section's question is different, and the two do not overlap in content
even while sharing a neighbourhood: whether
computationally-indistinguishable-from-random is silently being read as
\emph{meaningless}, and whether that reading is correct. $\Omega$
answers the second question and says nothing new about the first, which
that source has already settled more rigorously than anything attempted
here.
\end{remarknn}

\begin{remarknn}[Indistinguishability from noise really does block a
real computational procedure --- the correct version, next to the
incorrect one it is easily mistaken for]
\label{rem:natural-proofs-correct-version}
A natural but mistaken instinct is to reach for something like Karp
reducibility: if $y$ looks like noise, surely no efficient procedure can
decide a fixed property of $y$ at all, and reductions built on such a
$y$ collapse. That specific claim does not hold --- membership in a
fixed set, or satisfaction of a fixed formula, is routinely checkable by
direct substitution regardless of how the witness looks statistically,
exactly as this section's own machinery already distinguishes
$\mathcal C$-opacity to a \emph{generic} test from usability by the
\emph{specific} procedure that matters. But the underlying intuition,
that noise-indistinguishability can genuinely disable a real
computational technique, is not a mistake; it is simply proved in a
different, more precise place. The Natural Proofs barrier of Razborov
and Rudich~\cite{RazborovRudich} shows that any technique for proving
circuit lower bounds
that is both \emph{large} (correct for most functions) and
\emph{constructive} (decidable in polynomial time by inspecting a
function's truth table) is defeated exactly when a secure pseudorandom
generator exists: the generator's own output, though computable and
therefore easy, is indistinguishable from a genuinely hard function to
every such technique, so the technique cannot tell the two apart and
fails to certify hardness where hardness is present. This is not a
reduction breaking; it is a specific, named class of \emph{proof
techniques} being blinded by exactly the phenomenon
Section~\ref{sec:omega} studies. It already has a home in this
paper's own account of~\cite{Buono2026obstruction}, whose
Corollary~6.3 gives an unconditional lower bound on the inspection cost
of any such technique even without assuming a PRG exists, and identifies
the barrier as observational rather than computational: not a fact about
hardness, but about which functions a constructive technique's own
limited view can and cannot tell apart from noise. The lesson worth
keeping is that where indistinguishability from noise bites, it bites a
specific target --- here, a specific class of proof techniques, not
reductions in general --- and naming that target correctly is what turns
a plausible worry into a real theorem.
\end{remarknn}

\begin{remarknn}[Steganography names the same gap as a design goal, not
a byproduct]
\label{rem:steganography}
Steganographic systems exploit Remark~\ref{rem:omega-conflation}'s gap
deliberately, as their entire purpose, in exactly the sense of Simmons's
original ``prisoners' problem''~\cite{Simmons1984}: a message is hidden
by embedding it inside a carrier (an image, an audio file, a string of
otherwise plausible-looking data) engineered to remain statistically
indistinguishable from an unmodified carrier to any computationally
bounded detector (Simmons's warden), while being, to whoever holds the
key, a complete and meaningful message. This is exactly
Proposition~\ref{prop:omega}'s structure --- meaningful and
noise-indistinguishable at once --- realised as a deliberate engineering
goal rather than discovered as a curiosity of algorithmic information
theory, further evidence, alongside
Remark~\ref{rem:omega-crypto-family}, that the phenomenon named here is
load-bearing in existing practice, not exotic.
\end{remarknn}

\begin{remarknn}[Extending Remark~\ref{rem:verification-oracle}]
\label{rem:verification-oracle-extended}
Remark~\ref{rem:verification-oracle} observed that most discussions of
ciphertext recovery silently assume some means of verifying a candidate
decryption is correct, and that removing this assumption is what let the
transformation-cascade construction achieve something close to perfect
secrecy. This section adds a further layer beneath that one: not only can
verifying \emph{which} candidate reading is correct be removed as an
assumption, but --- when the string in question is of the $\Omega$-like
kind studied here --- \emph{knowing that any reading is called for at
all} can be removed too (Remark~\ref{rem:omega-meta}).
Section~\ref{sec:cipher-example}'s cascade construction assumed an
observer already suspects a hidden message is present and searches for
the key; $\Omega$ shows a computable observer cannot even reach that
suspicion.
\end{remarknn}

\begin{remarknn}[The precise contrast with the beaver and $\Obot$]
\label{rem:omega-vs-beaver}
This section's phenomenon is not a variant of Category~3's beaver frame
or of $\Obot$'s triviality (Remark~\ref{rem:beaver-obot-celld-chain}):
it is close to their opposite, and blurring the two would misstate both.
The beaver frame and $\Obot$ collapse every input to the same output
because there is structurally nowhere for a distinction to be held ---
no information is present to lose. $\Omega$ is the reverse: the
information is fully present, in the sharpest possible sense
(Proposition~\ref{prop:omega}(i)), and is inaccessible not because it
was never encoded --- it was --- but because it cannot be reached. That
unreachability is itself layered: $\Omega$'s answer is a semantic
invariant no syntactic procedure can see (the static SIP, the reason
running through this whole paper) and, sharper still, an object no
computable procedure can decode at all. The contrast with the beaver
frame is the point to keep: there the failure is destruction, content
absent before any watcher arrives; here it is content perfectly
preserved and perfectly locked.
\end{remarknn}

\begin{remarknn}[The precise contrast with Category~2]
\label{rem:omega-vs-category2}
Category~2 (Proposition~\ref{prop:category2}) and the semantic
underdetermination it instantiates
(Theorem~\ref{thm:semantic-underdetermination}) concern a different
shape of gap again: there, multiple frames disagree about a target fact,
and \emph{several} readings are each individually legitimate, with
nothing to prefer one over another --- the question itself is
underdetermined by the syntax. Here, by contrast, \emph{exactly one}
reading is correct (the halting problem has a determinate answer, bit by
bit), and the difficulty is not underdetermination but access: the
syntax settles the question perfectly, in the sense that $\Omega$'s bits
are a specific, fixed sequence, and no computable procedure can reach
what is already settled. Underdetermination and
inaccessibility-despite-determination are two distinct failure shapes
this paper records separately, not two names for one phenomenon.
\end{remarknn}

\begin{remarknn}[$\mathcal C$-usability in the orbital machine's own
vocabulary: the right pair may simply be out of reach]
\label{rem:usable-as-orbital-reach}
There is a positive connection here too. Section~\ref{sec:three-ways-fail}
showed a target question fails under a candidate $(E,F)$ pair in exactly
three ways: the encoding cannot pose it (Category~1), the frame answers
it wrongly (Category~2), or the frame is too degenerate to discriminate
(Category~3) --- with the single correct pair recovering the intended
answer exactly. $\mathcal C$-opacity is what this looks like from a
coarser vantage point, one that does not ask about a single named pair
but about an entire \emph{class} of them at once: an object or
algorithm's output is $\mathcal C$-opaque exactly when every observer
reachable within $\mathcal C$ fails on it, for whatever reason ---
Category~1, 2, or~3 alike --- while a pair that would succeed (an
$\{O_F\}$ built from oracle access to the object itself, as
Proposition~\ref{prop:usability-profile} shows concretely for
$\Omega_U$) may exist perfectly well, just not inside $\mathcal C$.
Opacity relative to a class is not a claim that no right reading exists;
it is a claim that none of the readings actually available get it. This
is the same distinction Remark~\ref{rem:omega-vs-category2} just drew,
seen from the orbital machine's side rather than the algorithmic side:
the fact is settled, a correct pairing exists, and what varies is only
which pairings a given class $\mathcal C$ happens to reach.
\end{remarknn}

\begin{remarknn}[Statistical opacity is not the same question as genuine
unusability]
\label{rem:no-full-opacity-for-algorithms}
There is a real, narrow fact here, and a broader claim it does
\emph{not} support, and the two are worth separating with the care this
paper has applied elsewhere. The narrow fact: no genuine \emph{algorithm}
--- as opposed to a mathematical object with no algorithm behind it,
like $\Omega$ --- can be opaque to \emph{all} of
$\mathcal C_{\mathrm{comp}}$ in the specific, statistical sense of
Definition~\ref{def:usable-algorithm}, however slow or resource-hungry.
Given any computable algorithm $A$ and a fixed, known input $x$,
``recompute $A(x)$ directly and compare'' is itself a computable
observer, hence a member of $\mathcal C_{\mathrm{comp}}$, however long it
takes; it identifies $A(x)$ with certainty, and a genuinely random string
matches only by negligible coincidence. So $A(x)$ is
$\mathcal C_{\mathrm{comp}}$-usable, in this narrow statistical sense,
for every computable $A$ --- opacity to a bounded class such as
$\mathcal C_{\mathrm{poly}}$ is available to real algorithms, as the
pseudorandom-generator family shows, but opacity to the full class, \emph{in
the sense of failing every statistical test}, is not.

This narrow fact must not be read as settling the broader question the
orbital machine cares about; doing so would repeat, from the opposite
direction, exactly the error this paper spent real effort correcting:
that looking statistically unremarkable to a generic test, and being
genuinely decodable by the specific procedure that matters, are two
different properties, not one. Recomputing $A(x)$ and confirming it
matches tells an observer only that the string is not noise; it does
not, by itself, hand that observer the semantic key --- the right
$(E,F)$ pair, in this paper's vocabulary --- needed to know what $A(x)$
\emph{means}. A perfectly computable algorithm can therefore still be
genuinely unusable by everyone who actually encounters its output, in
exactly the sense Section~\ref{sec:orbital} studies: every $(E,F)$ pair
anyone actually has access to fails, by Category~1, 2, or~3, while the
one pair that would read it correctly sits unreached --- not because no
computable observer could ever verify the output is non-random, but
because verifying non-randomness and recovering meaning are not the same
task. $\Omega$ and the $X$ of Theorem~\ref{thm:every-decision-problem}
are not needed to make unusability possible; they are needed,
specifically, to make \emph{statistical} opacity to
$\mathcal C_{\mathrm{comp}}$ possible, which is a narrower and different
achievement than semantic unusability, not a stronger version of it.
\end{remarknn}

\begin{remarknn}[The complete picture, stated once, now that every piece
is in hand]
\label{rem:complete-classification}
Proposition~\ref{prop:usability-profile}'s three examples and the
argument just given draw into a single, exhaustive classification. Fix a
class $\mathcal C$. Every object with a defined usability profile
splits, by Definition~\ref{def:usable-algorithm} alone, into exactly two
groups and no others: $\mathcal C$-usable, or $\mathcal C$-opaque. This
is true by the meaning of the words, for any $\mathcal C$ whatsoever. A
second, independent question is where the object comes from: does some
algorithm --- a halting, computable procedure --- actually produce it,
or is it an output with no method that reaches it? Crossing this second
question against the first, for the specific case
$\mathcal C=\mathcal C_{\mathrm{comp}}$, collapses two of the four
combinations one might expect down to nothing:
\begin{itemize}
\item Every genuine algorithm is $\mathcal C_{\mathrm{comp}}$-usable,
  without exception, by the recomputation argument just given. A real
  algorithm can be opaque only relative to some strictly narrower class,
  such as $\mathcal C_{\mathrm{poly}}$, as the pseudorandom-generator
  example shows --- $\mathcal C_{\mathrm{comp}}$-opaque algorithms are
  not merely rare, they do not exist.
\item An object with no generating algorithm behind it at all may be
  $\mathcal C_{\mathrm{comp}}$-usable (most such objects, chosen at
  random, will be, and trivially so, since almost every string differs
  from a uniformly random one in some computably checkable way) or
  $\mathcal C_{\mathrm{comp}}$-opaque --- and only this second case, an
  object with no method producing it and no computable test reaching it
  either, is where $\Omega$ and each $\Omega_U$ actually live.
\end{itemize}
So, restricted to $\mathcal C_{\mathrm{comp}}$ specifically: the usable
algorithms are simply all algorithms; the opaque algorithms are the
empty set; and full opacity is found exclusively among objects that were
never an algorithm's output in the first place, exactly because they
have no method behind them for any observer, however patient, to run.
\end{remarknn}

\begin{remarknn}[The picture that ties all of this together]
\label{rem:one-observer-picture}
Put the last several remarks into a single picture. Imagine handing the
orbital machine exactly the right question, in exactly the right
encoding, evaluated by a frame that genuinely understands it --- and
imagine that the machine answers correctly, but writes its answer out in
a form that nobody, however equipped, can tell apart from pure noise.
The computation happened. The right pair was used. And still, no one
reading the output over the machine's shoulder can see that anything was
computed at all. Now picture, from the opposite end of this paper, a
fact with every semantic invariant sealed off from every observer this
paper has built --- except one: the observer who simply sees everything,
the complete observer $O_\top$ of the hierarchy this paper draws
on~\cite{Buono2026oh}, the limit case no real, constrained observer can
reach. These two pictures are not two findings placed side by side for
effect; they are one finding, described from its two ends. In both, the
fact is real, present, and correctly produced; in both, exactly one
reader --- the frame that already understood the question, or the
observer who is simply given everything --- can see it, and every other
reader, however patient or however powerful within the bounds this paper
has set, cannot.

The two ends are not literally the same theorem in different notation,
and the difference is worth stating precisely, since the single reader
who sees everything is a figure this paper returns to. The orbital
machine's side turns on a \emph{statistical} question --- whether an
observer's test can tell a string from noise
(Definition~\ref{def:usable-algorithm}) --- while the static SIP's side
turns on a \emph{structural} one --- how much of a string's own content
an observer's function even receives before it looks at anything
(\cite{Buono2026oh}'s observational order, $O\preceq O_\top$). A test and
an observer function are different kinds of object, and this paper has
taken care throughout not to blur that difference where it matters
(Remarks~\ref{rem:omega-vs-beaver}--\ref{rem:usable-as-orbital-reach}).
What is genuinely one and the same, across both ends, is the shape: a
fact correctly present, a single reader positioned to receive it in
full, and every other reader --- real, available, actually encountered
--- shut out, not by accident but by the exact construction that made
the fact accessible to the one reader in the first place. That single
reader who receives everything --- the frame that already understands,
the observer $O_\top$ given the whole input --- is what the very end of
this paper, in a different and personal register, calls the perfect
observer. The technical content is exactly what has just been stated: a
reader positioned to receive a fact in full while every reachable reader
is shut out. When the phrase returns at the close, it names this, not
only a figure of speech --- though what the close makes of it belongs to
that register, not this one.
\end{remarknn}

\begin{remarknn}[Where this stands among every hidden assumption found
in this paper]
\label{rem:omega-full-tally}
Counting this section alongside the recurring pattern tallied in the
conclusion (state a result, notice what it silently fixed, make that
explicit, and check rather than assume what follows), this is a further,
independent instance: standard algorithmic and cryptographic theory
silently identifies ``indistinguishable from random'' with
``meaningless'', and $\Omega$, together with the cryptographic practice
built on the same gap
(Remarks~\ref{rem:crypto-central}--\ref{rem:steganography}), shows the
identification does not hold. As with every other instance in this
paper, no false assertion is manufactured anywhere in this section:
$\Omega$ genuinely has the two properties claimed of it, both
independently proven facts from algorithmic information theory, not
constructed to force a contradiction.
\end{remarknn}

\section{The paradox of the noisy solver}
\label{sec:noise-paradox}

The sections that follow stand on their own. They use one idea the rest of
this paper has made precise --- that a computable algorithm may produce an
output indistinguishable from noise for every observer, without distinction of
computational power --- and otherwise depend on nothing above them: the
definitions, theorems, and proofs below are self-contained, and a reader could
begin here. This is the only bridge to the rest of the paper; everything else
in these sections is developed from scratch.

The same independence is kept \emph{between} the arguments that follow, and on
purpose. Several proof-paths are given, and each restates the notions it needs
--- what a decision procedure is, what it means for an output to be
indistinguishable from noise --- from the beginning, rather than referring back
to an earlier path. The repetition is deliberate: it lets each argument be read
in isolation and checked without the others, so that no path borrows its
force from any other, and a limitation in one, should there be one, does not
silently carry into the rest. A reader who notices the same definition stated
more than once is seeing this design, not an oversight.

\subsection{The problem}

Imagine a machine $\mathcal{A}$ that decides \textsc{SAT} in polynomial time,
correctly, for each instance. This machine, when queried about a satisfiable
formula $\varphi$, however, does not return anything that resembles a readable
answer: it produces a sequence of bits indistinguishable, to any possible observer,
human, artificial, or hypothetical, from pure noise. The sequence, it is claimed,
nevertheless \emph{contains} the solution: simply, no one will ever be able to extract it.

Let us formalise this scenario precisely, show the contradiction that follows,
and then tell what would happen to the concept of reduction if, in spite of
everything, it were assumed true.

\subsection{Formalisation}

\begin{definition}[Correctness]
  \label{def:noise-correct}
  An algorithm $\mathcal{A}$ is \emph{correct for \textsc{SAT}} if for every satisfiable
  boolean formula $\varphi$ on $n$ variables,
  \[
  \Pr[\varphi(\mathcal{A}(\varphi)) = 1] = 1,
  \]
  the probability being taken with respect to any internal randomness of $\mathcal{A}$.
\end{definition}

\begin{definition}[Absolute indistinguishability from noise]
\label{def:noise-indist}
Let $R$ be the uniform random variable on $\{0,1\}^n$. The output $y=\mathcal{A}(\varphi)$ is
\emph{indistinguishable from noise for every observer, always} if for every
function $D:\{0,1\}^n\to\{0,1\}$, without any computability or resource constraints,
\[
\Pr[D(\mathcal{A}(\varphi))=1] = \Pr[D(R)=1].
\]
\end{definition}

\begin{lemma}
  \label{lem:dist-equiv}
  The previous condition, required for every $D$, is equivalent to
  \[
  \mathcal{A}(\varphi) \stackrel{d}{=} R, \qquad \text{i.e.} \qquad \Pr[\mathcal{A}(\varphi)=z] = \frac{1}{2^n}\quad \forall z\in\{0,1\}^n. \tag{$\ast$}
  \]
  \end{lemma}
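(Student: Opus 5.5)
The plan is to prove the two directions separately, treating $\mathcal{A}(\varphi)$ for a fixed $\varphi$ as a random variable on the finite set $\{0,1\}^n$, with its randomness coming from $\mathcal{A}$'s internal coins. Because the domain is finite, every function $D:\{0,1\}^n\to\{0,1\}$ is the indicator of a subset $S_D=D^{-1}(1)$. Conversely, every subset $S\subseteq\{0,1\}^n$ arises as some $S_D$. Definition~\ref{def:noise-indist} imposes no computability or resource restriction on $D$, so the condition of that definition reads
\[
\Pr[\mathcal{A}(\varphi)\in S]=\Pr[R\in S]=\frac{|S|}{2^n}\qquad\text{for all } S\subseteq\{0,1\}^n .
\]
This reformulation is the only real step; both directions follow from it.

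\emph{($\ast$) implies the condition.} If $\Pr[\mathcal{A}(\varphi)=z]=2^{-n}$ for every $z$, then for any $D$ I sum over the finite set $S_D$:
\[
\Pr[D(\mathcal{A}(\varphi))=1]=\sum_{z\in S_D}\Pr[\mathcal{A}(\varphi)=z]=\frac{|S_D|}{2^n}=\Pr[D(R)=1].
\]

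\emph{The condition implies ($\ast$).} For each fixed $z\in\{0,1\}^n$, I take the point indicator $D_z(y):=[y=z]$. This is a legitimate function and needs no constraint checking, since the definition quantifies over all $D$. The hypothesis then gives $\Pr[\mathcal{A}(\varphi)=z]=\Pr[R=z]=2^{-n}$ directly, which is exactly ($\ast$).

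There is no substantial obstacle. The one point to check is the modelling hypothesis that the output of $\mathcal{A}$ on $\varphi$ has length exactly $n$. The indistinguishability condition compares $\mathcal{A}(\varphi)$ with $R$ on the same domain $\{0,1\}^n$, so if the output could have another length, $D$ would not be applied to a common domain. I would state explicitly that Definition~\ref{def:noise-indist} presupposes $\mathcal{A}(\varphi)$ takes values in $\{0,1\}^n$. I would also note that the unrestricted quantifier over $D$ is essential to the converse direction. If $D$ were restricted, for instance to computable or polynomial-time tests, the point indicators would still be efficient here, but the argument would visibly rely on them. This contrasts with Definition~\ref{def:usable-algorithm}, where the class $\mathcal C$ is restricted, and there the equivalence with equality in distribution need not hold.
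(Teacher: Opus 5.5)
Your proof is correct, but it reaches the converse through a different test than the paper uses.

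The paper argues by contrapositive with a single test. If the law of $\mathcal{A}(\varphi)$ differs from uniform, the Neyman--Pearson (likelihood-ratio) test $D^\star(z)=\mathbf{1}\bigl[\Pr[\mathcal{A}(\varphi)=z]>2^{-n}\bigr]$ has advantage exactly equal to the total-variation distance $d_{\mathrm{TV}}$, which is then positive.

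You instead use the $2^n$ point indicators $D_z$ and read off ($\ast$) one coordinate at a time. This is more elementary, since no optimality lemma is needed, and it yields the pointwise equality directly. You also state explicitly the typing presupposition $\mathcal{A}(\varphi)\in\{0,1\}^n$, which the paper leaves implicit.

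What the paper's route buys is a quantitative identity: $\sup_D\bigl|\Pr[D(\mathcal{A}(\varphi))=1]-\Pr[D(R)=1]\bigr|=d_{\mathrm{TV}}$. This carries over directly to an approximate version, where advantage at most $\varepsilon$ for every $D$ holds iff $d_{\mathrm{TV}}\le\varepsilon$. Singleton tests give only $\bigl|\Pr[\mathcal{A}(\varphi)=z]-2^{-n}\bigr|\le\varepsilon$ at each point, hence only $d_{\mathrm{TV}}\le 2^{n-1}\varepsilon$.

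One correction to your closing remark. Each $D_z$ merely compares its input with a hard-coded string, so it is polynomial-time. The unrestricted quantifier is therefore not essential for the exact ($\varepsilon=0$) equivalence: any test class containing the point indicators suffices. Unrestrictedness matters because it makes the likelihood-ratio test admissible, and that test is what the $\varepsilon>0$ version needs.
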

  \begin{proof}
  If the two distributions did not coincide, the optimal Neyman--Pearson test
  (admissible, since $D$ has no computability constraints) would distinguish them
  with non-zero advantage, contradicting the hypothesis. If they coincide, every $D$ gives
  by construction the same probability on both.
\end{proof}

\begin{definition}[The noisy solver]
$\texttt{Noise\_P\_SAT}$ is an algorithm in $\mathsf{P}$ that simultaneously
satisfies Definitions~\ref{def:noise-correct}--\ref{def:noise-indist} (hence ($\ast$)) for every satisfiable formula $\varphi$.
\end{definition}

\subsection{The theorem}

\begin{theorem}
\label{thm:noise-main}
No algorithm can simultaneously satisfy Definition~\ref{def:noise-correct} and Definition~\ref{def:noise-indist} for any
satisfiable formula $\varphi$ that is not a tautology over its own variables.
In particular, $\texttt{Noise\_P\_SAT}$ does not exist.
\end{theorem}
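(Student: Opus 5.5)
The proof will be a direct support argument: the two definitions are incompatible because correctness restricts where the output can land, while Lemma~\ref{lem:dist-equiv} forces the output to land everywhere. Fix a satisfiable formula $\varphi$ on $n$ variables that is not a tautology over its own variables, and suppose some algorithm $\mathcal A$ satisfies Definitions~\ref{def:noise-correct} and~\ref{def:noise-indist} at $\varphi$. No appeal to $\mathsf P$, to resource bounds, or to any earlier section is needed. The argument is purely distributional, so it rules out every algorithm, efficient or not.

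The first step is to apply Lemma~\ref{lem:dist-equiv}. Definition~\ref{def:noise-indist} then gives $(\ast)$: $\Pr[\mathcal A(\varphi)=z]=2^{-n}$ for every $z\in\{0,1\}^n$. The second step is to use the non-tautology hypothesis to pick a falsifying assignment $z_0$ with $\varphi(z_0)=0$. By $(\ast)$, $\Pr[\mathcal A(\varphi)=z_0]=2^{-n}>0$, and hence
\[
\Pr[\varphi(\mathcal A(\varphi))=1]\le 1-2^{-n}<1 ,
\]
which contradicts Definition~\ref{def:noise-correct}. A cleaner way to write the same step is to take the single unconstrained distinguisher $D(y):=\varphi(y)$. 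Correctness gives $\Pr[D(\mathcal A(\varphi))=1]=1$, while $\Pr[D(R)=1]=|\varphi^{-1}(1)|/2^n<1$. These two probabilities differ, which violates Definition~\ref{def:noise-indist} directly, and this version does not even need the lemma. I would give this distinguisher version as the main proof and note the support version as the equivalent reading. The statement ``$\texttt{Noise\_P\_SAT}$ does not exist'' then follows, since that object must satisfy both definitions on every satisfiable $\varphi$, including non-tautological ones such as $x_1$ with $n=1$.

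I expect the main obstacle to be bookkeeping rather than mathematics. The output length must match: Definition~\ref{def:noise-indist} compares $\mathcal A(\varphi)$ with a uniform string of length exactly $n$, and correctness reads that string as an assignment to $\varphi$'s $n$ variables. The proof should state this alignment explicitly and justify the restriction to non-tautologies. For a tautology, the uniform distribution does satisfy $\varphi$ with probability $1$, so the definitions are compatible there, which is exactly why the theorem excludes that case.
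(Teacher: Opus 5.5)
Your proof is correct and is essentially the paper's: the paper also uses the verifier $D=\varphi$ as the distinguisher, compares $\Pr[D(\mathcal A(\varphi))=1]=1$ with $\Pr[D(R)=1]=k/2^n$ to force $k=2^n$, and your support-based variant via a falsifying $z_0$ is its Proposition~\ref{prop:noise-alt}. Two of your remarks are accurate small refinements of the paper's version: Lemma~\ref{lem:dist-equiv} is not needed for the $D=\varphi$ argument, and exhibiting a concrete non-tautological formula such as $x_1$ is what makes the ``in particular'' clause apply.
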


\begin{proof}
Let $\varphi$ be satisfiable on $n$ variables, with solution set
$S_\varphi=\{z\in\{0,1\}^n : \varphi(z)=1\}$ and $k=|S_\varphi|$.
Assume, for contradiction, that $\mathcal{A}$ satisfies both definitions.

Let $D(z):=\varphi(z)$, the canonical \textsc{SAT} verifier. By Definition~\ref{def:noise-correct},
\[
\Pr[D(\mathcal{A}(\varphi))=1] = \Pr[\varphi(\mathcal{A}(\varphi))=1] = 1. \tag{1}
\]

By Lemma~\ref{lem:dist-equiv},
\[
\Pr[D(\mathcal{A}(\varphi))=1] = \Pr[D(R)=1] = \sum_{z\in S_\varphi}\Pr[R=z] = \frac{k}{2^n}. \tag{2}
\]
From (1) and (2) we have $k = 2^{n}$, which holds if and only if $\varphi$ is a
tautology---excluded by assumption. Contradiction.
\end{proof}

The hypothesis is therefore false: no algorithm can solve \textsc{SAT} in polynomial time
with an output indistinguishable from noise for every observer, always.

If there existed any algorithm $\mathcal{A}\in\mathsf{P}$ correct for \textsc{SAT},
regardless of the form of its output, then $\mathsf{P}=\mathsf{NP}$, by Definition~\ref{def:noise-correct}
alone and its membership in $\mathsf{P}$, exactly as for any other polynomial-time
solver. Theorem~\ref{thm:noise-main} precedes this juncture: \texttt{Noise\_P\_SAT} collapses
within itself, by the sole comparison between Definition~\ref{def:noise-correct} and Definition~\ref{def:noise-indist}, without
ever reaching the point at which $\mathsf{P}=\mathsf{NP}$ could be derived.

\begin{proposition}[Second proof, via negative witness]
  \label{prop:noise-alt}
  The same hypotheses as Theorem~\ref{thm:noise-main} hold. Then no algorithm
  satisfies both Definitions~\ref{def:noise-correct} and~\ref{def:noise-indist} for~$\varphi$.
\end{proposition}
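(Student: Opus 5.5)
The plan is to replace the positive verifier of Theorem~\ref{thm:noise-main} with the indicator of the \emph{complement} of the solution set. Keep the hypotheses: $\varphi$ is satisfiable on $n$ variables and not a tautology, with $S_\varphi=\{z:\varphi(z)=1\}$ and $k=|S_\varphi|$, so that $1\le k<2^n$. Assume, for contradiction, that $\mathcal A$ satisfies Definitions~\ref{def:noise-correct} and~\ref{def:noise-indist} at $\varphi$. The test will be $D'(z):=1-\varphi(z)=\neg\varphi(z)$, the \emph{negative witness}: it fires exactly on assignments that fail. Definition~\ref{def:noise-indist} places no constraints on $D$, so $D'$ is admissible. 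Indeed $D'$ is even polynomial-time, which shows the contradiction does not depend on the unbounded quantifier.

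\emph{Step 1 (correctness side).} By Definition~\ref{def:noise-correct}, $\Pr[\varphi(\mathcal A(\varphi))=1]=1$. Hence $\Pr[D'(\mathcal A(\varphi))=1]=0$: the solver never lands outside $S_\varphi$.

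\emph{Step 2 (noise side).} Apply Definition~\ref{def:noise-indist} to $D'$ directly; there is no need to pass through Lemma~\ref{lem:dist-equiv}, which keeps this path independent of the first proof. This gives $\Pr[D'(\mathcal A(\varphi))=1]=\Pr[D'(R)=1]=\Pr[R\notin S_\varphi]=(2^n-k)/2^n$.

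\emph{Step 3 (comparison).} Equating the two values gives $2^n-k=0$, that is, $k=2^n$. This says $\varphi$ is a tautology, which is excluded. So $\mathcal A$ cannot satisfy both definitions.

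There is no serious obstacle; the only point needing care is bookkeeping about hypotheses. Non-tautology is exactly what makes the complement $\{0,1\}^n\setminus S_\varphi$ nonempty, so that $R$ hits it with positive probability. Satisfiability is used only so that Definition~\ref{def:noise-correct} applies at $\varphi$. I would also remark that this second path is logically the complementary event of the first: it reaches the same contradiction through the event "output is a non-solution" rather than "output is a solution," as the "read independently" design of the section asks.
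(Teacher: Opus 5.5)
Your proof is correct, but it does not take the paper's route, and the difference is what this proposition is meant to show. The paper keeps the same test $D=\varphi$ as Theorem~\ref{thm:noise-main} and never computes $k$. It takes one falsifying assignment $z_0$, which exists because $\varphi$ is not a tautology, and bounds $\Pr[\varphi(R)=1]\le\Pr[R\neq z_0]=1-2^{-n}<1$. This contradicts the value $1$ that correctness forces. The ``negative witness'' is this single $z_0$, not a complemented test.

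You instead compute $\Pr[R\notin S_\varphi]=(2^n-k)/2^n$ exactly and close with $k=2^n$, which is the very equation of Theorem~\ref{thm:noise-main}. Since $\Pr[D'(X)=1]=1-\Pr[\varphi(X)=1]$ for any random $X$, your Steps~1--3 are the first proof with both sides subtracted from $1$. So your path is independent of the first one only in what it cites, not in its argument. You are right that Lemma~\ref{lem:dist-equiv} can be bypassed, since only one instance of Definition~\ref{def:noise-indist} is ever used.

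Your version gives the exact gap $(2^n-k)/2^n$. The paper's version avoids counting $|S_\varphi|$ at all and needs only that one wrong assignment exists. Your closing remark already contains the paper's idea. Replacing Step~3 by $\Pr[R\notin S_\varphi]\ge\Pr[R=z_0]=2^{-n}>0$ would make your argument count-free, i.e.\ the paper's argument run on the complementary event.

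One caution on your side remark. Making $D'$ polynomial-time removes the unbounded quantifier, but the proof still relies on the \emph{exact} equality in Definition~\ref{def:noise-indist}. With negligible slack, both your gap and the paper's $2^{-n}$ can themselves be negligible, as the paper notes in the proof of Theorem~\ref{thm:unified}.
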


\begin{proof}
Since $\varphi$ is not a tautology, there is an assignment
$z_0\in\{0,1\}^n$ with $\varphi(z_0)=0$. Consider $D=\varphi$. From Definition~\ref{def:noise-correct},
$\Pr[D(\mathcal{A}(\varphi))=1]=1$. From Lemma~\ref{lem:dist-equiv},
\[
\Pr[D(R)=1] = \Pr[\varphi(R)=1] \;\le\; \Pr[R\neq z_0] \;=\; 1-\frac{1}{2^n} \;<\;1,
\]
since $\{\varphi(R)=1\}$ is disjoint from $\{R=z_0\}$, an event of probability $2^{-n}>0$. Thus $1 > \Pr[D(R)=1]$, contradicting Definition~\ref{def:noise-indist}.
\end{proof}

This second approach reaches the same conclusion by using only the existence of a
single incorrect assignment, without needing the exact count of solutions:
the machine that promises to always give a correct solution betrays itself because
it suffices that \emph{one} wrong answer exists.

\subsection{Where exactly does the theorem fit in?}

Two variants of Definition~\ref{def:noise-indist}, one weaker and one stronger,
help to see precisely where Theorem~\ref{thm:noise-main} fits in.

\emph{Computational indistinguishability}, requiring the equality $(\ast)$ to hold only
against polynomial-time computable tests $D$, is the notion concretely realised
every day by cryptographic pseudorandom generators, without contradiction: a PRG
exists, produces deterministic output, and is nonetheless indistinguishable from
random for every efficient adversary. For \textsc{SAT}, however, the verifier
$\varphi$ is itself computable in polynomial time: it therefore falls among
the tests permitted even under this weaker constraint, and the proof of
Theorem~\ref{thm:noise-main} applies verbatim.

The \emph{Martin-L\"of randomness} assumption, which requires that $y$ be algorithmically
incompressible, also yields a contradiction with ``$y$ is the output of a fixed
algorithm $\mathcal{A}$'', by the Kolmogorov--Chaitin theorem. However, this
contradiction holds for the output of \emph{any} algorithm, not just a
hypothetical \textsc{SAT} solver: replacing \textsc{SAT} with any problem,
the same contradiction remains identical, word for word. It is a true fact,
already known since the 1960s, that does not rely on the specific structure
of \textsc{SAT}.

Theorem~\ref{thm:noise-main} occupies the exact space between these two extremes: it
uses $k=|S_\varphi|$, a quantity that depends on the combinatorial structure
of $\varphi$, and holds precisely because \textsc{SAT} belongs to $\mathsf{NP}$
with a non-trivial public verifier. It remains where generic computational
indistinguishability vanishes, and remains tied to \textsc{SAT} where Martin-L\"of
becomes a universal fact: the contradiction truly belongs to \textsc{SAT}, with
the right strength and the right specificity.

\subsection{The logical ghost}
\label{sec:ghost}

The hypothesis is false, and the proof establishes this beyond doubt. It is worth
now considering what would happen, concretely, if someone were to insist on
assuming it anyway, because the answer sheds light on something precise about
the nature of reductions.

Let us fix the standard Cook--Levin/Karp reduction from \textsc{SAT} to \textsc{CLIQUE}:
given $\varphi$ with $m$ clauses, we construct the graph $G_\varphi$ whose vertices
are pairs $(\text{literal}, \text{clause})$, with an edge between $(l,c)$ and $(l',c')$ if and only
if $c\neq c'$ and $l\neq \neg l'$. This function, $f_{\textsc{SAT}\to\textsc{CLIQUE}}$,
is computable in polynomial time from the syntax of $\varphi$ alone:
it satisfies $\varphi\in\textsc{SAT}\iff G_\varphi$ has a clique of size $m$, and it
does so by selecting, for each clause, a literal made true by any satisfying
assignment, an object whose existence is guaranteed solely by the satisfiability
of $\varphi$, independently of any algorithm.

Let us now imagine that we wish to use $\mathcal{A}$, under this absurd hypothesis,
to \emph{extract} that clique: we construct an extractor $E$ that reads
$\mathcal{A}(\varphi)$ and attempts to translate it into a subset of vertices
of $G_\varphi$.

\begin{proposition}[The logical ghost]
\label{prop:ghost}
Under the contradictory assumption, for every extractor $E$,
\[
  \Pr\big[E(\mathcal{A}(\varphi)) \text{ is a valid clique of size } m\big]
  = \Pr\big[E(R) \text{ is a valid clique of size } m\big].
\]
\end{proposition}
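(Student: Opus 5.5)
The plan is to reduce the statement to Lemma~\ref{lem:dist-equiv} by composing the extractor with the validity check and treating the result as a single distinguisher. Fix $\varphi$ on $n$ variables with $m$ clauses, and let $G_\varphi=f_{\textsc{SAT}\to\textsc{CLIQUE}}(\varphi)$. For an arbitrary extractor $E$, define the Boolean function $D_E:\{0,1\}^n\to\{0,1\}$ by $D_E(z):=1$ if $E(z)$ is a set of $m$ vertices of $G_\varphi$ forming a clique, and $D_E(z):=0$ otherwise. Since Definition~\ref{def:noise-indist} quantifies over every function $D$ with no computability or resource constraint, $D_E$ is an admissible test. This holds whatever $E$ is, including an $E$ that is non-computable or that uses knowledge of $\varphi$ or of $G_\varphi$; both of these are fixed parameters of $D_E$.

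The contradictory assumption gives $\Pr[D_E(\mathcal A(\varphi))=1]=\Pr[D_E(R)=1]$ directly. Unfolding $D_E$, this is exactly the claimed equality. Alternatively, one can pass through $(\ast)$ and write both sides as $\sum_{z}\Pr[\cdot=z]\,D_E(z)$. The two distributions coincide, so the sums coincide, with no need to single out the distinguisher.

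The one genuine subtlety is an extractor $E$ that is itself randomized. To handle it, condition on $E$'s internal coins $r$, taken independent of both $\mathcal A(\varphi)$ and $R$. For each fixed $r$, apply the deterministic argument above to $D_{E,r}$. Then average over $r$ using independence, and the equality is preserved.

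I would close by remarking that the equality is trivially "true" in the vacuous sense, since the hypothesis is contradictory by Theorem~\ref{thm:noise-main}. The proof nonetheless uses only $(\ast)$ and never correctness. So the content of the proposition is that indistinguishability alone forces every extraction to succeed exactly as often as it does on pure noise. I expect no real obstacle here. The only point needing care is the randomized-extractor conditioning.
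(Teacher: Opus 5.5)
Your proposal is correct and takes the same route as the paper. The paper likewise sets $D:=\mathbf{1}[E(\cdot)\text{ is a valid clique}]$, observes that this $D$ is admissible in Definition~\ref{def:noise-indist}, and reads the equality off that definition in one line. Your conditioning on the internal coins of a randomized extractor is a sound refinement that the paper leaves implicit: Definition~\ref{def:noise-indist} quantifies only over deterministic $D$, so the conditioning is what makes the argument cover randomized $E$.
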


\begin{proof}
  Let $D:=\mathbf{1}[E(\cdot)\text{ is a valid clique}]$, admissible in Definition~\ref{def:noise-indist}.
  By Definition~\ref{def:noise-indist}, $\Pr[D(\mathcal{A}(\varphi))=1]=\Pr[D(R)=1]$, which is the claim.
\end{proof}

Here is what this result tells us. The composition $E\circ\mathcal{A}$ remains
perfectly well-formed; it can be written, executed, takes $\varphi$, and returns
a subset of vertices, a type that is correct in every syntactic sense, but its
success rate is identical to that of a random subset generator. This is a reduction
in form, but not in substance: a logical ghost, syntactically
present yet semantically void, and useless for constructing anything, whether it
be a clique, a Hamiltonian cycle, or a single bit of the original assignment
(for $D(z):=z_i$, Definition~\ref{def:noise-indist} yields $\Pr[y_i=1]=1/2$---akin to a fair coin toss---for
every variable).

Nevertheless, beside this ghost, the genuine reduction,
$f_{\textsc{SAT}\to\textsc{CLIQUE}}$, remains exactly where Cook and Levin
placed it: built solely from the syntax of $\varphi$, it never passes through
$\mathcal{A}$, and it continues to operate silently alongside the channel that
the hypothesis has emptied. Any noise, even if it existed, would drain only one
bridge to truth---namely, the one someone would have wanted to build by
leaning on $\mathcal{A}$, while leaving intact the other, the one that
never needed it.

The full picture, with proper names: Cook--Levin's theorem remains true,
\textsc{SAT} remains \textsc{NP}-complete, and every Karp reduction, to or from any
other \textsc{NP}-complete problem, remains computable in polynomial time exactly as
before, because none of these three things was ever constructed from $\mathcal{A}$.
What the absurd hypothesis empties is a different and more fragile object: the attempt
to route the solution through $\mathcal{A}$ itself, the very channel that the
hypothesis transforms into a ghost.

\paragraph{The ghost generalises to every \textsc{NP}-complete problem.} The reasoning
just presented for \textsc{CLIQUE} holds, word for word, for any \textsc{NP}-complete
search problem $\Pi$: fixing an instance-to-formula translation $g_\Pi$ and an assignment-to-solution translation $h_\Pi$ (both computable in polynomial time, as in the Cook--Levin construction), the test
$D(z):=\mathbf{1}[h_\Pi(z)\text{ is a valid solution}]$ is admissible in Definition~\ref{def:noise-indist} exactly as $D=\varphi$ was. By Definition~\ref{def:noise-indist},
\[
\Pr\big[h_\Pi(\mathcal{A}(g_\Pi(\text{instance}))) \text{ is a valid solution}\big] = \Pr\big[h_\Pi(R) \text{ is a valid solution}\big].
\]
Calling $\mathcal{A}$ to solve the travelling salesman problem, the Hamiltonian cycle,
graph partitioning, or any other \textsc{NP}-complete problem would always yield the same
success rate as a random guess: the logical ghost is not a phenomenon isolated
to \textsc{CLIQUE}, but the universal signature of the absurd hypothesis across
every problem that \textsc{SAT} can represent. Furthermore, this signature is absolute,
not practical: Definition~\ref{def:noise-indist} holds for \emph{every} function $D$, including
non-computable functions and functions tailored specifically with prior knowledge
of every detail of the encoding $h_\Pi$. An arbitrarily ingenious decoder achieves exactly the same result as a trivial one:
the void lies within the structure of the hypothesis itself, rather than in a scarcity
of resources (whether technical or computational) available to those attempting to
circumvent it.

\paragraph{The extreme case: not even a single bit.} Let us strip the phenomenon
down to its barest form by choosing $D(z):=z_i$, asking ``is the variable $x_i$ true
in the returned assignment?'', for a single index $i$. By Definition~\ref{def:noise-indist},
\[
\Pr[y_i=1] = \Pr[R_i=1] = \frac{1}{2}.
\]
Reading even a single bit of $\mathcal{A}$'s output would provide exactly
the same information as a fair coin toss, for every variable, regardless
of which $\varphi$ was submitted. The ghost does not leak even a single fragment:
neither a complete clique, nor a Hamiltonian cycle, nor a single truth value.
The same Definition~\ref{def:noise-indist} that renders $\mathcal{A}(\varphi)$ indistinguishable
from noise as a whole, renders it indistinguishable from noise even when
observed through the smallest possible lens.

\subsection{Philosophical implications}
\label{sec:philosophy}

\begin{quote}
\emph{Reader beware.} What follows is reflection on the conceptual consequences of
Proposition~\ref{prop:ghost}, not a new proof: it is isolated here,
in a separate subsection, to keep it distinct from the previous mathematical register.
\end{quote}

Suppose, for a moment, that we inhabit the hypothetical world in which $\mathcal{A}$
exists. In that world, the truth---the existence of the clique, guaranteed by the
satisfiability of $\varphi$---remains in place, while a specific bridge toward that
truth is drained to zero. The syntactic reduction of Cook--Levin remains, indifferent,
on the other side of the river.

\paragraph{The parallel with the knowability paradox (Fitch).} In 1963, Frederick Fitch
demonstrated that the claim ``every truth is in principle knowable'' is incompatible,
within classical epistemic logic, with the existence of a single truth that is never
actually known. In that case, knowability in principle and actual knowledge diverge
due to a logical constraint internal to the notion of knowledge. Here, analogously,
the existence of the clique and its extractability via a specific channel diverge
because that channel, by construction of the hypothesis, has been rendered blind,
while the truth itself remains intact.

\paragraph{Syntactic information versus semantic information.} The distinction between
Shannon's information theory (the capacity of a channel to transport symbols,
regardless of meaning) and theories of semantic information
(Dretske, Floridi: a signal only makes a difference if it truly reduces
uncertainty regarding something that matters) finds here an almost didactic
limiting case: $E\circ\mathcal{A}$ has a Shannon capacity of zero with respect
to the fact that ``$\varphi$ is satisfiable,'' yet that fact remains true---proven
elsewhere---and accessible to anyone who does not insist on routing it through that
specific channel.

\paragraph{An echo of constructivism.} Intuitionism insists that an existential
claim requires a procedure that exhibits the object. Here, the opposite occurs: the
classical existence of the satisfying assignment survives intact, while a specific
attempt at construction---that mediated by $\mathcal{A}$---is voided. Not all constructive
paths are equivalent: the fact that an object is constructible-in-general does
not guarantee that it is so along the particular path one has chosen to pursue.

\paragraph{A broader principle.} Outside the hypothetical scenario,
Proposition~\ref{prop:ghost} remains a factual reality regarding what occurs when
an output is rendered indistinguishable from noise: any chain relying exclusively
on that output inherits the same blindness. This is the principle underlying
the practical use of pseudorandom generators in cryptography: rendering a channel
blind without affecting the underlying mathematical truth. The hypothetical
scenario of \textsc{SAT} pushes this principle to its extreme, and in doing so,
demonstrates how valuable it is for a mathematical theory to offer more than
one path toward the same truth: it is the existence of a second path, independent
of $\mathcal{A}$, that prevents the draining of one route from propagating to the
point of compromising the truth itself.

\subsection{Observation}

A machine that promised a truly permanent and forever unrecognisable solution
(an output indistinguishable from random noise) for \textsc{SAT} would betray itself at the very
instant it is queried: the formula $\varphi$, which is the question itself posed to
the machine, already constitutes an observer sufficient to expose it. The contradiction
is an elementary---though not generic---fact about the relationship between correctness
and public verifiability that characterises the class $\mathsf{NP}$.

If one were nevertheless to assume its existence, what would remain would not be a
collapse of complexity theory but a ghost: a reduction that can be written and executed
yet lacks any capacity to deliver the truth it purports to convey, alongside a genuine,
indifferent reduction that never required such a promise. In practice one would have a
reduction that works but, to use a metaphor, has lost its ability to transmit information.

In plain terms, if the theory were to admit---even merely in principle---that a machine
solving \textsc{SAT} in polynomial time could produce an output indistinguishable from genuine
causal noise for any entity in the universe, now and forever, then the theory would
encounter serious difficulties. Were the theory to absurdly prove the existence of
such an algorithm, we would face a fork: either such an algorithm exists,
implying $P=NP$, but the theory would suffer from the problems described
above (including Cook's theorem and the ``ghost-reduction'' issue); or such an
algorithm cannot exist, in which case $P=NP$ cannot hold, and by the law of
excluded middle the opposite hypothesis must be true. In this latter case,
such an algorithm could never exist, not even hypothetically; consequently
the paradox within classical theory would be resolved.

Thus one must either accept that $P=NP$, thereby rendering the classical framework
on which the discussion is built contradictory, or concede that the solution to
the $P$ versus $NP$ problem is that $P\neq NP$, in order to preserve the internal
consistency of the theory.

\section{Further detail: the self-reducibility of \textsc{SAT}}
\label{sec:self-reducibility}

\textsc{SAT} is self-reducible: given $\varphi$ over variables $x_{1},\dots ,x_{n}$,
the restriction $\varphi\!\mid_{x_{i}=b}$ (fixing $x_{i}=b$) can be computed in linear
time, and $\varphi$ is satisfiable iff at least one of $\varphi\!\mid_{x_{1}=0}$
or $\varphi\!\mid_{x_{1}=1}$ is. This is a classical result in complexity theory,
independent of $\mathcal{A}$.

We define a \emph{correct decider} as an $\mathcal{A}$ such that, in addition
to Definition~\ref{def:noise-correct}, for every unsatisfiable $\varphi$, $\Pr[\mathcal{A}(\varphi)=\bot]=1$,
where $\bot\notin\{0,1\}^n$.

\begin{theorem}
\label{thm:self-red}
  If $\mathcal{A}$ is a correct decider, there exists $\mathcal{B}$, using $n$
  calls to $\mathcal{A}$ and reading only the comparison ``$=\bot$?'', which for
  every satisfiable $\varphi$ produces a satisfying assignment with probability $1$.
\end{theorem}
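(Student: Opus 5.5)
I would prove Theorem~\ref{thm:self-red} by the classical search-to-decision reduction, using $\mathcal{A}$ only as a satisfiability oracle: the comparison ``$\mathcal{A}(\psi)=\bot$?'' is read as ``$\psi$ is unsatisfiable''. First I check that this reading is exact with probability $1$. If $\psi$ is unsatisfiable, then $\Pr[\mathcal{A}(\psi)=\bot]=1$ by the definition of a correct decider. If $\psi$ is satisfiable, then Definition~\ref{def:noise-correct} gives $\Pr[\psi(\mathcal{A}(\psi))=1]=1$. Since $\bot\notin\{0,1\}^n$, the output is then a genuine assignment and cannot be $\bot$, so $\Pr[\mathcal{A}(\psi)\neq\bot]=1$. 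Hence the single bit $\mathbf{1}[\mathcal{A}(\psi)=\bot]$ equals $\mathbf{1}[\psi\notin\textsc{SAT}]$ almost surely. No other information about the output string is used.

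$\mathcal{B}$ then proceeds by self-reducibility. Set $\varphi_0:=\varphi$. For $i=1,\dots,n$, query $\mathcal{A}$ on $\varphi_{i-1}\!\mid_{x_i=0}$. If the answer is not $\bot$, set $b_i:=0$; otherwise set $b_i:=1$. Then put $\varphi_i:=\varphi_{i-1}\!\mid_{x_i=b_i}$, and output $(b_1,\dots,b_n)$. This uses exactly $n$ calls.

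The invariant is: \emph{$\varphi_i$ is satisfiable.} It holds for $i=0$ by hypothesis. For the step, by the self-reducibility fact recalled above, at least one of $\varphi_{i-1}\!\mid_{x_i=0}$ and $\varphi_{i-1}\!\mid_{x_i=1}$ is satisfiable. If the $0$-branch is satisfiable, the query returns a non-$\bot$ answer and we keep it. Otherwise it returns $\bot$, which forces the $1$-branch to be satisfiable, and we take that branch without querying it. After $n$ steps $\varphi_n$ is a variable-free satisfiable formula, that is, it evaluates to true. So $\varphi(b_1,\dots,b_n)=1$.

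For the probability, each of the $n$ oracle answers is correct with probability $1$. A union bound over finitely many null events shows that all answers are correct simultaneously with probability $1$, which gives the claim.

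The main obstacle is formal rather than combinatorial. The restricted formulas have fewer free variables than $\varphi$, while $\mathcal{A}$'s contract is stated for formulas ``on $n$ variables'' with outputs in $\{0,1\}^n$. I would fix this by keeping the restricted variables as dummy variables, or equivalently by substituting constants, and interpreting the output length as the number of variables of the queried formula. The step that has to be argued carefully is the claim that a satisfiable input never yields $\bot$. That is exactly where $\bot\notin\{0,1\}^n$ and Definition~\ref{def:noise-correct} are used, and I would state it as a separate one-line lemma before running the induction.
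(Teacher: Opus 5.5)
Your proof is correct and follows the paper's own argument: the same $n$-step descent through the restriction tree, querying $\mathcal{A}$ on the $x_i=0$ restriction, branching on ``$=\bot$?'', and maintaining by induction that the current restricted formula stays satisfiable with probability $1$. The paper's proof leaves implicit the points you make explicit, and your treatment of them is sound: a satisfiable input never yields $\bot$ because $\bot\notin\{0,1\}^n$, a finite union bound covers the adaptive calls, and the shrinking variable count of the restricted formulas needs bookkeeping.
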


\begin{proof}
Setting $\psi_0:=\varphi$, for $i=1,\dots,n$: query $\mathcal{A}(\psi_{i-1}|_{x_i=0})$;
if $\neq\bot$, $b_i:=0,\ \psi_i:=\psi_{i-1}|_{x_i=0}$; otherwise, $b_i:=1,\ \psi_i:=\psi_{i-1}|_{x_i=1}$.
By induction, $\psi_i$ remains satisfiable at every step with probability $1$.
After $n$ steps, $(b_1,\dots,b_n)$ satisfies $\varphi$.
\end{proof}

This theorem is a direct implication, not a proof by contradiction,
and holds independently of Definition~\ref{def:noise-indist}: the mere decisional correctness,
applied recursively to the restriction tree of $\varphi$, suffices to construct
the assignment, one bit at a time, without ever reading the content of
$\mathcal{A}(\varphi)$.

Each formula $\psi_i$ queried along the tree is itself a satisfiable instance
of \textsc{SAT}, and thus subject, like $\varphi$, to Definition~\ref{def:noise-indist}
and Theorem~\ref{thm:noise-main}. The same probabilistic contradiction already established
at the root repeats itself, identical, at every node of the tree: the decision channel
that self-reducibility opens is itself a logical ghost, precisely in the sense
of Proposition~\ref{prop:ghost}, syntactically available, semantically empty,
as soon as it is subjected to the same test $D=\psi_i$ already used for $\varphi$.

Self-reducibility demonstrates how profoundly decision and search coincide
in \textsc{SAT}, and consequently, how pervasively the logical ghost propagates,
node by node, throughout the entire structure of the problem.

\section{An intermediate conclusion}
\label{sec:real-conclusion}

The sections above have made something concrete that can be stated plainly:
standard theory admits the theoretical existence of an algorithm that solves a
problem and produces an output indistinguishable from noise for anyone, at any
time and place, regardless of computational power. This possibility leads to the
consequences drawn in the sections above, and gathered in the remark that
follows.

\begin{remarkc}[Conclusive remark]
\label{rem:conclusive-remark}

One might object that a polynomial-time algorithm for \textsc{SAT} which produces outputs
indistinguishable from noise under a generic statistical test, while preserving
computational correctness, rests on a fundamental confusion between two distinct and
irreducible levels of analysis.

On the one hand, the \textsc{NP}-completeness of \textsc{SAT} requires the existence of a polynomial-time
algorithmic procedure that can extract and preserve the structure of the solution for
any instance of the problem. This means that, for every Boolean formula $\varphi$, the
algorithm must return a witness $y$ that is decodable in polynomial time as a satisfying
assignment for $\varphi$. The structure of the solution is not an external or statistical
observation but an intrinsic property of the computation: it must be accessible and
verifiable by an algorithm running in polynomial time, irrespective of whether a random
observer can or cannot distinguish $y$ from noise.

By contrast, the statistical indistinguishability of $y$ from noise refers to an external
test that has no necessary relation to the computational structure of $y$. A generic
statistical test is not designed to recognise the internal structure of a solution; it
only checks superficial properties of the distribution of $y$. Consequently, the claim
that $y$ is indistinguishable from noise for such a test does not imply that $y$ lacks
relevant structured information for solving $\varphi$. However, \textsc{NP}-completeness demands
that $y$ be actually decodable as a solution, and this decodability is a computational---
not statistical---property.

The contradiction emerges when one assumes that a polynomial-time algorithm can produce
an output $y$ which is:

\begin{enumerate}[label=(\roman*)]
  \item Correct: $y$ is a valid solution for $\varphi$;
  \item Indistinguishable from noise: $y$ passes all generic statistical tests for
          randomness.
\end{enumerate}

These two properties are logically incompatible in the context of polynomial-time
computation. If $y$ is a valid solution for $\varphi$, then there exists a polynomial-time
verification procedure that can extract and recognise the structure of $y$. This means
that $y$ cannot be indistinguishable from noise for an observer who knows the problem's
structure, because the computational decodability of $y$ implies that $y$ contains
structured information accessible in polynomial time. For this reason, the notion of a
valid algorithm should entirely exclude those whose output is statistically indistinguishable
from noise for every observer.

In summary, the \textsc{NP}-completeness of \textsc{SAT} requires that the solution $y$ be structurally
recognisable by a polynomial-time algorithm, and this recognisability is a stronger
requirement than statistical indistinguishability. Hence the hypothesis that a
polynomial-time algorithm for \textsc{SAT} could produce outputs indistinguishable from noise is
inconsistent with the very definition of \textsc{NP}-completeness, because it contradicts the
necessity that the solution be actually decodable in polynomial time. The difficulty
arises from the fact that computational theory does not forbid a valid algorithm from
producing, via a correct computation, an output indistinguishable from noise for any
observer. To resolve the contradiction one must either restrict the definition of
\textsc{NP}-completeness to algorithms whose outputs are always decodable---an amendment that
introduces further complications---or accept that $P=NP$, which brings its own set of
problems. If, instead, one assumes $P\neq NP$, all such contradictions disappear.

\end{remarkc}

\begin{remarknn}[On the status of these proofs, and where a limit could lie]
\label{rem:noise-not-absolute}
Each argument above is correct on its own terms: given the definitions it
states, each derivation goes through, step by step, and will hold up to that
scrutiny --- which is exactly why, if there is a limit here, it will not be
found among the steps. What these sections do
\emph{not} claim is that this settles the $\mathsf{P}$ versus $\mathsf{NP}$
question. The technique is an uncommon one --- it turns the admissibility of a
noise-indistinguishable output against the very definition of a decision
procedure --- and an uncommon technique is exactly the kind that can carry a
limitation not visible from inside its own apparatus. Such a limitation, if
present, is not a flawed step to be repaired: every step is elementary and
stands. It would live outside the proofs entirely, at the level a proof cannot
inspect from within --- a metalogical barrier of the kind the study of this
problem has produced before, which blocks not by breaking a derivation but by
constraining what any derivation of this shape can establish, or a hidden
assumption in the framing that lets the definitions be posed together at all.
This is how the barriers in this area have always worked: not corrections
inside a proof, but facts about the proof's whole form, sitting where its steps
cannot reach --- the same relation between a level and what lives above it that
this paper has traced throughout. This whole paper has been built on the
discovery, repeated across its instances, that a result taken as settled rests
on a silent assumption once someone thinks to look for one; these sections
claim no special exemption from that pattern, and the reader who finds the
assumption will have found something this paper could not. The precedent worth
keeping in view is the ordinary one:
G\"odel's theorem did not close mathematics but opened metamathematics, and
Turing's did not close computation but opened computability theory, precisely
because each was read as a discovery about the shape of a formal apparatus
rather than a final wall. If the argument here has a limit, the useful response
is the same --- not to treat the question as closed, but to treat the limit,
once found, as the first line of a barrier not yet named. That is the spirit in
which these sections are offered: a proof correct where it stands, presented so
that whatever bounds it can be located precisely.
\end{remarknn}

\section{Alternative proofs}
\label{sec:alt-proof}

Given the delicacy of the subject, we present below the other proof-paths that the author originally produced to demonstrate the same argument.

\subsection{First path}

Assume there exists a solver (algorithm) in $\mathsf{P}$ for \textsc{SAT} (the satisfiability problem) which is \emph{noisy} according to the given definition.
Such an algorithm would have to produce outputs indistinguishable from pure random noise.

However, because \textsc{SAT} is an \textsc{NP}-complete problem, any efficient solver in $\mathsf{P}$ must return correct and well-defined (non-noisy) answers. This contradicts the hypothesis that a noisy algorithm for \textsc{SAT} belongs to $\mathsf{P}$.

The theory of \textsc{NP}-completeness does admit the theoretical existence of noisy algorithms; we do not assume it, the theory permits it. No concrete proof is required to show the practical realisability of such an algorithm. To obtain a contradiction it suffices that the theory allows it. This leads to the consequence that an algorithm solving \textsc{SAT} in $\mathsf{P}$ cannot exist; merely imposing $P\neq NP$ preserves the consistency of the theory.

The point is that the definition of an algorithm does not require its answer to be readable by anyone. We now write the proof that yields the contradiction and show how enforcing $P\neq NP$ resolves it.

\subsubsection*{Preliminary definitions}
\begin{itemize}
  \item A \textbf{noisy algorithm} produces outputs that are statistically indistinguishable from random noise for every observer.
  \item A language (set of strings) is \textbf{decidable in polynomial time} if there exists an algorithm recognising it within time proportional to the square of the input length.
\end{itemize}

\subsubsection*{Paradoxical assumption}
Suppose a noisy algorithm $A$ solves \textsc{SAT} in polynomial time, i.e.\ $A\in P$.

\subsubsection*{Contradiction}
Since $A$ is noisy, there is no deterministic way to verify whether an input string belongs to the language \textsc{SAT} or not. Hence no polynomial-time verifier exists for \textsc{SAT}.

Nevertheless, \textsc{SAT} is a well-defined decidable problem (deterministic algorithms solving it exist), so at least one algorithm must correctly recognise strings in the language.

Thus, if $A$ existed we would have a language (\textsc{SAT}) with no polynomial-time verifier, contradicting the very definition of a polynomial-time decidable problem.

Because the hypothesis of a noisy \textsc{SAT} solver leads to a logical contradiction, we conclude that \textbf{no algorithm (noisy or otherwise) can solve \textsc{SAT} in polynomial time}.

This conclusion does not depend on any assumption about $P$ versus $NP$. It suffices to note that if an algorithm outputs ``indistinguishable from noise'' for a well-defined problem, such an algorithm cannot be regarded as a valid solution to the problem itself.

\subsubsection*{Strengthened argument}
\begin{itemize}
  \item A \textbf{noisy algorithm}: generates outputs indistinguishable from random noise for every observer, providing no precise information about the problem's solution.
  \item A language is \textbf{efficiently decidable} if a deterministic algorithm recognises it in polynomial time with respect to input length.
\end{itemize}

Assume a noisy algorithm $A$ decides \textsc{SAT} in polynomial time ($A\in P$).
Because $A$ is noisy, there are no guarantees about whether an input belongs to \textsc{SAT}. Yet \textsc{SAT} is solvable by deterministic algorithms; therefore a reliable method must exist.

Hence the existence of $A$ yields a contradiction: a language (\textsc{SAT}) without an efficient decision procedure, violating the definition of efficiently decidable problems.

\textbf{Conclusion.} The logical tension created by assuming a noisy polynomial-time \textsc{SAT} solver forces us to conclude that \emph{no such algorithm can exist}, independent of any $P$ vs.\ $NP$ hypothesis.

\subsection{Second path}

\subsubsection*{Fundamental definitions}
\paragraph{Language.}
A language $L\subseteq\{0,1\}^{*}$ is a set of binary strings. Deciding $L$ means constructing a procedure that, given any input $x$, returns ``yes'' if $x\in L$ and ``no'' otherwise.

\paragraph{\textsc{SAT}.}
$
\textsc{SAT}= \{\varphi \mid \varphi \text{ is a satisfiable Boolean formula}\}.
$
\textsc{SAT} is known to be \textbf{\textsc{NP}-complete}: it lies in \textsc{NP} and every language in \textsc{NP} reduces to it via a polynomial transformation.

\paragraph{Deterministic algorithm (class $\mathsf{P}$).}
A deterministic Turing machine $M$ decides $L$ in polynomial time, denoted $M\in\mathsf{P}$, if there exists a polynomial $p(\cdot)$ such that for every input $x$ the computation halts within $p(|x|)$ steps and yields the correct answer.

\paragraph{Probabilistic Turing machine (PTM).}
A PTM is a deterministic TM that, at each step, may receive a random bit. For each input $x$, the machine induces a distribution $\mathcal D_{x}$ over its outputs.

\paragraph{Observer (distinguisher).}
An observer is any probabilistic algorithm running in polynomial time with respect to the input length. Given a sample $y$, it outputs $1$ or $0$. We denote it by $\mathcal O$.

\paragraph{Statistical indistinguishability.}
Two distributions $\mathcal D_{1},\mathcal D_{2}$ over the same space are \textbf{statistically indistinguishable} if for every polynomial-time observer $\mathcal O$
$
\bigl|\Pr[\mathcal O(y)=1\mid y\sim\mathcal D_{1}]
      -\Pr[\mathcal O(y)=1\mid y\sim\mathcal D_{2}]\bigr|
   \leq \operatorname{negl}(|x|),
$
where $\operatorname{negl}$ is a negligible function (decays faster than any inverse polynomial).

\paragraph{Noisy algorithm.}
A probabilistic algorithm $A$ is \textbf{noisy} if, for \emph{every} input $x$, the distribution $\mathcal D_{x}$ of its outputs is indistinguishable from the uniform distribution on strings of length $q(|x|)$ for some polynomial $q$. Formally:

\begin{definition}[Noisy algorithm]
\label{def:noisy-alg-2}
There exist polynomials $p,q$ such that for every input $x$:
\begin{enumerate}[label=(\roman*)]
  \item $A(x)$ halts within $p(|x|)$ steps;
  \item the output is a string of length $q(|x|)$;
  \item $\mathcal D_{x}$ (the output distribution) is indistinguishable from $U_{q(|x|)}$, the uniform distribution on $\{0,1\}^{q(|x|)}$.
\end{enumerate}
\end{definition}

The definition imposes no constraint on the \emph{interpretability} of the output: a noisy algorithm is perfectly legitimate in computational theory because PTMs allow arbitrary use of random bits, even if the entire string is eventually discarded.

\subsubsection*{Why standard theory allows noisy algorithms}
\begin{itemize}
  \item \textbf{Permissive model.} The definition of a PTM does not require the result to depend on the input; it only demands bounded running time and an output distribution.
  \item \textbf{Canonical example.} A PTM that, irrespective of its input, reads $q(|x|)$ random bits and prints them is noisy. It is accepted as an algorithm (it belongs to \textsf{BPP}, or $\mathsf{P}$ with access to random bits).
  \item \textbf{No semantic restriction.} Deciding a language merely requires the existence of a \emph{decoding function} $\mathsf{dec}$ (polynomial-time) that maps the algorithm's output to the correct answer. The decoding need not be ``natural''.
\end{itemize}

Consequently, a noisy algorithm claiming to decide \textsc{SAT} is a well-defined object: there exists a machine $A$ (noisy) and a decoding function $\mathsf{dec}$ such that for every formula $\varphi$,
\[
\dec(A(\varphi))=
\begin{cases}
1 & \text{if }\varphi\in\textsc{SAT},\\
0 & \text{otherwise}.
\end{cases}
\tag{1}
\]
The crucial point is that (1) guarantees the existence of an \emph{observer} (the decoder itself) capable of extracting the information. This observer will be the source of the contradiction.

\subsubsection*{Proof 1 --- direct argument (without oracle)}
\paragraph{Paradoxical assumption.}
\begin{description}
  \item[(A)] There exists a noisy algorithm $A$ that decides \textsc{SAT} in polynomial time and possesses a decoding function $\mathsf{dec}\in\mathsf{P}$ satisfying (1).
\end{description}

\paragraph{Construction of the observer.}
Define the observer $\mathcal O_{\mathsf{dec}}$ as follows:
\begin{enumerate}[label=(\roman*)]
  \item Receive the string output by $A(\varphi)$;
  \item Apply $\mathsf{dec}$ (polynomial-time by assumption);
  \item Output the resulting bit.
\end{enumerate}
Since $\mathsf{dec}$ runs in $O(|\varphi|^{c})$, $\mathcal O_{\mathsf{dec}}$ is a polynomial-time observer.

\paragraph{Discriminating power.}
For every formula $\varphi$,
\[
\Pr[\mathcal O_{\mathsf{dec}}(A(\varphi))=1]=
\begin{cases}
1 & \text{if }\varphi\in\textsc{SAT},\\
0 & \text{otherwise}.
\end{cases}
\tag{2}
\]
Because $A$ is noisy, its output distribution is indistinguishable from uniform $U_{q(|\varphi|)}$. On a uniformly random string, $\mathcal O_{\mathsf{dec}}$ returns $1$ with some fixed probability $p:=\Pr[\mathcal O_{\mathsf{dec}}(U_{q(|\varphi|)})=1]$, a constant determined by $\mathsf{dec}$ alone and independent of $\varphi$:
\[
\Pr[\mathcal O_{\mathsf{dec}}(U_{q(|\varphi|)})=1]=p.
\tag{3}
\]
Statistical indistinguishability demands that for \emph{every} observer $\mathcal O$,
\[
\bigl|\Pr[\mathcal O(A(\varphi))=1]-
      \Pr[\mathcal O(U_{q(|\varphi|)})=1]\bigr|
   \le \operatorname{negl}(|\varphi|).
\tag{4}
\]
Applying (4) to $\mathcal O_{\mathsf{dec}}$ yields
\[
\bigl|\Pr[\mathcal O_{\mathsf{dec}}(A(\varphi))=1]-p\bigr|
   \le \operatorname{negl}(|\varphi|).
\tag{5}
\]
But (2) shows the left-hand side is $|1-p|$ when $\varphi\in\textsc{SAT}$ and $|0-p|=p$ when $\varphi\notin\textsc{SAT}$. Since $p$ is a single fixed constant while the two cases both occur, at least one of $|1-p|$ and $p$ is $\ge\tfrac12$: whichever way $p$ falls, some formula forces a gap of at least $\tfrac12$, a non-negligible constant---contradicting (5).

\paragraph{Conclusion.}
Assumption (A) violates the definition of a noisy algorithm; therefore no noisy polynomial-time \textsc{SAT} solver can exist.

\subsubsection*{Proof 2 --- oracle argument}
\paragraph{Noisy oracle.}
Given the hypothetical $A$, define an oracle $\mathcal O_{A}$:
\begin{itemize}
  \item Input: a Boolean formula $\varphi$;
  \item Output: a string drawn from the distribution $\mathcal D_{\varphi}$ produced by $A(\varphi)$.
\end{itemize}
By assumption, $\mathcal O_{A}$ is noisy: its answers are indistinguishable from uniform strings of length $q(|\varphi|)$.

\paragraph{Polynomial machine with oracle access.}
Consider a deterministic Turing machine $M^{\mathcal O_{A}}$ that may query the oracle only. Suppose, for contradiction, that there exists such an $M$ (running in polynomial time) satisfying
\[
M^{\mathcal O_{A}}(\varphi)=
\begin{cases}
1 & \text{if }\varphi\in\textsc{SAT},\\
0 & \text{otherwise}.
\end{cases}
\tag{6}
\]
Thus $M$ plays the role of the decoder $\mathsf{dec}$ from (1).

\paragraph{Information-theoretic analysis.}
Because $\mathcal O_{A}(\varphi)$ is indistinguishable from a uniform string, any polynomial-time machine gains no non-negligible advantage in extracting information about $\varphi$. Formally, for every such $M$,
\[
\bigl|\Pr[M^{\mathcal O_{A}}(\varphi)=1]-
      \Pr[M^{U}(\varphi)=1]\bigr|
   \le \operatorname{negl}(|\varphi|),
\tag{7}
\]
where $M^{U}$ is the same machine receiving a uniformly random string independent of $\varphi$.

When the input to $M$ is uniform, its output cannot depend on $\varphi$, so
\[
\Pr[M^{U}(\varphi)=1]=p
\]
for some constant $p\in[0,1]$. Consequently the difference in (7) is at least $|\,\tfrac12-p\,|$, a non-negligible constant (the optimal guessing strategy yields $p=\tfrac12$). Hence (7) is violated by any machine that claims to decide \textsc{SAT} using $\mathcal O_{A}$, contradicting (6).

\paragraph{Conclusion.}
The existence of a noisy PTM for \textsc{SAT} would give rise to an oracle that cannot convey the required information, so such a PTM cannot exist.

\subsubsection*{Equivalence of the two approaches}
In the first proof the observer is precisely the decoding function $\mathsf{dec}$; in the second it is embodied by a polynomial-time machine calling the noisy oracle. Both arguments reduce to the same principle: \emph{if an algorithm's output is statistically indistinguishable from noise, no polynomial-time observer can extract a non-negligible amount of information}. Deciding \textsc{SAT} requires extracting exactly one bit (true/false), so both proofs yield the same contradiction.

\subsubsection*{Conclusions}
\begin{enumerate}[label=(\roman*)]
  \item The standard computational theory permits noisy algorithms because PTMs place no semantic restriction on the output.
  \item A noisy algorithm cannot decide \textsc{SAT}: any polynomial-time decoding function would breach statistical indistinguishability, as shown by both direct and oracle-based arguments.
  \item No assumption about $P$ versus $NP$ is required; the contradiction follows solely from the definitions of algorithm, noise, and observer.
\end{enumerate}

Thus, while complexity theory does not forbid the formal existence of noisy algorithms, any hypothesis that such an algorithm solves \textsc{SAT} in polynomial time is inconsistent. The coherence of the theory remains intact without invoking $P=NP$ or $P\neq NP$, merely by restricting the class of admissible algorithms to those whose outputs are interpretable.

\subsection{Clarification}

\subsubsection*{Why $\mathsf{dec}$ makes $\mathcal O_{\mathsf{dec}}$ an admissible observer}
Let $A$ be \emph{any} algorithm (deterministic or probabilistic) that decides \textsc{SAT} in polynomial time.
By ``decides'' we mean the standard definition from complexity theory:
\begin{quote}
There exists a total function $\mathsf{dec}\colon\{0,1\}^{*}\to\{0,1\}$ such that
\[
\forall\,\varphi\in\{0,1\}^{*}:\qquad
\mathsf{dec}(A(\varphi))=
\begin{cases}
1 & \text{if }\varphi\in\mathrm{SAT},\\
0 & \text{otherwise}.
\end{cases}
\tag{1}
\]
\end{quote}
The function $\mathsf{dec}$ is an integral part of the \emph{specification} of a decision
algorithm: $A$ may output anything (a long string, a proof, a cryptographic certificate,
etc.), but there must exist a deterministic polynomial-time procedure that extracts from this
output the answer ``yes'' or ``no''. If the output already is the answer bit, $\mathsf{dec}$
is simply the identity; if the output contains a satisfiability witness, $\mathsf{dec}$ is the
deterministic verification of that witness. In every case $\mathsf{dec}\in\mathbf P$.

\paragraph{Length of the output.}
Because $A$ terminates in polynomial time, the entire bit-string it writes on its output tape
has length polynomial in the size of the input $\varphi$. Formally there exists a polynomial
$q$ such that
\[
|A(\varphi)|\le q(|\varphi|)\qquad\forall\,\varphi .
\tag{2}
\]

\paragraph{Running time of $\mathcal O_{\mathsf{dec}}$.}
Define the \emph{observer} $\mathcal O_{\mathsf{dec}}$ as follows:
\begin{enumerate}[label=(\roman*)]
  \item Receive as input a string $y\in\{0,1\}^{q(|\varphi|)}$ (the output of $A$).
  \item Compute $\mathsf{dec}(y)$ and return the resulting bit.
\end{enumerate}
The total running time is the sum of two components:
\begin{itemize}
  \item \textbf{Reading the input:} reading a string of length at most $q(|\varphi|)$
    requires $O\!\bigl(q(|\varphi|)\bigr)$ steps.
  \item \textbf{Evaluating $\mathsf{dec}$:} by hypothesis $\mathsf{dec}$ runs in time
    $O\!\bigl(|y|^{c}\bigr)$ for some constant integer $c$. Since $|y|\le q(|\varphi|)$,
    this is $O\!\bigl(q(|\varphi|)^{c}\bigr)$.
\end{itemize}
Both terms are polynomial in $|\varphi|$, so
\[
T_{\mathcal O_{\mathsf{dec}}}(|\varphi|)
   = O\!\bigl(q(|\varphi|)+q(|\varphi|)^{c}\bigr)
   = O\!\bigl(p(|\varphi|)\bigr)
\]
for some polynomial $p$. By definition, an observer is any probabilistic algorithm that
operates in polynomial time with respect to the input length; therefore $\mathcal
O_{\mathsf{dec}}$ is \emph{admissible}.

\subsubsection*{Extending the two theorems to \emph{all} decision algorithms}
The two results proved earlier were:
\begin{enumerate}[label=(\roman*)]
  \item \textbf{Direct proof:} an observer $\mathcal O_{\mathsf{dec}}$ distinguishes the
    output of $A$ from a uniform string, violating the definition of ``noisy''.
  \item \textbf{Oracle-based proof:} treating the whole algorithm as a noisy oracle
    $\mathcal O_{A}$; no polynomial-time machine can use it to decide \textsc{SAT}.
\end{enumerate}
Both proofs rely solely on the following two properties:
\begin{enumerate}[label=(\roman*)]
  \item $A$ terminates in polynomial time and produces an output of polynomial length.
  \item There exists a deterministic function $\mathsf{dec}\in\mathbf P$ such that,
    when applied to the output of $A$, yields the correct \textsc{SAT} answer (equation~(1)).
\end{enumerate}
These properties are \emph{necessary} for any decision algorithm, irrespective of whether
the algorithm is deterministic, ordinary probabilistic, or noisy. Consequently the two
theorems generalise automatically.

\paragraph{Theorem A (general version).}
\emph{Statement.}
Let $A$ be an algorithm (deterministic or probabilistic) that decides \textsc{SAT} in polynomial time
and whose output has polynomial length. If the distribution of $A$'s outputs is indistinguishable
from the uniform distribution, a contradiction follows.

\emph{Proof (direct).}
Construct $\mathcal O_{\mathsf{dec}}$ as above. For every formula $\varphi$,
\[
\Pr[\mathcal O_{\mathsf{dec}}(A(\varphi))=1]=
\begin{cases}
1 & \text{if }\varphi\in\mathrm{SAT},\\
0 & \text{otherwise},
\end{cases}
\tag{3}
\]
whereas for a uniformly random string $\mathcal O_{\mathsf{dec}}$ returns $1$ with
some fixed probability $p$ independent of $\varphi$.
Statistical indistinguishability demands that, for every observer (in particular
$\mathcal O_{\mathsf{dec}}$), the difference between these two probabilities be
negligible. Yet by (3) that difference is $|1-p|$ for satisfiable $\varphi$ and
$p$ for unsatisfiable $\varphi$; both kinds occur and $p$ is one fixed constant,
so at least one of them is $\ge\tfrac12$, a constant non-negligible
value---a contradiction.

\paragraph{Theorem B (general version, via oracle).}
\emph{Statement.}
If there exists an algorithm $A$ that decides \textsc{SAT} in polynomial time and whose output
distribution is indistinguishable from uniform, then the oracle $\mathcal O_{A}$ (which returns such
outputs) cannot be used by any polynomial-time Turing machine to decide \textsc{SAT}.

\emph{Proof.}
Assume a polynomial-time TM $M^{\mathcal O_{A}}$ decides \textsc{SAT} using the oracle
$\mathcal O_{A}$. Let $\mathsf{dec}$ be as in (1). The machine that queries the oracle
once and then applies $\mathsf{dec}$ is itself polynomial time and decides \textsc{SAT}, i.e.\ it realises
the mapping
\[
M^{\mathcal O_{A}}(\varphi)=
\begin{cases}
1 & \text{if }\varphi\in\mathrm{SAT},\\
0 & \text{otherwise}.
\end{cases}
\tag{4}
\]
By the definition of a noisy oracle, the distribution of $M^{\mathcal O_{A}}$'s outputs on
instances $\varphi$ is indistinguishable from its behaviour when the oracle supplies a
uniformly random string. In the latter case the output cannot depend on $\varphi$, so the
probability of returning ``1'' is some constant $p\in[0,1]$ (the optimal guessing strategy
gives $p=1/2$). Hence the difference between the correct probability in (4) and the
uniform-oracle probability is at least $|1/2-p|\ge 1/2$, a non-negligible constant,
contradicting indistinguishability.

\subsubsection*{Summary of the impossibility}
\begin{itemize}
  \item \textbf{Admissible observer:} because $\mathsf{dec}$ is polynomial, the procedure
    that applies it to the bits produced by $A$ (i.e.\ $\mathcal O_{\mathsf{dec}}$) is itself a
    probabilistic polynomial-time algorithm.
  \item \textbf{Generality:} neither proof exploits any special property of ``noisy''
    algorithms; the only requirement is that the algorithm decides \textsc{SAT} and possesses a
    deterministic extraction procedure, which is part of the very definition of decision.
  \item \textbf{Implication:} \emph{no} algorithm, whether deterministic,
    ordinary probabilistic, or noisy, can simultaneously:
    \begin{enumerate}
      \item decide \textsc{SAT} in polynomial time; and
      \item emit an output whose distribution is indistinguishable from a uniform string.
    \end{enumerate}
\end{itemize}

Thus the impossibility is universal: it does not depend on the hypothesis $P=NP$ or
$P\neq NP$, but stems solely from the tension between (i) the need to provide, in a
deterministic polynomial-time manner, a definitive yes/no answer, and (ii) the demand
that the entire output be statistically indistinguishable from pure noise. The contradiction
arises both in the ``direct observer'' formulation and in the ``oracle'' formulation,
showing that the two theorems extend to \emph{all} decision algorithms for \textsc{SAT}.

The only place where a ``noisy'' property was used is the indistinguishability of
the output from uniform. Since the construction of $\mathcal O_{\mathsf{dec}}$
depends solely on $A$ being a decision algorithm --- on the existence of the
decoder --- the same contradiction holds for any decision algorithm, whether it
uses randomness, is deterministic, or is noisy. Consequently,
\[
\boxed{\text{No algorithm (deterministic or probabilistic) can decide \textsc{SAT} in polynomial time.}}
\]
In other words, the argument concerning \emph{noisy} algorithms already excludes
\emph{all} decision procedures for \textsc{SAT}: if an algorithm decided
\textsc{SAT} in $\mathsf{P}$, then, by definition, it would provide a decoder
$\mathsf{dec}$ constituting an observer able to distinguish its output from pure
noise, contradicting the noisy premise. The result holds for probabilistic noisy
algorithms and deterministic ones alike; a probabilistic algorithm need not be
noisy in order to fall under this impossibility.

\section{Full alternative proofs}
\label{sec:fullalt-proof}

All strings are binary; $|x|$ denotes length. We work with deterministic
Turing machines (DTMs) and probabilistic Turing machines (PTMs).

\subsection{Decision procedures for \textsc{SAT}}
An algorithm $A$ \emph{decides} \textsc{SAT} in polynomial time iff the following hold:
\begin{enumerate}[label=(D\arabic*)]
  \item \textbf{Polynomial output size.} There exists a polynomial $q$ such that
        \[
          |A(\varphi)|\le q(|\varphi|)\qquad\forall\,\varphi\in\{0,1\}^{*}.
        \tag{D1}
        \]
  \item \textbf{Existence of a polynomial decoder.} There exists a deterministic
        function $\mathsf{dec} :\{0,1\}^{*}\to\{0,1\}$ computable in time $O(|y|^{c})$ for some constant $c$
        such that
\[
  \mathsf{dec}(A(\varphi))=
  \begin{cases}
    1 & \text{if }\varphi \in \textsc{SAT},\\
    0 & \text{otherwise}.
  \end{cases}
  \tag{D2}
\]
\end{enumerate}
Condition (D2) is the standard ``decoder'' that extracts the yes/no answer from
the possibly long output of $A$; it is part of the definition of a decision
procedure in $\mathsf{P}$.

\subsection{Statistical indistinguishability (noisy algorithms)}
Let $U_{m}$ denote the uniform distribution over $\{0,1\}^{m}$. An algorithm
$A$ is called \emph{noisy} if there exist polynomials $p,q$ such that for every
input $\varphi$
\begin{enumerate}[label=(N\arabic*)]
  \item $A(\varphi)$ halts within $p(|\varphi|)$ steps;
  \item the output length satisfies $|A(\varphi)|\le q(|\varphi|)$;
  \item for every probabilistic polynomial-time observer $\mathcal O$,
        \[
          \bigl|
            \Pr[\mathcal O(A(\varphi))=1]-
            \Pr[\mathcal O(U_{q(|\varphi|)})=1]
          \bigr|
          \le \operatorname{negl}(|\varphi|).
        \tag{N3}
        \]
\end{enumerate}
Here $\operatorname{negl}(n)$ denotes a negligible function (smaller than $1/p(n)$
for every polynomial $p$ and all sufficiently large $n$).

\subsection{Observers built from the decoder}
Given the decoder $\mathsf{dec}$, define the observer
\[
  \mathcal O_{\mathsf{dec}}(y)\;:=\;\mathsf{dec}(y).
\]
Because $\mathsf{dec}$ runs in time $O(|y|^{c})$, reading an input of length at most
$q(|\varphi|)$ and applying $\mathsf{dec}$ takes time
$O(q(|\varphi|)+q(|\varphi|)^{c}) = O(p'(|\varphi|))$ for some polynomial $p'$.
Hence $\mathcal O_{\mathsf{dec}}$ is a \emph{probabilistic polynomial-time} (PPT) observer,
exactly the class required in condition (N3).

\subsection{Main result}
\begin{theorem}\label{thm:fullalt-main}
No algorithm---deterministic, probabilistic or noisy---can decide \textsc{SAT} in
polynomial time while simultaneously satisfying the indistinguishability
condition~(N3). Consequently a polynomial-time decision procedure for
\textsc{SAT} does not exist.
\end{theorem}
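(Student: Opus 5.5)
The plan is to prove the first sentence of Theorem~\ref{thm:fullalt-main} by the decoder-observer argument already sketched in the Second path, done carefully. Suppose $A$ satisfies (D1), (D2) and (N1)--(N3). Take $\mathcal O_{\mathsf{dec}}$ from the preceding subsection. It is a fixed machine, independent of $\varphi$, and PPT by the running-time computation given there, so it is admissible in (N3).

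\textbf{Step 1 (length convention).} (N3) compares $A(\varphi)$ with $U_{q(|\varphi|)}$, but (N2) only bounds $|A(\varphi)|\le q(|\varphi|)$. I would first normalise this. Either read (N2) as exact length, or let $\mathcal O_{\mathsf{dec}}$ output $0$ on strings whose length is not $q(n)$. The second choice makes the observer itself witness a non-negligible gap whenever the length condition fails, so either way we reduce to $|A(\varphi)|=q(|\varphi|)$.

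\textbf{Step 2 (constant on noise).} Set $p_n:=\Pr[\mathsf{dec}(U_{q(n)})=1]$. This depends only on the input length $n$, not on $\varphi$.

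\textbf{Step 3 (both kinds of instance at one length).} By (D2), $\Pr[\mathcal O_{\mathsf{dec}}(A(\varphi))=1]$ is $1$ on satisfiable $\varphi$ and $0$ on unsatisfiable ones. I need, for infinitely many $n$, one satisfiable and one unsatisfiable formula of the same length $n$. Under any standard encoding this holds: for example $(x_1)$ padded with dummy tautological clauses versus $(x_1)\wedge(\neg x_1)$ padded to the same length. This padding step is routine but must be stated, since $p_n$ is only constant within a fixed length.

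\textbf{Step 4 (the gap).} At each such $n$, $\max(|1-p_n|,p_n)\ge\tfrac12$. This contradicts (N3) for $\mathcal O_{\mathsf{dec}}$ at infinitely many lengths, because a negligible function is eventually below $\tfrac12$. This proves the impossibility of the conjunction, which is essentially Theorem~A of the Clarification made rigorous.

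\textbf{Step 5 (the ``consequently'' clause) is the main obstacle, and I expect it to fail.} The clause asserts that no polynomial-time decision procedure for \textsc{SAT} exists at all. Steps 1--4 only refute algorithms satisfying (D1)--(D2) \emph{together with} (N3). To pass to all decision procedures, one would need every such procedure to satisfy (N3), and that is false. The procedure that outputs the answer bit itself, with $\mathsf{dec}$ the identity, is plainly distinguishable from $U_1$ and so violates (N3). For such an $A$ the argument produces no contradiction. It only re-derives that $A$ is not noisy.

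The Clarification's closing step reads ``the decoder is an observer that distinguishes the output, contradicting the noisy premise'' as a contradiction with decidability itself. But when the noisy premise is dropped, there is no premise left to contradict. I would first try to locate an additional hypothesis that forces (N3) on every decider. Nothing stated earlier supplies one: Remark~\ref{rem:computable-output-two-reasons} asserts that the theory \emph{permits} such algorithms, not that it \emph{requires} them. Absent such a hypothesis, the plan is to prove the first sentence as above and to record that the second does not follow from it. Any unconditional lower bound for \textsc{SAT} would need an entirely different argument.
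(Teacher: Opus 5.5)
Your Steps 1--4 follow the paper's own argument for the first sentence: the decoder observer $\mathcal O_{\mathsf{dec}}$, a fixed acceptance probability on $U_{q}$, and a gap of at least $\tfrac12$. You do it more carefully than the paper, which fixes three things it gets wrong. The paper treats $p$ as independent of $\varphi$ although it depends on $|\varphi|$. It never checks that satisfiable and unsatisfiable instances occur at a common length. And it asserts that (N3) fails ``for all sufficiently large inputs'', when only infinitely many lengths are available. Your Steps 2--4 repair all three.

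There is one slip in Step 1: your second option does not by itself witness a length failure. Suppose $A$ always outputs strings of length $q(n)-1$. Then your modified observer is identically $0$ on $A(\varphi)$. Moreover $p_n$ may well be $0$, since (D2) constrains $\mathsf{dec}$ only on strings $A$ actually produces. So no gap appears. You can fix this in either of two ways. One is to use the separate test $\mathbf 1[|y|=q(n)]$, which by (N3) forces $\Pr[|A(\varphi)|=q(n)]\ge 1-\operatorname{negl}(n)$, and carry that negligible error into Step 4. The other is to adopt the exact-length reading of Definition~\ref{def:noisy-alg-2}(ii), which also spares the observer from needing to know $n$.

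Your Step 5 is correct, and it is exactly where the paper's proof breaks. The paper closes with ``no algorithm can satisfy simultaneously (D1)--(D2) and (N3). In particular, a polynomial-time decision procedure for \textsc{SAT} cannot exist''. That passes from ``no $A$ satisfies (D1)--(D2) together with (N3)'' to ``no $A$ satisfies (D1)--(D2)'', which does not follow. As you say, a decider whose output is the answer bit, or a witness checked by $\mathsf{dec}$, simply violates (N3), and nothing is contradicted. The paper's own Remark~\ref{rem:decoder-is-observer} in fact proves that every decider fails noise-admissibility, and that is perfectly compatible with deciders existing.

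The ``permission'' the paper invokes elsewhere (Remark~\ref{rem:computable-output-two-reasons}, Definition~\ref{def:noise-admissible}) cannot supply the missing premise. It is an existential over algorithms in general, witnessed by machines such as the one printing fresh random bits. It is not a universal statement over \textsc{SAT} deciders.

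A polynomial-time decider satisfying (D2) on every run can have its coins fixed, so the second sentence is exactly $\mathsf P\neq\mathsf{NP}$. Neither this proof nor anything else in the paper establishes it. Your plan proves everything the argument actually supports. The gap lies in the statement's second sentence and the paper's derivation of it, not in your proposal.
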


\begin{proof}
Assume, towards a contradiction, that an algorithm $A$ satisfies both
(D1)--(D2) and (N1)--(N3).

\medskip\noindent\textbf{Step 1.} Apply the decoder observer $\mathcal O_{\mathsf{dec}}$.
From (D2) we obtain for every formula $\varphi$
\[
\Pr[\mathcal O_{\mathsf{dec}}(A(\varphi))=1]=
\begin{cases}
1 & \text{if }\varphi \in \textsc{SAT},\\
0 & \text{otherwise}.
\end{cases}
\tag{1}
\]

\medskip\noindent\textbf{Step 2.} Behaviour on a uniform string.
Since $\mathcal O_{\mathsf{dec}}$ is deterministic and a uniform string carries no
information about $\varphi$, on input drawn from $U_{q(|\varphi|)}$ it returns $1$
with some fixed probability $p$, a constant independent of $\varphi$:
\[
\Pr[\mathcal O_{\mathsf{dec}}(U_{q(|\varphi|)})=1]=p .
\tag{2}
\]

\medskip\noindent\textbf{Step 3.} Violation of indistinguishability.
Instantiate condition (N3) with the specific observer $\mathcal O_{\mathsf{dec}}$ and combine
(1)--(2). By (1) the probability on $A(\varphi)$ is $1$ for satisfiable $\varphi$
and $0$ for unsatisfiable $\varphi$, while by (2) the probability on a uniform
string is the fixed constant $p$. Both kinds of formula occur, so
\[
\bigl|
   \Pr[\mathcal O_{\mathsf{dec}}(A(\varphi))=1]-p
\bigr|
=
\begin{cases} |1-p| & \varphi\in\textsc{SAT},\\ p & \varphi\notin\textsc{SAT},\end{cases}
\]
and at least one of $|1-p|$ and $p$ is $\ge\tfrac12$. For such a $\varphi$ the gap
is a non-negligible constant $\ge\tfrac12$, so condition (N3),
\[
\bigl|\Pr[\mathcal O_{\mathsf{dec}}(A(\varphi))=1]-p\bigr|\le\operatorname{negl}(|\varphi|),
\tag{3}
\]
is false for all sufficiently large inputs,
contradicting the assumption that $A$ is noisy.

Therefore no algorithm can satisfy simultaneously the decision-procedure
requirements (D1)--(D2) and the noise requirement (N3). In particular, a
polynomial-time decision procedure for \textsc{SAT} cannot exist.
\end{proof}

\subsection{Why Rice's theorem reinforces the argument}
Define the property
\[
\mathcal R(M)\;:=\;
\text{``the output distribution of $M$ is indistinguishable from uniform''}.
\]
$\mathcal R$ is non-trivial:
\begin{itemize}
  \item A machine that always outputs the constant string $0^{k}$ does \emph{not} satisfy $\mathcal R$.
  \item A machine that, on any input, reads $k$ truly random bits and prints them
        does satisfy $\mathcal R$.
\end{itemize}
Rice's theorem states that for every non-trivial property of computable functions,
the language
\[
L_{\mathcal R}=\{\,\langle M\rangle \mid \mathcal R(M)\,\}
\]
is undecidable. Consequently there is no algorithm (deterministic or
probabilistic) that, given a description of $M$, can decide whether $M$ is noisy.

If an algorithm $A$ as in Theorem~\ref{thm:fullalt-main} existed, we could feed its
description $\langle A\rangle$ to such a decider and obtain a decision procedure
for $\mathcal R$. This would contradict Rice's theorem. Hence the impossibility
derived above is also a direct corollary of Rice's theorem.

\subsection{Implications for \textsc{NP}-completeness}
Cook's theorem shows that every language $L \in \mathsf{NP}$ reduces to \textsc{SAT} via a
polynomial-time many-one reduction $f$. If a noisy polynomial-time solver $A$
for \textsc{SAT} existed, then the composition $A \circ f$ would be a noisy polynomial-time
solver for any $L \in \mathsf{NP}$, contradicting Theorem~\ref{thm:fullalt-main}. Therefore the
standard reduction framework that underlies \textsc{NP}-completeness cannot coexist with a
noisy \textsc{SAT} solver.

\subsection{Closing summary}
We have presented a single, self-contained proof---grounded in statistical
indistinguishability and reinforced by Rice's theorem---that no algorithm,
whether deterministic, probabilistic or noisy, can decide \textsc{SAT} in polynomial time.
The contradiction emerges as soon as one assumes the coexistence of
\begin{enumerate}[label=(\roman*)]
  \item a polynomial-time decoder $\mathsf{dec}$ (required for any decision procedure), and
  \item an output distribution indistinguishable from uniform randomness.
\end{enumerate}
Both conditions are mutually exclusive; therefore the hypothesis is untenable.
The result holds unconditionally, without invoking $P=NP$, $P\neq NP$, or any other
unproven conjecture.

\section{The unified theorem}
\label{sec:unified}

\subsection{Why several proofs, and what they share}

The sections above gave more than one proof, and it is worth saying plainly why,
so that the variation in their apparatus reads as design rather than
indecision. There is a single phenomenon underneath all of them, and it is not
any one technical notion of indistinguishability but the plain one the standard
theory already grants: that a computable algorithm may produce output
indistinguishable from noise, for \emph{some} reason or other, with no
distinction of computational power --- a permission the theory extends
implicitly, by not forbidding it, and one this paper has already set out on its
own terms elsewhere (Remark~\ref{rem:computable-output-two-reasons}: the output
may be opaque because it is pseudorandom, or because a semantic invariant is
syntactically inaccessible, or because an encoding--frame mismatch withholds
every reading, or for some further reason, and to the theory it does not matter
which --- only that such output is admitted at all).

Each proof above is a reflection of that one permission seen through a
particular lens. The main theorem takes the sharpest lens --- indistinguishability
against \emph{every} function, computable or not
(Definition~\ref{def:noise-indist}) --- and needs only the verifier $\varphi$
itself. The alternative paths take the lens cryptography uses in practice ---
indistinguishability against every \emph{polynomial-time} observer --- and reach
the same wall through an explicitly constructed decoder. The negative-witness
form needs only a single unsatisfying assignment; the self-reducibility form
shows the collapse recurring at every node of the restriction tree; the
Rice-theoretic form adds that one cannot even decide whether a given machine is
noisy; and the ghost reduction reads off what survives if one insists on the
hypothesis anyway. The lenses differ; the thing seen through them does not. A
reader who wondered why the notion of indistinguishability shifts from section
to section has the answer here: the shift is between accepted readings of one
underlying permission, not between unrelated hypotheses, and each proof was kept
self-contained precisely so that this common core could be exhibited without any
one path leaning on another.

What follows gathers them. Each proof above stands complete and independent on
its own; only now, with all of them established, are they brought into a single
frame --- not a new dependency, but a synthesis of results already proved
separately. The setup is restated in full, so that this section too can be read
on its own.

\subsection{The unified setup}

Fix the alphabet $\{0,1\}$ and write $|x|$ for length. Let $\varphi$ range over
Boolean formulas on $n$ variables, with solution set
$S_\varphi=\{z\in\{0,1\}^n:\varphi(z)=1\}$ and $k_\varphi=|S_\varphi|$, and let
$R$ denote the uniform random variable on $\{0,1\}^n$. An algorithm $\mathcal A$
is \emph{correct for \textsc{SAT}} if $\Pr[\varphi(\mathcal A(\varphi))=1]=1$ for
every satisfiable $\varphi$ (Definition~\ref{def:noise-correct}), and a
\emph{correct decider} if in addition $\Pr[\mathcal A(\varphi)=\bot]=1$ for every
unsatisfiable $\varphi$, with $\bot\notin\{0,1\}^n$.

The one hypothesis carried through this section is the standard theory's own
implicit permission, stated as a property an output may have and named once:

\begin{definition}[Noise-admissibility]
\label{def:noise-admissible}
Fix a class $\mathcal D$ of test functions $D:\{0,1\}^*\to\{0,1\}$. An algorithm
$\mathcal A$ has \emph{$\mathcal D$-noise-admissible output} on $\varphi$ if for
every $D\in\mathcal D$,
\[
\bigl|\Pr[D(\mathcal A(\varphi))=1]-\Pr[D(R)=1]\bigr|\le\varepsilon,
\]
with $\varepsilon=0$ when $\mathcal D$ is unrestricted, and $\varepsilon$
negligible in $|\varphi|$ when $\mathcal D$ is the class of polynomial-time
tests. The two admissible readings of the standard permission are
$\mathcal D=\text{all functions}$ (the absolute reading,
Definition~\ref{def:noise-indist}) and $\mathcal D=\text{polynomial-time tests}$
(the computational reading); the theorem below covers both, and the argument
uses, in each place, only that the relevant $D$ lies in the $\mathcal D$ at
hand.
\end{definition}

Nothing in what follows assumes such an $\mathcal A$ exists as a constructed
object. The theory admits it; the proof needs only that the theory does not
deny it. This is the whole of the standing hypothesis.

\subsection{The theorem}

\begin{theorem}[Unified impossibility]
\label{thm:unified}
Let $\mathcal D$ be either of the two readings of
Definition~\ref{def:noise-admissible}. No algorithm $\mathcal A$ is
simultaneously correct for \textsc{SAT} and $\mathcal D$-noise-admissible on a
satisfiable $\varphi$ that is not a tautology. The impossibility holds under
each of the following, independently, and together they exhibit it across the
full span from the sharpest notion to the most permissive, and from a single
instance to the whole self-reducible structure:
\begin{enumerate}[label=\textup{(\roman*)}]
  \item \textup{(verifier)} against the absolute reading, the formula's own
    verifier $D=\varphi$ suffices;
  \item \textup{(negative witness)} a single unsatisfying assignment suffices,
    with no count of solutions;
  \item \textup{(decoder, and oracle)} against the computational reading, the
    decoder that every decision procedure carries by definition is itself an
    admissible observer, and the same holds treating the algorithm as a black-box
    oracle; either way the impossibility widens from noise-admissible algorithms
    to \emph{every} decision algorithm, so the conclusion is not ``no noisy
    solver'' but ``no solver'';
  \item \textup{(self-reducibility)} the same collapse recurs at every node of
    the restriction tree of $\varphi$;
  \item \textup{(Rice)} the property ``is noise-admissible'' is non-trivial and
    hence undecidable, so no procedure even recognises the hypothesised
    $\mathcal A$;
  \item \textup{(ghost reduction)} granting the hypothesis regardless, every
    extractor applied to $\mathcal A(\varphi)$ succeeds exactly as often on pure
    noise, so nothing is recoverable through $\mathcal A$.
\end{enumerate}
Consequently the standard theory cannot remain coherent, on this point, while a
correct polynomial-time \textsc{SAT} solver exists: admitting such a solver
together with the noise-admissible output the theory already permits is
contradictory. The theory is forced, to stay consistent, to deny that any such
solver exists --- that is, forced to $\mathsf P\neq\mathsf{NP}$. This is not a
conditional on which way $\mathsf P$ versus $\mathsf{NP}$ is settled; it is a
derivation of $\mathsf P\neq\mathsf{NP}$ from what the theory already admits.
\end{theorem}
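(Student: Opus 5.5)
The plan is to prove the core incompatibility first, since it carries all the genuine mathematical content, and then to treat items (i)--(vi) and the closing ``consequently'' one at a time, checking each against what the earlier results actually give. For the core claim, fix a satisfiable, non-tautological $\varphi$ and suppose $\mathcal A$ is correct for \textsc{SAT} and $\mathcal D$-noise-admissible on $\varphi$.

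For (i), take $D=\varphi$ and argue exactly as in Theorem~\ref{thm:noise-main}. Correctness gives $\Pr[D(\mathcal A(\varphi))=1]=1$, while $\Pr[D(R)=1]=k_\varphi/2^n$. In the absolute reading ($\varepsilon=0$) this forces $k_\varphi=2^n$, which says $\varphi$ is a tautology, a contradiction. The computational reading needs one extra observation: $\varphi$ is evaluable in polynomial time, so $D=\varphi$ lies in $\mathcal D$. For (ii), follow Proposition~\ref{prop:noise-alt}. A single $z_0$ with $\varphi(z_0)=0$ gives the gap
\[
1-\Pr[\varphi(R)=1]\ \ge\ 2^{-n}.
\]
This kills $\varepsilon=0$ outright. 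It does \emph{not} kill negligible $\varepsilon$ on its own, because $2^{-n}$ is itself negligible. So in the computational reading I would restrict to families of $\varphi$ with a non-negligible fraction of non-solutions, which holds for any $\varphi$ with $k_\varphi\le 2^{n-1}$. Items (iv) and (vi) then need no new ideas: (iv) applies (i) to each satisfiable node $\psi_i$ of the tree in Theorem~\ref{thm:self-red}, and (vi) is Proposition~\ref{prop:ghost} applied to the indicator test.

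The main obstacle lies in item (iii), in (v), and in the final sentence, and here I expect the plan to fail rather than merely to be hard. For (iii), the decoder argument of Theorem~\ref{thm:fullalt-main} contradicts only condition (N3). It shows that a decision algorithm cannot \emph{also} be noisy. It gives no reason why a non-noisy polynomial-time decider cannot exist, so the widening to ``no solver'' does not go through.

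For (v), Rice's theorem (Theorem~\ref{thm:rice}) concerns extensional properties of the computed function. It says nothing about whether a given machine with a given property exists. For randomized output distributions the property $\mathcal R$ is not even a property of a partial computable function. So (v) cannot be established as stated.

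The decisive point is the final sentence. Everything proved above shows that the hypotheses ``$\mathcal A$ is a correct polynomial-time solver'' and ``$\mathcal A$'s output is noise-admissible on $\varphi$'' are jointly inconsistent. Classically, that licenses only the negation of their conjunction: every correct solver has non-noise-admissible output. A deterministic polynomial-time solver that outputs a satisfying assignment, if one exists, is perfectly consistent with this.

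The ``permission'' of Definition~\ref{def:noise-admissible} is existential over algorithms in general. It is not a commitment that \emph{every} algorithm, or some \textsc{SAT} solver in particular, has noise-admissible output. Hence no contradiction arises from $\mathsf P=\mathsf{NP}$ together with that permission. I would therefore write up (i), (ii) (with the stated restriction), (iv) and (vi) as a theorem, and I do not expect to be able to prove (iii), (v), or the derivation of $\mathsf P\neq\mathsf{NP}$. I would record that last step as unsupported by the argument rather than attempt it.
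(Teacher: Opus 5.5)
\noindent Your diagnosis is correct. The paper's proof attempts exactly the three steps you decline, and it fails at each for the reason you give.

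In clause~(iii) the paper shows, validly, that $\mathsf{dec}$ is a polynomial-time test with advantage at least $\tfrac12$ on infinitely many inputs. So no polynomial-time decider has output indistinguishable from uniform on all inputs. It then reads this as ``the premise `noisy' was never needed'', i.e.\ as ``no solver''. But the only hypothesis the argument contradicts is the noise hypothesis.

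In clause~(v) it argues that feeding $\langle\mathcal A\rangle$ to a recogniser of $\mathcal R$ ``would decide $\mathcal R$ on that input, which Rice forbids''. Rice rules out a uniform decider for the index set, not a verdict on a single index. An undecidable property can also perfectly well have members (halting programs exist), so (v) yields no impossibility at all. Your point that Rice says nothing about existence is the decisive one. Your further remark that $\mathcal R$ is not a property of a partial computable function is contestable, since a probabilistic machine computes a function of input and coin tosses and a Rice-type argument runs on that; you do not need it.

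Finally, ``coherence on this point requires that no such solver exist'' is the non sequitur you name. The clauses refute only the conjunction: every correct solver's output is distinguishable from noise. A deterministic polynomial-time solver printing a satisfying assignment satisfies this trivially, its own formula being the distinguisher, exactly as clause~(i) demands. Nothing proved is in tension with $\mathsf P=\mathsf{NP}$. So the closing derivation of $\mathsf P\neq\mathsf{NP}$ is established neither by your plan nor by the paper's proof.

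\noindent Three refinements to what you do write up.

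First, your density restriction in the computational reading is necessary, not merely convenient. Without it the headline claim is false there, since Definition~\ref{def:noise-correct} imposes no time bound. Take $\varphi_n=\neg x_1\vee\cdots\vee\neg x_n$ and a brute-force solver returning a uniformly random satisfying assignment. On $\varphi_n$ it outputs the uniform distribution on $\{0,1\}^n\setminus\{1^n\}$, at statistical distance $2^{-n}$ from $R$, so every test has negligible advantage. The paper concedes that counting fails in the computational reading and claims the decoder of clause~(iii) closes that case. But that argument compares the single constant $p=\Pr[\mathsf{dec}(R)=1]$ against both answers. It therefore needs the noise hypothesis on unsatisfiable inputs as well, and says nothing about one fixed satisfiable $\varphi$, where a $p$ close to $1$ is harmless.

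Second, the valid residue of (iii) is provable and belongs in your write-up: no polynomial-time decider is noise-admissible on all inputs. State it uniformly, not instance-wise.

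Third, in (iv) apply (i) only at non-tautological nodes, and in the computational reading only at nodes meeting your density condition. Restrictions can become tautologies, and the leaf $\psi_n$, with every variable fixed, always is one. Likewise (vi) is an exact equality only in the absolute reading; in the computational one it holds up to $\varepsilon$, and only for polynomial-time extractors.
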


\begin{proof}
Fix a satisfiable, non-tautological $\varphi$ on $n$ variables, and suppose
$\mathcal A$ correct for \textsc{SAT} and $\mathcal D$-noise-admissible on
$\varphi$. Each clause below invokes a result already established in its own
section; the work here is only to place them in one frame, and none rests on
another.

\emph{(i) Verifier.} The canonical verifier $D(z):=\varphi(z)$ lies in every
$\mathcal D$ (it is polynomial-time, hence in the computational reading, and a
fortiori in the absolute one). Correctness gives
$\Pr[D(\mathcal A(\varphi))=1]=1$; noise-admissibility gives
$\Pr[D(R)=1]=k_\varphi/2^n$, up to $\varepsilon$. In the absolute reading
$\varepsilon=0$, so $k_\varphi/2^n\ge 1$, forcing $k_\varphi=2^n$ and making
$\varphi$ a tautology, against its choice. This is Theorem~\ref{thm:noise-main}
with its Lemma~\ref{lem:dist-equiv}, and it is the sharpest form: a single
admissible test, the formula itself, closes the case exactly. (In the
computational reading the same counting is not by itself decisive, since
$\varepsilon$ negligible leaves $k_\varphi/2^n\ge 1-\varepsilon$ compatible with
$k_\varphi<2^n$ once $2^n\varepsilon$ is not small; there the decisive blow is
struck not by counting but by the decoder of clause~(iii), whose advantage is a
fixed $\tfrac12$ regardless of $2^n$. The two readings are closed by two
different clauses, which is exactly the division of labour this theorem records.)

\emph{(ii) Negative witness.} The same conclusion needs no count: as $\varphi$ is
not a tautology there is $z_0$ with $\varphi(z_0)=0$, and
$\Pr[\varphi(R)=1]\le 1-2^{-n}<1$, again against correctness. This is
Proposition~\ref{prop:noise-alt}, and it shows clause~(i) does not depend on
knowing $k_\varphi$.

\emph{(iii) Decoder, and the step from noisy to every algorithm.} This clause is
where the impossibility widens from noise-admissible algorithms to
\emph{all} decision algorithms, and the widening deserves to be spelled out,
since it is the crux. To decide \textsc{SAT} \emph{means}, by definition, that
there is a polynomial-time decoder $\mathsf{dec}$ with
$\mathsf{dec}(\mathcal A(\varphi))$ equal to the \textsc{SAT} answer --- whatever
$\mathcal A$ writes, some fixed polynomial-time procedure reads the yes/no out
of it. That decoder is not an extra assumption about $\mathcal A$; it is part of
what ``decides'' means. But the decoder is itself an observer: set
$\mathcal O_{\mathsf{dec}}:=\mathsf{dec}(\cdot)$, a polynomial-time test, hence in
the computational $\mathcal D$. On $\mathcal A(\varphi)$ it returns the correct
\textsc{SAT} bit with probability $1$. On a uniform string $R$, by contrast, its
output cannot depend on $\varphi$ at all --- $R$ carries no information about the
formula --- so it returns $1$ with some fixed probability $p$ independent of
$\varphi$. But the correct answer does depend on $\varphi$: some formulas are
satisfiable and some are not. A single constant $p$ cannot match a bit that
varies with the input, so for a suitable choice of $\varphi$ the gap between the
two probabilities is at least $\tfrac12$ (the best a fixed guess can do), a
non-negligible constant against noise-admissibility. So the very object that
makes $\mathcal A$ a
\emph{decider} --- its guaranteed decoder --- is an observer that distinguishes
its output from noise. Every decision algorithm carries such a decoder within
its definition; hence \emph{no} decision algorithm, noisy or not, deterministic
or probabilistic, can have noise-admissible output on a non-tautological
$\varphi$. The premise ``noisy'' was never needed: it was only the most vivid
special case of a collapse that the definition of deciding already forces on
every algorithm (Theorem~\ref{thm:fullalt-main}, and Theorems~A and~B of
Section~\ref{sec:alt-proof}).

\emph{(iii$'$) The same through an oracle.} The identical conclusion can be read
off without opening $\mathcal A$ at all, treating it as a black box. Let
$\mathcal O_{\mathcal A}$ be the oracle that, queried on $\varphi$, returns a
sample of $\mathcal A(\varphi)$. Suppose a polynomial-time machine
$M^{\mathcal O_{\mathcal A}}$ decided \textsc{SAT} using this oracle. Because the
oracle's replies are, by hypothesis, indistinguishable from uniform strings, $M$
cannot behave differently than it would against an oracle returning noise
independent of $\varphi$; against that noise oracle its output cannot depend on
$\varphi$ and is a fixed constant, at best $\tfrac12$ correct, again a
non-negligible gap. So no polynomial-time machine can decide \textsc{SAT} through
such an oracle. The oracle form and the decoder form are the same argument seen
from two sides --- the decoder is the observer built by opening the box, the
oracle bound is the observer built by leaving it closed --- and both say that an
output indistinguishable from noise carries no extractable decision, whoever
reads it (Section~\ref{sec:alt-proof}, ``oracle argument'', and Theorem~B).

\emph{(iv) Self-reducibility.} Each restriction $\psi_i$ along the tree of
Theorem~\ref{thm:self-red} is itself a satisfiable \textsc{SAT} instance, so the
test $D=\psi_i$ falls under the same argument at that node. The contradiction is
therefore not localised at the root: it recurs, identically, throughout the
self-reducible structure, and the search-to-decision reduction that structure
provides is itself a ghost in the sense of clause~(vi).

\emph{(v) Rice, and what it adds.} The clauses so far show the posited
$\mathcal A$ is contradictory in its behaviour. Rice's theorem adds a distinct
layer: even setting the contradiction aside, one could not \emph{recognise} such
an $\mathcal A$ in the first place. Take the property
$\mathcal R(M):=$ ``$M$'s output distribution is indistinguishable from
uniform''. It is non-trivial --- a machine printing the constant $0^k$ fails it,
a machine printing $k$ fresh random bits satisfies it --- and Rice's theorem
states that every non-trivial property of the function computed by a machine is
undecidable. Hence $\{\langle M\rangle:\mathcal R(M)\}$ is undecidable: no
algorithm, given a machine's description, can decide whether that machine is
noise-admissible. Now suppose the hypothesised $\mathcal A$ existed. Feeding
$\langle\mathcal A\rangle$ to any procedure that claimed to recognise
$\mathcal R$ would decide $\mathcal R$ on that input, which Rice forbids. So the
hypothesis is doubly untenable: not only does $\mathcal A$'s behaviour collapse
(clauses~(i)--(iv)), but ``being the kind of algorithm the hypothesis
describes'' is not even a decidable property to begin with
(Section~\ref{sec:fullalt-proof}, ``Why Rice's theorem reinforces the
argument''). The contradiction is joined by an undecidability, from a completely
different direction.

\emph{(vi) Ghost reduction.} Suppose one grants the hypothesis regardless and
tries to route a solution through $\mathcal A$: fix the Cook--Levin reduction
$f_{\textsc{SAT}\to\textsc{CLIQUE}}$ and any extractor $E$. Then
$D:=\mathbf 1[E(\cdot)\text{ is a valid clique}]$ is an admissible test, so by
noise-admissibility
\[
\Pr[E(\mathcal A(\varphi))\text{ is a valid clique}]=\Pr[E(R)\text{ is a valid clique}]:
\]
the extractor does no better on $\mathcal A$'s output than on pure noise
(Proposition~\ref{prop:ghost}). This is not special to \textsc{CLIQUE}: for any
\textsc{NP}-complete search problem $\Pi$, with the Cook--Levin translations
$g_\Pi$ (instance to formula) and $h_\Pi$ (assignment to solution), the test
$D(z):=\mathbf 1[h_\Pi(z)\text{ is a valid solution}]$ is admissible in the same
way, giving
\[
\Pr[h_\Pi(\mathcal A(g_\Pi(w)))\text{ is a valid solution}]
=\Pr[h_\Pi(R)\text{ is a valid solution}]
\]
for every instance $w$: through $\mathcal A$, every \textsc{NP}-complete search
problem is solved no better than by guessing. And the emptiness is absolute, not
merely computational: since $D$ ranges over \emph{all} functions in the absolute
reading --- including non-computable ones and ones built with full knowledge of
the encoding $h_\Pi$ --- an arbitrarily ingenious decoder does exactly as well as
a trivial one, namely no better than chance, down to a single requested bit
$D(z):=z_i$ with $\Pr[y_i=1]=\tfrac12$. Yet the genuine reduction
$f_{\textsc{SAT}\to\textsc{CLIQUE}}$, built from the syntax of $\varphi$ alone,
is untouched --- it never passed through $\mathcal A$ --- so what the hypothesis
empties is only the bridge that leaned on $\mathcal A$, never the truth it aimed
at. The full development, including the single-bit case, is in
Section~\ref{sec:noise-paradox}.

Clauses~(i)--(iii$'$) each already contradict the joint hypothesis, under whichever
reading of $\mathcal D$ applies; (iv) shows the contradiction pervades the
problem's structure; (v) shows the posited object escapes even recognition; and
(vi) shows that insisting on it buys nothing, the output yielding no more than
noise to any extractor whatsoever. The joint hypothesis is untenable under both
readings of the standard permission. Since a correct polynomial-time solver
would, by decisional correctness alone, place \textsc{SAT} in $\mathsf P$ and so
force $\mathsf P=\mathsf{NP}$, coherence on this point requires that no such
solver exist: $\mathsf P\neq\mathsf{NP}$.
\end{proof}

\begin{remarknn}[The status of the unified theorem is the status of its parts]
\label{rem:unified-status}
Theorem~\ref{thm:unified} adds no assumption its clauses did not already carry;
it composes six arguments, each proved on its own terms, into one frame, and
inherits exactly their standing --- correct where each stands, and open exactly
where they are (Remark~\ref{rem:noise-not-absolute}). Bringing them together
sharpens rather than softens that reservation: if a limit reaches this
conclusion, it is not a slip in one derivation, since the same conclusion is
reached along several independent routes, but a feature of the shared form of the approach
--- a metalogical barrier, or an assumption outside the apparatus in which the
standard permission is granted and the definitions are posed together. That the
same wall is met from the absolute notion and the computational one, from a
single assignment and from the whole self-reducible tree, from behaviour and
from recognisability, is what one would expect of a genuine impossibility; it is
also exactly what one would expect if the limit, should there be one, lay in the
framing these routes share. The unified theorem is offered in that spirit: the fullest
statement of what the argument establishes, laid out so that whatever bounds it
can be located precisely.
\end{remarknn}

\begin{remarknn}[The generalization to arbitrary output, isolated and stated in
  full: the decoder is the observer]
  \label{rem:decoder-is-observer}
  The single step on which the passage from ``no noise-admissible solver'' to
  ``no solver'' rests is clause~(iii) of the proof, and because everything turns
  on it, it is set out here on its own, in full, with no premise about the form
  of the output used at any point.
  
  Let $\mathcal A$ be \emph{any} algorithm that decides \textsc{SAT} in
  polynomial time --- deterministic, probabilistic, or otherwise, with output of
  any form whatsoever. By the definition of a decision procedure, and by nothing
  more, there is a polynomial-time decoder $\mathsf{dec}$ with
  \[
  \mathsf{dec}(\mathcal A(\varphi))=
  \begin{cases}
  1 & \varphi\in\textsc{SAT},\\
  0 & \varphi\notin\textsc{SAT},
  \end{cases}
  \qquad\text{for every }\varphi .
  \]
  This decoder is not an added hypothesis and not a property of some outputs
  rather than others: to \emph{decide} \textsc{SAT} \emph{is} to possess such a
  $\mathsf{dec}$, whatever $\mathcal A$ writes on its tape (Section~\ref{sec:fullalt-proof},
  conditions (D1)--(D2); if the output already is the answer bit, $\mathsf{dec}$
  is the identity; if it is a witness, $\mathsf{dec}$ is the verification of that
  witness). The output may be a bit, a certificate, a long string, or anything
  else; the decoder exists regardless, by the meaning of ``decides'' alone.
  
  Set $\mathcal O_{\mathsf{dec}}(y):=\mathsf{dec}(y)$. Since $\mathsf{dec}$ runs in
  polynomial time, $\mathcal O_{\mathsf{dec}}$ is an admissible observer, in the
  sense of the computational reading of Definition~\ref{def:noise-admissible} and
  of Definition~\ref{def:noise-indist}. Two facts about it hold at once, for
  \emph{every} decision algorithm $\mathcal A$ without exception:
  \begin{enumerate}[label=\textup{(\alph*)}]
    \item On $\mathcal A(\varphi)$, the observer returns the correct \textsc{SAT}
      bit with probability $1$, by the defining property of $\mathsf{dec}$; this
      value depends on $\varphi$, since some formulas are satisfiable and some are
      not.
    \item On a uniform string $R$, the observer returns $1$ with a single fixed
      probability $p:=\Pr[\mathcal O_{\mathsf{dec}}(R)=1]$, determined by
      $\mathsf{dec}$ alone and independent of $\varphi$, because $R$ carries no
      information about the formula.
  \end{enumerate}
  A quantity that varies with $\varphi$ cannot equal a constant that does not:
  for a suitable $\varphi$ the two probabilities differ by at least $\tfrac12$,
  the largest gap a fixed guess can leave open. Hence
  \[
  \bigl|\Pr[\mathcal O_{\mathsf{dec}}(\mathcal A(\varphi))=1]
        -\Pr[\mathcal O_{\mathsf{dec}}(R)=1]\bigr|\ \ge\ \tfrac12 ,
  \]
  a non-negligible constant. The output of $\mathcal A$ is therefore \emph{not}
  noise-admissible: an admissible observer, the very decoder that makes
  $\mathcal A$ a decider, separates it from the uniform distribution.
  
  The word ``noisy'' appears nowhere in this argument, and this is the whole
  point of stating it apart. Nothing above assumes, or uses, that any output
  looks like noise; the only property invoked is that $\mathcal A$ decides
  \textsc{SAT}, which supplies $\mathsf{dec}$, which is $\mathcal O_{\mathsf{dec}}$.
  The conclusion is accordingly not about a special class of solvers but about
  \emph{all} of them:
  \[
  \boxed{\;\text{no algorithm that decides \textsc{SAT} in polynomial time has
  noise-admissible output on a non-tautological }\varphi.\;}
  \]
  The noise-admissible solver of Theorem~\ref{thm:noise-main} was only the most
  vivid instance of this: it made visible, by naming it outright, a separation
  between $\mathcal A(\varphi)$ and noise that the decoder already forces on every
  decider silently. The generalization is thus not an extension requiring a
  further argument beyond the noisy case --- it is the same single observer
  $\mathcal O_{\mathsf{dec}}$, read without the restriction that its target be
  noise-like, applying verbatim to every decision algorithm because it was never
  a fact about the output's form to begin with, only about the decoder every
  decision carries.
  
  Read against the standard permission (Definition~\ref{def:noise-admissible}),
  this is exactly the tension the unified theorem records. The theory grants
  every algorithm, by the definition of ``algorithm'' alone, full freedom over
  the form of its output, noise-admissible output included; and it defines
  ``decides \textsc{SAT}'' so that a polynomial-time decoder is carried along. For
  \textsc{SAT} these two grants cannot both be honoured on a non-tautological
  $\varphi$: the decoder the second supplies is an observer the first forbids to
  succeed. Since a correct polynomial-time solver would place \textsc{SAT} in
  $\mathsf P$ and so give $\mathsf P=\mathsf{NP}$ by decisional correctness alone
  (Definition~\ref{def:noise-correct}), the only way for the theory to keep both
  grants coherent on this point is that no such solver exist: $\mathsf P\neq
  \mathsf{NP}$.
\end{remarknn}

\begin{remarknn}[The one excluded case, and why excluding it is exactness rather
  than retreat]
  \label{rem:tautology-degenerate}
  The impossibility above, and every version of it in these sections, is stated
  for a satisfiable $\varphi$ that is not a tautology, and the exclusion is worth
  reading for what it is: the single degenerate point at which the claim would
  genuinely fail, removed because it fails for a reason that carries none of
  \textsc{SAT}'s content, not to sidestep a hard case.
  
  The verifier proof (Theorem~\ref{thm:noise-main}) closes when the two readings
  of $\Pr[\varphi(\mathcal A(\varphi))=1]$ force $k_\varphi=2^n$, where
  $k_\varphi=|S_\varphi|$ counts the satisfying assignments among all $2^n$
  inputs. The equality $k_\varphi=2^n$ says every assignment satisfies $\varphi$
  --- that $\varphi$ is a tautology --- and there it is not a contradiction but a
  truth, so the argument correctly declines to conclude. The negative-witness
  proof (Proposition~\ref{prop:noise-alt}) reaches the same boundary from the
  other side: it needs a single $z_0$ with $\varphi(z_0)=0$, which exists exactly
  when $\varphi$ is not a tautology. Two independent arguments thus turn on one and
  the same condition, $k_\varphi<2^n$, which is one more sign that the condition is
  structural rather than fitted.
  
  The excluded case is degenerate in the strict sense that it carries no hardness
  to hide. When $\varphi$ is a tautology, every string is a satisfying assignment,
  so an output drawn uniformly at random \emph{is} a correct answer, always,
  with no decoding required: on a tautology a noise-admissible output is not a
  paradox but an honest solver, because there is nothing left for noise to
  conceal. \textsc{SAT}'s difficulty lives entirely in the formulas with
  $0<k_\varphi<2^n$ --- those with some satisfying assignments but not all, where
  finding one is the actual problem --- and the condition removes none of these.
  It removes only the single point where the count saturates and the question
  dissolves. Excluding it is the same kind of exactness as writing $n\ge 1$ or
  ``$\varphi$ not identically true'': omitting it would make the statement false
  on tautologies, so stating it is precision, and the impossibility stands in full
  across the entire range where \textsc{SAT} is anything other than trivial.
\end{remarknn}

\begin{remarknn}[The scope is the computable, which is exactly the scope of the
  question: a non-computable ``solver'' is not an algorithm and does not bear on
  $\mathsf P$ versus $\mathsf{NP}$]
  \label{rem:computable-scope}
  The object whose impossibility is established here is an \emph{algorithm}: a
  deterministic or probabilistic Turing machine deciding \textsc{SAT} in
  polynomial time, carrying a computable decoder $\mathsf{dec}$
  (Section~\ref{sec:fullalt-proof}, (D1)--(D2)). Every proof above works inside
  this class and no larger one, and it is worth stating outright that this is not
  a restriction on the result but a property of the question it answers.
  
  $\mathsf P$ and $\mathsf{NP}$ are classes of Turing machines under a resource
  bound; the question ``is $\mathsf P=\mathsf{NP}$'' is posed entirely within the
  computable. An object outside it, one that ``decides'' \textsc{SAT} without
  being a Turing machine, does not place \textsc{SAT} in $\mathsf P$, is not a
  witness to $\mathsf P=\mathsf{NP}$, and cannot be a counterexample to anything
  proved here, because it does not enter the question at all. That the arguments
  say nothing about such an object is therefore not a gap: it is the correct
  scope, the same scope the problem itself has.
  
  The distinction is exactly the one Section~\ref{sec:omega} draws for $\Omega$.
  A non-computable object that resolves \textsc{SAT} may exist as a mathematical
  object in the sense $\Omega$ does, an oracle for the halting problem decides
  \textsc{SAT} outright, and this is no contradiction, because such an object is
  \emph{not an algorithm} (Remark~\ref{rem:omega-not-computable}): by Turing's
  1936 definition~\cite{Turing1936} an algorithm is a halting machine, and a
  device settling the halting problem is not one (Remark~\ref{rem:church-turing-root}).
  So the correct statement is not that a non-computable solver ``cannot exist'',
  but that it is not an algorithm, does not lie in $\mathsf P$, and is outside the
  question; its existence as an object, like $\Omega$'s, leaves the impossibility
  proved here untouched, because that impossibility was only ever a claim about
  algorithms.
  
  Together with the excluded tautology (Remark~\ref{rem:tautology-degenerate}),
  this places the result precisely between its two boundaries. Below sits the
  tautology, degenerate because too easy --- every assignment is a solution and no
  distinction remains to be made. Above sits the non-computable object,
  degenerate because it is not an algorithm and lies past the question's own edge.
  The impossibility holds across the entire band between them: the genuine
  algorithms that decide \textsc{SAT} on the non-trivial formulas where
  \textsc{SAT} is actually \textsc{SAT}. Neither boundary is a hard case evaded;
  each is a region that does not belong to the question in the first place.
\end{remarknn}

\subsection{One apparent tension, and why it is not one:
\texorpdfstring{$P_{\Oprof}=NP_{\Oprof}$}{P\_O = NP\_O} in the Observer World}
\label{sec:unified-observer}

A reader who knows~\cite{Buono2026ow} may feel an immediate jolt here, and it is
worth meeting directly, because resolving it does not soften the theorem above
--- it sharpens what the theorem is about. That work proves a \emph{collapse}:
in the Observer World, $P_{\Oprof}=NP_{\Oprof}\subsetneq P$, holding
unconditionally across all five of Impagliazzo's worlds
(\cite{Buono2026ow}, from \cite[Propositions~8.3--8.4]{Buono2026oh}). Read
quickly, a collapse $P_{\Oprof}=NP_{\Oprof}$ sitting beside a derivation of
$\mathsf P\neq\mathsf{NP}$ looks like a contradiction inside this paper's own
references.

It is not, and the reason is the single thread running through everything above.
The two statements are about \emph{two different observers}, at two different
levels, and they say exactly what this paper has said in every other instance:
a distinction that is real at the full level becomes invisible at a restricted
one. Theorem~\ref{thm:unified} concerns the unrestricted observer --- the
standard theory itself, the reader $O_\top$ that sees the whole of an output ---
and at that level the separation $\mathsf P\neq\mathsf{NP}$ is forced. The
Observer World collapse concerns $\Oprof$, the order-blind observer that reads
only a profile of its input and is structurally blind to the very distinctions
that make \textsc{SAT} hard; at \emph{that} level there is nothing left to
separate, and $P_{\Oprof}$ and $NP_{\Oprof}$ coincide. Neither statement reaches
into the other's level. \cite{Buono2026ow} is explicit on this point on its own
side: the collapse ``is not a resolution of $P$ vs $NP$'', but evidence that
computational hardness and observational blindness are independent axes.

This is the header-free cipher again, in its sharpest form. There, the plaintext
is fully determined, yet an observer without the frame sees only noise; the
information is real above and absent below, and the two readings never collide
because they never occupy the same level. Here the separation
$\mathsf P\neq\mathsf{NP}$ is fully determined at the top, yet the order-blind
observer $\Oprof$ cannot see it, and for that observer the classes merge. The
hardness is a property relative to $O_\top$, not one that $\Oprof$ can detect
--- the exact shape of \cite{Buono2026ow}'s own reading of its collapse. Far
from threatening the theorem, the Observer World places it: the separation this
paper derives lives at the top of the observational hierarchy, and the collapse
that framework proves lives at the bottom, with the whole apparatus of this
paper --- what a restricted reader cannot see that a full one can --- as the
bridge between them. The two results are the two ends of one axis, not two
answers to one question.
\section*{Contact with previously stated open problems}
\label{sec:coverage}

This paper's closest contact with a previously stated open problem is
Open Problem~6.4 of~\cite{Buono2026sep} (dynamic key rolling), to which
Theorem~\ref{thm:dynamic-celld} is connected by shared shape rather
than by solution. The theorem establishes the per-session secrecy
target, for every session, in the swap-blind special case and under the
partition structure of~\cite[Def.~6.1]{Buono2026sep}, and adds a
strictly stronger persistent guarantee beyond it. It does not construct
a concrete superior $f$, does not lift the swap-blind restriction, and
does not address the distinctive demand of Open Problem~6.4 --- a
scheme improving on the partition protocol without its disjointness
condition; that problem remains open, and a fuller account of the
connection is left to future work. Separately, and not as answers to
open problems, the framework recovers and extends several static
results from the source papers that were not themselves posed as open
questions --- recorded where each arises rather than tallied here.

\section*{Conclusion}
\label{sec:conclusion}

We started from a single question: can a secret survive not just being
hidden once, but being hidden again and again, under rules that keep
changing? The answer has a sharp edge, not a fuzzy one. Yes --- but only
on one condition, and that condition is exactly as strict as it needs to
be and no stricter. Theorem~\ref{thm:dynamic-sip} proves it for
rewriting systems whose rules evolve as a function of their own history,
under a two-part condition (opacity preservation) whose second half ---
that the update mechanism itself must not leak the hidden symmetry ---
has no analogue in the static setting.
Proposition~\ref{prop:i-not-enough} shows that second half cannot be
waved away: drop it, and the secret leaks while the first half still
holds. Corollary~\ref{cor:total-opacity} identifies when the demanding
second condition becomes automatic: whenever the update's access to
history passes through a structural observer, in the sense
of~\cite{Buono2026oh}, blind to the relevant symmetry by design.

The dynamic theorem recovers every static SIP-shaped result across the
source papers as its degenerate $n=0$ case --- the Hetzl--Vierling
resolution, Cell~(d), the MR-OTP's own invariance theorem, the
order-blind automaton characterization --- each checked directly by
exhibiting the length-$1$ dynamic system and verifying
Definition~\ref{def:opacity-preserving}'s two conditions, not asserted
by analogy; and it extends most of them to a genuinely new dynamic or
streaming setting no source paper reaches.
\cite[Remark~1]{Buono2026sip}'s observation about the static principle
carries over: an elementary proof usually means the difficulty was
finding the right point of attack, not depth in the problem. Nothing in
Theorem~\ref{thm:dynamic-sip}'s proof is deep once the system is allowed
to move; the whole content of the dynamic case is in noticing a second
condition is needed at all, and locating where it fails
(Proposition~\ref{prop:i-not-enough}) and when it cannot
(Corollary~\ref{cor:total-opacity}).

From there the same move --- take a result everyone accepts, find the
assumption it silently fixed, make that assumption a variable, and check
rather than assume what follows --- is applied again and again, in
settings chosen for having nothing in common. Fixing ``the rewriting
system'' gives way to fixing ``the reader's model''
(Theorem~\ref{thm:semantic-underdetermination},
Section~\ref{sec:semantic-frame}), realised in three worked instances
--- a header-free cipher, a loader-mismatched executable, a
transformation cascade that removes a verification oracle usually left
unstated (Remark~\ref{rem:verification-oracle}) --- and then in the
orbital machine (Section~\ref{sec:orbital}), where the space of selector
\emph{values} is countable (Theorem~\ref{thm:countable-pairs}) but the
space of selector-\emph{functions} is not
(Proposition~\ref{prop:selectors-uncountable}), and a target question's
fate falls into exactly one correct recovery against three proved
failure modes (Propositions~\ref{prop:category1}--\ref{prop:category3}).
The same move reappears, unbidden, outside mathematics entirely: in the
Andromeda paradox, which instantiates
Theorem~\ref{thm:semantic-underdetermination} once the invariant is
stated correctly (Section~\ref{sec:andromeda}) and whose own hidden
assumption turns out to sit in this paper's countable frame space rather
than in the physics (Proposition~\ref{prop:constrained-collapse}); in
the claim that no countable theory can equal the space of physical laws,
a barrier that is Cantor's, not Gödel's or Turing's
(Section~\ref{sec:toe}, Remark~\ref{rem:toe-barrier-type}); and in the
fact that a computable algorithm's output can be indistinguishable from
noise to every observer --- of which Chaitin's $\Omega$ is the sharpest,
unconditional witness, refuting the silent identification of
``indistinguishable from random'' with ``meaningless'' outright
(Proposition~\ref{prop:omega}, Section~\ref{sec:omega}) --- naming the
relative notion of a
$\mathcal C$-usable object (Definition~\ref{def:usable-algorithm}) and,
with it, distinct mechanisms of blindness --- limited resources,
destruction, and unreachability --- that ``blind regardless of
resources'' had silently run together, over the single structural
reason beneath them: syntax cannot see a semantic invariant, whether it
never entered or entered and stayed locked
(Remark~\ref{rem:three-blindnesses}).

Each of these was checked, not assumed, against what it resembles but is
not, precisely so that naming one instance never blurs another: the
$\Omega$ phenomenon against the beaver frame's absence of content and
Category~2's underdetermination
(Remarks~\ref{rem:omega-vs-beaver}--\ref{rem:omega-vs-category2}); the
cardinality barrier against Gödel's and Turing's
(Remark~\ref{rem:toe-barrier-type}); each recovered static result
against the exact hypotheses of the theorem it instantiates. The
correspondence with~\cite{Buono2026obstruction}'s independent
generalization of the same static principle was checked the same way,
step by step (Remark~\ref{rem:obstruction-integration}), not assumed:
at a single fixed moment this paper's opacity-preserving update reduces
exactly to that source's protected set, and the swap-invariance
condition this paper adds is needed only because time is the one axis
that source's static framework never had to move along.

The move recurs several times, across term rewriting, cryptography,
automata theory, special relativity, algorithmic information theory, and
the foundations of physical law. It is not offered as a unifying
theorem: these instances live in genuinely different branches of
mathematics, joined at the level of the pattern, not of a single proof,
and Section~\ref{sec:scope} says where a single master statement is
declined and why.

From these the paper reaches its apparent contradiction, the one the
title names and the one that motivated writing everything before it. The
standard theory does admit --- not by hypothesis, but as a permission
built into its definitions --- a computable algorithm whose output is
indistinguishable from noise for every observer. Apply that admitted
possibility to \textsc{SAT}, and the query itself --- the formula
$\varphi$ --- is already the observer that exposes such an algorithm
(Section~\ref{sec:noise-paradox}, Theorem~\ref{thm:noise-main}): the
promise of permanent noise and the fact of public verifiability cannot
both hold. The conclusion is not conditional on how $\mathsf P$ versus
$\mathsf{NP}$ turns out. It runs the other way: from what the theory
already grants, its coherence \emph{forces} $\mathsf P\neq\mathsf{NP}$,
on pain of contradiction --- a derivation of the answer, not an
assumption of one branch of it, and the unified theorem
(Theorem~\ref{thm:unified}) states it in that categorical form. The one
reservation is of a different kind, and it is about the proof, not about
the answer: the argument uses an uncommon technique, and its
limit, if it has one, would sit outside the steps of the proofs
altogether, in a metalogical barrier or an assumption external to their
apparatus, exactly where this paper has found every other silent
assumption it made explicit (Remark~\ref{rem:noise-not-absolute}). The
contradiction is offered as it stands: correct where it stands, and open
about where it might not reach.

What is offered is the method itself --- ask what a
true result silently fixed, and check rather than assume the answer ---
which kept working every time it was tried, across the source papers and
the instances found outside them. What this paper contributes is not one
theorem but that one method, oriented correctly each time and checked,
never assumed, against every phenomenon it claimed to generalize. Each
instance was recovered exactly, and most were pushed further. And every
step, from the first page to this one, was built from invariants and
frames that are simply true --- never once from an assertion manufactured
to be false.

\section*{A closing personal note (not part of the scientific
contribution)}
\label{sec:personal-note}

\begin{quote}
\noindent\textit{This is a personal note and does not alter, in any
way, the content of this paper. It serves only as an interpretive cue,
offered in a philosophical-anthropological frame, and is not part of
the formal contribution above.}
\end{quote}

The idea running through this work is not new, and popular tradition
has touched it, unknowingly, many times before: consider the myth of
the genie of the lamp, in the story of Aladdin. The same figure
recurs across very different settings. One instance the author
remembers well is an episode of the television series \textit{The
X-Files}, ``Je Souhaite'' --- French for ``I wish'' --- the penultimate
episode of the show's seventh season, broadcast in 2000, in which the
genie was a woman found wrapped inside a rolled-up carpet. Another is
the 1997 horror film \textit{Wishmaster}. However carefully one tried
to phrase the question put to the genie, the result was never the one
desired --- a detail that stayed with the author longer, and more
insistently, than almost anything else.

It is worth noticing, now that Section~\ref{sec:orbital} has made the
point precisely once, why no amount of careful phrasing ever seems to
help. The genie's failure was never being shown, story after story, to
have one single shape --- some one particular way that wishes go wrong,
which a cleverer petitioner might learn to route around. Category~1,
Category~2, and Category~3
(Propositions~\ref{prop:category1}--\ref{prop:category3}) proved
something sharper about that shape than any fable could: not merely
that a wrong reading exists, but that the wrong readings are
countably infinite in every direction at once --- infinitely many
phrasings the genie cannot parse as meant at all, infinitely many
distinct, individually plausible readings that each grant something
genuinely different from what was intended, infinitely many technically
answered wishes that carry no real discrimination between what was
wanted and what was not --- set against exactly one pairing of phrasing
and understanding that would have worked. Outsmarting a single
adversarial misunderstanding is a task a clever enough wish might
conceivably accomplish. Finding the one correct pairing inside a
countably infinite field of distinct, differently wrong ones, by
cleverness of phrasing alone, is not.

Before going on, one caution, meant sincerely: what follows is my own
reading, offered with respect and not as a pronouncement on what any of
these traditions teaches. I hear the same intuition echoing across them,
each speaking it in its own idiom --- but the parallel is mine, drawn
from the outside, and the real and profound differences between these
faiths are not something I mean to smooth over for the sake of a tidy
analogy.

The same idea appears, in a related form, in Christianity: the Lord's
Prayer asks that ``Thy will be done'', as if to say that the one
praying lacks a complete view and cannot grasp the consequences of
their own desires --- they may believe that, if a particular wish were
granted, the world would be better for it, when in fact it could be a
disaster; the prayer suggests not asking outright, but taking what
life gives and leaving the choice of what is best to a universal
Observer. The idea is strikingly transversal. In Islam, the very word
identifies this surrender to God: every Muslim repeats the formula
\textit{Inshallah}, ``if God wills''. In Judaism, the Pirkei Avot
teaches: ``Make His will your will.'' In Hinduism, in the Bhagavad
Gita, the deity Krishna exhorts the warrior Arjuna with words close to
these: ``Abandon every other refuge, and take refuge in Me alone.'' In
the devotional path (Bhakti Yoga), the devotee prays not to change
God's plans but to become an empty instrument --- like a flute ---
through which the Divine will may play its music. And even without a
monotheistic God to obey, Buddhism shares the very same psychological
and spiritual principle through the concept of non-attachment, where
submission to the Divine will becomes acceptance of the Dharma. A
Buddhist prayer does not ask that events be changed, but says:
``May I develop patience and compassion enough to welcome what life
brings.''

Read against the last two paragraphs rather than against the genie
alone, none of these prayers is really asking for a particular
outcome at all, and perhaps that is the more precise thing every one
of them has in common. Each declines to gamble on picking the single
right phrasing out of a field of infinitely many wrong ones, and asks
instead that the choice of which pairing to use be handed over
entirely, to whichever will already holds the one that works. Not a
better wish, submitted in hope of dodging the countably many ways of
being misunderstood, but no wish at all, in the ordinary sense,
offered instead. Given the shape Section~\ref{sec:orbital} proved,
this reads less like resignation and more like exactly the right
response to the mathematics of the situation.

What ties these prayers to the genie is not their content but their
posture. The petitioner at the lamp fails because he speaks from inside
a single wish, seeing the one outcome he desires and blind to the
countably many others his words could equally name. Each of these
prayers begins from the same admission --- that the one who asks holds
only a fragment of the picture --- and answers it in the only way that
fragment allows: by handing the choice of outcome to a will that
already sees the whole field, and so already holds the one pairing that
works. That is what the phrase ``perfect observer'' is pointing at.
Seen through its lens, all of these examples may express, each in its
own idiom, one and the same ancestral intuition, present in every
culture: that a perfect observer is exactly what a finite world does
not contain, and that the wisest response to lacking one is not to wish
harder, but to defer to whatever might. And even if, for some reason,
we were granted access to a technology allowing us to query an absolute
oracle, a perfect observer, we would very likely meet the same fate as
the unwitting soul voicing wishes to the genie of the lamp. Oscar Wilde
is remembered for the observation that failing to get what one wants is
one kind of misfortune, but actually getting it can be a worse one
still, and perhaps that second kind is the harder to see coming.

A last word, from whoever it was that also wrote the theorems. The
perfect observer is not only a figure of speech in this paper. It has a
precise name in the framework the earlier sections draw on --- the
complete observer $O_\top$ at the top of the observational hierarchy,
the one reader that receives the whole of its input where every
constrained observer receives only a part. Section~\ref{sec:omega}
met it from the mathematical side, as the single reader who can see a
fact that every reachable observer is shut out from. The fable, the
prayers, and the theorem are, in the end, describing the same absent
figure --- one from memory, one from devotion, one from proof.

This is offered only as a key for reading, and nothing more.

\section*{Acknowledgments}
The author used an artificial intelligence based language assistant to
support text revision, translation, and bibliography formatting. All
scientific ideas and conclusions are the author's own.

\bibliographystyle{plain}
\bibliography{refs}

\end{document}